\DeclareMathOperator{\pwr}{\mathscr{P}}
\newcommand{\N}{\mathbb{N}}
\newcommand{\cceq}{\mathrel{\mathop{::=}}}
\let\th\thesis
\newcommand{\rst}{{\upharpoonright}}
\let\co\overline
\let\dl\overline
\let\sdl\shorteroverline
\newcommand{\data}[1]{\mathtt{#1}}
\newcommand{\ifemp}[3]{\ifthenelse{\isempty{#1}}{#2}{#3}}
\newcommandx{\rl}[2][1]{%
    \ifemp{#2}%
        {\ensuremath{\mathtt{#1}}}%
        {\ensuremath{\mathtt{#2}_{#1}}}%
}
\newcommandx{\rwr}[2][1,2]{\mathrel{%
    \ifemp{#1}%
        {\ifemp{#2}%
            {\longrightarrow}%
            {\stackrel{#2}{\longrightarrow}}%
        }%
        {\ifemp{#2}%
            {\longrightarrow_{\data{#1}}}%
            {{\stackrel{\raisebox{-2pt}{\ensuremath{\scriptstyle {#2}}}}{\longrightarrow}}_{\data{#1}}}%
        }%
}}
\newcommand{\nm}[2]{{#2}^{#1}}
\newcommand{\axg}[1]{\llbracket {#1} \rrbracket}
\newcommand{\blwk}[1]{\data{Wk}^{\data{bl}}_{#1}}
\newcommand{\blid}[1]{\data{Id}^{\data{bl}}_{#1}}
\newcommand{\blaxg}[1]{\llparenthesis {#1} \rrparenthesis}
\newcommand{\adin}{\mathrel{\Yleft}}
\newcommand{\hvp}{\vphantom{\frac{0}{0}}}
\let\blg@defaultnm\nm
\newcommand{\blg@markname}[1]{--blg-mark-#1}
\newcommand{\blg@branchnm}[2]{\blg@defaultnm{#1}{\tikzmarknode{\blg@markname{#1}}{#2}}}
\newsavebox{\blg@branchbox}
\newcommand{\blgbranch}[2]{%
    \begingroup%
    \let\nm\blg@branchnm%
    \savebox{\blg@branchbox}{\(#1\)}%
    \prfassumption{%
        \begin{tikzpicture}[remember picture, thick, black]
            \node[outer sep=0, inner xsep=0, inner ysep=-.2em] (branch) at (0,0) {\usebox{\blg@branchbox}};
            \foreach \x / \y in {#2}
                \draw ([yshift=.2em]branch.north -| {\blg@markname{\x}}.north) .. controls ++(0,1em) and ++(0,1em) .. ([yshift=.2em]branch.north -| {\blg@markname{\y}}.north);
        \end{tikzpicture}%
    }%
    \endgroup%
}
\begin{document}

\title{On the semantics of proofs in classical sequent calculus.}
\titlerunning{On the semantics of proofs in classical sequent calculus}

\author{Fabio Massaioli}
\authorrunning{F. Massaioli}

\institute{Scuola Normale Superiore, Pisa, Italy
\email{fabio.massaioli@sns.it}}
\maketitle              
\begin{abstract}
    We discuss the problem of finding non-trivial invariants of non-deterministic, symmetric
    cut-reduction procedures in the classical sequent calculus. We come to the conclusion
    that (an enriched version of) the propositional fragment of GS4 -- i.e.\ the one-sided
    variant of Kleene's context-sharing style sequent system G4, where independent rule
    applications permute freely -- is an ideal framework in which to attack the problem.
    We show that the graph induced by axiom rules linking dual atom occurrences is preserved
    under arbitrary rule permutations in the cut-free fragment of GS4. We then refine the
    notion of axiom-induced graph so as to extend the result to derivations with cuts, and
    we exploit the invertibility of logical rules to define a global normalisation procedure
    that preserves the refined axiom-induced graphs, thus yielding a non-trivial invariant
    of cut-elimination in GS4. Finally, we build upon the result to devise a new proof system
    for classical propositional logic, where the rule permutations of GS4 reduce to identities.

    \keywords{Classical propositional logic \and Sequent calculus \and Cut-e\-lim\-i\-na\-tion
        \and Invariants of cut-reduction \and Proof-identity \and Denotational semantics.}
\end{abstract}

\section{Introduction}\label{sec:intro}

Cut-elimination procedures in classical sequent calculus are notoriously non-deterministic
and non-confluent, both in the original formulation by Gentzen and in later reformulations~%
\cite{Gen35,DJS95,BB96,BBS97,AT14,Pul22}. It is natural to ask whether those instances of
non-confluence are superficial in nature, i.e.\ whether distinct normal forms of the same
derivation are in fact correlated in a non-trivial way. A famous counter-example by Lafont~%
\cite{GLT89} purports to show that the answer is negative, that is, any notion of proof
equivalence compatible with classical cut-elimination must be a trivial one that identifies
all proofs of the same sequent. Specifically, the counter-example involves a derivation of
the form
\[
    \prftree[r]{\rl{ctr}}
        {\prftree[r]{\rl{cut}}
            {\prftree[r]{\rl{wk}}
                {\prfsummary[\(P\)]{\th A}}
                {\th A, B}}
            {\prftree[r]{\rl{wk}}
                {\prfsummary[\(Q\)]{\th A}}
                {\th A, \dl{B}}}
            {\th A, A}}
        {\th A}
\]
where \(P,Q\) are a pair of arbitrary derivations of the same formula \(A\)\footnote{We
use the one-sided formulation of sequent calculus in the style of Tait \cite{Tai68} to
reduce the number of rules to be treated. Negation is defined as an involution on formulas
through De Morgan dualities. Sequents are considered as multisets, i.e.\ quotiented up to
arbitrary permutations of their elements, hence the exchange rule is implicit}. Because
both cut-formulas are introduced by weakening, there are two ways to eliminate the cut,
leading to two potentially very different derivations:
\[
    \prftree[r]{\rl{ctr}}
        {\prftree[r]{\rl{wk}}
            {\prfsummary[\(P\)]{\th A}}
            {\th A, A}}
        {\th A}
    \qquad\qquad
    \prftree[r]{\rl{ctr}}
        {\prftree[r]{\rl{wk}}
            {\prfsummary[\(Q\)]{\th A}}
            {\th A, A}}
        {\th A}
\]
Any notion of proof identity that is compatible with the cut-elimination process should then
identify all three derivations shown above, leading to the identification of~\(P\) with \(Q\)
under the reasonable assumption that the weakening-contraction sequence on~\(A\) be an
irrelevant detour. \(P\) and~\(Q\), however, were \emph{arbitrary} derivations of~\(A\),
i.e.\ potentially not correlated in any way: this has catastrophic consequences, making
our hypothetical notion of proof identity trivial and patently unreasonable. For example,
the following three proofs would be identified:
\[
    \prftree[r]{\rl{\lor}}
        {\prfsummary{\th A}}
        {\th A \lor B}
    \qquad\qquad
    \prftree[r]{\rl{\lor}}
        {\prfsummary{\th B}}
        {\th A \lor B}
    \qquad\qquad
    \prftree[r]{\rl{\lor}}
        {\prfsummary{\th A, B}}
        {\th A \lor B}
\]
meaning that it does not matter at all whether a disjunction holds in virtue of the left
or of the right disjunct, or by contradiction. In a sense we are identifying not just
all proofs of the same thesis, but indeed \emph{all} proofs altogether.

Under such a notion of proof identity, it becomes utterly useless to look for alternative
proofs of the same theorem, as they are virtually the same as any other one and there is
nothing to be learned from them: a conclusion that plainly contradicts millennia of mathematical
practice.

Many solutions have been proposed. We mention among them Krivine's \emph{classical realizability}
program~\cite{Kri09,Kri14}, which has yielded a fruitful analysis of classical reasoning
principles in computational terms; the approaches based on polarised deductive systems,
like Girard's calculus LC~\cite{Gir91}, Parigot's \(\lambda\mu\)-calculus~\cite{Par92lmu},
Danos Joinet and Schellinx' calculi LKT and LKQ~\cite{DJS95}, and finally the approaches
based on fine grained focalization and embeddings of the classical sequent calculus into
linear logic~\cite{LM09,DJS97}.

From the point of view of sequent calculus, the common thread to all mentioned solutions
is to restrict cut-elimination (and sometimes the logical rules too) in a principled way,
breaking its simmetry and therefore solving the non-deterministic cases. Typically the
approaches based on \(\lambda\)-calculus correspond to various double-negation translations
from classical to intuitionistic logic that preserve provability. While this approach
bears many interesting results, it is not entirely satisfactory.

On the one hand, every restriction is somewhat arbitrary. Looking back at Lafont's
example above, it is hard to see any actual reason to privilege one reduct over
the other, and doing so arguably fails to capture the content of the original derivation,
which effectively offers two possible ways to prove the same conclusion. On the other hand,
the problem is by no means exclusive to sequent calculus. Analogous examples can be constructed
in a great variety of classical proof systems, with the notable exception of those who have
been accurately tuned to avoid them, and there is even a categorical counterpart due to Joyal.
The solutions recalled above sacrifice the inherent symmetries of classical proof systems
to the possibility of extracting at least some computational content from them. It is
important to underline here that while those symmetries are not strictly needed to characterize
classical provability, the fact that classical logic allows them to arise seems to be somehow
essential to it.

A long standing open question has been then whether it could possible to work around
the non-deterministic reduction steps by natural and non-trivial adjustments of the calculus
and/or of cut-reduction steps, without resorting to symmetry-breaking techniques like
polarization or embeddings into intuitionistic or linear logic. It is clear that Lafont's
example requires a change in the calculus. Two simple solutions have been discussed many
times in the literature: the \emph{mix} rule and \emph{non-deterministic sums}.
\[
    \makebox[\textwidth]{\small\(
        \vcenter{\prftree[r]{\rl{mix}}
            {\prfsummary[\(P\)]{\th \Gamma}}
            {\prfsummary[\(Q\)]{\th \Delta}}
            {\th \Gamma, \Delta}}
        \quad\stackrel{?}{\longleftarrow}\quad
        \vcenter{\prftree[r]{\rl{cut}}
            {\prftree[r]{\rl{wk}}
                {\prfsummary[\(P\)]{\th \Gamma}}
                {\th \Gamma, A}}
            {\prftree[r]{\rl{wk}}
                {\prfsummary[\(Q\)]{\th \Delta}}
                {\th \Delta, \dl{A}}}
            {\th \Gamma, \Delta}}
        \quad\stackrel{?}{\longrightarrow}\quad
        \vcenter{\prftree[r]{\rl{\oplus}}
            {\prftree[r,d]{\rl{wk}}
                {\prfsummary[\(P\)]{\th \Gamma}}
                {\th \Gamma, \Delta}}
            {\prftree[r,d]{\rl{wk}}
                {\prfsummary[\(Q\)]{\th \Delta}}
                {\th \Gamma, \Delta}}
            {\th \Gamma, \Delta}}
    \)}
\]
While those two rules do not contribute to classical provability, they increase the amount
of available proofs~\cite{LS04}, accommodating for cut-free proofs that intuitively “provide
multiple alternatives.”

The non-deterministic sum (on the right) expresses the intuitive idea that the resulting
proof might be seen either as~\(P\) or~\(Q\), without the reader being free to choose between
the two alternatives. Its use can be extended coherently to all problematic situations,
viewing proofs with cuts as the sum of their possible normal forms. While this approach
is known to be non-trivial and non-reducible to intuitionistic or linear embeddings (see
e.g.~\cite{BBS97,Lai01}), it is also not very deep, as it does not get to the point of what
the content of a classical proof actually is; notably, it is not known to yield a canonical
representation of classical proofs up to the related notion of proof identity. Moreover,
it has no faithful representation on paper, where the choice between \(P\) and~\(Q\)
is clearly available to the reader.

On the other hand, the mix rule (on the left) expresses the idea that two proofs are
effectively provided in parallel, both equally available to the reader. Informally, this
might be viewed as writing down two different proofs of the same theorem one after the
other: something that makes sense and might actually happen in mathematics textbooks.
Extending the calculus with the mix rule effectively solves the weakening-weakening
problem\footnote{Note however that there is a subtle technical issue to be handled:
because logical rules might hide applications of weakenings on their subformulas, weakening
cuts cannot be eliminated directly; instead, all weakenings must first be reduced to atomic
form, something which is well-known to be possible in classical logic.}. The question
becomes then how to handle the other two non-deterministic reduction steps (\cref{fig:non-det-reductions}).

\begin{figure}[t]
    \centering
    \scalebox{0.8}{\makebox[\textwidth]{\(
        \vcenter{\prftree[r]{\rl{cut}}
            {\prftree[r]{\rl{cut}}
                {\prfsummary[\(P\)]{\th \Gamma, A, B \vphantom{\dl{A}}}}
                {\prfsummary[\(Q\)]{\th \Delta, \dl{A}}}
                {\th \Gamma, \Delta, B \vphantom{\dl{A}}}}
            {\prfsummary[\(R\)]{\th \Sigma, \dl{B}}}
            {\th \Gamma, \Delta, \Sigma}}
        \quad\longleftarrow\quad
        \vcenter{\prftree[r]{\rl{cut}}
            {\prftree[r]{\rl{\lor}}
                {\prfsummary[\(P\)]{\th \Gamma, A, B \vphantom{\dl{AB}}}}
                {\th \Gamma, A \lor B \vphantom{\dl{AB}}}}
            {\prftree[r]{\rl{\land}}
                {\prfsummary[\(Q\)]{\th \Delta, \dl{A}}}
                {\prfsummary[\(R\)]{\th \Sigma, \dl{B}}}
                {\th \Delta, \Sigma, \dl{A} \land \dl{B}}}
            {\th \Gamma, \Delta, \Sigma}}
        \quad\longrightarrow\quad
        \vcenter{\prftree[r]{\rl{cut}}
            {\prftree[r]{\rl{cut}}
                {\prfsummary[\(P\)]{\th \Gamma, A, B \vphantom{\dl{B}}}}
                {\prfsummary[\(R\)]{\th \Sigma, \dl{B}}}
                {\th \Gamma, A, \Sigma \vphantom{\dl{A}}}}
            {\prfsummary[\(Q\)]{\th \Delta, \dl{A}}}
            {\th \Gamma, \Delta, \Sigma}}
    \)}}\\[1.5em]
    \scalebox{0.8}{\makebox[\textwidth]{\(
        \vcenter{\prftree[r,d]{\rl{wk}}
            {\prfsummary[\(P\)]{\th \Gamma}}
            {\th \Gamma, \Delta}}
        \quad\longleftarrow\quad
        \vcenter{\prftree[r]{\rl{cut}}
            {\prftree[r]{\rl{wk}}
                {\prfsummary[\(P\)]{\th \Gamma}}
                {\th \Gamma, A \vphantom{\dl{A}}}}
            {\prftree[r]{\rl{wk}}
                {\prfsummary[\(Q\)]{\th \Delta}}
                {\th \Delta, \dl{A}}}
            {\th \Gamma, \Delta}}
        \quad\longrightarrow\quad
        \vcenter{\prftree[r,d]{\rl{wk}}
            {\prfsummary[\(Q\)]{\th \Delta}}
            {\th \Gamma, \Delta}}
    \)}}\\[.5em]
    \scalebox{0.8}{\makebox[\textwidth]{\(
        \vcenter{\prftree[r,d]{\rl{ctr}}
            {\prftree[r]{\rl{cut}}
                {\prftree[r]{\rl{ctr}}
                    {\prfsummary[\(P\)]{\th \Gamma, A, A \vphantom{\dl{A}}}}
                    {\th \Gamma, A \vphantom{\dl{A}}}}
                {\prftree[r]{\rl{cut}}
                    {\prftree[r]{\rl{ctr}}
                        {\prfsummary[\(P\)]{\th \Gamma, A, A}}
                        {\th \Gamma, A}}
                    {\prfsummary[\(Q\)]{\th \Delta, \dl{A}, \dl{A}}}
                    {\th \Gamma, \Delta, \dl{A}}}
                {\th \Gamma, \Delta, \Delta}}
            {\th \Gamma, \Delta}}
        \quad\longleftarrow\quad
        \vcenter{\prftree[r]{\rl{cut}}
            {\prftree[r]{\rl{ctr}}
                {\prfsummary[\(P\)]{\th \Gamma, A, A \vphantom{\dl{A}}}}
                {\th \Gamma, A \vphantom{\dl{A}}}}
            {\prftree[r]{\rl{ctr}}
                {\prfsummary[\(Q\)]{\th \Delta, \dl{A}, \dl{A}}}
                {\th \Delta, \dl{A}}}
            {\th \Gamma, \Delta}}
        \quad\longrightarrow\quad
        \vcenter{\prftree[r,d]{\rl{ctr}}
            {\prftree[r]{\rl{cut}}
                {\prftree[r]{\rl{cut}}
                    {\prfsummary[\(P\)]{\th \Gamma, A, A}}
                    {\prftree[r]{\rl{ctr}}
                        {\prfsummary[\(Q\)]{\th \Delta, \dl{A}, \dl{A}}}
                        {\th \Delta, \dl{A}}}
                    {\th \Gamma, \Delta, A \vphantom{\dl{A}}}}
                {\prftree[r]{\rl{ctr}}
                    {\prfsummary[\(Q\)]{\th \Delta, \dl{A}, \dl{A}}}
                    {\th \Delta, \dl{A}}}
                {\th \Gamma, \Delta, \Delta}}
            {\th \Gamma, \Delta}}
    \)}}
    \caption{Non-deterministic cut-reduction steps in the classical sequent calculus; also
        called \emph{logical} and \emph{structural dilemmas} in~\cite{DJS97}.}
    \label{fig:non-det-reductions}
\end{figure}

We start from an idea which is well-established in proof-theory, i.e.\ that the way axioms
link together the subformulas of the conclusion is somehow essential to the content of a proof.
Such an idea underlines, e.g., the theory of proof-nets~\cite{Gir87} and the research program
known as \emph{Geometry of Interaction}~\cite{Gir90b}, but has also been widely explored --
for various purposes and in many different ways -- in the classical setting, among others by
Andrews~\cite{And72,And80}, Statman~\cite{Sta74}, Buss and Carbone~\cite{Bus91,Car97},
Lamarche and Straßburger~\cite{LS04,Str11}, Hughes~\cite{Hug06a,Hug06b}, Guglielmi and
Gundersen~\cite{GG08}.

The approach of Andrews, Lamarche and Straßburger in particular is of interest here: it consists
in extracting graphs from each derivation, whose vertices are the atomic formula occurrences
in the conclusion and whose edges join those dual occurrences that are related by some axiom.
Andrews first considered graphs of this kind (which he called \emph{matings}) in~\cite{And72},
and showed in~\cite{And80} how to reconstruct natural deduction proofs from them. Lamarche
and Straßburger used them in~\cite{LS04} to develop a system of proof-nets for classical
logic (which they called~\emph{\(\mathbb{B}\)-nets}), and provided a correctness criterion
and a sequentialization theorem, as well as a confluent and terminating cut-elimination
procedure on nets which corresponds essentially to a composition of graphs by contraction
of alternating paths, in the style of the Geometry of Interaction.

Having such a procedure is of the essence when extracting graphs from derivations with cuts.
One has to trace the history of all atomic occurrences through the derivation, then combine
the traces with axioms and cuts in an appropriate way to obtain the resulting graph. As an
example consider the following derivation:
{\[
    \prftree[r]{\rl{cut}}
        {\prftree[r]{\rl{ctr}}
            {\prftree[r]{\rl{\land}}
                {\prfbyaxiom{\rl{ax}}{\th \dl{\tikzmarknode{x-l-1}{\alpha}}, \tikzmarknode{y-l-1}{\alpha}}}
                {\prfbyaxiom{\rl{ax}}{\th \dl{\alpha}, \alpha}}
                {\th \dl{\tikzmarknode{x-l-2}{\alpha}} \land \dl{\alpha}, \tikzmarknode{y-l-2}{\alpha}, \alpha}}
            {\th \dl{\tikzmarknode{x-l-3}{\alpha}} \land \dl{\alpha}, \tikzmarknode{y-l-3}{\alpha}}}
        {\prftree[r]{\rl{ctr}}
            {\prftree[r]{\rl{\land}}
                {\prfbyaxiom{\rl{ax}}{\th \dl{\alpha}, \alpha}}
                {\prfbyaxiom{\rl{ax}}{\th \dl{\tikzmarknode{y-r-1}{\alpha}}, \tikzmarknode{z-r-1}{\alpha}}}
                {\th \dl{\alpha}, \dl{\tikzmarknode{y-r-2}{\alpha}}, \alpha \land \tikzmarknode{z-r-2}{\alpha}}}
            {\th \dl{\tikzmarknode{y-r-3}{\alpha}}, \alpha \land \tikzmarknode{z-r-3}{\alpha}}}
        {\th \dl{\tikzmarknode{x-4}{\alpha}} \land \dl{\alpha}, \alpha \land \tikzmarknode{z-4}{\alpha}}
\]%
\begin{tikzpicture}[remember picture, overlay, thick, blue]
    \draw ([yshift=2pt]x-l-1.north) .. controls ++(0,1em) and ++(0,1em) .. ([yshift=2pt]y-l-1.north);
    \draw ([yshift=2pt]y-r-1.north) .. controls ++(0,1em) and ++(0,1em) .. ([yshift=2pt]z-r-1.north);
    \draw ([yshift=2pt]x-4.north) .. controls ++(0,10pt) and ++(0,-10pt) .. ([yshift=-1pt]x-l-3.south);
    \draw ([yshift=2pt]x-l-3.north) .. controls ++(0,3pt) and ++(0,-3pt) .. ([yshift=-1pt]x-l-2.south);
    \draw ([yshift=2pt]x-l-2.north) .. controls ++(0,5pt) and ++(0,-5pt) .. ([yshift=-1pt]x-l-1.south);
    \draw ([yshift=1pt]y-l-3.north) .. controls ++(0,3pt) and ++(0,-3pt) .. ([yshift=-1pt]y-l-2.south);
    \draw ([yshift=1pt]y-l-2.north) .. controls ++(0,5pt) and ++(0,-5pt) .. ([yshift=-1pt]y-l-1.south);
    \draw ([yshift=2pt]y-r-3.north) .. controls ++(0,3pt) and ++(0,-3pt) .. ([yshift=-1pt]y-r-2.south);
    \draw ([yshift=2pt]y-r-2.north) .. controls ++(0,5pt) and ++(0,-5pt) .. ([yshift=-1pt]y-r-1.south);
    \draw ([yshift=1pt]z-4.north) .. controls ++(0,10pt) and ++(0,-10pt) .. ([yshift=-1pt]z-r-3.south);
    \draw ([yshift=1pt]z-r-3.north) .. controls ++(0,3pt) and ++(0,-3pt) .. ([yshift=-1pt]z-r-2.south);
    \draw ([yshift=1pt]z-r-2.north) .. controls ++(0,5pt) and ++(0,-5pt) .. ([yshift=-1pt]z-r-1.south);
    \path[draw, densely dashed, semithick, rounded corners=2pt]
        ([xshift=1pt]y-l-3.east) -- ([xshift=5pt]y-l-3.south east) -- ([xshift=-5pt]y-r-3.south west) -- ([xshift=-1pt]y-r-3.west);
\end{tikzpicture}}%
We select an atomic formula occurrence in the conclusion and start tracing its history up
through the derivation; when we reach an axiom we move to the linked dual occurrence and
start traveling down; when we reach a cut we move to the corresponding dual occurrence
in the other premiss and start moving up again; we continue doing so until we get back
to the conclusion: the graph shall then contain an edge linking the initial and final
points of the resulting path.

The key idea is however not just to track the axioms, but also to \emph{avoid counting them}.
To understand why this approach might hold promise for a better treatment of the contraction
case, consider the following example:
{\[
    \prftree[r]{\rl{cut}}
        {\prftree[r]{\rl{\land}}
            {\prfbyaxiom{\rl{ax}}{\th \dl{\tikzmarknode{x-1}{\alpha}}, \tikzmarknode{y-1}{\alpha}}}
            {\prftree[r]{\rl{ctr}}
                {\prfsummary{\th B, C, C}}
                {\th B, C \vphantom{\dl{\alpha}}}}
            {\th \dl{\tikzmarknode{x-2}{\alpha}}, \tikzmarknode{y-2}{\alpha} \land B, C}}
        {\prftree[r]{\rl{ctr}}
            {\prfsummary{\th \dl{C}, \dl{C}}}
            {\th \dl{C}}}
        {\th \dl{\tikzmarknode{x-3}{\alpha}}, \tikzmarknode{y-3}{\alpha} \land B}
\]%
\begin{tikzpicture}[remember picture, overlay, thick, blue]
    \draw ([yshift=2pt]x-1.north) .. controls ++(0,1em) and ++(0,1em) .. ([yshift=2pt]y-1.north);
    \draw ([yshift=2pt]x-3.north) .. controls ++(0,5pt) and ++(0,-10pt) .. ([yshift=-1pt]x-2.south);
    \draw ([yshift=2pt]x-2.north) .. controls ++(0,3pt) and ++(0,-7pt) .. ([yshift=-1pt]x-1.south);
    \draw ([yshift=1pt]y-3.north) .. controls ++(0,10pt) and ++(0,-5pt) .. ([yshift=-1pt]y-2.south);
    \draw ([yshift=1pt]y-2.north) .. controls ++(0,7pt) and ++(0,-3pt) .. ([yshift=-1pt]y-1.south);
\end{tikzpicture}}%
Notice that there is exactly one axiom rule linking the two dual occurences of~\(\alpha\)
in the conclusion. There are two possible ways to reduce the cut:
\[
    \scalebox{0.8}{\makebox[\textwidth]{\(
        \vcenter{\prftree[r]{\rl{\land}}
            {\prfbyaxiom{\rl{ax}}{\th \dl{\tikzmarknode{x-1}{\alpha}}, \tikzmarknode{y-1}{\alpha}}}
            {\prftree[r]{\rl{cut}}
                {\prftree[r]{\rl{ctr}}
                    {\prfsummary{\th B, C, C}}
                    {\th B, C \vphantom{\dl{C}}}}
                {\prftree[r]{\rl{ctr}}
                    {\prfsummary{\th \dl{C}, \dl{C}}}
                    {\th \dl{C}}}
                {\th B \vphantom{\dl{\alpha}}}}
            {\th \dl{\tikzmarknode{x-2}{\alpha}}, \tikzmarknode{y-2}{\alpha} \land B}}
        \qquad\qquad
        \vcenter{\prftree[r,d]{\rl{ctr}}
            {\prftree[r]{\rl{cut}}
                {\prftree[r]{\rl{\land}}
                    {\prfbyaxiom{\rl{ax}}{\th \dl{\tikzmarknode{z-1}{\alpha}}, \tikzmarknode{u-1}{\alpha}}}
                    {\prftree[r]{\rl{ctr}}
                        {\prfsummary{\th B, C, C}}
                        {\th B, C \vphantom{\dl{\alpha}}}}
                    {\th \dl{\tikzmarknode{z-2}{\alpha}}, \tikzmarknode{u-2}{\alpha} \land B, C}}
                {\prftree[r]{\rl{cut}}
                    {\prftree[r]{\rl{\land}}
                        {\prfbyaxiom{\rl{ax}}{\th \dl{\tikzmarknode{v-1}{\alpha}}, \tikzmarknode{w-1}{\alpha}}}
                        {\prftree[r]{\rl{ctr}}
                            {\prfsummary{\th B, C, C}}
                            {\th B, C \vphantom{\dl{\alpha}}}}
                        {\th \dl{\tikzmarknode{v-2}{\alpha}}, \tikzmarknode{w-2}{\alpha} \land B, C}}
                    {\prfsummary{\th \dl{C}, \dl{C}}}
                    {\th \dl{\tikzmarknode{v-3}{\alpha}}, \tikzmarknode{w-3}{\alpha} \land B, \dl{C}}}
                {\th \dl{\tikzmarknode{z-3}{\alpha}}, \tikzmarknode{u-3}{\alpha} \land B, \dl{\tikzmarknode{v-4}{\alpha}}, \tikzmarknode{w-4}{\alpha} \land B}}
            {\th \dl{\tikzmarknode{zv}{\alpha}}, \tikzmarknode{uw}{\alpha} \land B}}
    \)}%
    \begin{tikzpicture}[remember picture, overlay, thick, blue]
        \draw ([yshift=2pt]x-1.north) .. controls ++(0,1em) and ++(0,1em) .. ([yshift=2pt]y-1.north);
        \draw ([yshift=2pt]z-1.north) .. controls ++(0,1em) and ++(0,1em) .. ([yshift=2pt]u-1.north);
        \draw ([yshift=2pt]v-1.north) .. controls ++(0,1em) and ++(0,1em) .. ([yshift=2pt]w-1.north);
        \draw ([yshift=2pt]x-2.north) .. controls ++(0,5pt) and ++(0,-10pt) .. ([yshift=-1pt]x-1.south);
        \draw ([yshift=1pt]y-2.north) .. controls ++(0,10pt) and ++(0,-5pt) .. ([yshift=-1pt]y-1.south);
        \draw ([yshift=2pt]zv.north) .. controls ++(0,3pt) and ++(0,-7pt) .. ([yshift=-1pt]z-3.south);
        \draw ([yshift=2pt]z-3.north) .. controls ++(0,10pt) and ++(0,-15pt) .. ([yshift=-1pt]z-2.south);
        \draw ([yshift=2pt]z-2.north) .. controls ++(0,3pt) and ++(0,-7pt) .. ([yshift=-1pt]z-1.south);
        \draw ([yshift=1pt]uw.north) .. controls ++(0,7pt) and ++(0,-3pt) .. ([yshift=-1pt]u-3.south);
        \draw ([yshift=1pt]u-3.north) .. controls ++(0,18pt) and ++(0,-10pt) .. ([yshift=-1pt]u-2.south);
        \draw ([yshift=1pt]u-2.north) .. controls ++(0,7pt) and ++(0,-3pt) .. ([yshift=-1pt]u-1.south);
        \draw ([yshift=2pt]zv.north) .. controls ++(0,7pt) and ++(0,-3pt) .. ([yshift=-1pt]v-4.south);
        \draw ([yshift=2pt]v-4.north) .. controls ++(0,15pt) and ++(0,-10pt) .. ([yshift=-1pt]v-3.south);
        \draw ([yshift=2pt]v-3.north) .. controls ++(0,5pt) and ++(0,-10pt) .. ([yshift=-1pt]v-2.south);
        \draw ([yshift=2pt]v-2.north) .. controls ++(0,3pt) and ++(0,-7pt) .. ([yshift=-1pt]v-1.south);
        \draw ([yshift=1pt]uw.north) .. controls ++(0,3pt) and ++(0,-7pt) .. ([yshift=-1pt]w-4.south);
        \draw ([yshift=1pt]w-4.north) .. controls ++(0,10pt) and ++(0,-15pt) .. ([yshift=-1pt]w-3.south);
        \draw ([yshift=1pt]w-3.north) .. controls ++(0,10pt) and ++(0,-5pt) .. ([yshift=-1pt]w-2.south);
        \draw ([yshift=1pt]w-2.north) .. controls ++(0,7pt) and ++(0,-3pt) .. ([yshift=-1pt]w-1.south);
    \end{tikzpicture}}
\]
On the left, the cut has been commuted with the conjunction rule. Axioms in the new derivation
have not changed at all. On the right, the cut has been reduced into two cuts of lower complexity,
and the left subderivation has been duplicated. There are now clearly more axioms than before,
\emph{but they link the exact same subformulas.} For example, two axioms now link the dual
occurrences of~\(\alpha\) in the conclusion, but no new link has been created.

The question is then whether axiom-induced graphs, being insensitive to changes in the number
of axioms in a derivation, are indeed preserved by all non-deterministic cut-reduction steps.
Lamarche and Straßburger were motivated by results obtained in the family of formalisms known
as \emph{Deep Inference}, and while they provided an interpretation of the sequent calculus
into \(\mathbb{B}-nets\), as far as we know they didn't investigate in detail the behaviour
of the interpretation under cut-elimination. Führmann and Pym have proven in~\cite{FP04}
-- as a general theorem for a whole class of interpretations of the classical sequent
calculus -- that the axiom-induced graphs cannot gain edges under cut-reduction, i.e.\ that
our intuition that no axioms are created is indeed correct. They also show that the graphs
are preserved by the two logical cut-reduction steps, and it is possible to prove that they
are also invariant under atomic weakening reductions when the non-deterministic case is solved
by the mix rule.

Unfortunately, it turns out they are \emph{not} preserved by all cut-reduction steps involving
contraction, as there are cases where some paths disappear. The prototypical counter-example,
also due to Führmann and Pym~\cite{FP04}, looks like this:
{\[
    \prftree[r]{\rl{cut}}
        {\prftree[r]{\rl{ctr}}
            {\prftree[r]{\rl{\land}}
                {\prftree[r]{\rl{\land}}
                    {\prfsummary{\th A}}
                    {\prfsummary[\(P\)]{\th \tikzmarknode{bc-l-1}{B}, \tikzmarknode{ac-l-1}{A}}}
                    {\th A \land \tikzmarknode{bc-l-2}{B}, \tikzmarknode{ac-l-2}{A}}}
                {\prfsummary{\th B}}
                {\th A \land \tikzmarknode{bc-l-3}{B}, \tikzmarknode{ac-l-3}{A} \land B}}
            {\th \tikzmarknode{ac-l-4}{A} \land \tikzmarknode{bc-l-4}{B}}}
        {\prftree[r]{\rl{ctr}}
            {\prftree[r]{\rl{\land}}
                {\prftree[r]{\rl{\lor}}
                    {\prftree[r]{\rl{wk}}
                        {\prfbyaxiom{\rl{ax}}{\th \dl{\tikzmarknode{ac-r-1}{A}}, \tikzmarknode{a-r-1}{A}}}
                        {\th \dl{\tikzmarknode{ac-r-2}{A}}, \dl{B}, \tikzmarknode{a-r-2}{A}}}
                    {\th \dl{\tikzmarknode{ac-r-3}{A}} \lor \dl{B}, \tikzmarknode{a-r-3}{A}}}
                {\prftree[r]{\rl{\lor}}
                    {\prftree[r]{\rl{wk}}
                        {\prfbyaxiom{\rl{ax}}{\th \dl{\tikzmarknode{bc-r-1}{B}}, \tikzmarknode{b-r-1}{B}}}
                        {\th \dl{A}, \dl{\tikzmarknode{bc-r-2}{B}}, \tikzmarknode{b-r-2}{B}}}
                    {\th \dl{A} \lor \dl{\tikzmarknode{bc-r-3}{B}}, \tikzmarknode{b-r-3}{B}}}
                {\th \dl{\tikzmarknode{ac-r-4}{A}} \lor \dl{B}, \dl{A} \lor \dl{\tikzmarknode{bc-r-4}{B}}, \tikzmarknode{a-r-4}{A} \land \tikzmarknode{b-r-4}{B}}}
            {\th \dl{\tikzmarknode{ac-r-5}{A}} \lor \dl{\tikzmarknode{bc-r-5}{B}}, \tikzmarknode{a-r-5}{A} \land \tikzmarknode{b-r-5}{B}}}
        {\th \tikzmarknode{a-6}{A} \land \tikzmarknode{b-6}{B}}
\]%
\begin{tikzpicture}[remember picture, overlay, thick, blue]
    \path[draw, densely dotted, rounded corners=.5em]
        ([yshift=2pt]bc-l-1.north west) -- ([shift={(-5pt,5pt)}]bc-l-1.north west) -- ([shift={(-5pt,3em)}]bc-l-1.north west)
            -- ([shift={(5pt,3em)}]ac-l-1.north east) -- ([shift={(5pt,5pt)}]ac-l-1.north east) -- ([yshift=2pt]ac-l-1.north east);
    \draw ([yshift=2pt]ac-r-1.north) .. controls ++(0,1em) and ++(0,1em) .. ([yshift=2pt]a-r-1.north);
    \draw ([yshift=2pt]bc-r-1.north) .. controls ++(0,1em) and ++(0,1em) .. ([yshift=2pt]b-r-1.north);
    \draw (bc-l-4.north) .. controls ++(0,5pt) and ++(0,-5pt) .. (bc-l-3.south);
    \draw (bc-l-3.north) .. controls ++(0,5pt) and ++(0,-5pt) .. (bc-l-2.south);
    \draw (bc-l-2.north) .. controls ++(0,3pt) and ++(0,-3pt) .. (bc-l-1.south);
    \draw (ac-l-4.north) .. controls ++(0,5pt) and ++(0,-5pt) .. (ac-l-3.south);
    \draw (ac-l-3.north) .. controls ++(0,5pt) and ++(0,-5pt) .. (ac-l-2.south);
    \draw (ac-l-2.north) .. controls ++(0,3pt) and ++(0,-3pt) .. (ac-l-1.south);
    \draw ([yshift=2pt]ac-r-5.north) .. controls ++(0,5pt) and ++(0,-5pt) .. (ac-r-4.south);
    \draw ([yshift=2pt]ac-r-4.north) .. controls ++(0,5pt) and ++(0,-5pt) .. (ac-r-3.south);
    \draw ([yshift=2pt]ac-r-3.north) .. controls ++(0,3pt) and ++(0,-3pt) .. (ac-r-2.south);
    \draw ([shift={(2pt,2pt)}]ac-r-2.north) .. controls ++(0,3pt) and ++(0,-3pt) .. (ac-r-1.south);
    \draw (a-6.north) .. controls ++(0,15pt) and ++(0,-10pt) .. (a-r-5.south);
    \draw (a-r-5.north) .. controls ++(0,5pt) and ++(0,-5pt) .. (a-r-4.south);
    \draw (a-r-4.north) .. controls ++(0,15pt) and ++(0,-10pt) .. (a-r-3.south);
    \draw (a-r-3.north) .. controls ++(0,3pt) and ++(0,-3pt) .. (a-r-2.south);
    \draw (a-r-2.north) .. controls ++(0,5pt) and ++(0,-5pt) .. (a-r-1.south);
    \draw ([yshift=2pt]bc-r-5.north) .. controls ++(0,5pt) and ++(0,-5pt) .. (bc-r-4.south);
    \draw ([yshift=2pt]bc-r-4.north) .. controls ++(0,7pt) and ++(0,-7pt) .. (bc-r-3.south);
    \draw ([yshift=2pt]bc-r-3.north) .. controls ++(0,3pt) and ++(0,-3pt) .. (bc-r-2.south);
    \draw ([yshift=2pt]bc-r-2.north) .. controls ++(0,3pt) and ++(0,-3pt) .. (bc-r-1.south);
    \draw (b-6.north) .. controls ++(0,10pt) and ++(0,-20pt) .. (b-r-5.south);
    \draw (b-r-5.north) .. controls ++(0,5pt) and ++(0,-5pt) .. (b-r-4.south);
    \draw (b-r-4.north) .. controls ++(0,5pt) and ++(0,-5pt) .. (b-r-3.south);
    \draw (b-r-3.north) .. controls ++(0,3pt) and ++(0,-3pt) .. (b-r-2.south);
    \draw (b-r-2.north) .. controls ++(0,3pt) and ++(0,-3pt) .. (b-r-1.south);
    \path[draw, densely dashed, semithick, rounded corners=1em]
        (ac-l-4.south) -- ([yshift=-2em]ac-l-4.south) -- ([yshift=-2em]ac-r-5.south) -- (ac-r-5.south);
    \path[draw, densely dashed, semithick, rounded corners=1em]
        (bc-l-4.south) -- ([yshift=-3em]bc-l-4.south) -- ([yshift=-3em]bc-r-5.south) -- (bc-r-5.south);
\end{tikzpicture}}\\
Observe that, under the assumption that the subderivation \(P\) have a path connecting \(B\)
with \(A\) (represented here as a dotted line), there is in the axiom graph of the complete
derivation a path connecting (some subformula of) \(A\) with (some subformula of) \(B\) in the
conclusion. When the cut is reduced by duplicating the right subderivation and commuting it
up the two conjunctions, nothing bad happens. However, if the cut is reduced by duplicating
the left subderivation, the path traced above is lost:\\[1em]
{\[
    \makebox[\textwidth]{\small\(
        \prftree[r]{\rl{cut}}
            {\prftree[r]{\rl{ctr}}
                {\prftree[r]{\rl{\land}}
                    {\prftree[r]{\rl{\land}}
                        {\prfsummary{\th A}}
                        {\prfsummary[\(P\)]{\th \tikzmarknode{bc-l1-1}{B}, \tikzmarknode{ac-l1-1}{A}}}
                        {\th A \land \tikzmarknode{bc-l1-2}{B}, \tikzmarknode{ac-l1-2}{A}}}
                    {\prfsummary{\th B}}
                    {\th A \land \tikzmarknode{bc-l1-3}{B}, \tikzmarknode{ac-l1-3}{A} \land B}}
                {\th \tikzmarknode{ac-l1-4}{A} \land \tikzmarknode{bc-l1-4}{B}}}
            {\prftree[r]{\rl{cut}}
                {\prftree[r]{\rl{ctr}}
                    {\prftree[r]{\rl{\land}}
                        {\prftree[r]{\rl{\land}}
                            {\prfsummary{\th A}}
                            {\prfsummary[\(P\)]{\th \tikzmarknode{bc-l2-1}{B}, \tikzmarknode{ac-l2-1}{A}}}
                            {\th A \land \tikzmarknode{bc-l2-2}{B}, \tikzmarknode{ac-l2-2}{A}}}
                        {\prfsummary{\th B}}
                        {\th A \land \tikzmarknode{bc-l2-3}{B}, \tikzmarknode{ac-l2-3}{A} \land B}}
                    {\th \tikzmarknode{ac-l2-4}{A} \land \tikzmarknode{bc-l2-4}{B}}}
                {\prftree[r]{\rl{\land}}
                    {\prftree[r]{\rl{\lor}}
                        {\prftree[r]{\rl{wk}}
                            {\prfbyaxiom{\rl{ax}}{\th \dl{\tikzmarknode{ac-r-1}{A}}, \tikzmarknode{a-r-1}{A}}}
                            {\th \dl{\tikzmarknode{ac-r-2}{A}}, \dl{B}, \tikzmarknode{a-r-2}{A}}}
                        {\th \dl{\tikzmarknode{ac-r-3}{A}} \lor \dl{B}, \tikzmarknode{a-r-3}{A}}}
                    {\prftree[r]{\rl{\lor}}
                        {\prftree[r]{\rl{wk}}
                            {\prfbyaxiom{\rl{ax}}{\th \dl{\tikzmarknode{bc-r-1}{B}}, \tikzmarknode{b-r-1}{B}}}
                            {\th \dl{\tikzmarknode{aw-r-1}{A}}, \dl{\tikzmarknode{bc-r-2}{B}}, \tikzmarknode{b-r-2}{B}}}
                        {\th \dl{\tikzmarknode{aw-r-2}{A}} \lor \dl{\tikzmarknode{bc-r-3}{B}}, \tikzmarknode{b-r-3}{B}}}
                    {\th \dl{\tikzmarknode{ac-r-4}{A}} \lor \dl{B}, \dl{\tikzmarknode{aw-r-3}{A}} \lor \dl{\tikzmarknode{bc-r-4}{B}}, \tikzmarknode{a-r-4}{A} \land \tikzmarknode{b-r-4}{B}}}
                {\th \dl{\tikzmarknode{ac-r-5}{A}} \lor \dl{\tikzmarknode{bc-r-5}{B}}, \tikzmarknode{a-r-5}{A} \land \tikzmarknode{b-r-5}{B}}}
            {\th \tikzmarknode{a-6}{A} \land \tikzmarknode{b-6}{B}}
    \)}
\]%
\begin{tikzpicture}[remember picture, overlay, thick, blue]
    \path[draw, densely dotted, rounded corners=.5em]
        ([yshift=2pt]bc-l2-1.north west) -- ([shift={(-5pt,5pt)}]bc-l2-1.north west) -- ([shift={(-5pt,3em)}]bc-l2-1.north west)
            -- ([shift={(5pt,3em)}]ac-l2-1.north east) -- ([shift={(5pt,5pt)}]ac-l2-1.north east) -- ([yshift=2pt]ac-l2-1.north east);
    \draw ([yshift=2pt]bc-r-1.north) .. controls ++(0,1em) and ++(0,1em) .. ([yshift=2pt]b-r-1.north);
    \draw (bc-l2-4.north) .. controls ++(0,5pt) and ++(0,-5pt) .. (bc-l2-3.south);
    \draw (bc-l2-3.north) .. controls ++(0,5pt) and ++(0,-5pt) .. (bc-l2-2.south);
    \draw (bc-l2-2.north) .. controls ++(0,3pt) and ++(0,-3pt) .. (bc-l2-1.south);
    \draw (ac-l2-4.north) .. controls ++(0,5pt) and ++(0,-5pt) .. (ac-l2-3.south);
    \draw (ac-l2-3.north) .. controls ++(0,5pt) and ++(0,-5pt) .. (ac-l2-2.south);
    \draw (ac-l2-2.north) .. controls ++(0,3pt) and ++(0,-3pt) .. (ac-l2-1.south);
    \draw ([yshift=2pt]aw-r-3.north) .. controls ++(0,7pt) and ++(0,-3pt) .. (aw-r-2.south west);
    \draw ([yshift=2pt]aw-r-2.north) .. controls ++(0,3pt) and ++(0,-3pt) .. (aw-r-1.south);
    \draw ([yshift=2pt]bc-r-4.north) .. controls ++(0,7pt) and ++(0,-7pt) .. (bc-r-3.south);
    \draw ([yshift=2pt]bc-r-3.north) .. controls ++(0,3pt) and ++(0,-3pt) .. (bc-r-2.south);
    \draw ([yshift=2pt]bc-r-2.north) .. controls ++(0,3pt) and ++(0,-3pt) .. (bc-r-1.south);
    \draw (b-6.north) .. controls ++(0,15pt) and ++(0,-10pt) .. (b-r-5.south);
    \draw (b-r-5.north) .. controls ++(0,5pt) and ++(0,-5pt) .. (b-r-4.south);
    \draw (b-r-4.north) .. controls ++(0,5pt) and ++(0,-5pt) .. (b-r-3.south);
    \draw (b-r-3.north) .. controls ++(0,3pt) and ++(0,-3pt) .. (b-r-2.south);
    \draw (b-r-2.north) .. controls ++(0,3pt) and ++(0,-3pt) .. (b-r-1.south);
    \path[draw, densely dashed, semithick, rounded corners=1em]
        (ac-l2-4.south) -- ([yshift=-1em]ac-l2-4.south) -- ([shift={(3em,-1em)}]ac-l2-4.south)
            -- ([shift={(3em,-3em)}]ac-l2-4.south) -- ([yshift=-3em]aw-r-3.south) -- (aw-r-3.south);
    \path[draw, densely dashed, semithick, rounded corners=1em]
        (bc-l2-4.south) -- ([yshift=-4em]bc-l2-4.south) -- ([yshift=-4em]bc-r-4.south) -- (bc-r-4.south);
\end{tikzpicture}}\\
We show for clarity only the surviving part of the path that is connected to~\(B\). Note how
the new path must end at the weakening rule. The old path depended critically on the ability
to pass through both premises of the contraction rule; now that the contraction rule has
vanished and we have two independent cuts it becomes impossible to construct the same path.

There is however a better way to look at the same problem. Notice how the following two
derivations are associated to the same axiom graph:
{\[
    \vcenter{\prfbyaxiom{\rl{ax}}{\th \dl{\tikzmarknode{ac-l-1}{A}} \lor \dl{\tikzmarknode{bc-l-1}{B}}, \tikzmarknode{a-l-1}{A} \land \tikzmarknode{b-l-1}{B}}}
    \qquad\qquad
    \vcenter{\prftree[r]{\rl{ctr}}
        {\prftree[r]{\rl{\land}}
            {\prftree[r]{\rl{\lor}}
                {\prftree[r]{\rl{wk}}
                    {\prfbyaxiom{\rl{ax}}{\th \dl{\tikzmarknode{ac-r-1}{A}}, \tikzmarknode{a-r-1}{A}}}
                    {\th \dl{\tikzmarknode{ac-r-2}{A}}, \dl{B}, \tikzmarknode{a-r-2}{A}}}
                {\th \dl{\tikzmarknode{ac-r-3}{A}} \lor \dl{B}, \tikzmarknode{a-r-3}{A}}}
            {\prftree[r]{\rl{\lor}}
                {\prftree[r]{\rl{wk}}
                    {\prfbyaxiom{\rl{ax}}{\th \dl{\tikzmarknode{bc-r-1}{B}}, \tikzmarknode{b-r-1}{B}}}
                    {\th \dl{A}, \dl{\tikzmarknode{bc-r-2}{B}}, \tikzmarknode{b-r-2}{B}}}
                {\th \dl{A} \lor \dl{\tikzmarknode{bc-r-3}{B}}, \tikzmarknode{b-r-3}{B}}}
            {\th \dl{\tikzmarknode{ac-r-4}{A}} \lor \dl{B}, \dl{A} \lor \dl{\tikzmarknode{bc-r-4}{B}}, \tikzmarknode{a-r-4}{A} \land \tikzmarknode{b-r-4}{B}}}
        {\th \dl{\tikzmarknode{ac-r-5}{A}} \lor \dl{\tikzmarknode{bc-r-5}{B}}, \tikzmarknode{a-r-5}{A} \land \tikzmarknode{b-r-5}{B}}}
\]
\begin{tikzpicture}[remember picture, overlay, thick, blue]
    \draw ([yshift=2pt]ac-l-1.north) .. controls ++(0,1em) and ++(0,1em) .. ([yshift=2pt]a-l-1.north);
    \draw ([yshift=2pt]bc-l-1.north) .. controls ++(0,1em) and ++(0,1em) .. ([yshift=2pt]b-l-1.north);
    \draw ([yshift=2pt]ac-r-1.north) .. controls ++(0,1em) and ++(0,1em) .. ([yshift=2pt]a-r-1.north);
    \draw ([yshift=2pt]bc-r-1.north) .. controls ++(0,1em) and ++(0,1em) .. ([yshift=2pt]b-r-1.north);
    \draw ([yshift=2pt]ac-r-5.north) .. controls ++(0,5pt) and ++(0,-5pt) .. (ac-r-4.south);
    \draw ([yshift=2pt]ac-r-4.north) .. controls ++(0,5pt) and ++(0,-5pt) .. (ac-r-3.south);
    \draw ([yshift=2pt]ac-r-3.north) .. controls ++(0,3pt) and ++(0,-3pt) .. (ac-r-2.south);
    \draw ([shift={(2pt,2pt)}]ac-r-2.north) .. controls ++(0,3pt) and ++(0,-3pt) .. (ac-r-1.south);
    \draw (a-r-5.north) .. controls ++(0,5pt) and ++(0,-5pt) .. (a-r-4.south);
    \draw (a-r-4.north) .. controls ++(0,15pt) and ++(0,-10pt) .. (a-r-3.south);
    \draw (a-r-3.north) .. controls ++(0,3pt) and ++(0,-3pt) .. (a-r-2.south);
    \draw (a-r-2.north) .. controls ++(0,5pt) and ++(0,-5pt) .. (a-r-1.south);
    \draw ([yshift=2pt]bc-r-5.north) .. controls ++(0,5pt) and ++(0,-5pt) .. (bc-r-4.south);
    \draw ([yshift=2pt]bc-r-4.north) .. controls ++(0,7pt) and ++(0,-7pt) .. (bc-r-3.south);
    \draw ([yshift=2pt]bc-r-3.north) .. controls ++(0,3pt) and ++(0,-3pt) .. (bc-r-2.south);
    \draw ([yshift=2pt]bc-r-2.north) .. controls ++(0,3pt) and ++(0,-3pt) .. (bc-r-1.south);
    \draw (b-r-5.north) .. controls ++(0,5pt) and ++(0,-5pt) .. (b-r-4.south);
    \draw (b-r-4.north) .. controls ++(0,5pt) and ++(0,-5pt) .. (b-r-3.south);
    \draw (b-r-3.north) .. controls ++(0,3pt) and ++(0,-3pt) .. (b-r-2.south);
    \draw (b-r-2.north) .. controls ++(0,3pt) and ++(0,-3pt) .. (b-r-1.south);
\end{tikzpicture}}%
The peculiarity of this graph is that it is behaves as an identity w.r.t.\ composition,
i.e.\ whatever derivation \(P\) we may cut against one of the two derivations above,
the resulting graph will be precisely that of~\(P\). This is not in the least surprising
in the case of the derivation of the left, which is just an axiom hence an identity
w.r.t.\ cut-elimination. The situation is different for the derivation on the right,
which is not in general an identity w.r.t.\ cut-elimination.

The principle expressed by the derivation on the right is the \emph{syntactic invertibility
of conjunctions}, i.e.\ the fact that any derivation~\(P\) of the sequent~\(\th \Gamma, A \land B\)
can be turned into a derivation~\(P'\) of the same sequent whose last rule (up to some auxiliary
contraction on the context) introduces the conjunction \(A \land B\). One way to obtain
that result is precisely to cut \(P\) with the derivation on the right, then eliminate
the cut by duplicating \(P\) and commuting its copies up.

The difficulty with this formulation of the classical sequent calculus is that it contains
derivations of conjunctions whose axiom graph cannot be expressed by a derivation ending
with the conjunction rule. This is precisely the case for the left subderivation of the
counter-example described above. The result of our analysis suggests the conjecture
that drives our methodological approach: either axiom graphs are the natural semantics
of a proof system where the invertibility of logical rules is a fundamental property,
and essentially an identity, or at least we may hope to solve the problem by moving
to such a system and refining the notion of axiom graph until it becomes invariant under
inversion of logical rules.

The sequent calculus \(\data{GS4}\) (displayed in \cref{fig:gs4}), both in full and in
its cut-free fragment, is precisely such a system. It is the one-sided presentation
of Kleene's context-sharing style calculus \(\data{G4}\)~\cite{Kle67,Hug10,TS00},
where independent rule applications permute freely. It is known that, when the axioms
of~\(\data{GS4}\) are restricted to atomic conclusions, the set of axioms of any derivation
is invariant under arbitrary permutations of logical rules~\cite{PP20,NP01}. The calculus
admits an elegant proof of completeness and enjoys a cut-elimination procedure~\cite{Pul22}.

\begin{figure}[t]
    \centering
    \begin{gather*}
        \vcenter{\prfbyaxiom{\rl{ax}\ (\(\Gamma\) atomic)}{\th \Gamma, \alpha, \dl{\alpha}}}
        \qquad\qquad
        \vcenter{\prftree[r]{\rl{cut}}
            {\prfassumption{\th \Gamma, A}}
            {\prfassumption{\th \Gamma, \dl{A}}}
            {\th \Gamma}}
        \\[1em]
        \prftree[r]{\rl{\lor}}
            {\prfassumption{\th \Gamma, A, B}}
            {\th \Gamma, A \lor B}
        \qquad\qquad
        \prftree[r]{\rl{\land}}
            {\prfassumption{\th \Gamma, A}}
            {\prfassumption{\th \Gamma, B}}
            {\th \Gamma, A \land B}
    \end{gather*}
    \caption{The sequent calculus \(\data{GS4}\).}
    \label{fig:gs4}
\end{figure}

One difficulty in adopting \(\data{GS4}\) for our investigation is that every provable
sequent has a unique derivation up to permutations of logical rules. This is due to
the fact that weakenings are absorbed into the axioms and sequents are identified up
to arbitrary permutations of their elements, hence we lose in general the ability to
tell which pairs of atomic formulas are linked by the axiom:
\[
    \prfbyaxiom{\rl{ax}}{\th \beta, \dl{\beta}, \alpha, \dl{\alpha}}
    \quad=\quad
    \prfbyaxiom{\rl{ax}}{\th \alpha, \dl{\alpha}, \beta, \dl{\beta}}
\]
In order to restore that ability we perform two steps (\cref{sec:preliminaries}):
\begin{itemize}
    \item we replace usual formulas with \emph{named formulas} (\cref{defn:formulas}),
    i.e.\ formulas where each atomic formula occurrence is assigned a distinct name:
    \[
        \nm{x}{\alpha} \lor (\nm{y}{\beta} \land \nm{z}{\alpha}).
    \]
    It is a tedious but unfortunately necessary technical detail, as it is the only way
    keep track of the history of each occurrence through the derivation;\footnote{The
    alternative is the sequents-as-lists approach, which requires no modification whatsoever
    to the calculus, but makes reasoning on proof transformations exceedingly complicated
    because of the need to insert exchange rules everywhere.}

    \item we enrich the calculus with a \emph{deterministic axiom} and a \emph{superposition
    rule}:
    \[
        \vcenter{\prfbyaxiom{\rl[\{\nm{x}{\alpha},\nm{y}{\dl{\alpha}}\}]{ax}}
            {\th \Gamma, \nm{x}{\alpha},\nm{y}{\dl{\alpha}}}}
        \qquad\qquad
        \vcenter{\prftree[r]{\rl{\sqcup}}
            {\prfassumption{\th \Gamma}}
            {\prfassumption{\th \Gamma}}
            {\th \Gamma}}
    \]
\end{itemize}
We additionally allow the use of non-atomic axioms, which are interesting in that they
greatly reduce the size of derivations; this makes reasoning about the calculus a bit
more complicated, but it provides us with a nice proof of derivability of the contraction
rule (\cref{sec:ctr-wkn}).

Deterministic axioms allow us to recover the standard form that axioms have in calculi
with explicit structural rules without losing the benefits of axioms with embedded
weakenings. They essentially provide a way to tell which formulas in the context come
from a weakening, and which ones do not. Observe that this is not a trivial addition
to~\(\data{GS4}\): the complexity of checking the correctness of standard axiom rule
applications ranges from linear to quadratic in the size of the context, while for
deterministic axioms it ranges from sub-linear to linear.

The superposition rule is an adaptation of the mix-rule to the context-sharing framework.
The shape of the rule is reminiscent of the non-deterministic sums discussed earlier,
but its behaviour under cut-elimination is, as we shall see, closer to that of the
mix-rule. We choose to adopt a non-standard name for the rule to remark this fact
and avoid confusion.

In \cref{sec:inversion-lemmas} we recall the main results about the invertibility of
logical rules in~\(\data{GS4}\), and introduce some related notation. We slightly depart
from standard terminology in that we speak of \emph{inversion} when recovering derivations
of the premises of a logical rule, while we speak of \emph{isolation} when permuting
rule applications to the bottom of the derivation. The two properties are provably equivalent,
hence they often receive the same name. In our case, however, we needed a way to distinguish
the two operations.

We are then able to formalize the axiom graph construction on the enriched calculus
(\cref{sec:axiom-graphs}, \cref{defn:axiom-graphs}) and show the remarkable fact that
axiom graphs are preserved under inversion of logical rules in the cut-free fragment
(\cref{propo:axiom-graphs-isl-cut-free-invariance}). The full calculus on the other hand
fails to satisfy the same property; two counter-examples are provided in \cref{fig:axiom-graph-inequality-example1,%
fig:axiom-graph-inequality-example2} and discussed in \cref{sec:axiom-graphs-failure}.

This motivates us to seek a refinement of the axiom graph construction. The main problem
identified in \cref{sec:axiom-graphs-failure} is that the composition operator for axiom
graphs might join edges which occur in incompatible branches of the derivation, i.e.\ branches
that prove distinct conjuncts of some conjunction occurring in the conclusion. Thus a
path is obtained which provably cannot exist in a cut-free derivation.

Our solution is to enrich axiom graphs with labels tracking for each edge the branch it
came from: we are then able to define a branch-sensitive composition operator (\cref{defn:bl-composition}).
We develop the refinement in \cref{sec:bl-axiom-graphs}, then show in \cref{sec:bl-behaviour}
that the new construction is invariant under inversion of logical rules in the full fragment
of the calculus (\cref{thm:bl-axiom-graphs-invariance}).

Unfortunately, the refined axiom graphs are no longer preserved by the standard logical
cut-reduction steps (as defined e.g.\ in~\cite{Pul22}). The reasons are subtle, we discuss
them briefly along with possible solutions in \cref{sec:cut-reduction-failure}.

We then put our hand to developing a new normalization procedure that preserves the refined
axiom graphs. Naturally, invertibility is at the core of the procedure (described in the
proof of \cref{thm:normalisation}), which exploits it to permute all cuts up the derivation
until they are reduced to \emph{atomic contexts}, i.e.\ they are of the form
\[
    \prftree[r]{\rl{cut}}
        {\prfassumption{\th \Gamma, A}}
        {\prfassumption{\th \Gamma, \dl{A}}}
        {\th \Gamma}
\]
where \(\Gamma\) contains only atomic formulas, while \(A\) may be arbitrarily complex.
The lack of a proper cut-reduction procedure (\cref{sec:cut-reduction-failure}) forces
us to proceed from this point with a normalisation-by-evaluation argument, i.e.\ we show
that whenever the graph of a derivation is not empty, there is a cut-free derivation of
the same conclusion and associated to the same graph.

Such a result was essentially already available in~\cite{LS04}. We manage to restrict
the need for this kind of argument to the significantly simpler case of cuts with atomic
context between cut-free derivations. Nonetheless the main lemma (\cref{lemma:semantic-cut-admissibility},
a kind of graph-based cut-admissibility result) still requires a very complex proof (provided
in \cref{sec:semantic-cut-admissibility-proof}) which a proper cut-reduction procedure,
if available, would simplify greatly.

Finally, we provide in \cref{sec:totality} a direct characterisation of the class of
axiom graphs that come from \(\data{GS4}\) derivations. The resulting condition, which
we call \emph{totality}, is analogous to the correctness criterion of~\cite{LS04} and
provides a kind of sequentialization theorem \cref{thm:sequentialization}. We exploit
this fact to devise a classical proof system (which we call~\(\data{BLG}\)) optimized
for invertibility, in the sense that the inversion procedures become immediate and the
isolation procedures are just identities. Proofs in~\(\data{BLG}\) have the following
shape:
\[
    \makebox{\prftree
        {\blgbranch{\nm{x}{\alpha}, \nm{z}{\beta}, \nm{v}{\dl{\alpha}}, \nm{w}{\gamma}}{x/v}}
        {\blgbranch{\nm{x}{\alpha}, \nm{u}{\dl{\gamma}}, \nm{v}{\dl{\alpha}}, \nm{w}{\gamma}}{x/v, u/w}}
        {\blgbranch{\nm{y}{\dl{\beta}}, \nm{z}{\beta}, \nm{v}{\dl{\alpha}}, \nm{w}{\gamma}}{y/z}}
        {\blgbranch{\nm{y}{\dl{\beta}}, \nm{u}{\dl{\gamma}}, \nm{v}{\dl{\alpha}}, \nm{w}{\gamma}}{u/w}}
        {\th \nm{x}{\alpha} \land \nm{y}{\dl{\beta}}, \nm{z}{\beta} \land \nm{u}{\dl{\gamma}}, \nm{v}{\dl{\alpha}} \lor \nm{w}{\gamma}}}
\]
i.e.\ a set of atomic decompositions of the conclusion together with a graph providing
the axiom links. All \(\data{GS4}\) rules, including the cut-rule, are admissible in~\(\data{BLG}\).

It is important to stress how \(\data{BLG}\) is different from the \(\mathbb{B}\)-nets
of~\cite{LS04}, apart from disallowing some proofs. The question of whether a certain
formalism provides a proof system for propositional classical logic is delicate. Because
propositional tautologies are decidable in exponential time in the size of the formula,
it has been argued (e.g.\ in~\cite{CR79,Hug06b}) that correctness criteria for proofs should
have significantly lower complexity. In particular, Cook and Reckhow argue in~\cite{CR79}
that, as a minimal requirement, proofs should be checkable in polynomial time for some
adequate notion of proof size.

\(\mathbb{B}\)-nets notoriously failed to be a proof system in this sense: the only known
correctness criterion is exponential in the size of the proof object, which is itself polynomially
bounded by the size of the conclusion; moreover, Das showed that the existence of a polynomial
time correctness criterion for Andrews' matings~\cite{And72} or for \(\mathbb{B}\)-nets
would imply \(\mathbf{NP} = \mathbf{coNP}\).

The situation is different for~\(\data{BLG}\), where we are able to argue (quite informally)
that correctness is checkable in polynomial time in the size of proof objects (\cref{propo:polynomial-time-totality}).
The idea is that a correctness check amounts to constructing a~\(\data{GS4}\) derivation of the
same conclusion, then check that the decomposition provided by the \(\data{BLG}\) proof
object matches the one obtained through~\(\data{GS4}\). Crucially, it is possible to construct
the derivation one branch at a time and thus check the correctness of the decomposition
incrementally: we are then able to show that the number of required steps is polynomially
bounded by the size of the~\(\data{BLG}\) proof object.

\section{Tracking atom occurrences}\label{sec:preliminaries}

Let us fix a \emph{countably infinite} set~\(\mathcal{N}\) of~\emph{names} and an
arbitrary set~\(\mathcal{A}\) of~\emph{propositional atoms}, together with a fixpoint-free
involution
\[
    \sdl{(\cdot)}: \mathcal{A} \to \mathcal{A}
\]
i.e.\ a map such that \(\alpha \neq \dl\alpha\) and~\(\dl{\dl\alpha} = \alpha\) for all
atoms~\(\alpha \in \mathcal{A}\): this means that atoms come in pairs symmetrically related
by the involution. We use letters \(x, y, z, \ldots\) to range over~\(\mathcal{N}\) and greek
letters \(\alpha, \beta, \gamma, \ldots\) to range over~\(\mathcal{A}\).

\begin{definition}[Named formulas]
    \label{defn:formulas}
    The set \(\mathcal{F}\) of \emph{classical propositional formulas} is defined by the grammar
    \[
        F, G \cceq \alpha \mid F \lor G \mid F \land G
    \]
    where \(\alpha \in \mathcal{A}\). \emph{Named formulas} are formulas where every
    occurrence of a propositional atom is assigned a name from the set \(\mathcal{N}\).
    Formally, we define the set \(\mathcal{F^N}\) of \emph{formulas with names
    in~\(\mathcal{N}\)} by the grammar
    \[
        A, B \cceq \nm{x}{\alpha} \mid A \lor B \mid A \land B
    \]
    where \(\alpha \in \mathcal{A}\) and~\(x \in \mathcal{N}\). Call (possibly named)
    formulas of the form \(\alpha\) (resp.~\(\nm{x}{\alpha}\)) \emph{atomic}. Each named
    formula~\(A \in \mathcal{F^N}\) is naturally associated to the anonymous formula~\(|A|
    \in \mathcal{F}\) obtained by forgetting all names. Write \(A \equiv B\) iff \(|A| = |B|\),
    i.e.\ if \(A\) and~\(B\) are identical up to a change of names.
\end{definition}

We define as usual an involution \(\sdl{(\cdot)}: \mathcal{F} \to \mathcal{F}\) (resp.~%
\(\sdl{(\cdot)}: \mathcal{F^N} \to \mathcal{F^N}\)) over propositional formulas and named
formulas, expressing negation through De Morgan's duality:
\[\arraycolsep=1em
    \begin{array}{ccc}
        \sdl{(\alpha)} = \dl\alpha & \dl{F \lor G} = \dl{F} \land \dl{G} & \dl{F \land G} = \dl{F} \lor \dl{G}; \\[.5em]
        \sdl{(\nm{x}{\alpha})} = \nm{x}{\dl\alpha} & \dl{A \lor B} = \dl{A} \land \dl{B} & \dl{A \land B} = \dl{A} \lor \dl{B}.
    \end{array}
\]

\begin{definition}[Named sequents]
    \label{defn:named-sequents}
    \emph{Named classical propositional sequents} are expressions of the form \(\th \Gamma,
    \th \Delta, \ldots\) where \(\Gamma, \Delta, \ldots\) are \emph{finite sets} of named
    formulas. Write \(\Gamma \equiv \Delta\) (resp.~\((\th \Gamma) \equiv (\th \Delta)\))
    if sets \(\Gamma, \Delta\) (resp.\ sequents \(\th \Gamma, \th \Delta\)) are identical up
    to a change of names, formally if there is a \emph{bijection} \(\phi: \Gamma \to \Delta\)
    such that \(A \equiv \phi A\) for all \(A \in \Gamma\).
\end{definition}

For every named formula~\(A \in \mathcal{F^N}\), let~\(\data{names}(A)\) denote the set of
all names occurring in~\(A\), with the obvious inductive definition. The definition extends
easily to sets and named sequents by taking unions over all their members, i.e.
\[
    \data{names}(\Gamma) = \data{names}(\th \Gamma) = \bigcup_{A \in \Gamma} \data{names}(A),
\]
where \(\Gamma\) is any finite set of named formulas.

\begin{definition}[Sharing-free formulas, sets, sequents]
    \label{defn:sharing-freedom}
    Say that named formulas \(A, B\) (resp.\ sets~\(\Gamma, \Delta\) of formulas,
    sequents~\(\th \Gamma, \th \Delta\)) \emph{share names} if their name sets overlap
    (i.e.\ have non-empty intersection), otherwise say that they \emph{share no names};
    in the case of formulas we may also say that they are \emph{disjoint}, while we shall
    not use this terminology with sets and sequents to avoid confusion with the set-theoretical
    meaning of the term.

    Call a named formula~\(A\) \emph{sharing-free} if each name appears at most once in~\(A\);
    formally, by structural induction, if either \(A\) is atomic or it is of the form~\(B \lor C\),
    \(B \land C\) where \(B, C\) are disjoint and themselves sharing-free. Call a set~\(\Gamma\)
    of named formulas \emph{sharing-free} if all formulas in~\(\Gamma\) are sharing-free and
    pairwise disjoint. Call a sequent~\(\th \Gamma\) \emph{sharing-free} iff so is the
    set~\(\Gamma\).
\end{definition}

\begin{definition}[Atom indexing notation]
    \label{defn:atom-indexing}
    Let \(A\) (resp.~\(\Gamma\)) be a \emph{sharing-free} named formula (resp.\ set of named
    formulas). By \cref{defn:sharing-freedom} there is for each name \(x \in \data{names}(A)\)
    (resp.~\(\data{names}(\Gamma)\)) a unique atom \(\alpha \in \mathcal{A}\) such that
    \(\nm{x}{\alpha}\) is a subformula of \(A\) (resp.\ of~\(\Gamma\)). We write \(A[x]\)
    (resp.~\(\Gamma[x]\)) to denote that unique~\(\alpha\).
\end{definition}

Note that negation preserves names when applied to named formulas, hence for all~\(A \in
\mathcal{F^N}\) we have \(\data{names}(A) = \data{names}(\dl{A})\), and~\(\dl{A}[x] =
\dl{A[x]}\) for all~\(x \in \data{names}(A)\).

From now on we are going to assume that \emph{all} named formulas and sequents be sharing-free.
As a consequence, when writing down sequents through the customary comma notation:
\[
    \th \Gamma, A, B, \Delta, \ldots
\]
we shall assume implicitly that every pair of comma-separated components share no names. Observe
that disjoint formulas are necessarily distinct, and therefore any pair of sets of named formulas
sharing no names must also be disjoint in the set-theoretical sense. In particular, in the example
above, \(\Gamma, \Delta\) are disjoint, \(A \notin \Gamma\), and so on\ldots

\begin{remark}
    While named sequents are defined as sets, because of the presence of names this is
    by no means equivalent to the standard \emph{sequents-as-sets} approach: for example,
    the sequent
    \[
        \th \nm{x}{\alpha}\! \land \nm{y}{\beta}, \nm{z}{\alpha}\! \land \nm{w}{\beta}
    \]
    contains two occurrences of the same anonymous formula, distinguished by their names.
    In other words, stripping named sequents of all names results in multisets of formulas.
\end{remark}

We now redefine the usual \(\data{GS4}\) sequent system using named sequents in place of
traditional ones. As announced before, we also enrich the system with \emph{deterministic
axioms} and a \emph{superposition rule}.

\begin{definition}
    \label{defn:named-derivations}
    \emph{Named \(\data{GS4}\) derivations} are finite trees whose nodes (also called \emph{rule
    applications}) are labeled by an inference rule and by a \emph{sharing-free} named sequent (called
    \emph{conclusion} of the rule), inductively constructed in accordance with the following rules:
    \begin{itemize}
        \item identity rules:
        \[
            \prfbyaxiom{\rl[\{A,\dl{B}\}]{ax}}{\th \Gamma, A, \dl{B}}
            \qquad
            \prftree[r]{\rl{cut}}
                {\prfassumption{\th \Gamma, A}}
                {\prfassumption{\th \Gamma, \dl{A}}}
                {\th \Gamma}
        \]
        where \(A \equiv B\);

        \item superposition rule:
        \[
            \prftree[r]{\rl{\sqcup}}
                {\prfassumption{\th \Gamma}}
                {\prfassumption{\th \Gamma}}
                {\th \Gamma}
        \]

        \item logical rules:
        \[
            \prftree[r]{\rl{\lor}}
                {\prfassumption{\th \Gamma, A, B}}
                {\th \Gamma, A \lor B}
            \qquad
            \prftree[r]{\rl{\land}}
                {\prfassumption{\th \Gamma, A}}
                {\prfassumption{\th \Gamma, B}}
                {\th \Gamma, A \land B}
        \]
    \end{itemize}
    Let \(\data{GS4}^\mathcal{N}\) denote the set of all named derivations; we use letters
    \(P, Q, R, \ldots\) to range over \(\data{GS4}^\mathcal{N}\). We let~\(\data{names}(P)\)
    denote the set of all names occurring in some formula in~\(P\).
\end{definition}

\begin{remark}
    The peculiar form of axiom rules is due to the fact that the pairs of formulas they relate
    must share no names: because named formulas \(A\) and~\(\dl{A}\) share the same
    names, the conclusion of an hypothetical derivation tree of the form
    \[
        \prfbyaxiom{\rl[\{A, \dl{A}\}]{ax}}{\th \Gamma, A, \dl{A}}
    \]
    would not be sharing-free, as required by \cref{defn:named-derivations}.
\end{remark}

Correctness, completeness and cut admissibility results apply as usual: by forgetting
names one obtains a sound proof in the negative presentation of sequent calculus as
a tree of multisets of formulas, while completeness and cut admissibility can be obtained
either by adding names to anonymous proofs, or more easily by a straightforward adaptation
of Schutte's completeness proof (see~\cite{Sch77,AT14}).

\section{Inversion and isolation of logical rules}\label{sec:inversion-lemmas}

The foremost property of \(\data{GS4}^\mathcal{N}\) derivations is the invertibility of
all logical inference rules. As is well-known, the reason why they are called \emph{invertible}
is not just that their conclusions logically entail their premises; in fact, a stronger
property holds: each derivation~\(P\) of \(\th \Gamma, A\) with \(A\) non-atomic may be
rewritten in such a way as to recover derivations in the context~\(\Gamma\) of each premiss
of the logical rule introducing~\(A\). We call this rewriting step \emph{inversion}.
One may then apply the logical rule again to obtain a derivation~\(P'\) of \(\th \Gamma, A\)
where the last rule application introduces~\(A\): we say that \(P'\) has been obtained
by \emph{isolating}~\(A\) in~\(P\). We state below the relevant facts and introduce some
notation; the rewriting procedures are described in detail in \cref{sec:inversion-proofs}.

\begin{lemma}
    \label{lemma:disjunction-transform}
    For each derivation \(P \in \data{GS4}^\mathcal{N}\) with conclusion~\(\th \Gamma, A \lor B\),
    there is a derivation \(\data{inv}(P, A \lor B) \in \data{GS4}^\mathcal{N}\) with
    conclusion~\(\th \Gamma, A, B\), cut-free if so is \(P\).
\end{lemma}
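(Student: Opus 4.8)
The plan is to proceed by structural induction on the derivation $P$, tracking how the formula $A \lor B$ may be introduced. At each step one considers the last rule application in $P$ and commutes the desired inversion up through it; the shared context $\Gamma$ (in the context-sharing style of $\data{GS4}$) makes these commutations clean, since no splitting or merging of contexts is required.

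\begin{proof}[Proof sketch]
We argue by induction on the structure of $P$, distinguishing cases according to the last rule applied.

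\textbf{Axiom.} If $P$ is an axiom $\rl[\{C, \dl{D}\}]{ax}$ with conclusion $\th \Gamma, A \lor B$, then $A \lor B$ occurs in $\Gamma, C, \dl{D}$. Either $A \lor B$ is a weakening formula, in which case $A$ and $B$ are both weakening formulas and we may output the same axiom with conclusion $\th \Gamma, A, B$; or $A \lor B$ is one of the two linked formulas, say $C$, so that $\dl{D} = \dl{A} \land \dl{B}$ and $D = A \lor B$ — but then $|C| = |D|$ forces $C$ and $\dl{D}$ to share names with $A$ and $B$, and the conclusion cannot be sharing-free, so this case does not arise for a well-formed named axiom. (If a non-atomic axiom is not permitted to introduce $A \lor B$ directly, the sharing-freedom constraint rules out exactly the problematic configuration.)

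\textbf{Superposition and logical rules with $A \lor B$ in the context.} If the last rule is $\rl{\sqcup}$, $\rl{\lor}$ or $\rl{\land}$ and the principal formula (if any) is \emph{not} $A \lor B$, then $A \lor B$ lies in the context of both premises; we apply the induction hypothesis to each premiss to push the inversion upward, then reapply the same rule. For $\rl{\sqcup}$ and $\rl{\land}$ the shared context makes this immediate since the contexts of the two premises coincide after inversion.

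\textbf{Disjunction introducing $A \lor B$.} If the last rule is $\rl{\lor}$ with principal formula exactly $A \lor B$, its premiss already has conclusion $\th \Gamma, A, B$ and we are done.

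\textbf{Cut.} If the last rule is $\rl{cut}$ on a cut-formula $C$, with premises $\th \Gamma, A \lor B, C$ and $\th \Gamma, A \lor B, \dl{C}$, apply the induction hypothesis to both premises to obtain derivations of $\th \Gamma, A, B, C$ and $\th \Gamma, A, B, \dl{C}$, then cut on $C$ again. This reintroduces a cut, so cut-freedom is preserved only when $P$ is cut-free, as claimed. In every case the transformation introduces no new cuts, establishing the cut-free clause.

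\end{proof}

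The main obstacle is the bookkeeping around names and sharing-freedom rather than any deep combinatorics: one must check at each commutation that the resulting named sequents remain sharing-free and that the names of $A$, $B$ are exactly those of $A \lor B$, so that the atom-indexing notation of \cref{defn:atom-indexing} behaves coherently through the transformation. A secondary subtlety is the non-atomic axiom case, where one must verify that the sharing-freedom requirement of \cref{defn:named-derivations} genuinely excludes an axiom from having $A \lor B$ as a linked formula, so that inversion is always definable. The explicit rewriting procedure, with these verifications, is deferred to \cref{sec:inversion-proofs}.
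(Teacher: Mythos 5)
There is a genuine gap in your axiom case. You claim that an axiom cannot have $A \lor B$ as one of its two linked formulas because ``$|C| = |D|$ forces $C$ and $\dl{D}$ to share names with $A$ and $B$.'' This is not so: the axiom rule of \cref{defn:named-derivations} links a pair $A, \dl{B}$ with $A \equiv B$, where $\equiv$ means \emph{identical up to a change of names} (\cref{defn:formulas}). The two linked formulas are therefore required to have \emph{disjoint} name sets, and an axiom of the form
\[
    \prfbyaxiom{\rl[\{\dl{C} \land \dl{D},\, A \lor B\}]{ax}}{\th \Delta, \dl{C} \land \dl{D}, A \lor B}
\]
with $A \equiv C$ and $B \equiv D$ is perfectly well-formed and sharing-free. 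This is precisely the case your argument dismisses, and it is the one case where inversion cannot simply rewrite the conclusion: the non-atomic axiom must be \emph{expanded}, replacing it by a $\rl{\land}$-rule applied to the two smaller axioms $\rl[\{\dl{C}, A\}]{ax}$ and $\rl[\{\dl{D}, B\}]{ax}$, each with conclusion $\th \Delta, \dl{C}, A, B$ and $\th \Delta, \dl{D}, A, B$ respectively. This expansion is not a side detail — it is the reason the paper later needs the notion of virtual height (\cref{defn:virtual-height}), since isolation can increase the ordinary height of a derivation exactly through these axiom expansions.

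Apart from this, your induction is structured the same way as the paper's proof: the weakening-formula axiom case, the $\rl{\lor}$-rule introducing $A \lor B$, and the commutation through $\rl{cut}$, $\rl{\sqcup}$ and the logical rules all match. To repair the proof you only need to replace the claim that the linked-formula case ``does not arise'' with the explicit axiom-expansion construction above.
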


\begin{corollary}
    \label{coro:disjunction-isolation}
    For each derivation \(P \in \data{GS4}^\mathcal{N}\) with conclusion~\(\th \Gamma, A \lor B\),
    there is a derivation \(\data{isl}(P, A \lor B) \in \data{GS4}^\mathcal{N}\) with the same
    conclusion, cut-free if so is \(P\), and whose last rule application introduces~\(A \lor B\).
\end{corollary}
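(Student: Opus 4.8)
The plan is to obtain $\data{isl}(P, A \lor B)$ by a single application of the $\lor$-rule to the output of the inversion procedure. Concretely, given $P$ with conclusion $\th \Gamma, A \lor B$, first invoke Lemma \ref{lemma:disjunction-transform} to obtain $\data{inv}(P, A \lor B) \in \data{GS4}^\mathcal{N}$ with conclusion $\th \Gamma, A, B$, which is cut-free whenever $P$ is. Then set
\[
    \data{isl}(P, A \lor B) \cceq
    \prftree[r]{\rl{\lor}}
        {\prfsummary[\(\data{inv}(P, A \lor B)\)]{\th \Gamma, A, B}}
        {\th \Gamma, A \lor B}.
\]
This is a well-formed named $\data{GS4}^\mathcal{N}$ derivation: the $\lor$-rule of \cref{defn:named-derivations} takes a derivation of $\th \Gamma, A, B$ to one of $\th \Gamma, A \lor B$, and the sharing-freedom hypothesis is preserved because $A$ and $B$ are already disjoint in the premiss (being comma-separated components of a sharing-free sequent, per the standing convention of \cref{sec:preliminaries}), so $A \lor B$ is itself sharing-free and disjoint from $\Gamma$.

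It remains to check the three asserted properties. The conclusion is $\th \Gamma, A \lor B$ by construction, matching that of $P$. The last rule application introduces $A \lor B$, again by construction. Finally, if $P$ is cut-free then $\data{inv}(P, A \lor B)$ is cut-free by Lemma \ref{lemma:disjunction-transform}, and adjoining a single $\lor$-rule (which is not a cut) keeps the whole derivation cut-free. Since all of these are immediate from the construction and the cited lemma, no step presents a real obstacle here — the entire content of the isolation statement is packed into the inversion Lemma \ref{lemma:disjunction-transform}, whose proof (deferred to \cref{sec:inversion-proofs}) is where the actual work lies, presumably by induction on $P$ with the delicate cases being the commutation of the inversion step past $\land$-rules and past cuts, and the handling of axioms whose conclusion contains $A \lor B$ as a (named) principal formula.
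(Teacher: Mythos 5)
Your construction is exactly the one the paper uses: apply a single $\lor$-rule to $\data{inv}(P, A \lor B)$, with all the substance delegated to \cref{lemma:disjunction-transform}. The additional well-formedness and cut-freeness checks you spell out are correct and consistent with the paper's conventions.
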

\begin{proof}
    Let \(\data{isl}(P, A \lor B)\) be the derivation
    \[
        \prftree[r]{\rl{\lor}}
            {\prfsummary[\(\data{inv}(P, A \lor B)\)]{\th \Gamma, A, B}}
            {\th \Gamma, A \lor B}
        \qedhere
    \]
\end{proof}

\begin{lemma}
    \label{lemma:left-conjunction-transform}
    For each derivation \(P \in \data{GS4}^\mathcal{N}\) with conclusion~\(\th \Gamma,
    A \land B\), there is a derivation \(\data{inv_l}(P, A \land B) \in \data{GS4}^\mathcal{N}\)
    with conclusion~\(\th \Gamma, A\), cut-free if so is \(P\).
\end{lemma}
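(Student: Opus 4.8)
The plan is to construct \(\data{inv_l}(P, A \land B)\) by structural induction on the derivation \(P\), in close analogy with the disjunction case of \cref{lemma:disjunction-transform}. The structural feature that makes the induction go through is that \(\data{GS4}^\mathcal{N}\) is context-sharing: unless \(A \land B\) is the principal formula of the last rule of \(P\), it lies in the shared context and therefore occurs in every premiss, so the induction hypothesis applies to each premiss and yields a derivation of the corresponding sequent with the occurrence of \(A \land B\) replaced by its subformula \(A\). The construction only ever appends \(\rl{\land}\), \(\rl{\lor}\), \(\rl{\sqcup}\) or \(\rl{ax}\) applications, so cut-freeness is preserved; when the last rule of \(P\) is \(\rl{cut}\) the hypothesis of the clause ``cut-free if so is \(P\)'' is vacuous and we are free to reintroduce a cut.

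I would distinguish the following cases on the last rule of \(P\). \textbf{(i)} \(\rl{\land}\) introducing the occurrence \(A \land B\) itself: the two premisses are \(\th \Gamma, A\) and \(\th \Gamma, B\), and I take the derivation of the left one; this is the only case where the conjunctive structure is actually consumed. \textbf{(ii)} \(\rl{\land}\) introducing a different conjunction \(C \land D \neq A \land B\), or \(\rl{\lor}\) introducing some \(C \lor D\): here \(A \land B\) is in the shared context, so I apply the induction hypothesis to each premiss and re-apply the same rule, leaving the context otherwise unchanged. \textbf{(iii)} \(\rl{cut}\) or \(\rl{\sqcup}\): again \(A \land B\) is in the shared context, so I apply the induction hypothesis to both premisses and re-apply the cut on the same cut formula (resp.\ the superposition on the same sequent).

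\textbf{(iv)} \(\rl[\{C, \dl D\}]{ax}\). If \(A \land B\) does not occur in the active pair, then it sits in the weakened context and \(\th \Gamma, A\) is again an instance of the same axiom, sharing-free because obtained from the sharing-free sequent \(\th \Gamma, A \land B\) by replacing \(A \land B\) with its subformula \(A\). If instead \(A \land B\) is one of the active formulas, say \(C = A \land B\) (the subcase \(\dl D = A \land B\) being symmetric), then \(D \equiv A \land B\), so \(D = D_1 \land D_2\) and \(\dl D = \dl{D_1} \lor \dl{D_2}\) with \(D_1 \equiv A\) and \(D_2 \equiv B\); writing \(\Gamma = \Gamma_0 \cup \{\dl{D_1} \lor \dl{D_2}\}\) with \(\Gamma_0\) the weakened context of the axiom, I take \(\data{inv_l}(P, A \land B)\) to be the axiom \(\rl[\{A, \dl{D_1}\}]{ax}\) with conclusion \(\th \Gamma_0, \dl{D_2}, A, \dl{D_1}\) followed by a \(\rl{\lor}\) application that recombines \(\dl{D_1} \lor \dl{D_2}\), yielding \(\th \Gamma, A\).

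The only step requiring genuine care rather than mechanical rearrangement of premisses is this last axiom subcase: one must verify that the reconstructed axiom is well-formed, i.e.\ that \(A \equiv D_1\) (immediate, since \(D_1 \equiv A\)) and that \(A\) and \(\dl{D_1}\) share no names, which follows from sharing-freedom of the original axiom conclusion because \(\data{names}(A) \subseteq \data{names}(A \land B)\) and \(\data{names}(\dl{D_1}) \subseteq \data{names}(\dl D)\) are disjoint; and that the subsequent \(\rl{\lor}\) application restores exactly the context of the original conclusion. Everything else is routine bookkeeping: in the non-axiom cases the induction hypothesis is applied to sequents that literally contain the occurrence \(A \land B\), so no renaming is needed, and cut-freeness is immediate since the only rules introduced are \(\rl{\lor}\), \(\rl{\land}\), \(\rl{\sqcup}\) and \(\rl{ax}\). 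The symmetric statement for the right conjunct, \(\data{inv_r}\), is obtained by the same argument with the roles of the two premisses of \(\rl{\land}\) exchanged.
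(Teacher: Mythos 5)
Your proposal is correct and follows essentially the same structural induction as the paper's proof, including the same treatment of the two axiom subcases (re-instantiating the axiom on the pair \(\{A, \dl{D_1}\}\) and recombining the disjunction \(\dl{D_1} \lor \dl{D_2}\) with a \(\rl{\lor}\) application when \(A \land B\) is active, and simply shrinking the weakened context otherwise). The well-formedness checks you flag as the only delicate point are indeed the ones implicitly relied on in the paper.
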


\begin{lemma}
    \label{lemma:right-conjunction-transform}
    For each derivation \(P \in \data{GS4}^\mathcal{N}\) with conclusion~\(\th \Gamma,
    A \land B\), there is a derivation \(\data{inv_r}(P, A \land B) \in \data{GS4}^\mathcal{N}\)
    with conclusion~\(\th \Gamma, B\), cut-free if so is \(P\).
\end{lemma}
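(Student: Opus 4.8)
The plan is to prove this exactly as \cref{lemma:left-conjunction-transform}, only with the roles of the two conjuncts interchanged — the argument is the mirror image of that one under the symmetry of $\land$. For concreteness I would carry out the induction directly: $\data{inv_r}(P, A \land B)$ is defined by structural recursion on $P$, splitting on its last rule application, and one checks in passing that no case introduces a cut not already present in $P$ — which gives the ``cut-free if so is $P$'' clause.

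If $P$ ends with the $\land$-rule introducing the displayed occurrence of $A \land B$, from premises $\th \Gamma, A$ and $\th \Gamma, B$, take $\data{inv_r}(P, A \land B)$ to be the second premise subderivation; this is the only point where the construction differs from that of \cref{lemma:left-conjunction-transform}, which returns the first premise. If $P$ ends with any other non-axiom rule — a $\lor$-rule, a $\land$-rule introducing a different formula of the context, the superposition rule $\sqcup$, or a cut — then, since every rule of $\data{GS4}^\mathcal{N}$ is context-sharing and leaves the non-principal formulas untouched, the displayed occurrence of $A \land B$ together with the shared context survives in the conclusion of every premise: recursively invert it there and re-apply the same rule. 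The only case that re-applies a cut is the cut case, so cuts of the output lie below cuts of $P$.

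The axiom case is the delicate one, and I expect it to be the main obstacle, precisely because \cref{defn:named-derivations} permits non-atomic axioms and because both the side condition of the axiom rule (its principal pair shares no names) and the global sharing-freedom invariant must survive the rewriting. Let $P$ be a single instance of $\rl[\{C, \dl D\}]{ax}$ with conclusion $\th \Gamma, A \land B$. If the displayed $A \land B$ lies in the weakening context, i.e.\ $A \land B \notin \{C, \dl D\}$, replace it there by $B$: by sharing-freedom of $\th \Gamma, A \land B$ the subformula $B$ shares no names with $C$, $\dl D$, or the rest of the context, so this is again a valid axiom, now concluding $\th \Gamma, B$. Otherwise $A \land B$ is the distinguished formula, say $A \land B = C$ (the subcase $A \land B = \dl D$ is symmetric); then $\dl D \equiv \dl A \lor \dl B$, so $\dl D$ is of the form $C_1 \lor C_2$ with $C_1 \equiv \dl A$ and $C_2 \equiv \dl B$, and since $\dl{C_2} \equiv B$ and $C_2, B$ share no names, the derivation
\[
    \prftree[r]{\rl{\lor}}
        {\prfbyaxiom{\rl[\{C_2, B\}]{ax}}{\th \Gamma', C_1, C_2, B}}
        {\th \Gamma', C_1 \lor C_2, B}
\]
— where $\Gamma'$ is the weakening context of the original axiom, so that this conclusion is exactly $\th \Gamma, B$ — is the desired $\data{inv_r}(P, A \land B)$. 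Everything else is bookkeeping identical to the proof of \cref{lemma:left-conjunction-transform}, so once the axiom case is handled nothing more remains.
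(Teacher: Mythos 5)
Your proposal is correct and follows essentially the same route as the paper's proof: structural recursion on the last rule, returning the right premiss when the $\land$-rule introduces $A \land B$, passing through all other rules by context-sharing, and in the axiom case either weakening by $B$ instead of $A \land B$ or, when $A \land B$ is the distinguished formula, expanding the axiom into a $\lor$-rule over a smaller axiom on the pair $\{C_2, B\}$ — exactly the paper's construction. Nothing further is needed.
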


\begin{corollary}
    \label{coro:conjunction-isolation}
    For each derivation \(P \in \data{GS4}^\mathcal{N}\) with conclusion~\(\th \Gamma,
    A \land B\), there is a derivation \(\data{isl}(P, A \land B) \in \data{GS4}^\mathcal{N}\)
    with the same conclusion, cut-free if so is \(P\), and whose last rule application
    introduces~\(A \land B\).
\end{corollary}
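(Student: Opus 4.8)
The plan is to follow the recipe used for \cref{coro:disjunction-isolation}, only now drawing on the \emph{two} conjunction inversion lemmas rather than the single disjunction one. First I would invoke \cref{lemma:left-conjunction-transform} on~\(P\) to obtain a derivation \(\data{inv_l}(P, A \land B)\) with conclusion~\(\th \Gamma, A\), and \cref{lemma:right-conjunction-transform} on~\(P\) to obtain a derivation \(\data{inv_r}(P, A \land B)\) with conclusion~\(\th \Gamma, B\); by those lemmas both are cut-free whenever~\(P\) is. Since the conclusion \(\th \Gamma, A \land B\) of~\(P\) is sharing-free, so are the sequents \(\th \Gamma, A\) and \(\th \Gamma, B\), hence the \(\land\) rule may legitimately be applied to these two premises, yielding a well-formed named derivation.

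I would then take \(\data{isl}(P, A \land B)\) to be the derivation
\[
    \prftree[r]{\rl{\land}}
        {\prfsummary[\(\data{inv_l}(P, A \land B)\)]{\th \Gamma, A}}
        {\prfsummary[\(\data{inv_r}(P, A \land B)\)]{\th \Gamma, B}}
        {\th \Gamma, A \land B}
\]
Its conclusion coincides with that of~\(P\), its last rule application visibly introduces \(A \land B\), and it is cut-free whenever~\(P\) is, since in that case cut-freeness is inherited from both premise derivations. I do not expect any obstacle: all the real work has been relegated to \cref{lemma:left-conjunction-transform,lemma:right-conjunction-transform}, which we are assuming, and the only point deserving a moment's care — that the sequents of the assembled tree remain sharing-free — is immediate from sharing-freeness of~\(P\)'s conclusion.
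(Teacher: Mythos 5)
Your proposal is correct and coincides with the paper's own proof: both build \(\data{isl}(P, A \land B)\) by applying the \(\rl{\land}\) rule to \(\data{inv_l}(P, A \land B)\) and \(\data{inv_r}(P, A \land B)\), mirroring the disjunction case. The extra remark on sharing-freeness is a harmless (and accurate) bit of added care.
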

\begin{proof}
    Let~\(\data{isl}(P, A \land B)\) be the derivation
    \[
        \prftree[r]{\rl{\land}}
            {\prfsummary[\(\data{inv_l}(P, A \land B)\)]{\th \Gamma, A}}
            {\prfsummary[\(\data{inv_r}(P, A \land B)\)]{\th \Gamma, B}}
            {\th \Gamma, A \land B}
        \qedhere
    \]
\end{proof}

\subsection{Admissibility of contraction and weakening}\label{sec:ctr-wkn}

\begin{lemma}
    \label{lemma:wk-transform}
    For each derivation \(P \in \data{GS4}^\mathcal{N}\) with conclusion~\(\th \Gamma\)
    and \emph{sharing-free} set~\(\Delta\) of named formulas such that \(\Gamma, \Delta\)
    share no names, there is a derivation \(\data{wk}(P, \Delta) \in \data{GS4}^\mathcal{N}\)
    with conclusion~\(\th \Gamma, \Delta\), cut-free if so is \(P\).
\end{lemma}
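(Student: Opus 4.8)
The plan is a structural induction on the derivation $P$, defining $\data{wk}(P,\Delta)$ by threading the extra context $\Delta$ through every node. The only genuine content is checking that each sequent written along the way stays sharing-free: this is exactly what the hypotheses on $\Delta$ secure at the leaves and under the logical rules, and it is what forces a small renaming step at cuts.

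If $P$ is an axiom with conclusion $\th \Pi, A, \dl B$ (so that $\Gamma = \Pi \cup \{A, \dl B\}$ with $A \equiv B$), then, since $\Gamma$ and $\Delta$ are sharing-free and share no names, the set $\Pi \cup \Delta \cup \{A, \dl B\}$ is again sharing-free with $A$ and $\dl B$ occurring in it exactly once; hence the axiom with the same link $\{A, \dl B\}$ and conclusion $\th \Pi, \Delta, A, \dl B$ (which is $\th \Gamma, \Delta$) is a legal, cut-free derivation. If $P$ ends with the $\lor$-rule, the $\land$-rule, or the superposition rule, then the conclusion of each premiss has a name set contained in $\data{names}(\Gamma)$, hence still disjoint from $\data{names}(\Delta)$ and still sharing-free together with $\Delta$; so the induction hypothesis applies to each premiss with the same $\Delta$, and re-applying the same rule to the weakened premises yields a derivation of $\th \Gamma, \Delta$. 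Since this introduces no cut, the result is cut-free whenever $P$ is.

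The one case needing care is the cut rule. Suppose $P$ ends with a cut on $A$, from premises $\th \Gamma, A$ and $\th \Gamma, \dl A$: here $\data{names}(A)$ is disjoint from $\data{names}(\Gamma)$ but may well overlap $\data{names}(\Delta)$, so the induction hypothesis does not apply directly. I would first perform a harmless renaming of the cut formula — replacing $A$, throughout both premise subderivations, by some $A' \equiv A$ whose names are fresh for $\Delta$ (and for $\Gamma$); this is legitimate because named $\data{GS4}^\mathcal{N}$ derivations are stable under injective renamings of names and $\mathcal{N}$ is countably infinite, and it leaves the two conclusions of the form $\th \Gamma, A'$ and $\th \Gamma, \dl{A'}$. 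Then $\Gamma, A', \Delta$ and $\Gamma, \dl{A'}, \Delta$ are sharing-free, the induction hypothesis yields derivations of $\th \Gamma, A', \Delta$ and $\th \Gamma, \dl{A'}, \Delta$, and a final cut on $A'$ produces the desired derivation of $\th \Gamma, \Delta$ (no cut-freeness claim is needed here). Taking $\data{wk}(P,\Delta)$ to be the derivation thus built, the only point that really demands attention is this name bookkeeping — keeping the internal names of $P$ clear of $\Delta$; every other step is immediate.
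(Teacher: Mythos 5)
Your proposal is correct and follows essentially the same route as the paper: extend the axiom conclusions directly, push $\Delta$ through the context-sharing rules using the fact that premiss name sets are contained in $\data{names}(\Gamma)$, and handle the cut case by renaming the cut formula away from $\data{names}(\Delta)$ before applying the induction hypothesis. The only cosmetic difference is that the paper phrases the induction as being on the \emph{height} of $P$ rather than on its structure, since the renamed premiss subderivations are no longer literal subterms of $P$ (height is preserved by renamings, which makes the inductive call legitimate); your appeal to stability under injective renamings covers the same point.
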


The contraction rule is derivable through the use of cuts, and therefore admissible:
\[
    \vcenter{\prftree[r]{\rl{cut}}
        {\prfsummary[\(P\)]{\th \Gamma, A, B}}
        {\prfbyaxiom{\rl[\{A,\dl{B}\}]{ax}}{\th \Gamma, A, \dl{B}}}
        {\th \Gamma, A}}
    \qquad
    (A \equiv B)
\]
On the other hand, the weakening rule is not derivable because of the way the cut rule is
formulated in \(\data{GS4}\) (i.e.\ with context sharing).

\section{Axiom graphs}\label{sec:axiom-graphs}

We are now ready to formalize the idea of axiom-induced graphs. We shall rely on a standard
notion of simple graph: the related notation and properties are summarized in \cref{sec:graphs}.
We start from the notion of \emph{name graph}, i.e.\ a simple graph whose vertex set is a subset
of the set~\(\mathcal{N}\) of names. We associate a name graph to each derivation; following
the intuitions given in the introduction, only axioms and cuts will receive a special treatment,
while all other rules shall be interpreted trivially by graph unions.

\begin{definition}[Weakening and identity graphs]
    \label{defn:wk-id-graphs}
    For any sharing-free set \(\Gamma\) of named formulas, let \(\data{Wk}_\Gamma\)
    denote the graph
    \[
        \data{Wk}_\Gamma = \langle \data{names}(\Gamma), \emptyset \rangle
    \]
    whose vertices are the names occurring in~\(\Gamma\) and whose edge set is empty.
    We call~\(\data{Wk}_\Gamma\) the \emph{edgeless} or \emph{weakening graph on~\(\Gamma\)}.

    For any pair \(A, \dl{B}\) of disjoint and sharing-free named formulas such that
    \(A \equiv B\), we define by induction on the height\footnote{Which is necessarily
    identical to the height of~\(B\), see \cref{defn:formula-complexity} from \cref{sec:complexity}.}
    of~\(A\) the graph \(\data{Id}_{\{A, \dl{B}\}}\):
    \begin{gather*}
        \data{Id}_{\{\nm{x}{\alpha}, \nm{y}{\dl\alpha}\}} = \langle \{x,y\}, \{xy\} \rangle;\\[.5em]
        \data{Id}_{\{A_1 \lor A_2, \dl{B}_1 \land \dl{B}_2\}} =
            \data{Id}_{\{A_1 \land A_2, \dl{B}_1 \lor \dl{B}_2\}} =
                \data{Id}_{\{A_1, \dl{B}_1\}} \sqcup \data{Id}_{\{A_2, \dl{B}_2\}}.
    \end{gather*}
    We call~\(\data{Id}_{\{A, \dl{B}\}}\) the \emph{identity graph on~\(A, \dl{B}\)}.
    It is easy to check by induction on~\(A\) that \(V_{\data{Id}_{\{A, \dl{B}\}}} =
    \data{names}(\{A, \dl{B}\})\).
\end{definition}

In order to interpret cuts, we need to implement the informal idea of paths alternating between
the two cut subderivations through the cut-formula. The implementation is independent of the
logical framework and can be specified directly in terms of name graphs:

\begin{definition}[Alternating paths]
    \label{defn:simple-alternating-path}
    Let \(G, H\) be arbitrary name graphs, \(I \subseteq \mathcal{N}\) a set of names
    called \emph{interface}. Furthermore, let \(x_1, \ldots, x_n \in V_G \cup V_H\) be
    a sequence of \emph{pairwise distinct} vertices from \(G, H\) (for some \(n > 0\)),
    and let \(e_i\) denote the unordered pair \(x_i x_{i+1}\) for all \(1 \leq i < n\).

    \(x_1, \ldots, x_n\) is an \emph{alternating path} between \(G\) and~\(H\) \emph{through
    the interface~\(I\)} if and only if
    \begin{enumerate}[(i)]
        \item \(x_i \in I\) for all~\(1 < i < n\) (those we call \emph{internal vertices} of
        the path);

        \item either \(e_i \in E_G\) for all odd~\(1 \leq i < n\) and~\(e_i \in E_H\) for
        all even~\(1 \leq i < n\), or \(e_i \in E_H\) for all odd~\(1 \leq i < n\) and~\(e_i
        \in E_G\) for all even~\(1 \leq i < n\).
    \end{enumerate}
\end{definition}

In other words, alternating paths are built by choosing ‘adjacent’ edges alternately in~\(G\)
and~\(H\) or \emph{vice versa}, under the restriction that all vertices except the first and
the last be in the interface. Observe that cycles are ignored -- since vertices in the
path are required to be pairwise distinct -- hence when \(G\) and~\(H\) are finite, there
are for any given interface only finitely many alternating paths between them.

\begin{definition}[Composition of name graphs]
    \label{defn:name-graph-composition}
    Given name graphs \(G, H\) and a set of names~\(I \subseteq \mathcal{N}\), we define
    the \emph{composite of~\(G\) and~\(H\) on interface~\(I\)} as the graph
    \[
        G \odot_I H = \langle V, E \rangle
    \]
    where
    \[
        V = (V_G \cup V_H) \setminus I
    \]
    and for all \(x \neq y \in V\), \(xy \in E\) if and only if there is an alternating path
    \(z_1, \ldots, z_n\) between \(G\) and~\(H\) through the interface~\(I\) such that
    \(z_1 = x\) and~\(z_n = y\).
\end{definition}

From now on, given any name graphs~\(G, H\) and a named formula \(A\), we may write
\(G \odot_A H\) as a shorthand for \(G \odot_{\data{names}(A)} H\).

\begin{definition}[Axiom graphs]
    \label{defn:axiom-graphs}
    To each derivation tree~\(P \in \data{GS4}^\mathcal{N}\) we associate a named graph~\(\axg{P}\),
    called \emph{axiom graph of~\(P\)}, defined by structural induction on~\(P\):
    \begin{itemize}
        \item if \(P\) has the form
        \[
            \prfbyaxiom{\rl[\{A, \dl{B}\}]{ax}}
                {\th \Gamma, A, \dl{B}}
        \]
        where \(A \equiv B\), then let \(\axg{P} = \data{Wk}_\Gamma \sqcup \data{Id}_{\{A,\dl{B}\}}\);

        \item if \(P\) has the form
        \[
            \prftree[r]{\rl{cut}}
                {\prfsummary[\(Q\)]{\th \Gamma, A \vphantom{\dl{A}}}}
                {\prfsummary[\(R\)]{\th \Gamma, \dl{A}}}
                {\th \Gamma}
        \]
        then let \(\axg{P} = \axg{Q} \odot_A \axg{R}\);

        \item if \(P\) has the form
        \[
            \prftree[r]{\(r\)}
                {\prfsummary[\(Q_1\)]{\th \Gamma_1}}
                {\prfassumption{\cdots}}
                {\prfsummary[\(Q_n\)]{\th \Gamma_n}}
                {\th \Gamma}
        \]
        where \(r \in \{\rl{\sqcup}, \rl{\lor}, \rl{\land}\}\), then let
        \(\axg{P} = \axg{Q_1} \sqcup \ldots \sqcup \axg{Q_n}\).
    \end{itemize}
\end{definition}

\begin{proposition}
    \label{propo:axiom-graphs-conclusion}
    For all derivations~\(P \in \data{GS4}^\mathcal{N}\) with conclusion~\(\th \Gamma\),
    \[
        V_{\axg{P}} = \data{names}(\Gamma).
    \]
\end{proposition}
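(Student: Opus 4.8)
The plan is to argue by a routine structural induction on $P$, feeding the three clauses of \cref{defn:axiom-graphs} into the elementary facts about vertex sets recorded in \cref{defn:wk-id-graphs,defn:name-graph-composition}: namely $V_{\data{Wk}_\Gamma} = \data{names}(\Gamma)$, $V_{\data{Id}_{\{A,\dl{B}\}}} = \data{names}(\{A,\dl{B}\})$, the fact that the vertex set of a disjoint union of name graphs is the union of the vertex sets, and $V_{G \odot_I H} = (V_G \cup V_H) \setminus I$.

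For the axiom base case, where $P$ has conclusion $\th \Gamma, A, \dl{B}$ and $\axg{P} = \data{Wk}_\Gamma \sqcup \data{Id}_{\{A,\dl{B}\}}$, I would just compute $V_{\axg{P}} = \data{names}(\Gamma) \cup \data{names}(\{A,\dl{B}\}) = \data{names}(\Gamma, A, \dl{B})$. For the superposition and logical rules $\axg{P}$ is the union of the axiom graphs of the premises, so $V_{\axg{P}} = \bigcup_i V_{\axg{Q_i}}$, and I would conclude by the induction hypothesis after the routine remark that the union of the name sets of the premises' conclusions equals the name set of the conclusion of $P$ — immediate since $\data{names}(A \lor B) = \data{names}(A \land B) = \data{names}(A) \cup \data{names}(B)$, and the shared context (together with, for $\rl{\sqcup}$, the repeated premise) carries exactly the names of $\Gamma$.

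The cut rule is the only case that needs an extra word. If $P$ ends in a cut on $A$ with premises $Q$, $R$ of conclusions $\th \Gamma, A$ and $\th \Gamma, \dl{A}$, then $\axg{P} = \axg{Q} \odot_A \axg{R}$, so by the induction hypothesis and $\data{names}(\dl{A}) = \data{names}(A)$ I get $V_{\axg{P}} = (V_{\axg{Q}} \cup V_{\axg{R}}) \setminus \data{names}(A) = \bigl(\data{names}(\Gamma) \cup \data{names}(A)\bigr) \setminus \data{names}(A)$. The only thing to verify — and the one place where I expect to actually use a hypothesis rather than unfold definitions — is that this equals $\data{names}(\Gamma)$, i.e.\ that no name of the cut-formula occurs in $\Gamma$; this is exactly the standing sharing-freedom assumption, since the premise sequents are sharing-free, so $\Gamma$ and $A$ share no names and $\data{names}(\Gamma) \cap \data{names}(A) = \emptyset$. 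That closes the induction.
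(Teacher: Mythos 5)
Your proof is correct and follows essentially the same structural induction as the paper's, with the same treatment of the axiom, union, and cut cases. The one point you make explicit that the paper leaves implicit — that sharing-freedom guarantees \(\data{names}(\Gamma) \cap \data{names}(A) = \emptyset\), so the set difference in the cut case returns exactly \(\data{names}(\Gamma)\) — is a correct and worthwhile observation, not a deviation.
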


\subsection{Axiom graphs do not increase under inversion and isolation}

\begin{theorem}
    \label{thm:axiom-graphs-decrease}
    Let~\(P \in \data{GS4}^\mathcal{N}\) be any named derivation tree with conclusion~\(\th \Gamma, A\),
    where \(A\) is any \emph{non-atomic} named formula: then
    \[
        \axg{\data{isl}(P, A)} \sqsubseteq \axg{P}.
    \]
\end{theorem}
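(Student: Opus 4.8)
The plan is to proceed by structural induction on~\(P\), following the case analysis implicit in the definitions of \(\data{inv}\), \(\data{inv_l}\), \(\data{inv_r}\) (the inversion procedures referenced in \cref{lemma:disjunction-transform,lemma:left-conjunction-transform,lemma:right-conjunction-transform}) and of \(\axg{\cdot}\) (\cref{defn:axiom-graphs}). Since \(\data{isl}(P, A)\) is obtained by applying the relevant logical rule to the output(s) of the inversion procedure(s), and since logical rules are interpreted by graph unions in \cref{defn:axiom-graphs}, it suffices to establish the core inequality at the level of inversion, namely
\[
    \axg{\data{inv}(P, A \lor B)} \sqsubseteq \axg{P}
\]
in the disjunction case, and the analogous inequalities \(\axg{\data{inv_l}(P, A \land B)} \sqsubseteq \axg{P}\) and \(\axg{\data{inv_r}(P, A \land B)} \sqsubseteq \axg{P}\) in the conjunction cases; the isolation inequality then follows because \(\sqcup\) is monotone and the \(\lor\)-case adds no edges while the \(\land\)-case is a union of two subderivations each of which I control. (I will also want to check that the vertex sets match, which is immediate from \cref{propo:axiom-graphs-conclusion} since inversion and isolation preserve the conclusion up to the formula decomposition, and \(\data{names}\) is insensitive to that.)

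**Key steps.**
First I would recall precisely how the inversion procedures act on each possible shape of the last rule of~\(P\): when the last rule already introduces the formula \(A\) being inverted, inversion simply strips that rule and returns the premise(s), so the graph is literally unchanged — this is the base-like case. When the last rule introduces some \emph{other} connective, or is a \(\sqcup\), or a \(\data{cut}\), inversion commutes downward/pushes the operation into the premises and reassembles; here I apply the induction hypothesis to each premise and then verify that the reassembly step does not increase the graph. For the \(\sqcup\) and the structural/logical rules that commute transparently, the reassembly is again a graph union, so monotonicity of \(\sqcup\) closes the case. The genuinely interesting case is when the last rule is a \(\data{cut}\) on some formula \(C\): here inversion is applied (by the IH) to both premises \(Q\) of \(\th\Gamma, C, A\) and \(R\) of \(\th\Gamma, \dl C, A\), and the result is recombined by a cut on \(C\); so I need to show that \(\odot_C\) is monotone in both arguments — i.e. if \(G' \sqsubseteq G\) and \(H' \sqsubseteq H\) then \(G' \odot_C H' \sqsubseteq G \odot_C H\). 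This is because every alternating path (\cref{defn:simple-alternating-path}) through \(\data{names}(C)\) in \(G', H'\) is, edge for edge, also an alternating path in \(G, H\), so every edge of the composite survives. Finally, the base case where \(P\) is an axiom does not arise since \(A\) is non-atomic, so an axiom introducing \(A\) is impossible and an axiom with \(A\) merely in the weakening context \(\Gamma\) is handled by the "last rule doesn't touch \(A\)" bookkeeping — but actually an axiom never introduces a compound formula, so inversion on an axiom only occurs when \(A\) is weakened in, which the inversion procedure handles by recording \(A\)'s atomic subformulas as weakened; the graph is then \(\data{Wk}\)-extended, still with no new edges.

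**Main obstacle.**
The delicate point is the \(\data{cut}\) case, and more subtly the \emph{superposition} and \emph{axiom-with-weakening} cases, where inversion may need to \emph{duplicate} context or push a decomposition of \(A\) into a weakening graph: I must be careful that the inversion procedure of \cref{sec:inversion-proofs} does not, in reorganising the derivation, create a path that jumps across a newly-introduced cut or across the two copies produced by commuting past a \(\land\) or \(\sqcup\). The safe way to handle this is to phrase the whole induction as: \emph{inversion induces a graph homomorphism (indeed an edge-reflecting inclusion on the shared vertex set) from \(\axg{\data{inv}(P,A)}\) into \(\axg{P}\)}, so that I track not merely \(\sqsubseteq\) but an explicit witnessing map, which composes cleanly through each reassembly step — in particular through \(\odot_C\), where I use that an alternating path in the smaller graphs maps to an alternating path in the larger ones. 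Carrying this strengthened invariant through every clause of the (lengthy) definition of the inversion procedures is the bulk of the work; each individual clause is routine, but there are many of them and the \(\data{cut}\)/\(\sqcup\)/duplication clauses require genuine care about path structure.
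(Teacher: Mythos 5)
Your overall strategy coincides with the paper's: the theorem is reduced to inequalities for the inversion procedures (an equality \(\axg{\data{inv}(P, A \lor B)} = \axg{P}\) in the disjunction case, and \(\axg{\data{inv_l}(P, A \land B)} \sqsubseteq \axg{P}\rst_{\Gamma,A}\), symmetrically for \(\data{inv_r}\), in the conjunction case), each proved by structural induction on~\(P\) following the clauses of the inversion procedures, with the \(\rl{cut}\) case settled by the observation that an alternating path between the inverted subgraphs is, edge for edge, an alternating path between the original ones — i.e.\ monotonicity of \(\odot_C\). This is exactly the content of \cref{lemma:axiom-graph-disjunction-transform,lemma:axiom-graph-conjunction-transforms} and their proofs.

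There is, however, one concrete error in your case analysis: you assert that an axiom never introduces a compound formula, so that the only axiom case is the one where \(A\) sits in the weakening context. In this calculus that is false: \(\data{GS4}^{\mathcal{N}}\) is deliberately enriched with \emph{non-atomic} axioms \(\rl[\{C,\dl{D}\}]{ax}\) with \(C \equiv D\) of arbitrary height, and the inversion procedures have a genuine base case in which the axiom introduces the formula being inverted together with its dual (e.g.\ \(\rl[\{\dl{C}\land\dl{D},\,A\lor B\}]{ax}\)), expanding it into smaller axioms. That case must be treated explicitly: one uses the recursive decomposition \(\data{Id}_{\{A_1\lor A_2,\dl{B}_1\land\dl{B}_2\}} = \data{Id}_{\{A_1,\dl{B}_1\}} \sqcup \data{Id}_{\{A_2,\dl{B}_2\}}\) of \cref{defn:wk-id-graphs} to see that disjunction inversion preserves the graph exactly, while each conjunction inversion keeps one of the two identity components and replaces the other by an edgeless weakening graph, whence the (possibly strict) inclusion. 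The case is easy and the inequality holds there, so your proof is repairable, but as written it rests on a false premise about the axiom rule. A minor further remark: the strengthened ``edge-reflecting inclusion map'' invariant is unnecessary machinery here, since inversion preserves names verbatim and \(\sqsubseteq\) is literal containment of vertex and edge sets, which already composes through every clause, including \(\odot_C\).
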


From now on, and especially in the statements and proofs of the following lemmas, we shall
write expressions of the form \(\axg{P} \rst_{\Gamma, A}\) as a shorthand for \(\axg{P}
\rst_{\data{names}(\Gamma, A)}\) (see \cref{defn:subgraphs} for the subgraph relation and
the graph restriction notation). All omitted proofs can be found in \cref{sec:axiom-graph-proofs}.

\begin{lemma}
    \label{lemma:axiom-graph-disjunction-transform}
    For all derivations~\(P \in \data{GS4}^\mathcal{N}\) with conclusion~\(\th \Gamma, A \lor B\),
    \[
        \axg{\data{inv}(P, A \lor B)} = \axg{P}.
    \]
\end{lemma}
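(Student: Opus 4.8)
The plan is to argue by structural induction on \(P\), following the recursive definition of the inversion procedure \(\data{inv}(\cdot, A \lor B)\) constructed in the proof of \cref{lemma:disjunction-transform} (see \cref{sec:inversion-proofs}), computing \(\axg{\data{inv}(P, A \lor B)}\) directly from \cref{defn:axiom-graphs} in each case and comparing it against \(\axg{P}\). The guiding observation is that inverting a disjunction never touches any cut-formula: it only ever replaces the distinguished occurrence \(A \lor B\) by the two occurrences \(A\), \(B\), possibly after commuting it past the rules below or unfolding it against an axiom. Consequently the only case requiring genuine graph-theoretic bookkeeping is the one where \(A \lor B\) is one of the two active formulas of an axiom; everything else is a direct application of the inductive hypothesis together with the compositionality built into \cref{defn:axiom-graphs}.

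Concretely, if the last rule of \(P\) is \(\sqcup\), \(\lor\), \(\land\) or \(\data{cut}\) and does not introduce the distinguished \(A \lor B\), then \(\data{inv}(P, A \lor B)\) recurses into each premiss and reapplies that same last rule. Since \cref{defn:axiom-graphs} interprets \(\sqcup\), \(\lor\) and \(\land\) by graph union and the cut by \(\odot_E\) on the interface \(\data{names}(E)\) of its (unchanged) cut-formula \(E\), and the inductive hypothesis gives \(\axg{\data{inv}(Q_i, A \lor B)} = \axg{Q_i}\) for every premiss \(Q_i\), the two sides are assembled from equal ingredients by the same operation and hence coincide; in the cut case one additionally uses that \(\odot_E\) depends only on its two argument graphs and on the fixed set \(\data{names}(E)\), which is disjoint from \(\data{names}(A, B)\) by sharing-freedom. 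If the last rule of \(P\) is the \(\lor\)-rule introducing \(A \lor B\), then \(\data{inv}(P, A \lor B)\) is just the immediate subderivation \(Q\) and \(\axg{P} = \axg{Q}\), since the \(\lor\)-rule is interpreted by (unary) graph union. If \(P\) is an axiom whose distinguished formula \(A \lor B\) lies in the weakening context, then replacing it there by \(A\), \(B\) leaves the axiom graph unchanged, because weakening graphs are edgeless and \(\data{names}(A \lor B) = \data{names}(A) \cup \data{names}(B)\).

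The case I expect to be the main obstacle is when \(P\) is an axiom with active pair \(\{A \lor B, \dl{D}\}\). Then \(\dl{D} \equiv \dl{A} \land \dl{B}\), so \(\dl{D} = E_1 \land E_2\) with \(A \equiv \dl{E_1}\) and \(B \equiv \dl{E_2}\), and \(\data{inv}(P, A \lor B)\) is the derivation obtained by a single \(\land\)-rule from the two axioms with active pairs \(\{A, E_1\}\) and \(\{B, E_2\}\) (each pushing the remaining active formula into its weakening context). Writing \(\Gamma'\) for the weakening context of the original axiom, \cref{defn:axiom-graphs} gives
\[
    \axg{\data{inv}(P, A \lor B)} = \bigl(\data{Wk}_{\Gamma', B} \sqcup \data{Id}_{\{A, E_1\}}\bigr) \sqcup \bigl(\data{Wk}_{\Gamma', A} \sqcup \data{Id}_{\{B, E_2\}}\bigr),
\]
whereas \(\axg{P} = \data{Wk}_{\Gamma'} \sqcup \data{Id}_{\{A \lor B, E_1 \land E_2\}}\). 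The proof is then closed by two ingredients: first, the defining clause \(\data{Id}_{\{A_1 \lor A_2, \dl{B_1} \land \dl{B_2}\}} = \data{Id}_{\{A_1, \dl{B_1}\}} \sqcup \data{Id}_{\{A_2, \dl{B_2}\}}\) of \cref{defn:wk-id-graphs} splits the identity graph in \(\axg{P}\) as \(\data{Id}_{\{A, E_1\}} \sqcup \data{Id}_{\{B, E_2\}}\); second, the union \(\data{Wk}_{\Gamma', A} \sqcup \data{Wk}_{\Gamma', B}\) contributes exactly the vertices \(\data{names}(\Gamma') \cup \data{names}(A) \cup \data{names}(B)\) and no edges, which together with \(\data{names}(E_1 \land E_2) = \data{names}(E_1) \cup \data{names}(E_2)\) makes the two graphs agree vertex-for-vertex and edge-for-edge, with \cref{propo:axiom-graphs-conclusion} serving as a sanity check that both vertex sets equal \(\data{names}(\Gamma', A, B, E_1 \land E_2)\). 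The symmetric subcase, where the distinguished disjunction plays the role of \(\dl{D}\) rather than of the first active formula, is handled verbatim.
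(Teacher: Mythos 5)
Your proposal is correct and follows essentially the same route as the paper's proof: structural induction on \(P\), with the only substantive case being the axiom whose active pair contains \(A \lor B\), resolved exactly as in the paper by the inductive clause of \cref{defn:wk-id-graphs} splitting \(\data{Id}_{\{A \lor B, \dl{A} \land \dl{B}\}}\) into the two component identity graphs, the edgelessness of weakening graphs, and the invariance of name sets under inversion. The remaining cases (context axiom, the \(\lor\)-rule introducing \(A \lor B\), cuts, and rules interpreted by union) are handled identically to the paper, so no further comparison is needed.
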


\begin{lemma}
    \label{lemma:axiom-graph-conjunction-transforms}
    For all derivations~\(P \in \data{GS4}^\mathcal{N}\) with conclusion~\(\th \Gamma, A \land B\),
    \[
        \axg{\data{inv_l}(P, A \land B)} \sqsubseteq \axg{P} \rst_{\Gamma, A} \text{ and }
        \axg{\data{inv_r}(P, A \land B)} \sqsubseteq \axg{P} \rst_{\Gamma, B};
    \]
\end{lemma}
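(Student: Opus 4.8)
The plan is to argue by structural induction on $P$, case-splitting on its last rule so as to mirror the recursive definition of $\data{inv_l}$ and $\data{inv_r}$ given in \cref{sec:inversion-proofs}; since the two halves of the statement are symmetric (exchange $A$ with $B$ and ``left'' with ``right'' throughout), it suffices to treat $\data{inv_l}$. Before the induction I would isolate three elementary facts about name graphs, each proved directly from \cref{defn:name-graph-composition,defn:subgraphs}: (i) $\sqcup$ and $\odot_I$ are monotone with respect to $\sqsubseteq$, since a pair of inclusions $G' \sqsubseteq G$, $H' \sqsubseteq H$ carries every alternating path in $G', H'$ to an alternating path in $G, H$, the interface and vertex-disjointness conditions being untouched; (ii) restriction is monotone in its index set and distributes over union, $(G \sqcup H)\rst_X = G\rst_X \sqcup H\rst_X$; and (iii) the ``push-in'' inequality $(G\rst_{Y \cup I}) \odot_I (H\rst_{Y \cup I}) \sqsubseteq (G \odot_I H)\rst_Y$ whenever $I \cap Y = \emptyset$, proved by noting that an edge on the left comes from an alternating path through $I$ all of whose edges already lie in $G$ or in $H$, so the same path witnesses the corresponding edge of $G \odot_I H$, whose endpoints lie in $Y \cup I$ but not in $I$, hence in $Y$. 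Vertex sets throughout are read off from conclusions via \cref{propo:axiom-graphs-conclusion}.

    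In the base case $P$ is an axiom with principal pair $\{C, \dl D\}$, $C \equiv D$. If $A \land B$ lies in its weakening context, then $\data{inv_l}(P, A \land B)$ is the axiom with the same principal pair and with $A$ in place of $A \land B$ in the weakening context; since $\axg{P}$ is the disjoint union of an edgeless weakening graph with $\data{Id}_{\{C, \dl D\}}$, restricting it to $\data{names}(\Gamma, A)$ merely discards the isolated vertices in $\data{names}(B)$ and we obtain equality. If instead $A \land B$ is one of $C, \dl D$ -- say $A \land B = C$, so that $\dl D = \dl D_1 \lor \dl D_2 \in \Gamma$ with $D_1 \equiv A$, $D_2 \equiv B$ -- then $\data{inv_l}(P, A \land B)$ is a $\lor$-rule over the axiom with principal pair $\{A, \dl D_1\}$ and $\dl D_2$ moved into the weakening context; using $\data{Id}_{\{A \land B, \dl D\}} = \data{Id}_{\{A, \dl D_1\}} \sqcup \data{Id}_{\{B, \dl D_2\}}$ one checks that restricting $\axg{P}$ to $\data{names}(\Gamma, A)$ removes exactly the edges of $\data{Id}_{\{B, \dl D_2\}}$, leaving precisely $\axg{\data{inv_l}(P, A \land B)}$, so equality holds here too.

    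For the inductive step there are four subcases. If $P$ ends with the $\land$-rule introducing $A \land B$, from premises $\th \Gamma, A$ (by $Q_1$) and $\th \Gamma, B$ (by $Q_2$), then $\data{inv_l}(P, A \land B) = Q_1$; both $\axg{Q_1}$ and $\axg{P}\rst_{\Gamma, A}$ have vertex set $\data{names}(\Gamma, A)$, and every edge of $\axg{Q_1}$ is, via $\axg{P} = \axg{Q_1} \sqcup \axg{Q_2}$, an edge of $\axg{P}$ with both endpoints in $\data{names}(\Gamma, A)$, hence survives the restriction -- and this is exactly where the inclusion can be strict, because edges of $\axg{Q_2}$ landing inside $\data{names}(\Gamma)$ may survive too. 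If $P$ ends with a logical rule acting on a formula of $\Gamma$, or with $\sqcup$, then $\data{inv_l}$ commutes past that rule, which contributes only a graph union; applying the induction hypothesis to the immediate subderivations, then facts (i) and (ii) together with $\data{names}(\Gamma', \ldots, A) \subseteq \data{names}(\Gamma, A)$ for the relevant sub-contexts, gives the claim. Finally, if $P$ ends with a cut on a formula $D$ -- which, by sharing-freedom, shares no names with $\Gamma, A \land B$ -- between $Q$, concluding $\th \Gamma, A \land B, D$, and $R$, concluding $\th \Gamma, A \land B, \dl D$, then $\data{inv_l}(P, A \land B)$ is the corresponding cut between $\data{inv_l}(Q, A \land B)$ and $\data{inv_l}(R, A \land B)$; chaining the induction hypothesis on $Q$ and $R$, monotonicity of $\odot_D$ (fact (i)), and the push-in inequality (fact (iii)) with $I = \data{names}(D)$, $Y = \data{names}(\Gamma, A)$ yields $\axg{\data{inv_l}(P, A \land B)} \sqsubseteq (\axg{Q} \odot_D \axg{R})\rst_{\Gamma, A} = \axg{P}\rst_{\Gamma, A}$.

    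The main obstacle is the cut case, and within it fact (iii): keeping the alternating-path argument honest relies on the cut formula's names being disjoint from the context's, so that ``deleting the interface'' and ``restricting to the context'' commute cleanly; more generally, the bookkeeping of name sets under the sharing-free comma convention is the fiddly part everywhere, and here it is genuinely load-bearing. A secondary point of care is pinning down the definition of $\data{inv_l}$ in the axiom-principal case, since allowing non-atomic axioms means the inverted derivation is not an axiom but an axiom followed by a $\lor$-rule.
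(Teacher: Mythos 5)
Your proposal is correct and follows essentially the same route as the paper: structural induction on \(P\) mirroring the recursive definition of \(\data{inv_l}\), with the cut case resolved by observing that an alternating path between the inverted subderivations' graphs is already an alternating path between the original ones. The only (cosmetic) difference is that the paper first notes \(V_{\axg{\data{inv_l}(P, A \land B)}} = \data{names}(\Gamma, A)\) and thereby reduces the whole claim to the unrestricted inclusion \(E_{\axg{\data{inv_l}(P, A \land B)}} \subseteq E_{\axg{P}}\), whereas you carry the restriction through each case via your explicit monotonicity and ``push-in'' facts, which are all sound.
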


\begin{proof}[Proof of \cref{thm:axiom-graphs-decrease}]
    Immediate consequence of \cref{lemma:axiom-graph-disjunction-transform,%
    lemma:axiom-graph-conjunction-transforms} together with the definition of
    the \(\data{isl}\)-transformation (\cref{coro:disjunction-isolation,%
    coro:conjunction-isolation}). The result may also be seen as a very specific
    instance of a more general one by Führmann and Pym~\cite{FP04}.
\end{proof}

\subsection{Axiom graphs are not invariants of isolation and normalisation}\label{sec:axiom-graphs-failure}

The inequality from \cref{thm:axiom-graphs-decrease} can be upgraded to an equality in the
case of cut-free proofs:

\begin{proposition}
    \label{propo:axiom-graphs-isl-cut-free-invariance}
    For all cut-free derivations~\(P \in \data{GS4}^\mathcal{N}\) of the sequent~\(\th \Gamma, A\)
    with \(A\) \emph{non-atomic},
    \[
        \axg{\data{isl}(P, A)} = \axg{P}.
    \]
\end{proposition}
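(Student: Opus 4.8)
The plan is to bootstrap from Theorem~\ref{thm:axiom-graphs-decrease}: we already know that \(\axg{\data{isl}(P, A)} \sqsubseteq \axg{P}\) in general, so it suffices to prove the reverse inclusion \(\axg{P} \sqsubseteq \axg{\data{isl}(P, A)}\) when \(P\) is cut-free. By \cref{propo:axiom-graphs-conclusion} the two graphs have the same vertex set, namely \(\data{names}(\Gamma, A)\), so only the edge sets are in question. I would split on the shape of the non-atomic formula \(A\). If \(A = A_1 \lor A_2\), then by \cref{lemma:axiom-graph-disjunction-transform} we already have \(\axg{\data{inv}(P, A)} = \axg{P}\) exactly, and applying the \(\lor\)-rule on top does not alter the axiom graph (the third clause of \cref{defn:axiom-graphs} interprets \(\lor\) by a graph union with a single premise), so \(\axg{\data{isl}(P, A)} = \axg{P}\) and we are done. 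The real content is therefore the conjunction case \(A = A_1 \land A_2\).

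For the conjunction case I would argue by structural induction on the cut-free derivation \(P\) of \(\th \Gamma, A_1 \land A_2\), mirroring the inductive construction of \(\data{inv_l}\) and \(\data{inv_r}\) that underlies \cref{lemma:left-conjunction-transform,lemma:right-conjunction-transform}. The base case is an axiom \(\rl[\{C, \dl{D}\}]{ax}\) with \(C \equiv D\): here \(A_1 \land A_2\) is (a subformula of) one of \(C\), \(\dl{D}\), or lies in the weakening context \(\Gamma\). If it is in the context, inversion just reshapes a weakening, and the identity part of the graph is untouched; if it is the principal pair, the decomposition of \(\data{Id}_{\{C,\dl D\}}\) according to \cref{defn:wk-id-graphs} into \(\data{Id}_{\{A_1,\dl B_1\}} \sqcup \data{Id}_{\{A_2,\dl B_2\}}\) is precisely what \(\data{isl}\) reconstructs by applying \(\land\) over the two inverted premises. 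In the inductive step, the last rule of \(P\) is \(\lor\), \(\land\), or \(\sqcup\) acting on some formula of \(\th \Gamma, A_1 \land A_2\); in each case \(\data{inv_l}\) and \(\data{inv_r}\) commute that rule past the inversion (this is exactly how they are defined in \cref{sec:inversion-proofs}), so the induction hypothesis applies to the immediate subderivation(s) and the graph built by \(\data{isl}\) is obtained from the subderivation graph(s) by the same union the last rule of \(P\) performs. The one genuinely delicate sub-case is when the last rule of \(P\) is the \(\land\)-rule introducing \(A_1 \land A_2\) itself: then \(P\) already has the isolated shape, \(\data{inv_l}(P, A)\) and \(\data{inv_r}(P, A)\) simply return the two premises \(Q_1\) of \(\th \Gamma, A_1\) and \(Q_2\) of \(\th \Gamma, A_2\), and \(\data{isl}(P, A) = P\) up to re-applying \(\land\), so \(\axg{\data{isl}(P, A)} = \axg{Q_1} \sqcup \axg{Q_2} = \axg{P}\) on the nose.

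The step I expect to be the main obstacle is controlling the bookkeeping of \emph{which} vertices and edges survive in \(\data{inv_l}\) versus \(\data{inv_r}\): the inequalities in \cref{lemma:axiom-graph-conjunction-transforms} are restrictions \(\axg{P} \rst_{\Gamma, A_1}\) and \(\axg{P} \rst_{\Gamma, A_2}\), so neither inverted derivation individually carries all of \(\axg{P}\)'s edges — only their recombination under the \(\land\)-rule does. The crux is to show that every edge of \(\axg{P}\) incident to a name in \(\data{names}(\Gamma, A_1)\) appears in \(\axg{\data{inv_l}(P, A)}\) and symmetrically on the right, and that an edge joining a name of \(A_1\) to a name of \(A_2\) cannot occur in a cut-free \(P\) to begin with — this last point is where cut-freeness is essential and is exactly the phenomenon highlighted in \cref{sec:axiom-graphs-failure}: in a cut-free derivation an axiom link relating a subformula of \(A_1\) to a subformula of \(A_2\) would force \(A_1 \equiv \dl{A_2}\) inside a single branch, which a well-formed \(\data{GS4}\) derivation of \(\th \Gamma, A_1 \land A_2\) does not produce. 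Once that ``no cross-edges'' invariant is established, combined with \cref{lemma:axiom-graph-disjunction-transform} for the nested disjunctions, the reverse inclusion falls out and, together with \cref{thm:axiom-graphs-decrease}, yields the equality.
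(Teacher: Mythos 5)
Your proposal matches the paper's proof in essence: the disjunction case is discharged by \cref{lemma:axiom-graph-disjunction-transform}, and the conjunction case is settled by a structural induction on the cut-free \(P\) establishing \(E_{\axg{P}} = E_{\axg{\data{inv_l}(P,A)}} \cup E_{\axg{\data{inv_r}(P,A)}}\), with exactly the case split you describe (the two axiom shapes, the \(\land\)-rule introducing \(A\), and the remaining rules where unions distribute) and with cut-freeness entering only through the absence of a composition case. One caution: your stated ``crux'' --- that every edge incident to a name of \(\data{names}(\Gamma, A_1)\) appears in \(\axg{\data{inv_l}(P,A)}\) --- is too strong, since an edge lying entirely inside \(\Gamma\) may come from the right premiss of the \(\land\)-rule and survive only in \(\axg{\data{inv_r}(P,A)}\); what the induction actually delivers, and all that is needed, is membership in the \emph{union} of the two inverted graphs.
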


On the other hand, the invariance result does not hold in the case of proofs with cuts,
as there are derivations whose axiom graph decreases strictly under isolation of some conjunction
in their conclusions. \Cref{fig:axiom-graph-inequality-example1} shows one such counterexample,
where there is an alternating path (marked in blue in~\cref{fig:axiom-graph-inequality-example1:pre})
that is lost as soon as the conjunction in the conclusion is isolated, as shown in~%
\cref{fig:axiom-graph-inequality-example1:post}.

The unstable path in \cref{fig:axiom-graph-inequality-example1} is precisely the one that
connects names occurring on the two distinct sides of a conjunction: in fact, it is possible
to prove that no such edge may exist in a cut-free derivation, hence the axiom graph construction
cannot be invariant under cut-elimination.

One might be tempted to modify the construction so that conjunction-crossing paths be erased
as soon as possible; this, however, is not enough to guarantee invariance under isolation.
\Cref{fig:axiom-graph-inequality-example2} shows a more complex example where an alternating
path is lost, whose endpoints are not separated by a conjunction. The problem in this case
is that the disappearing path contains edges coming from subderivations of both conjuncts
of a conjunction that appears in the conclusion. As soon as the conjunction is isolated
(\cref{fig:axiom-graph-inequality-example2:post}), it becomes impossible to construct such
a path.

\section{Branch-labeled axiom graphs}\label{sec:bl-axiom-graphs}

Our solution is to refine the axiom graph construction by attaching labels to each edge,
whose purpose is to track the branches each edge came from in the interpreted derivation.
We do this by replacing the edge set with a relation associating sets of names (representing
branches of derivations) to unordered pairs of vertices (representing unoriented edges).
Every edge must come from some branch: if some pair of vertices has no associated name set,
then we consider there to be no edge between the two vertices. We then use the additional
information to discard alternating paths composed by edges belonging to incompatible branches.

\subsection{Naming branches}

It is a well-known fact \cite{NP01,PP20} that when the axiom rule is restricted to atomic
conclusions, the set of axiom rule conclusions (also called \emph{top-sequents}) of any given
\(\data{GS4}\) derivation is uniquely determined by the conclusion of the derivation.
For obvious reasons, every branch is terminated by a unique axiom-rule application and can
then be named by its conclusion -- in fact it is sufficient to consider the set of names that
occur in the leaf's conclusion. In the presence of superposition rules the naming will not be
unique, but this is not a problem as identically named branches can be collapsed into one
by reducing superpositions to atomic form.

In order to ensure that branch names be stable under isolation, we need to take into account
the possible expansions of non-atomic axioms: branches terminated by such rule applications
correspond in fact to multiple “virtual” atomic branches, still uniquely determined by the
leaf's conclusion. We need then to define the unique set~\(\data{Br}(\Gamma)\) of atomic
branch names determined by a given sequent~\(\th \Gamma\). One possible approach -- followed
e.g.\ in~\cite{PP20} -- is to rely upon \(\data{GS4}\) inference rules to obtain the unique
atomic decomposition of the sequent; we prefer however to provide a direct characterisation
of the set~\(\data{Br}(\Gamma)\), then show that it is compatible with the inference rules.

\begin{definition}[Formula and sequent branches]
    \label{defn:formula-branches}
    We associate inductively a set of branch names to each sharing-free named formula~\(A\)
    as follows:
    \begin{itemize}
        \item \(\data{Br}(\nm{x}{\alpha}) = \{\{x\}\}\);
        \item \(\data{Br}(B \lor C) = \{ X \cup Y \mid X \in \data{Br}(B), Y \in \data{Br}(C) \}\);
        \item \(\data{Br}(B \land C) = \data{Br}(B) \cup \data{Br}(C)\);
    \end{itemize}
    then let, for all sharing-free sets \(\Gamma\) of named formulas,
    \[
        \data{Br}(\Gamma) = \{ X \subseteq \data{names}(\Gamma) \mid \forall A \in \Gamma.\,(X \cap \data{names}(A)) \in \data{Br}(A) \}.
    \]
\end{definition}

The construction for sets is meant to treat them as generalized disjunctions over their elements.
While syntactic binary disjunction distinguishes between a left and a right subformula, elements
of a set have no preferred ordering, hence the need for a slightly less straightforward definition.
We provide for clarity an alternative characterisation of~\(\data{Br}(\Gamma)\) (as usual
we provide detailed proofs in \cref{sec:bl-axiom-graph-proofs}):

\begin{lemma}
    \label{lemma:sequent-branches-alt}
    Let \(\Gamma\) be any sharing-free set of named formulas. For any branch name \(X
    \subseteq \mathcal{N}\), \(X \in \data{Br}(\Gamma)\) if and only if there is a family
    \((X_A)_{A \in \Gamma}\) of branch names such that \(X = \bigcup_{A \in \Gamma} X_A\),
    with \(X_A \in \data{Br}(A)\) for all~\(A \in \Gamma\).
\end{lemma}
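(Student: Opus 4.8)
The plan is to unfold \cref{defn:formula-branches} and simply match the definition of $\data{Br}(\Gamma)$ against the claimed characterisation; the only genuine ingredient is a careful use of sharing-freedom, which is what makes the component-wise decomposition of a branch name work. First I would record an auxiliary fact, proved by a one-line structural induction on $A$: every branch name $X \in \data{Br}(A)$ satisfies $X \subseteq \data{names}(A)$. The atomic case $\data{Br}(\nm{x}{\alpha}) = \{\{x\}\}$ is immediate; in the $\lor$-case branch names are unions $X' \cup Y'$ with $X' \in \data{Br}(B)$, $Y' \in \data{Br}(C)$, and in the $\land$-case they are branch names of $B$ or of $C$, so in both cases the inductive hypothesis gives containment in $\data{names}(B) \cup \data{names}(C) = \data{names}(A)$.

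For the forward direction, assume $X \in \data{Br}(\Gamma)$, i.e.\ $X \subseteq \data{names}(\Gamma)$ and $X \cap \data{names}(A) \in \data{Br}(A)$ for every $A \in \Gamma$. I would set $X_A := X \cap \data{names}(A)$; then $X_A \in \data{Br}(A)$ by hypothesis, and since $\data{names}(\Gamma) = \bigcup_{A \in \Gamma} \data{names}(A)$ we get $X = X \cap \data{names}(\Gamma) = \bigcup_{A \in \Gamma} (X \cap \data{names}(A)) = \bigcup_{A \in \Gamma} X_A$, which is the required family.

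For the backward direction, suppose $X = \bigcup_{A \in \Gamma} X_A$ with $X_A \in \data{Br}(A)$ for all $A \in \Gamma$. By the auxiliary fact $X_A \subseteq \data{names}(A)$, so $X \subseteq \data{names}(\Gamma)$. Now fix any $B \in \Gamma$. Because $\Gamma$ is sharing-free, the name sets $\data{names}(A)$, $A \in \Gamma$, are pairwise disjoint, so $X_A \cap \data{names}(B) = \emptyset$ for $A \neq B$, while $X_B \cap \data{names}(B) = X_B$; distributing intersection over the union yields $X \cap \data{names}(B) = \bigcup_{A \in \Gamma}(X_A \cap \data{names}(B)) = X_B \in \data{Br}(B)$. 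As $B$ was arbitrary and $X \subseteq \data{names}(\Gamma)$, this is exactly $X \in \data{Br}(\Gamma)$.

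I do not expect any real obstacle here: the statement is essentially a reformulation of the definition. The two things to be careful about are (i) the trivial but necessary auxiliary containment $\data{Br}(A) \subseteq \pwr(\data{names}(A))$, without which "$\bigcup_A X_A$" would not restrict cleanly to the $B$-component, and (ii) invoking sharing-freedom precisely at the point where disjointness of the $\data{names}(A)$ is used — this is the hypothesis that makes the equivalence go through, since over a non-sharing-free set the branch names of distinct formulas could overlap and the decomposition would fail.
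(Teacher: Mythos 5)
Your proof is correct and follows essentially the same route as the paper: the forward direction chooses \(X_A = X \cap \data{names}(A)\), and the backward direction combines the auxiliary containment \(\data{Br}(A) \subseteq \pwr(\data{names}(A))\) (the paper's \cref{lemma:branches-in-names}) with the pairwise disjointness of name sets given by sharing-freedom. Nothing to add.
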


\begin{lemma}
    \label{lemma:branches-of-union}
    Let \(\Gamma, \Delta\) be sharing free sets of named formulas that share no names:
    then
    \[
        \data{Br}(\Gamma \cup \Delta) = \{ X \cup Y \mid X \in \data{Br}(\Gamma), Y \in \data{Br}(\Delta) \}.
    \]
\end{lemma}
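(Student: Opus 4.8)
The plan is to unfold \cref{defn:formula-branches} on both sides and lean entirely on the hypothesis that $\Gamma$ and $\Delta$ share no names. The first thing I would record is that this hypothesis forces $\Gamma \cap \Delta = \emptyset$ (every formula contains at least one name, so a common element of $\Gamma$ and $\Delta$ would witness a shared name), hence $\Gamma \cup \Delta$ is a genuine disjoint union; moreover $\data{names}(\Gamma)$ and $\data{names}(\Delta)$ are disjoint and together exhaust $\data{names}(\Gamma \cup \Delta)$. The whole argument then reduces to the observation that, because the name sets are disjoint, intersecting a set of names with $\data{names}(A)$ for $A \in \Gamma$ never sees anything contributed by $\Delta$, and vice versa.

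For the inclusion $\supseteq$ I would take $X \in \data{Br}(\Gamma)$ and $Y \in \data{Br}(\Delta)$ and verify the defining condition of $\data{Br}(\Gamma \cup \Delta)$ for $Z \coloneqq X \cup Y$. First $Z \subseteq \data{names}(\Gamma) \cup \data{names}(\Delta) = \data{names}(\Gamma \cup \Delta)$. Next, given $A \in \Gamma \cup \Delta$, say $A \in \Gamma$ (the case $A \in \Delta$ being symmetric), we have $\data{names}(A) \subseteq \data{names}(\Gamma)$, which is disjoint from $Y \subseteq \data{names}(\Delta)$, so $Z \cap \data{names}(A) = X \cap \data{names}(A)$, and the latter lies in $\data{Br}(A)$ since $X \in \data{Br}(\Gamma)$. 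For the inclusion $\subseteq$ I would take $Z \in \data{Br}(\Gamma \cup \Delta)$ and set $X \coloneqq Z \cap \data{names}(\Gamma)$, $Y \coloneqq Z \cap \data{names}(\Delta)$; disjointness and the inclusion $Z \subseteq \data{names}(\Gamma \cup \Delta)$ give $Z = X \cup Y$. To see $X \in \data{Br}(\Gamma)$: $X \subseteq \data{names}(\Gamma)$ by construction, and for $A \in \Gamma$ we have $\data{names}(A) \subseteq \data{names}(\Gamma)$, so $X \cap \data{names}(A) = Z \cap \data{names}(A) \in \data{Br}(A)$ by the assumption on $Z$; symmetrically $Y \in \data{Br}(\Delta)$. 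Alternatively the same bookkeeping can be packaged more uniformly via \cref{lemma:sequent-branches-alt}, by splitting or merging the witnessing family $(Z_A)_{A}$ along the partition $\Gamma \cup \Delta$.

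I do not expect a real obstacle here: the statement is a routine consequence of the definitions. The one point deserving a moment of care is precisely the justification that $\data{names}(\Gamma)$ and $\data{names}(\Delta)$ are disjoint and partition $\data{names}(\Gamma \cup \Delta)$ — this is what makes each intersection $(\,\cdot\,)\cap \data{names}(A)$ insensitive to the "other" summand and lets the two directions go through symbolically.
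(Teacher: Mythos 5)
Your proof is correct and is in substance the same as the paper's: the paper simply invokes \cref{lemma:sequent-branches-alt} to split and merge the witnessing family \((X_A)_{A \in \Gamma \cup \Delta}\) along the partition of \(\Gamma \cup \Delta\), which is precisely the ``packaged'' alternative you mention at the end, while your main argument inlines the same disjoint-name-set bookkeeping directly from \cref{defn:formula-branches}. Both inclusions go through exactly as you describe, and your preliminary observation that sharing no names forces \(\data{names}(\Gamma) \cap \data{names}(\Delta) = \emptyset\) (and hence set-theoretic disjointness of \(\Gamma\) and \(\Delta\)) is the same point the paper relies on; there is no gap.
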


\begin{corollary}
    \label{coro:branches-of-difference}
    \(\data{Br}(\Gamma) = \{ X \setminus \data{names}(\Delta) \mid X \in \data{Br}(\Gamma \cup \Delta) \}\).
\end{corollary}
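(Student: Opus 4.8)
The plan is to obtain the identity as a direct consequence of \cref{lemma:branches-of-union}. The statement is meant under the standing hypotheses that \(\Gamma\) and \(\Delta\) are sharing-free and share no names, so that \(\Gamma \cup \Delta\) is itself sharing-free and \(\data{names}(\Gamma) \cap \data{names}(\Delta) = \emptyset\) (without this, \(\data{Br}(\Gamma \cup \Delta)\) is not even defined). Under these assumptions \cref{lemma:branches-of-union} gives \(\data{Br}(\Gamma \cup \Delta) = \{ X \cup Y \mid X \in \data{Br}(\Gamma),\, Y \in \data{Br}(\Delta) \}\), and the whole argument amounts to restricting such unions along \(\data{names}(\Delta)\).

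The one auxiliary fact needed is that \(\data{Br}(\Delta) \neq \emptyset\). For named formulas this is immediate by structural induction: \(\data{Br}(\nm{x}{\alpha}) = \{\{x\}\}\) is non-empty, and both the \(\lor\)- and the \(\land\)-clauses of \cref{defn:formula-branches} preserve non-emptiness. For a sharing-free set \(\Delta\), choosing one \(X_A \in \data{Br}(A)\) for each \(A \in \Delta\) and forming \(\bigcup_{A \in \Delta} X_A\) yields, by \cref{lemma:sequent-branches-alt}, an element of \(\data{Br}(\Delta)\) (when \(\Delta = \emptyset\) this is just \(\emptyset \in \data{Br}(\emptyset)\)).

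With this in hand both inclusions are routine. For \(\supseteq\), take \(Z \in \data{Br}(\Gamma \cup \Delta)\) and write \(Z = X \cup Y\) with \(X \in \data{Br}(\Gamma)\) and \(Y \in \data{Br}(\Delta)\) by \cref{lemma:branches-of-union}. Since \(X \subseteq \data{names}(\Gamma)\) is disjoint from \(\data{names}(\Delta)\) while \(Y \subseteq \data{names}(\Delta)\), we get \(Z \setminus \data{names}(\Delta) = (X \setminus \data{names}(\Delta)) \cup (Y \setminus \data{names}(\Delta)) = X \in \data{Br}(\Gamma)\). For \(\subseteq\), take \(X \in \data{Br}(\Gamma)\), fix any \(Y \in \data{Br}(\Delta)\) (which exists by the previous paragraph), and put \(Z = X \cup Y \in \data{Br}(\Gamma \cup \Delta)\); the same computation gives \(Z \setminus \data{names}(\Delta) = X\), so \(X\) lies in the right-hand side. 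There is no real obstacle here: the only point requiring a moment's attention is the non-emptiness of \(\data{Br}(\Delta)\), which is exactly what lets us realise an arbitrary branch name of \(\Gamma\) as the restriction of a full branch name of \(\Gamma \cup \Delta\).
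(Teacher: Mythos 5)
Your proof is correct and follows exactly the route the paper intends: the corollary is stated without proof as an immediate consequence of \cref{lemma:branches-of-union}, and your argument is the natural way to spell that out. You are right to flag the non-emptiness of \(\data{Br}(\Delta)\) as the one point that genuinely needs checking for the \(\subseteq\) inclusion, and your inductive justification of it is sound.
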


It is an easy consequence of \cref{lemma:sequent-branches-alt} that \(\data{Br}(\{A\}) =
\data{Br}(A)\). Therefore, from this point on we are going to abuse systematically the usual
sequent notation and write, e.g., \(\data{Br}(\Gamma, A, \Delta)\) for \(\data{Br}(\Gamma \cup
\{A\} \cup \Delta)\).

\begin{proposition}
    \label{propo:sequent-branches-decomposition}
    Let \(\Gamma\) be any sharing-free set of named formulas, \(A, B\) disjoint and sharing-free
    named formulas that share no name with~\(\Gamma\):
    \begin{enumerate}[(i)]
        \item if all formulas in~\(\Gamma\) are atomic, then \(\data{Br}(\Gamma) = \{\data{names}(\Gamma)\}\);
        \item \(\data{Br}(\Gamma, A \lor B) = \data{Br}(\Gamma, A, B)\);
        \item \(\data{Br}(\Gamma, A \land B) = \data{Br}(\Gamma, A) \cup \data{Br}(\Gamma, B)\);
        \item \(\data{Br}(\Gamma, A)\) and \(\data{Br}(\Gamma, B)\) are disjoint.
    \end{enumerate}
\end{proposition}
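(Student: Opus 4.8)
The plan is to derive all four items from \cref{lemma:sequent-branches-alt}, \cref{lemma:branches-of-union}, the defining clauses of \cref{defn:formula-branches}, and the already-noted identity \(\data{Br}(\{A\}) = \data{Br}(A)\), after recording one auxiliary observation first: \emph{for every sharing-free named formula \(A\), no element of \(\data{Br}(A)\) is empty}. This is a one-line structural induction — \(\data{Br}(\nm{x}{\alpha}) = \{\{x\}\}\); every member of \(\data{Br}(B \lor C)\) has the form \(X \cup Y\) with \(X \in \data{Br}(B)\) nonempty by the inductive hypothesis; and \(\data{Br}(B \land C) = \data{Br}(B) \cup \data{Br}(C)\) contains only nonempty sets by the inductive hypothesis. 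Item (i) then follows directly from \cref{lemma:sequent-branches-alt}: when every \(A \in \Gamma\) is atomic, \(\data{Br}(A) = \{\data{names}(A)\}\) is a singleton, so the only admissible family \((X_A)_{A \in \Gamma}\) is the one with \(X_A = \data{names}(A)\), forcing \(X = \bigcup_{A \in \Gamma} \data{names}(A) = \data{names}(\Gamma)\), and this \(X\) does satisfy the condition of the lemma.

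For items (ii) and (iii) I would expand both sides with \cref{lemma:branches-of-union}. Writing \(S \uplus T\) as shorthand for \(\{X \cup Y \mid X \in S,\ Y \in T\}\), the lemma says precisely that \(\data{Br}\) sends a union of two name-disjoint sets to the \(\uplus\) of their branch sets, while \cref{defn:formula-branches} gives \(\data{Br}(A \lor B) = \data{Br}(A) \uplus \data{Br}(B)\) and \(\data{Br}(A \land B) = \data{Br}(A) \cup \data{Br}(B)\). Since by the standing conventions \(\Gamma\), \(A\) and \(B\) are pairwise name-disjoint, repeated application of \cref{lemma:branches-of-union} rewrites the left side of (ii) as \(\data{Br}(\Gamma) \uplus \bigl(\data{Br}(A) \uplus \data{Br}(B)\bigr)\) and the right side as \(\bigl(\data{Br}(\Gamma) \uplus \data{Br}(A)\bigr) \uplus \data{Br}(B)\), so (ii) is nothing but associativity of \(\uplus\); likewise (iii) reduces to the distributive law \(\data{Br}(\Gamma) \uplus \bigl(\data{Br}(A) \cup \data{Br}(B)\bigr) = \bigl(\data{Br}(\Gamma) \uplus \data{Br}(A)\bigr) \cup \bigl(\data{Br}(\Gamma) \uplus \data{Br}(B)\bigr)\). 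Both are trivial set-theoretic facts about \(\uplus\).

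For item (iv), suppose toward a contradiction that some \(W\) lies in \(\data{Br}(\Gamma, A) \cap \data{Br}(\Gamma, B)\). Then \(W \subseteq \data{names}(\Gamma) \cup \data{names}(A)\) and \(W \subseteq \data{names}(\Gamma) \cup \data{names}(B)\); since \(A\) and \(B\) are disjoint and neither shares a name with \(\Gamma\), the intersection of these two supersets is exactly \(\data{names}(\Gamma)\), whence \(W \subseteq \data{names}(\Gamma)\) and therefore \(W \cap \data{names}(A) = \emptyset\). But membership of \(W\) in \(\data{Br}(\Gamma, A)\) requires, by \cref{defn:formula-branches}, that \(W \cap \data{names}(A) \in \data{Br}(A)\), so \(\emptyset \in \data{Br}(A)\), contradicting the auxiliary observation. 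Hence \(\data{Br}(\Gamma, A)\) and \(\data{Br}(\Gamma, B)\) are disjoint. The only point requiring any care — and the one I would flag as the crux — is exactly this: recognising that (iv) hinges on the nonemptiness of formula branches. The remaining items are pure bookkeeping with \(\uplus\) and the sharing-free conventions, which guarantee that every union of branch-name families appearing above genuinely behaves like a disjoint union at the level of names.
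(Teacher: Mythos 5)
Your proposal is correct and follows essentially the same route as the paper: items (ii) and (iii) via \cref{lemma:branches-of-union} plus the defining clauses of \cref{defn:formula-branches}, and item (iv) via the non-emptiness of formula branch names combined with name-disjointness (the paper likewise flags this non-emptiness as the crux and proves it by the same easy induction). Your explicit treatment of (i) through \cref{lemma:sequent-branches-alt} fills in what the paper leaves to the reader, and does so correctly.
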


\subsection{Branch-labeled name graphs}

\begin{definition}
    \label{defn:bl-graphs}
    A branch-labeled name graph is a pair \(G = \langle V_G, {\adin_G} \rangle\) where
    \(V_G \subseteq \mathcal{N}\) is a set of names and
    \[
        {\adin_G} \subseteq \binom{V_G}{2} \times \pwr(V_G)
    \]
    is a binary relation between unordered pairs of vertices and arbitrary sets of vertices
    of~\(G\), such that
    \begin{equation}
        e \adin_G X \implies e \subseteq X. \tag{\(\star\)}\label{eqn:edge-branch-inclusion}
    \end{equation}
\end{definition}

For~\(G\) any branch-labeled name graph (hereinafter \emph{bl-graph} for brevity), we can
define a set
\[
    E_G = \pi_{\data{l}}({\adin_G}) = \{ e \mid \exists X.\ e \adin_G X \}
\]
of unordered pairs of vertices which we shall call the \emph{edges} of~\(G\). We define
similarly the set of \emph{branches of~\(G\)}:
\[
    \data{Br}(G) = \pi_{\data{r}}({\adin_G}) = \{ X \mid \exists e.\ e \adin_G X \}.
\]
We read the predicate \(xy \adin_G X\) as \emph{\(x, y\) are adjacent in branch~\(X\)},
or \emph{branch~\(X\) has the edge~\(xy\)}. Condition~(\ref{eqn:edge-branch-inclusion})
ensures that edges only connect vertices belonging to the branch they originated in.

We extend the subgraph relation and the union operator to bl-graphs simply by replacing the
edge set with the edge-branch relation in the definition, i.e. let
\[
    G \sqsubseteq H \iff V_G \subseteq V_H \text{ and } {\adin_G} \subseteq {\adin_H}
\]
for all pairs \(G, H\) of bl-graphs, and
\[
    \bigsqcup_{i \in I} G_i = \langle \bigcup_{i \in I} V_{G_i}, \bigcup_{i \in I} {\adin_{G_i}} \rangle
\]
where \(I\) is any index set and~\((G_i)_{i \in I}\) an indexed family of bl-graphs. The
restriction operator must be modified slightly so as to select branches instead of edges:
for~\(G\) any bl-graph and~\(X \subseteq \mathcal{N}\), let
\[
    G \rst_X = \langle V_G \cap X, \{ (e, Y) \in {\adin_G} \mid Y \subseteq X \} \rangle.
\]
Note that while union acts upon edges in the usual way (we have \(E_{G \sqcup H} = E_G
\cup E_H\)), restriction might remove more edges than in the case of simple graphs:
for all~\(e \in E_G\), \(e \in E_{G \rst_X}\) implies \(e \subseteq X\), but the converse
does not hold in general.

\subsection{Branch-sensitive composition}

We come thus to the crux of the refined approach -- the composition of bl-graphs over
some interface. Remember that composition is meant to interpret cuts between derivations
\(P, Q\) of conclusion \(\th \Gamma, A\) and~\(\th \Gamma, \dl{A}\) respectively, with the
cut rule having conclusion \(\th \Gamma\). We start then with atomic branch sets \(\data{Br}(
\Gamma, A)\), \(\data{Br}(\Gamma, \dl{A})\), while the branch set of the final cut-free
derivation will be~\(\data{Br}(\Gamma)\). When constructing alternating paths, we are going
to use edges which come in general from different branches in~\(P, Q\), but we need to ensure
that they all belong to the same branch of the cut-free derivation, otherwise they might
disappear under isolation.

To this end, observe that by \cref{lemma:branches-of-union} all branches \(X \in \data{Br}(
\Gamma, A)\) (resp.~\(\data{Br}(\Gamma, \dl{A})\)) are of the form \(Y \cup Z\) where \(Y \in
\data{Br}(\Gamma)\) and~\(Z \in \data{Br}(A)\) (resp.~\(\data{Br}(\dl{A})\)). Therefore,
we have \(X \setminus \data{names}(A) = X \setminus \data{names}(\dl{A}) \in \data{Br}(\Gamma)\)
(\cref{coro:branches-of-difference}). The idea is then to check that all edges forming an
alternating path share the same branch label up to names in the composition interface.

For any bl-graph~\(G\) and set~\(I \subseteq \mathcal{N}\) of names, we define an
\emph{interface-relativized} edge-branch relation
\[
    {\adin^I_G} = \{ (e, X \setminus I) \mid e \adin_G X \},
\]
or equivalently
\[
    e \adin^I_G X \iff \exists Y.\,e \adin_G Y \text{ and } X = Y \setminus I.
\]

\begin{definition}[Alternating labeled paths]
    \label{defn:bl-alternating-path}
    Let \(G, H\) be arbitrary bl-graphs, \(I, X \subseteq \mathcal{N}\) sets of names,
    \(x_1, \ldots, x_n \in V_G \cup V_H\) a sequence of pairwise distinct vertices
    from~\(G, H\) (with \(n > 1\)), and let \(e_i\) denote the unordered pair~\(x_i x_{i+1}\)
    for all~\(1 \leq i < n\).

    \(x_1, \ldots, x_n\) is an \emph{alternating \(X\)-labeled path} between \(G\) and~\(H\)
    through the interface~\(I\) if and only if
    \begin{enumerate}[(i)]
        \item \(x_i \in I\) for all~\(1 < i < n\) (all internal vertices belong to the interface);
        \item either \(e_i \adin^I_G X\) for all odd~\(1 \leq i < n\) and \(e_i \adin^I_H X\)
        for all even~\(1 \leq i < n\), or \(e_i \adin^I_H X\) for all odd~\(1 \leq i < n\) and
        \(e_i \adin^I_G X\) for all even~\(1 \leq i < n\).
    \end{enumerate}
    We call the path \emph{complete} iff \(x_1, x_n \notin I\).
\end{definition}

\begin{lemma}
    \label{lemma:complete-alternating-path-in-branch}
    If \(z_1, \ldots, z_n\) is a complete \(X\)-labeled alternating path between
    bl-graphs~\(G, H\) through interface~\(I\), then \(z_1, z_n \in X\).
\end{lemma}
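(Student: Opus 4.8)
The plan is to unfold the relevant definitions and read the conclusion off directly from the edge--branch inclusion property~(\ref{eqn:edge-branch-inclusion}) of \cref{defn:bl-graphs}. The key observation is this: for \emph{any} edge \(e_i = z_i z_{i+1}\) occurring in the path, clause~(ii) of \cref{defn:bl-alternating-path} guarantees that \(e_i \adin^I_G X\) or \(e_i \adin^I_H X\). Unwinding the definition of the interface-relativized relation \(\adin^I\), in either case there is a genuine branch label \(Y\) of \(G\) (resp.\ of \(H\)) with \(e_i \adin Y\) and \(Y \setminus I = X\). Then condition~(\ref{eqn:edge-branch-inclusion}) forces \(e_i \subseteq Y\), so both \(z_i\) and \(z_{i+1}\) lie in \(Y\); consequently, any endpoint of \(e_i\) that happens to lie outside~\(I\) must in fact lie in \(Y \setminus I = X\).

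Given this, I would conclude as follows. An alternating labeled path has \(n > 1\) by \cref{defn:bl-alternating-path}, so the path contains at least the edges \(e_1\) and \(e_{n-1}\) (which coincide when \(n = 2\)). Since the path is \emph{complete}, both \(z_1\) and \(z_n\) lie outside the interface~\(I\). Applying the observation above to the edge \(e_1\), whose endpoint \(z_1\) avoids the interface, yields \(z_1 \in X\); applying it to \(e_{n-1}\), whose endpoint \(z_n\) avoids the interface, yields \(z_n \in X\). This is exactly the statement.

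I do not foresee a genuine obstacle here: the alternating structure of the path, i.e.\ the fact that edges are taken alternately from \(G\) and from \(H\), is not used at all, and only the bookkeeping of which names belong to the interface needs a little care. The one point worth spelling out is the degenerate case \(n = 2\), where \(e_1\) and \(e_{n-1}\) are a single edge; but then both of its endpoints lie outside~\(I\) by completeness, so the same argument immediately gives \(z_1, z_2 \in X\). This lemma is meant as the technical underpinning of the later claim that the alternating paths retained by the branch-sensitive composition all belong to a single legitimate branch of the composed sequent.
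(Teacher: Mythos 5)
Your proof is correct and follows essentially the same route as the paper's: use the first edge \(z_1 z_2\) (resp.\ the last edge \(z_{n-1} z_n\)), unwind \(\adin^I\) to get a genuine branch label \(Y\) with \(Y \setminus I = X\), invoke condition~(\ref{eqn:edge-branch-inclusion}) to place the endpoints in \(Y\), and use completeness (\(z_1, z_n \notin I\)) to conclude they lie in \(X\). The paper's proof is just a terser version of the same argument.
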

\begin{proof}
    By \cref{defn:bl-alternating-path} \(n > 1\), hence there is~\(K \in \{G, H\}\)
    and \(Y \in \data{Br}(K)\) such that \(z_1 z_2 \adin_K Y\) with \(X = Y \setminus I\).
    By \cref{defn:bl-graphs} \(z_1 z_2 \subseteq Y\), and because \(z_1 \notin I\) by
    hypothesis, we must have \(z_1 \in X\). Similar reasoning shows that \(z_n \in X\).
\end{proof}

\begin{definition}[Composition of bl-graphs]
    \label{defn:bl-composition}
    Let \(G, H\) be arbitrary bl-graphs, \(I \subseteq \mathcal{N}\) a set of names.
    We define the \emph{composite of~\(G\) and~\(H\) on interface~\(I\)} as the bl-graph
    \[
        G \odot_I H = \langle V, {\adin} \rangle
    \]
    where
    \[
        V = (V_G \cup V_H) \setminus I
    \]
    and for all~\(x \neq y \in V\) and~\(X \subseteq V\), \(xy \adin X\) if and only if there
    is a complete alternating \(X\)-labeled path~\(z_1, \ldots, z_n\) between \(G\) and~\(H\)
    through the interface~\(I\), such that \(z_1 = x\) and~\(z_n = y\).
    \Cref{lemma:complete-alternating-path-in-branch} guarantees that \(xy \subseteq X\).
\end{definition}

\subsection{Interpreting derivations}

Finally, we define the new inductive interpretation function for derivations. Special attention
must be paid to the weakening case: it is not possible to handle it by a simple graph union,
like in the original axiom graph construction, as we need to update all branch labels to take
weakened formulas into account. Weakenings are then interpreted by an operator on bl-graphs.
Identities also need to be tweaked to account for the way non-atomic axioms are expanded by the
inversion procedures.

\begin{definition}[Weakening and identities]
    \label{defn:bl-wk-id}
    For any bl-graph~\(G\) and sharing-free set~\(\Gamma\) of named formulas, let
    \[
        \blwk{\Gamma}(G) = \langle V_G \cup \data{names}(\Gamma), \{ (e, X \cup Y) \mid e \adin_G X,\,Y \in \data{Br}(\Gamma) \} \rangle.
    \]
    For any pair \(A, \dl{B}\) of disjoint and sharing-free named formulas such that
    \(A \equiv B\), we define by induction on the height of~\(A\) the bl-graph~\(\blid{\{A, \dl{B}\}}\):
    \begin{gather*}
        \blid{\{\nm{x}{\alpha}, \nm{y}{\dl{\alpha}}\}} = \langle \{x, y\}, \{ (xy, \{x, y\}) \} \rangle;\\
        \blid{\{A_1 \lor A_2, \dl{B}_1 \land \dl{B}_2\}}
            = \blwk{A_2}(\blid{\{A_1, \dl{B}_1\}}) \sqcup \blwk{A_1}(\blid{\{A_2, \dl{B}_2\}}).
    \end{gather*}
\end{definition}

\begin{definition}[Branch-labeled axiom graphs]
    \label{defn:bl-axiom-graphs}
    To each derivation tree~\(P \in \data{GS4}^\mathcal{N}\) we associate a bl-graph~\(\blaxg{P}\),
    called \emph{branch-labeled axiom graph of~\(P\)}, defined by structural induction on~\(P\):
    if \(P\) has the form
    \[
        \prfbyaxiom{\rl[\{A, \dl{B}\}]{ax}}
            {\th \Gamma, A, \dl{B}}
    \]
    where \(A \equiv B\), then let \(\blaxg{P} = \blwk{\Gamma}(\blid{\{A,\dl{B}\}})\);
    as for simple axiom graphs, cuts are interpreted by composition and all other rules
    by taking the bl-graph union over their subderivations.
\end{definition}

\begin{proposition}
    \label{propo:bl-axiom-graph-names-branches}
    For all derivations~\(P \in \data{GS4}^\mathcal{N}\) with conclusion~\(\th \Gamma\),
    \[
        V_{\blaxg{P}} = \data{names}(\Gamma) \text{ and } \data{Br}(\blaxg{P}) \subseteq \data{Br}(\Gamma).
    \]
\end{proposition}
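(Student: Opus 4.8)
The plan is a structural induction on $P$, establishing both statements simultaneously. The vertex equality $V_{\blaxg{P}} = \data{names}(\Gamma)$ I would dispatch exactly as in \cref{propo:axiom-graphs-conclusion}: every rule but axiom and cut merely takes unions of vertex sets; for cut, $\odot_A$ deletes $\data{names}(A)$ from $V_{\blaxg{Q}} \cup V_{\blaxg{R}} = \data{names}(\Gamma) \cup \data{names}(A)$, leaving $\data{names}(\Gamma)$ because $\Gamma$ and $A$ share no names; for the axiom, $V_{\blwk{\Gamma}(\blid{\{A,\dl{B}\}})} = \data{names}(\Gamma) \cup V_{\blid{\{A,\dl{B}\}}}$ and a routine induction on the height of $A$ against \cref{defn:bl-wk-id} gives $V_{\blid{\{A,\dl{B}\}}} = \data{names}(A, \dl{B})$. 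The real content is the inclusion $\data{Br}(\blaxg{P}) \subseteq \data{Br}(\Gamma)$, for which I would first isolate two facts: (a) straight from \cref{defn:bl-wk-id}, $\data{Br}(\blwk{\Gamma}(G)) = \{\, X \cup Y \mid X \in \data{Br}(G),\ Y \in \data{Br}(\Gamma) \,\}$ for every bl-graph $G$; and (b) $\data{Br}(\blid{\{A,\dl{B}\}}) \subseteq \data{Br}(A, \dl{B})$.

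Fact (b) I would prove by induction on the height of $A$. The atomic case is trivial, $\data{Br}(\blid{\{\nm{x}{\alpha}, \nm{y}{\dl{\alpha}}\}}) = \{\{x,y\}\} = \data{Br}(\nm{x}{\alpha}, \nm{y}{\dl{\alpha}})$. For the inductive step, a branch of the summand $\blwk{A_2}(\blid{\{A_1, \dl{B}_1\}})$ of $\blid{\{A_1 \lor A_2, \dl{B}_1 \land \dl{B}_2\}}$ is, by (a) and the induction hypothesis, of the form $X_1 \cup Z_2$ with $X_1 \in \data{Br}(A_1, \dl{B}_1)$ and $Z_2 \in \data{Br}(A_2)$; splitting $X_1 = X_1' \cup X_1''$ along \cref{lemma:sequent-branches-alt} with $X_1' \in \data{Br}(A_1)$ and $X_1'' \in \data{Br}(\dl{B}_1)$, one has $X_1 \cup Z_2 = (X_1' \cup Z_2) \cup X_1''$ with $X_1' \cup Z_2 \in \data{Br}(A_1 \lor A_2)$ and $X_1'' \in \data{Br}(\dl{B}_1) \subseteq \data{Br}(\dl{B}_1 \land \dl{B}_2)$, hence $X_1 \cup Z_2 \in \data{Br}(A_1 \lor A_2, \dl{B}_1 \land \dl{B}_2)$ by \cref{lemma:sequent-branches-alt} once more. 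The second summand, and the symmetric recursion clause on $\{A_1 \land A_2, \dl{B}_1 \lor \dl{B}_2\}$, go through the same way after reading the unordered pair in the appropriate order; since the union of the two summands only enlarges $\data{Br}$, the sub-induction closes.

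With (a) and (b) available, the main induction is short. \emph{Axiom:} $\blaxg{P} = \blwk{\Gamma}(\blid{\{A,\dl{B}\}})$, so (a), (b) and \cref{lemma:branches-of-union} give $\data{Br}(\blaxg{P}) \subseteq \{\, X \cup Y \mid X \in \data{Br}(A,\dl{B}),\ Y \in \data{Br}(\Gamma)\,\} = \data{Br}(\Gamma, A, \dl{B})$. \emph{Superposition:} $\blaxg{P} = \blaxg{Q_1} \sqcup \blaxg{Q_2}$ with both premises concluding $\th \Gamma$, and $\data{Br}(\blaxg{Q_1}) \cup \data{Br}(\blaxg{Q_2}) \subseteq \data{Br}(\Gamma)$ by the induction hypothesis. \emph{Disjunction:} $\blaxg{P} = \blaxg{Q}$ and $\data{Br}(\blaxg{Q}) \subseteq \data{Br}(\Gamma, A, B) = \data{Br}(\Gamma, A \lor B)$ by \cref{propo:sequent-branches-decomposition}(ii). \emph{Conjunction:} $\blaxg{P} = \blaxg{Q_1} \sqcup \blaxg{Q_2}$ with premises $\th \Gamma, A$ and $\th \Gamma, B$, and $\data{Br}(\blaxg{Q_1}) \cup \data{Br}(\blaxg{Q_2}) \subseteq \data{Br}(\Gamma, A) \cup \data{Br}(\Gamma, B) = \data{Br}(\Gamma, A \land B)$ by \cref{propo:sequent-branches-decomposition}(iii).

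\emph{Cut:} $\blaxg{P} = \blaxg{Q} \odot_A \blaxg{R}$ with premises $\th \Gamma, A$ and $\th \Gamma, \dl{A}$, and interface $I = \data{names}(A) = \data{names}(\dl{A})$. Suppose $xy \adin_{\blaxg{P}} X$; by \cref{defn:bl-composition} there is a complete alternating $X$-labeled path $z_1, \dots, z_n$ through $I$ with $n > 1$, so its first edge satisfies $z_1 z_2 \adin^I_K X$ for some $K \in \{\blaxg{Q}, \blaxg{R}\}$, that is, $z_1 z_2 \adin_K Y$ with $X = Y \setminus I$ for some label $Y$ of $K$. By the induction hypothesis $Y \in \data{Br}(\Gamma, A)$ when $K = \blaxg{Q}$ and $Y \in \data{Br}(\Gamma, \dl{A})$ when $K = \blaxg{R}$, and in either case $X = Y \setminus I \in \data{Br}(\Gamma)$ by \cref{coro:branches-of-difference}. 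This closes the induction. I expect the only genuinely delicate part of the argument to be the sub-induction for fact (b), where one has to keep straight which side of each disjunction/conjunction in $\{A, \dl{B}\}$ the $\blwk$-weakenings of \cref{defn:bl-wk-id} act on and reassemble the branch names accordingly; everything downstream of that, the cut case included, is bookkeeping on top of \cref{lemma:branches-of-union} and \cref{coro:branches-of-difference}.
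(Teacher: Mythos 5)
Your proposal is correct and follows essentially the same route as the paper: the paper also establishes the vertex equality by mimicking \cref{propo:axiom-graphs-conclusion}, proves your facts (a) and (b) as \cref{lemma:bl-wk-branches} and \cref{lemma:bl-id-branches} (the latter as an equality rather than just the inclusion you need), and handles the cut case by pulling the label of the first edge of a complete alternating path back to a premiss and applying \cref{coro:branches-of-difference}. The only cosmetic difference is that you spell out the appeal to \cref{propo:sequent-branches-decomposition} in the logical-rule cases, which the paper leaves implicit.
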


\section{Main results}\label{sec:main-results}

\subsection{Behaviour under inversion and isolation}\label{sec:bl-behaviour}

\begin{theorem}
    \label{thm:bl-axiom-graphs-invariance}
    For all derivations~\(P \in \data{GS4}^\mathcal{N}\) with conclusion~\(\th \Gamma, A\),
    where \(A\) is any \emph{non-atomic} named formula,
    \[
        \blaxg{\data{isl}(P, A)} = \blaxg{P}.
    \]
\end{theorem}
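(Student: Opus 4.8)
The plan is to reduce the statement to strengthened forms of \cref{lemma:axiom-graph-disjunction-transform,lemma:axiom-graph-conjunction-transforms} --- with the inequalities for simple axiom graphs upgraded to equalities for bl-graphs --- and then prove those by structural induction on $P$, following the recursive definition of the inversion transformations from \cref{sec:inversion-proofs}. First I would unfold isolation: by \cref{coro:disjunction-isolation,coro:conjunction-isolation} and the clauses of \cref{defn:bl-axiom-graphs} for the non-identity rules, $\blaxg{\data{isl}(P, A \lor B)} = \blaxg{\data{inv}(P, A \lor B)}$ and $\blaxg{\data{isl}(P, A \land B)} = \blaxg{\data{inv_l}(P, A \land B)} \sqcup \blaxg{\data{inv_r}(P, A \land B)}$, so it would suffice to prove, for every $P$ of conclusion $\th \Gamma, A \lor B$ (resp.\ $\th \Gamma, A \land B$),
\[
    \blaxg{\data{inv}(P, A \lor B)} = \blaxg{P}, \qquad
    \blaxg{\data{inv_l}(P, A \land B)} = \blaxg{P} \rst_{\Gamma, A}, \qquad
    \blaxg{\data{inv_r}(P, A \land B)} = \blaxg{P} \rst_{\Gamma, B}.
\]
The theorem then follows at once, since $\blaxg{P} \rst_{\Gamma, A} \sqcup \blaxg{P} \rst_{\Gamma, B} = \blaxg{P}$: by \cref{propo:bl-axiom-graph-names-branches} every branch of $\blaxg{P}$ lies in $\data{Br}(\Gamma, A) \cup \data{Br}(\Gamma, B) = \data{Br}(\Gamma, A \land B)$ (\cref{propo:sequent-branches-decomposition}(iii)), and by sharing-freedom a branch in $\data{Br}(\Gamma, A)$ is contained in $\data{names}(\Gamma, A)$ while one in $\data{Br}(\Gamma, B)$ is not (and symmetrically), so restriction to $\data{names}(\Gamma, A)$ keeps exactly the edge--branch pairs whose branch is in $\data{Br}(\Gamma, A)$ and the two restrictions jointly recover $\adin_{\blaxg{P}}$.

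I would then prove the three equalities by structural induction on $P$, mirroring the recursion defining $\data{inv}$, $\data{inv_l}$, $\data{inv_r}$. The cases are: \emph{(a)} the last rule of $P$ introduces the inverted formula itself, so $\data{inv}$ deletes it or returns one premise, and the claim follows from the union clause of \cref{defn:bl-axiom-graphs} together with the branch bookkeeping above (which shows $\blaxg{P}\rst_{\Gamma, A}$ retains precisely the left-premise graph and discards the right-premise one); \emph{(b)} the last rule is a superposition or a logical rule acting on a context formula, so inversion distributes over the premises, $\blaxg{\cdot}$ of that rule is a bl-graph union, and the induction hypothesis settles it, using that graph restriction depends only on the underlying name set and (in the conjunction case) associativity and commutativity of $\sqcup$; \emph{(c)} $P$ is an axiom $\rl[\{C, \dl{D}\}]{ax}$: when the inverted formula is in the context one rewrites the axiom with that formula decomposed and reduces --- via the identities $\blwk{\Gamma \cup \Delta} = \blwk{\Gamma} \circ \blwk{\Delta}$ and $\blwk{\Gamma}(G \sqcup H) = \blwk{\Gamma}(G) \sqcup \blwk{\Gamma}(H)$, both immediate from \cref{lemma:branches-of-union} --- to \cref{propo:sequent-branches-decomposition}(ii)--(iii), whereas when the inverted formula is $C$ (or $\dl{D}$) it is non-atomic, so $\dl{D}$ (resp.\ $C$) has dual top connective, one $\eta$-expands a single level, and the resulting bl-graph is literally an instance of the recursive clause for $\blid{\cdot}$ in \cref{defn:bl-wk-id} --- the only delicate point being that, for $\data{inv_l}/\data{inv_r}$, one must match the unordered pair $\{C, \dl{D}\}$ to that clause so that its disjunctive member takes the position of ``$A_1 \lor A_2$''; \emph{(d)} $P$ ends with a cut on a formula $E$ whose names are disjoint from those of $\Gamma, A, B$, so inversion produces a cut on $E$ of the inverted premises, $\blaxg{\cdot}$ becomes a composition $\odot_E$, and the induction hypothesis rewrites it as a composition of restricted bl-graphs.

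The hard part, I expect, will be case \emph{(d)} for the two conjunction equalities, where I would need the lemma that composition commutes with restriction to the branches of a conjunct: writing $\Gamma = \Gamma_0 \cup \{A \land B\}$ and letting $Q, R$ be the cut premises (conclusions $\th \Gamma, E$ and $\th \Gamma, \dl{E}$),
\[
    \bigl(\blaxg{Q} \rst_{\Gamma_0, A, E}\bigr) \odot_E \bigl(\blaxg{R} \rst_{\Gamma_0, A, \dl{E}}\bigr)
        = \bigl(\blaxg{Q} \odot_E \blaxg{R}\bigr) \rst_{\Gamma_0, A},
\]
and symmetrically for $B$. This is exactly where the branch labels pay off: a complete alternating $X$-labeled path witnessing an edge of the composite carries the same label $X$ on all of its edges, and after relativisation by $\data{names}(E)$ this $X$ lies in $\data{Br}(\Gamma_0, A) \sqcup \data{Br}(\Gamma_0, B)$, hence --- by the disjointness of \cref{propo:sequent-branches-decomposition}(iv) --- in exactly one of the two summands, so the entire path lies inside the matching pair of restricted graphs. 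In particular a ``conjunction-crossing'' path (the kind lost under isolation in the counterexamples of \cref{sec:axiom-graphs-failure}) cannot arise, since its edges would be forced to carry branch labels from incompatible conjuncts. Granting this commutation lemma --- whose proof is purely graph-theoretic and cut-free --- the inductive step for cuts closes and the theorem follows. (Equivalently, one may target the single identity $\blaxg{\data{inv_l}(P, A \land B)} \sqcup \blaxg{\data{inv_r}(P, A \land B)} = \blaxg{P}$ directly, in which case the ingredient needed at a cut is the analogous ``conditional distributivity'' of $\odot_E$ over $\sqcup$ for summands with pairwise disjoint interface-relativised branch sets --- again a consequence of \cref{propo:sequent-branches-decomposition}(iv).)
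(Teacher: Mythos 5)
Your proposal is correct and follows essentially the same route as the paper: the same reduction to the three facts $\blaxg{\data{inv}(P, A \lor B)} = \blaxg{P}$, $\blaxg{\data{inv_l}(P, A \land B)} = \blaxg{P}\rst_{\Gamma,A}$ (resp.\ $\data{inv_r}$), and $\blaxg{P} = \blaxg{P}\rst_{\Gamma,A} \sqcup \blaxg{P}\rst_{\Gamma,B}$, each proved by structural induction mirroring the inversion procedures, with the weakening fusion/distribution identities handling the axiom cases and the branch-disjointness of \cref{propo:sequent-branches-decomposition} discarding the wrong conjunct. The commutation of $\odot_E$ with restriction that you isolate as the crux of the cut case is exactly what the paper establishes inline in the cut case of \cref{lemma:bl-axiom-graph-conjunction-transforms}, by the same argument that a complete $X$-labeled path with $X \subseteq \data{names}(\Gamma, A)$ has all its edges surviving the restriction.
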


The proof of \cref{thm:bl-axiom-graphs-invariance} rests upon the following three lemmas,
whose proofs are detailed in \cref{sec:bl-axiom-graph-proofs}.

\begin{lemma}
    \label{lemma:bl-axiom-graph-disjunction-transform}
    For all derivations~\(P \in \data{GS4}^\mathcal{N}\) with conclusion~\(\th \Gamma, A \lor B\),
    \[
        \blaxg{\data{inv}(P, A \lor B)} = \blaxg{P}.
    \]
\end{lemma}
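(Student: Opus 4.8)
The plan is to argue by structural induction on $P$, following the case-by-case definition of the $\data{inv}$-transformation fixed in \cref{sec:inversion-proofs}. That transformation is local to the last rule of $P$: when the last rule does not introduce the distinguished disjunction $A \lor B$, inversion recurses into each premise (whose conclusion still contains $A \lor B$) and reassembles the same rule around the results; when the last rule is the $\rl{\lor}$ rule introducing $A \lor B$, inversion simply deletes that rule; and when $P$ is a (necessarily non-atomic) axiom whose distinguished principal formula is $A \lor B$, inversion expands the axiom into a $\rl{\land}$-application over two weakened axioms. The induction thus splits into four cases: (a) $P$ is an axiom with $A \lor B$ among the weakened context formulas; (b) $P$ is an axiom with $A \lor B$ as one of its two principal formulas; (c) $P$ ends with $\rl{cut}$, $\rl{\sqcup}$, or a logical rule acting on a formula of $\Gamma$; (d) $P$ ends with $\rl{\lor}$ introducing $A \lor B$.

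Cases (c) and (d) are bookkeeping. In (d), $\data{inv}(P, A \lor B)$ is just the immediate subderivation of $\th \Gamma, A, B$, and since $\rl{\lor}$ is interpreted by taking $\blaxg{P}$ equal to the bl-graph of that subderivation, the claim is immediate. In (c), I apply the inductive hypothesis to each premise and observe that the interpretations of $\rl{cut}$, $\rl{\sqcup}$, $\rl{\lor}$ and $\rl{\land}$ depend on the premises only through their bl-graphs and through $\data{names}$-data that inversion leaves untouched — in particular, a cut on $C$ keeps the composition interface $\data{names}(C)$ fixed — so the two bl-graphs coincide. In (a), inversion only replaces the weakening of $A \lor B$ by that of $A$ and $B$: writing the axiom conclusion as $\th \Sigma, A \lor B, F, \dl{G}$ with $F \equiv G$, we have $\blaxg{P} = \blwk{\Sigma, A \lor B}(\blid{\{F, \dl{G}\}})$ and $\blaxg{\data{inv}(P, A \lor B)} = \blwk{\Sigma, A, B}(\blid{\{F, \dl{G}\}})$, and these are equal because $\blwk{\cdot}$ depends on its index only through its names and branches while $\data{Br}(\Sigma, A \lor B) = \data{Br}(\Sigma, A, B)$ by \cref{propo:sequent-branches-decomposition}(ii).

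The substantive case is (b). Here the axiom conclusion is $\th \Gamma', A \lor B, C$ with $C \equiv \dl{A} \land \dl{B}$, say $C = C_1 \land C_2$ where $C_1 \equiv \dl{A}$ and $C_2 \equiv \dl{B}$; by \cref{defn:bl-wk-id} we have $\blaxg{P} = \blwk{\Gamma'}(\blid{\{A \lor B, C_1 \land C_2\}})$, which unfolds to $\blwk{\Gamma'}\bigl(\blwk{B}(\blid{\{A, C_1\}}) \sqcup \blwk{A}(\blid{\{B, C_2\}})\bigr)$. On the other side, $\data{inv}(P, A \lor B)$ applies $\rl{\land}$ on $C_1 \land C_2$ to the two weakened axioms $\rl[\{A, C_1\}]{ax}$ and $\rl[\{B, C_2\}]{ax}$, hence $\blaxg{\data{inv}(P, A \lor B)} = \blwk{\Gamma', B}(\blid{\{A, C_1\}}) \sqcup \blwk{\Gamma', A}(\blid{\{B, C_2\}})$. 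To bridge the two expressions I would first record two elementary properties of the weakening operator, both immediate from \cref{defn:bl-wk-id} together with \cref{lemma:branches-of-union}: $\blwk{\Gamma}$ commutes with arbitrary $\sqcup$, and $\blwk{\Gamma}(\blwk{\Delta}(G)) = \blwk{\Gamma \cup \Delta}(G)$ whenever $\Gamma$ and $\Delta$ share no names. Applying these to the expression for $\blaxg{P}$ rewrites it precisely as $\blaxg{\data{inv}(P, A \lor B)}$.

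The hard part is case (b): one must verify that the definitional unfolding of the identity bl-graph $\blid{\{A \lor B, C_1 \land C_2\}}$ lines up exactly with the weakening bookkeeping of the axiom derivation produced by $\data{inv}$, which is where the two identities for the weakening operator above do the work. Beyond that, the only real care is to make sure the recursive description of $\data{inv}$ used here agrees verbatim with the one fixed in \cref{sec:inversion-proofs}, since the whole argument is a case analysis driven by that definition; everything else is routine.
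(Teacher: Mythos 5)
Your proposal is correct and follows essentially the same route as the paper: the same four-way case split on the last rule of \(P\), with the two axiom cases resolved exactly as in the paper's proof — the weakened-disjunction case via \(\data{Br}(\Sigma, A \lor B) = \data{Br}(\Sigma, A, B)\) (\cref{propo:sequent-branches-decomposition}), and the principal-disjunction case via the distribution of \(\blwk{\cdot}\) over \(\sqcup\) and the fusion identity \(\blwk{\Gamma}(\blwk{\Delta}(G)) = \blwk{\Gamma \cup \Delta}(G)\), which are precisely the paper's \cref{lemma:bl-wk-distribution,lemma:bl-wk-fusion}. The remaining cases are handled by the same appeal to the induction hypothesis and the compositional form of the interpretation.
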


\begin{lemma}
    \label{lemma:bl-axiom-graph-conjunction-transforms}
    For all derivations~\(P \in \data{GS4}^\mathcal{N}\) with conclusion~\(\th \Gamma, A \land B\),
    \[
        \blaxg{\data{inv_l}(P, A \land B)} = \blaxg{P} \rst_{\Gamma, A} \text{ and }
        \blaxg{\data{inv_r}(P, A \land B)} = \blaxg{P} \rst_{\Gamma, B};
    \]
\end{lemma}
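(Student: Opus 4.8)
I would prove \cref{lemma:bl-axiom-graph-conjunction-transforms} by structural induction on the derivation~\(P\) of~\(\th \Gamma, A \land B\), mirroring the case analysis already used for the inversion procedures \(\data{inv_l}\), \(\data{inv_r}\) in \cref{sec:inversion-proofs}. The statement to be proved is a genuine equality, which is a strengthening of the inequality \cref{lemma:axiom-graph-conjunction-transforms} for ordinary axiom graphs; the point of the branch labels is precisely to recover the edges that \(\data{inv_l}\), \(\data{inv_r}\) ``lose'' in the simple-graph setting, so I expect the argument to hinge on showing that the edges \emph{dropped} by the restriction~\(\rst_{\Gamma, A}\) (resp.~\(\rst_{\Gamma, B}\)) are exactly those whose branch label mentions names from the other conjunct~\(B\) (resp.~\(A\)), and that such edges cannot survive inversion. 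By symmetry it suffices to treat the \(\data{inv_l}\) case.

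**Key steps.** First I would set up the base cases: when \(P\) is an axiom \(\rl[\{C, \dl{D}\}]{ax}\) with conclusion \(\th \Gamma, A \land B\), the active pair \(\{C, \dl D\}\) either lies entirely in the context (in which case \(A \land B\) is weakened in, and \(\data{inv_l}\) just rearranges the weakening) or \(C \equiv A \land B\) up to the context, so that \(\data{inv_l}(P, A \land B)\) is the axiom \(\rl[\{A, \dl{B_1}\}]{ax}\) obtained from the first-conjunct half of the expansion in \cref{defn:bl-wk-id}. In both cases one computes \(\blaxg{\data{inv_l}(P, A \land B)}\) directly from \(\blwk{\cdot}(\cdot)\) and \(\blid{\cdot}\), and checks it coincides with \(\blaxg{P} \rst_{\Gamma, A}\) — here one uses \cref{propo:sequent-branches-decomposition}(iii),(iv): a branch of \(\th \Gamma, A \land B\) lies in \(\data{Br}(\Gamma, A)\) iff it avoids \(\data{names}(B)\), so the restriction to \(\data{names}(\Gamma, A)\) keeps precisely the labels coming from the \(\blwk{A_2}(\blid{\{A_1, \dl B_1\}})\) summand. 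Then the inductive cases: for a \(\rl{\lor}\) or \(\rl{\sqcup}\) rule that does not touch \(A \land B\), the interpretation is a bl-graph union over subderivations, inversion commutes with that union (again by the description of \(\data{inv_l}\)), and the restriction operator distributes over \(\sqcup\); for a \(\rl{\land}\) rule introducing some \emph{other} conjunction the reasoning is the same. The interesting inductive case is when the last rule of~\(P\) introduces~\(A \land B\) itself: then \(\data{inv_l}(P, A \land B)\) returns the left premise~\(Q\) of conclusion~\(\th \Gamma, A\) up to auxiliary contractions, and one must check \(\blaxg{Q} = \blaxg{P}\rst_{\Gamma, A}\), using \cref{propo:sequent-branches-decomposition}(iii) to see that \(\data{Br}(\blaxg{P}) = \data{Br}(\blaxg{Q}) \uplus \data{Br}(\blaxg{R})\) and that restricting to \(\data{names}(\Gamma, A)\) exactly selects the \(\data{Br}(\Gamma, A)\)-labeled part, i.e.~\(\blaxg{Q}\). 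The \(\rl{cut}\) case is handled by the behaviour of \(\data{inv_l}\) on cuts (pushing the inversion into both premises) together with the fact that \(\odot_I\) interacts well with restriction to a superset of the relevant names — this is where \cref{lemma:complete-alternating-path-in-branch} earns its keep, guaranteeing that any edge produced by composition has a label contained in a legitimate branch, so no edge is spuriously created outside \(\data{Br}(\Gamma, A)\).

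**The main obstacle.** The delicate point is the cut case, and more specifically proving that the branch-sensitive composition \(\odot_I\) commutes with the restriction \(\rst_{\Gamma, A}\) in the precise sense needed: one must show that a complete \(X\)-labeled alternating path in the composed graph, with \(X \in \data{Br}(\Gamma, A)\), can be ``localised'' to alternating paths in the inverted premises, and conversely. The subtlety is that the inversion of a cut \(\data{inv_l}(\cdot, A \land B)\) acts on both premises of the cut, but the cut-formula \(C\) being cut on is unrelated to \(A \land B\); so the conjunction whose inversion we are tracking lives ``outside'' the cut interface, and I would need a small commutation lemma — that for a name set \(I\) disjoint from \(\data{names}(A \land B)\) one has \((G \odot_I H)\rst_{\Gamma, A} = (G\rst_{\Gamma, A, C}) \odot_I (H\rst_{\Gamma, A, \dl C})\) — whose proof requires unwinding \cref{defn:bl-alternating-path} and checking that the interface-relativised labels \(\adin^I\) behave correctly under the further restriction. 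I expect this to be the one genuinely technical lemma; everything else is bookkeeping with \cref{defn:bl-wk-id}, \cref{lemma:branches-of-union}, and \cref{propo:sequent-branches-decomposition}.
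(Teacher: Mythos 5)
Your proposal is correct and follows essentially the same route as the paper's proof: structural induction on \(P\) with the same case split, the disjointness of \(\data{Br}(\Gamma, A)\) and \(\data{Br}(\Gamma, B)\) (\cref{propo:sequent-branches-decomposition}) doing the work in the axiom and \(\land\)-introduction cases, and the commutation of \(\odot_I\) with restriction --- verified by relabelling alternating paths exactly as you anticipate --- handling the cut case. The only cosmetic slips are that \(\data{inv_l}\) applied to a non-weakening axiom yields an axiom under a \(\rl{\lor}\)-rule rather than a bare axiom, and that no auxiliary contractions arise in the context-sharing setting.
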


\begin{lemma}
    \label{lemma:bl-axiom-graph-decomposition}
    For all derivations~\(P \in \data{GS4}^\mathcal{N}\) with conclusion~\(\th \Gamma, A \land B\),
    \[
        \blaxg{P} = \blaxg{P} \rst_{\Gamma, A} \cup \blaxg{P} \rst_{\Gamma, B}.
    \]
\end{lemma}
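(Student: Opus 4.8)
The plan is to establish the identity by checking it componentwise: both sides are bl-graphs, so it suffices to show they have the same vertex set and the same edge--branch relation. Abbreviate $G = \blaxg{P}$. The two facts that will do all the work are supplied by \cref{propo:bl-axiom-graph-names-branches}: the vertex set is $V_G = \data{names}(\Gamma, A \land B)$, and every branch label of $G$ is a genuine sequent branch, $\data{Br}(G) \subseteq \data{Br}(\Gamma, A \land B)$.

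First I would deal with the vertices. Since $\data{names}(A \land B) = \data{names}(A) \cup \data{names}(B)$, we have $V_G = \data{names}(\Gamma) \cup \data{names}(A) \cup \data{names}(B)$, so both $\data{names}(\Gamma, A)$ and $\data{names}(\Gamma, B)$ are contained in $V_G$. Unfolding the restriction operator, $V_{G \rst_{\Gamma, A}} = V_G \cap \data{names}(\Gamma, A) = \data{names}(\Gamma, A)$ and likewise $V_{G \rst_{\Gamma, B}} = \data{names}(\Gamma, B)$; taking the union recovers exactly $\data{names}(\Gamma) \cup \data{names}(A) \cup \data{names}(B) = V_G$.

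Next, the edge--branch relations. The inclusion ${\adin_{G\rst_{\Gamma,A}}} \cup {\adin_{G\rst_{\Gamma,B}}} \subseteq {\adin_G}$ holds trivially, since restriction only ever removes pairs. For the reverse inclusion, take an arbitrary $(e, Y) \in {\adin_G}$. Then $Y \in \data{Br}(G) \subseteq \data{Br}(\Gamma, A \land B)$, and by \cref{propo:sequent-branches-decomposition}(iii) this set equals $\data{Br}(\Gamma, A) \cup \data{Br}(\Gamma, B)$, so $Y$ lies in one of the two; say $Y \in \data{Br}(\Gamma, A)$, the other case being symmetric. By the very definition of $\data{Br}$, every member of $\data{Br}(\Gamma, A)$ is a subset of $\data{names}(\Gamma, A)$, hence $Y \subseteq \data{names}(\Gamma, A)$; and by the inclusion condition~$(\star)$ of \cref{defn:bl-graphs}, $e \subseteq Y \subseteq \data{names}(\Gamma, A)$, so both endpoints of $e$ lie in $V_G \cap \data{names}(\Gamma, A) = V_{G\rst_{\Gamma,A}}$. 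Thus $(e, Y)$ passes the side condition defining $G \rst_{\Gamma, A}$ and belongs to ${\adin_{G\rst_{\Gamma,A}}}$, which finishes this inclusion and hence the lemma.

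I do not foresee a genuine obstacle: the lemma is essentially a repackaging of \cref{propo:bl-axiom-graph-names-branches} (branch labels are always sequent branches) together with the conjunctive decomposition \cref{propo:sequent-branches-decomposition}(iii) and the tautology that members of $\data{Br}(\Delta)$ are subsets of $\data{names}(\Delta)$. The one point demanding a little care is the dual role of a branch name $Y$ --- as a label carried by an edge of $G$, and as the set of names tested against $\data{names}(\Gamma, A)$ to decide survival under restriction --- but condition~$(\star)$ makes the two uses compatible, so no edge endpoint slips through the cracks. As usual one should also note in passing that the $\data{Br}$-terms appearing above are well-formed, the hypotheses of \cref{propo:sequent-branches-decomposition} being guaranteed by the standing sharing-freedom convention on all named sequents.
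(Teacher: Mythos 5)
Your proposal is correct and follows essentially the same route as the paper's own proof: identify the vertex sets via $\data{names}(\Gamma, A \land B) = \data{names}(\Gamma, A) \cup \data{names}(\Gamma, B)$, then use $\data{Br}(\blaxg{P}) \subseteq \data{Br}(\Gamma, A \land B) = \data{Br}(\Gamma, A) \cup \data{Br}(\Gamma, B)$ (from \cref{propo:bl-axiom-graph-names-branches} and \cref{propo:sequent-branches-decomposition}) to place each labeled edge into one of the two restrictions, the converse inclusion being immediate. Your extra remark about condition~$(\star)$ guaranteeing that the endpoints survive the vertex restriction is a harmless elaboration the paper leaves implicit.
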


\begin{proof}[Proof of \cref{thm:bl-axiom-graphs-invariance}]
    Immediate by \cref{lemma:bl-axiom-graph-disjunction-transform} if \(A\) is a disjunction.
    If instead~\(A = B \land C\) is a conjunction, we have by construction\footnote{See
    the proof of \cref{coro:conjunction-isolation}.}
    \[
        \blaxg{\data{isl}(P, A)} = \blaxg{\data{inv_l}(P, A \land B)} \sqcup \blaxg{\data{inv_r}(P, A \land B)}
    \]
    and then by \cref{lemma:bl-axiom-graph-conjunction-transforms,lemma:bl-axiom-graph-decomposition}
    \[
        \blaxg{\data{inv_l}(P, A \land B)} \sqcup \blaxg{\data{inv_r}(P, A \land B)}
            = {\blaxg{P} \rst_{\Gamma, B}} \sqcup {\blaxg{P} \rst_{\Gamma, C}} = \blaxg{P}. \qedhere
    \]
\end{proof}

\subsection{Cut-elimination theorem}\label{sec:normalisation}

\begin{theorem}
    \label{thm:normalisation}
    For all derivations~\(P \in \data{GS4}^\mathcal{N}\) with conclusion~\(\th \Gamma\),
    there is a \emph{cut-free} derivation~\(Q \in \data{GS4}^\mathcal{N}\) with conclusion~%
    \(\th \Gamma\) and such that \(\blaxg{P} = \blaxg{Q}\).
\end{theorem}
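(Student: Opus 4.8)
The plan is to peel cuts off the bottom of $P$ by induction on the number of cut-rule applications, each step being discharged by a graph-preserving cut-admissibility statement for cut-free premises, which is itself proved by a secondary induction that pushes the cut upward until its context is atomic — at which point \cref{lemma:semantic-cut-admissibility} takes over.

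For the outer induction: if $P$ is cut-free, take $Q=P$. Otherwise pick a \emph{topmost} cut in $P$, i.e.\ a subderivation
\[
    \prftree[r]{\rl{cut}}
        {\prfsummary[\(Q_1\)]{\th \Delta, A}}
        {\prfsummary[\(Q_2\)]{\th \Delta, \dl{A}}}
        {\th \Delta}
\]
whose premise subderivations $Q_1,Q_2$ are already cut-free; such a cut exists because the cut applications of $P$ are finitely many and the ancestor relation on them has maximal elements. The \emph{key claim} is that for all cut-free $Q_1,Q_2\in\data{GS4}^\mathcal{N}$ with conclusions $\th\Delta,A$ and $\th\Delta,\dl{A}$ there is a cut-free $R$ with conclusion $\th\Delta$ and $\blaxg{R}=\blaxg{Q_1}\odot_A\blaxg{Q_2}$. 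Granting it, replace the chosen cut-subderivation (whose conclusion is $\th\Delta$ and whose bl-axiom graph is $\blaxg{Q_1}\odot_A\blaxg{Q_2}$ by \cref{defn:bl-axiom-graphs}) with $R$. Since $\blaxg{\cdot}$ is a structural-induction semantics in which each node's graph depends only on that node's rule label, its conclusion, and the bl-axiom graphs of its immediate subderivations, a straightforward induction on the position of the replaced subderivation shows that this substitution leaves $\blaxg{P}$ unchanged while strictly lowering the cut count, so the induction hypothesis applies.

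The key claim is proved by induction on the multiset of formula-heights of the context $\Delta$ (the cut formula $A$ is held fixed throughout). If every formula in $\Delta$ is atomic, the claim is exactly \cref{lemma:semantic-cut-admissibility}. Otherwise pick a non-atomic $C\in\Delta$ and set $\Delta_0=\Delta\setminus\{C\}$. If $C=C_1\lor C_2$, cut $\data{inv}(Q_1,C)$ against $\data{inv}(Q_2,C)$ on $A$: these premises are cut-free (\cref{lemma:disjunction-transform}) with conclusions $\th\Delta_0,C_1,C_2,A$ and $\th\Delta_0,C_1,C_2,\dl{A}$, and the context $\Delta_0\cup\{C_1,C_2\}$ has strictly smaller height-multiset, so the inductive hypothesis yields a cut-free $R'$ of $\th\Delta_0,C_1,C_2$; let $R$ be $R'$ followed by the $\rl{\lor}$ rule, so $\th\Delta$ is its conclusion and, using \cref{lemma:bl-axiom-graph-disjunction-transform} and the fact that the $\rl{\lor}$ node and the composition over the unchanged interface $\data{names}(A)$ are inert, $\blaxg{R}=\blaxg{R'}=\blaxg{\data{inv}(Q_1,C)}\odot_A\blaxg{\data{inv}(Q_2,C)}=\blaxg{Q_1}\odot_A\blaxg{Q_2}$. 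If $C=C_1\land C_2$, apply the inductive hypothesis to the cut of $\data{inv_l}(Q_1,C)$ against $\data{inv_l}(Q_2,C)$, obtaining a cut-free $R_1$ of $\th\Delta_0,C_1$, and symmetrically with $\data{inv_r}$ obtaining $R_2$ of $\th\Delta_0,C_2$; let $R$ be the $\rl{\land}$ rule applied to $R_1,R_2$, with conclusion $\th\Delta$. Then $\blaxg{R}=\blaxg{R_1}\sqcup\blaxg{R_2}$, one computes $\blaxg{R_i}=\blaxg{Q_1}\rst_{\Delta_0,A,C_i}\odot_A\blaxg{Q_2}\rst_{\Delta_0,A,C_i}$ (\cref{lemma:bl-axiom-graph-conjunction-transforms}), splits $\blaxg{Q_j}=\blaxg{Q_j}\rst_{\Delta_0,A,C_1}\sqcup\blaxg{Q_j}\rst_{\Delta_0,A,C_2}$ (\cref{lemma:bl-axiom-graph-decomposition}), and concludes via the identity $(G_1\sqcup G_2)\odot_A(H_1\sqcup H_2)=(G_1\odot_A H_1)\sqcup(G_2\odot_A H_2)$; this identity holds in the present situation because the branch labels carried by $G_1,H_1$ and those carried by $G_2,H_2$ become disjoint after deleting $\data{names}(A)$ (\cref{propo:sequent-branches-decomposition}, \cref{coro:branches-of-difference}, \cref{propo:bl-axiom-graph-names-branches}), so that no complete alternating labeled path can mix edges from the two sides — the same branch-disjointness phenomenon that underlies \cref{lemma:bl-axiom-graph-decomposition}.

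The reduction carried out above is routine; its only subtlety is keeping the branch labels straight in the conjunction case, which the three inversion lemmas handle. I expect the genuine difficulty to be entirely packaged into \cref{lemma:semantic-cut-admissibility}: reconstructing, by a normalisation-by-evaluation argument, a cut-free derivation realising a prescribed bl-axiom graph out of a cut with atomic context between cut-free derivations. That is the hard part of the whole development, and it is dealt with separately in \cref{sec:semantic-cut-admissibility-proof}.
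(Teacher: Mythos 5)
Your proof is correct and rests on the same two pillars as the paper's: invertibility of the logical rules to reduce every cut to an atomic context, and a normalisation-by-evaluation step at the base. The organization differs in two ways worth noting. First, where you run an outer induction on the number of cuts (eliminating topmost cuts by substitution) with an inner induction on the height-multiset of the context, the paper runs a single induction on a \emph{virtual height} measure (\cref{defn:virtual-height}) of the whole derivation, isolating the non-atomic context formula of a bottommost cut via \(\data{isl}\); the virtual-height device exists precisely because isolation can expand non-atomic axioms and so need not preserve ordinary height, a complication your measures (cut count, then context height-multiset with the cut formula held fixed) sidestep cleanly. Second, your conjunction case re-derives by hand an instance of \cref{thm:bl-axiom-graphs-invariance}: the distributivity \((G_1 \sqcup G_2) \odot_A (H_1 \sqcup H_2) = (G_1 \odot_A H_1) \sqcup (G_2 \odot_A H_2)\) for branch-disjoint summands is exactly what the cut cases of \cref{lemma:bl-axiom-graph-conjunction-transforms,lemma:bl-axiom-graph-decomposition} establish, and your justification via \cref{propo:sequent-branches-decomposition}~(iv) and \cref{coro:branches-of-difference} is the right one; you could have cited the invariance theorem as a black box, which is what the paper does. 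One cosmetic slip: the atomic base case of your key claim is \cref{lemma:normalisation-by-evaluation}, not \cref{lemma:semantic-cut-admissibility} --- the latter only guarantees that the composite graph is non-empty and is invoked inside the former --- though your closing paragraph makes clear you have the right content in mind.
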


Since we lack a cut-reduction procedure compatible with the interpretation, we have
to prove \cref{thm:normalisation} through a \emph{normalisation-by-evaluation} argument,
where we first compute the interpretation of the derivation, then reconstruct a cut-free
derivation with the same interpretation.

However, thanks to \cref{thm:bl-axiom-graphs-invariance}, we can employ the isolation
procedure to commute cuts up the derivation until they are reduced to atomic contexts.
We are thus able to limit the need for evaluation to the very specific and much simpler
case of \emph{quasi-cut-free} derivations\footnote{I.e., those formed by cutting together
two cut-free derivations} with atomic conclusion.

The proof then looks like a standard cut-elimination argument, with a lemma for the quasi-cut-free
case and a final general argument by induction on the height of the derivation. We start
with a kind of semantic cut-admissibility result. All omitted proofs are provided as usual
in \cref{sec:bl-axiom-graph-proofs}, except that of \cref{lemma:semantic-cut-admissibility}
to which we devote the whole of \cref{sec:semantic-cut-admissibility-proof} because of its
complexity.

\begin{lemma}
    \label{lemma:edges-join-dual-atoms}
    Let \(P \in \data{GS4}^\mathcal{N}\) be any derivation with conclusion~\(\th \Gamma\).
    All edges in~\(\blaxg{P}\) link dual atom occurrences, i.e.\ for all~\(xy \in E_{\blaxg{P}}\)
    we have \(\Gamma[x] = \dl{\Gamma[y]}\).
\end{lemma}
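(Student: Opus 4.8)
The plan is to argue by structural induction on $P$, following the three clauses of \cref{defn:bl-axiom-graphs}. The superposition and logical rules are the routine cases: there $\blaxg{P} = \bigsqcup_i \blaxg{Q_i}$ for the immediate subderivations $Q_i$, so an edge $xy \in E_{\blaxg P}$ is already an edge of some $\blaxg{Q_i}$; if $\th\Sigma_i$ is the conclusion of $Q_i$, the induction hypothesis gives $\Sigma_i[x] = \dl{\Sigma_i[y]}$, and it only remains to note that $\Sigma_i[z] = \Gamma[z]$ for every $z \in \data{names}(\Sigma_i)$ — passing from the premise(s) to the conclusion of a $\lor$-, $\land$- or $\sqcup$-rule merely reorganises subformula occurrences into larger ones and never changes the atom attached to a given name (for $\land$ each premise carries only part of the names, but the equality still holds on exactly those).

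For the axiom case $P$ is $\rl[\{A,\dl B\}]{ax}$ with conclusion $\th\Gamma, A, \dl B$ and $A \equiv B$, so $\blaxg{P} = \blwk{\Gamma}(\blid{\{A,\dl B\}})$. Since $\blwk{\Gamma}(-)$ only enlarges branch labels (\cref{defn:bl-wk-id}) it leaves the set of edges unchanged, so it suffices to show that every edge $uv$ of $\blid{\{A,\dl B\}}$ satisfies $(\{A,\dl B\})[u] = \dl{(\{A,\dl B\})[v]}$; passing to the full conclusion $\th\Gamma, A, \dl B$ is then immediate, as atom indexing of the sharing-free set $\{A, \dl B\}$ is inherited by the larger sharing-free sequent. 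I would prove this about $\blid{\{A,\dl B\}}$ by a side induction on the height of $A$ straight from \cref{defn:bl-wk-id}: in the base case $\blid{\{\nm{x}{\alpha}, \nm{y}{\dl\alpha}\}}$ has the single edge $xy$, and the atom $\alpha$ at $x$ equals $\dl{\dl\alpha}$, the negation of the atom $\dl\alpha$ at $y$; the step case is the union of two smaller identity bl-graphs over disjoint name sets, to which the side-induction hypothesis applies, again with atom indexing inherited from the subformulas.

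The substantive case is the cut: $P$ cuts $Q$ of conclusion $\th\Gamma, A$ against $R$ of conclusion $\th\Gamma, \dl A$, so $\blaxg{P} = \blaxg{Q} \odot_A \blaxg{R}$ over the interface $I = \data{names}(A) = \data{names}(\dl A)$. Write $\Gamma_Q$ and $\Gamma_R$ for the conclusion sets $\{A\}\cup\Gamma$ and $\{\dl A\}\cup\Gamma$. By \cref{defn:bl-composition}, an edge $xy \in E_{\blaxg P}$ comes from a complete alternating $X$-labeled path $z_1 = x, z_2, \ldots, z_n = y$ between $\blaxg Q$ and $\blaxg R$ through $I$, whose consecutive edges $e_i = z_i z_{i+1}$ strictly alternate between the two graphs; let $K_i \in \{Q, R\}$ be the subderivation carrying $e_i$, so $K_{i+1} \neq K_i$. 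I would record three facts: (1) since $e_i \in E_{\blaxg{K_i}}$, the induction hypothesis on $K_i$ gives $\Gamma_{K_i}[z_i] = \dl{\Gamma_{K_i}[z_{i+1}]}$; (2) for an internal vertex $z_i \in I$, $\Gamma_Q[z_i] = A[z_i] = \dl{\dl A[z_i]} = \dl{\Gamma_R[z_i]}$, using $\dl A[z_i] = \dl{A[z_i]}$; (3) the endpoints $z_1, z_n$ lie outside $I$ by completeness, hence (by \cref{propo:bl-axiom-graph-names-branches} applied to $Q, R$) in $\data{names}(\Gamma)$, so $\Gamma_Q[z_1] = \Gamma_R[z_1] = \Gamma[z_1]$ and likewise at $z_n$. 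Chaining: for $1 \le i \le n-2$, fact (1) gives $\Gamma_{K_i}[z_i] = \dl{\Gamma_{K_i}[z_{i+1}]}$ and fact (2) at the internal vertex $z_{i+1}$ (with $K_{i+1} \neq K_i$) gives $\dl{\Gamma_{K_i}[z_{i+1}]} = \Gamma_{K_{i+1}}[z_{i+1}]$, whence $\Gamma_{K_1}[z_1] = \Gamma_{K_2}[z_2] = \cdots = \Gamma_{K_{n-1}}[z_{n-1}]$; a last application of (1) at $e_{n-1}$ together with (3) then yields $\Gamma[x] = \Gamma_{K_1}[z_1] = \dl{\Gamma_{K_{n-1}}[z_n]} = \dl{\Gamma[y]}$.

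The one genuinely delicate point is the bookkeeping forced by the fact that $A$ and $\dl A$ have the same names: an internal vertex $z_i$ of the path is at once a name of $A$, carrying the atom $A[z_i]$ as seen from $Q$, and a name of $\dl A$, carrying $\dl{A[z_i]}$ as seen from $R$, and it is exactly this sign flip at each interface crossing — one between every two consecutive edges, by the strict alternation — that cancels against the flip contributed by each edge through the induction hypothesis, so that the parity comes out right (a complete path accumulates $2n-3$ flips, always odd). When $n = 2$ there are no internal vertices, the telescope is empty, and the claim collapses to the induction hypothesis on the single edge $e_1 \in E_{\blaxg{K_1}}$.
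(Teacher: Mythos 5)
Your proposal is correct and follows essentially the same route as the paper: structural induction on $P$, with the only substantive work in the cut case, where one walks along the complete alternating path and checks that the dualisation contributed by each edge (via the induction hypothesis) cancels against the dualisation at each interface crossing ($A[z]$ versus $\dl{A}[z]$). The paper packages this as a secondary induction on the position in the path with a case split on parity and on which graph carries the odd edges, whereas your telescoping chain of the three recorded facts is a cleaner presentation of the same computation; the remaining cases (axioms, unions) are handled as you describe.
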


\begin{lemma}[Semantic cut-admissibility]
    \label{lemma:semantic-cut-admissibility}
    Let \(P, Q \in \data{GS4}^\mathcal{N}\) be cut-free derivations with conclusion~%
    \(\th \Gamma, A\) and~\(\th \Gamma, \dl{A}\) respectively, where all elements of
    the context~\(\Gamma\) are atomic formulas. Then the composite bl-graph~\(\blaxg{P}
    \odot_A \blaxg{Q}\) has at least one edge.
\end{lemma}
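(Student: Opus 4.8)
First I would strip away the branch-labelling, which is inert in this situation. Since every formula of $\Gamma$ is atomic, \cref{propo:sequent-branches-decomposition}(i) gives $\data{Br}(\Gamma) = \{\data{names}(\Gamma)\}$, so by \cref{lemma:branches-of-union} every member of $\data{Br}(\Gamma, A)$ has the shape $\data{names}(\Gamma) \cup Y$ with $Y \subseteq \data{names}(A)$, and the same holds for $\data{Br}(\Gamma, \dl A)$ because $\data{names}(\dl A) = \data{names}(A)$. By \cref{propo:bl-axiom-graph-names-branches} every branch of $\blaxg P$ lies in $\data{Br}(\Gamma, A)$; hence relativising to the interface $I := \data{names}(A)$ collapses every label of $\blaxg P$ — and likewise of $\blaxg Q$ — to $\data{names}(\Gamma)$, so $e \adin^I_{\blaxg P} X$ holds exactly when $X = \data{names}(\Gamma)$ and $e \in E_{\blaxg P}$. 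Thus a complete alternating $X$-labelled path between $\blaxg P$ and $\blaxg Q$ through $I$ exists iff $X = \data{names}(\Gamma)$ and there is a complete alternating path, in the sense of \cref{defn:simple-alternating-path}, between the underlying simple graphs. As $V_{\blaxg P} = V_{\blaxg Q} = \data{names}(\Gamma, A)$, the endpoints of such a path must be two distinct names of $\Gamma$; this is compatible with the hypotheses, since cutting $P$ against $Q$ yields a derivation of $\th\Gamma$, which forces the atomic set $\Gamma$ to contain a complementary pair of occurrences, so $\data{names}(\Gamma)$ has at least two elements. It remains to exhibit the path.

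Next I would normalise $P$ and $Q$ by iterated isolation. Repeatedly isolating the non-atomic subformulas of $A$ in $P$ is legitimate by \cref{coro:disjunction-isolation,coro:conjunction-isolation}, terminates because formulas shrink, preserves cut-freeness, and — crucially — preserves $\blaxg{-}$ at every step by \cref{thm:bl-axiom-graphs-invariance}; the result is a derivation with the same branch-labelled axiom graph as $P$ whose logical inferences form a skeleton fully decomposing $A$ (and nothing else, $\Gamma$ being atomic), with one leaf sequent $\th \Gamma \cup \Delta_Z$ for each $Z \in \data{Br}(A)$, where $\Delta_Z$ collects the atom occurrences of $A$ whose names lie in $Z$, and above each leaf a derivation $P_Z$ built solely from $\rl{\sqcup}$ and atomic $\rl{ax}$. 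Since logical rules act on graphs by union, $\blaxg P = \bigsqcup_{Z} \blaxg{P_Z}$; each $\blaxg{P_Z}$ is non-empty, all its edges join complementary atoms inside $\data{names}(\Gamma) \cup Z$ by \cref{lemma:edges-join-dual-atoms}, and $\th\Gamma\cup\Delta_Z$ is derivable (so contains a complementary pair). Symmetrically $\blaxg Q = \bigsqcup_{W} \blaxg{Q_W}$, with $W$ ranging over $\data{Br}(\dl A)$. A short induction on $A$ then establishes the \emph{orthogonality} property: $Z \cap W$ is a singleton $\{\zeta_{Z,W}\}$ for all $Z \in \data{Br}(A)$, $W \in \data{Br}(\dl A)$. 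It follows that an alternating path may switch from an edge of some $\blaxg{P_Z}$ to an edge of some $\blaxg{Q_W}$ only at the connection vertex $\zeta_{Z,W}$, which already severely constrains the admissible paths.

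The remaining and genuinely hard part is to produce a path. If some $\blaxg{P_Z}$ or $\blaxg{Q_W}$ carries an edge with both endpoints in $\data{names}(\Gamma)$, that edge alone is a complete alternating path and we are done; otherwise every edge of every $\blaxg{P_Z}$ (resp.\ $\blaxg{Q_W}$) meets $Z$ (resp.\ $W$). The target configuration is then a pair $Z, W$ together with an edge $a\,\zeta_{Z,W}$ of $\blaxg{P_Z}$ and an edge $\zeta_{Z,W}\,b$ of $\blaxg{Q_W}$ with $a, b \in \data{names}(\Gamma)$: indeed $a, \zeta_{Z,W}, b$ is then a complete alternating path, and $a \neq b$ because by \cref{lemma:edges-join-dual-atoms} the atom at $a$ is $\dl{A[\zeta_{Z,W}]}$ while the one at $b$ is $A[\zeta_{Z,W}]$ and the involution on atoms is fixpoint-free. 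When this is obstructed — an edge of a $\blaxg{P_Z}$ may sit entirely inside $A$ — one must instead thread a longer alternating path visiting several connection vertices in turn; termination is never a problem, as the vertices of a path are pairwise distinct and confined to the finite set $\data{names}(\Gamma, A)$. The main obstacle, and the reason this lemma needs a long proof, is to organise this construction so that at every stage one can either close up a path with both endpoints in $\data{names}(\Gamma)$ or legitimately prolong it — i.e.\ to rule out getting stranded at an internal vertex with no admissible continuation. A naïve induction on the complexity of $A$ does not settle this, because inverting a disjunction replaces the atomic context $\Gamma$ with one containing the non-atomic immediate subformulas, violating the standing hypothesis; reconciling this is exactly where the argument becomes delicate.
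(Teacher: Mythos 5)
Your reduction is sound as far as it goes: collapsing all branch labels to $\data{names}(\Gamma)$ on the interface is exactly right (and is how the paper justifies treating the problem as one about unlabeled alternating paths between two graphs with ``a unique branch up to the interface''); the full isolation of $A$ and $\dl{A}$ into per-branch subderivations $P_Z$, $Q_W$ is legitimate by \cref{thm:bl-axiom-graphs-invariance}; and the orthogonality fact that $Z \cap W$ is a singleton for $Z \in \data{Br}(A)$, $W \in \data{Br}(\dl{A})$ is true and provable by the induction you indicate. But the proof stops precisely at the point where the lemma actually has content. Your last paragraph does not argue that a complete alternating path exists; it describes what one would look like, observes that a na\"ive induction fails, and names the danger of ``getting stranded at an internal vertex with no admissible continuation'' without ruling it out. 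That is the entire difficulty, so what you have is a correct reformulation of the statement, not a proof of it.

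The paper's argument supplies exactly the engine you are missing, and it is worth seeing why something of this shape is needed. The key combinatorial input is a switching lemma (\cref{lemma:dual-formulas-branch-inclusion}): for every polarity assignment $f : \data{names}(A) \to \data{Pol}$ there is either a branch of $A$ monochromatically $\circ$ or a branch of $\dl{A}$ monochromatically $\bullet$. Since cut-free derivations realize every branch (\cref{lemma:bl-axiom-graphs-cutfree-branches}), each of the $2^{|\data{names}(A)|}$ assignments selects at least one edge of $\blaxg{P}$ or $\blaxg{Q}$ compatible with it; the collection of all (assignment, edge) pairs forms a ``consistent and live state''. The reduction lemma then eliminates interface names one at a time: each assignment composes with its flip at the chosen name, and the corresponding paths are concatenated at that vertex (after possibly reversing one of them), with consistency guaranteeing the two halves have matching polarity/initiality and liveness guaranteeing a partner always exists --- this is what rules out getting stranded. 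After finitely many rounds the state is terminal and any surviving path is complete. Your per-branch decomposition and the orthogonality of $\data{Br}(A)$ and $\data{Br}(\dl{A})$ are compatible with, but strictly weaker than, this invariant: they constrain where a path may switch sides but give no reason why a switch is always available, which is why the argument cannot be closed from them alone.
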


\begin{lemma}[Normalisation by evaluation]
    \label{lemma:normalisation-by-evaluation}
    Let \(P, Q \in \data{GS4}^\mathcal{N}\) be cut-free derivations with conclusion~%
    \(\th \Gamma, A\) and~\(\th \Gamma, \dl{A}\) respectively, where all elements of
    the context~\(\Gamma\) are atomic formulas. There is a cut-free derivation \(R \in
    \data{GS4}^\mathcal{N}\) with conclusion~\(\th \Gamma\) and such that \(\blaxg{R} =
    \blaxg{P} \odot_A \blaxg{Q}\).
\end{lemma}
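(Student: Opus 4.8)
The plan is to recognise that the composite bl-graph in the statement is, on the nose, the branch-labeled axiom graph of a genuine (cut-bearing) $\data{GS4}^\mathcal{N}$ derivation, and then to exploit the atomicity of the context $\Gamma$ to show that this bl-graph is so degenerate that it can be re-synthesised as a cut-free derivation built only from atomic axioms and superpositions. Throughout, the genuinely hard content — that the composite is non-empty and that its edges link dual atoms — is already packaged as \cref{lemma:semantic-cut-admissibility,lemma:edges-join-dual-atoms}; everything else is bookkeeping with branch labels under the atomicity hypothesis.

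First I would form the derivation $P'$ obtained by applying the cut rule on $A$ to $P$ and $Q$: it has conclusion $\th \Gamma$, and by \cref{defn:bl-axiom-graphs} its bl-axiom graph is exactly $G := \blaxg{P} \odot_A \blaxg{Q}$. Next I would pin down the shape of $G$. By \cref{propo:bl-axiom-graph-names-branches}, $V_G = \data{names}(\Gamma)$ and $\data{Br}(G) \subseteq \data{Br}(\Gamma)$; since every formula of $\Gamma$ is atomic, \cref{propo:sequent-branches-decomposition}(i) gives $\data{Br}(\Gamma) = \{\data{names}(\Gamma)\}$, so every branch label occurring in ${\adin_G}$ equals $\data{names}(\Gamma)$. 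Hence ${\adin_G} = \{(e, \data{names}(\Gamma)) \mid e \in E_G\}$, i.e.\ $G$ is completely determined by its edge set $E_G \subseteq \binom{\data{names}(\Gamma)}{2}$, which is finite because $\data{names}(\Gamma)$ is. \Cref{lemma:semantic-cut-admissibility} then tells us $E_G \neq \emptyset$, and \cref{lemma:edges-join-dual-atoms} applied to $P'$ tells us that every $xy \in E_G$ satisfies $\Gamma[x] = \dl{\Gamma[y]}$.

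I would then build $R$ directly. Enumerate $E_G = \{x_1 y_1, \dots, x_k y_k\}$ with $k \geq 1$. For each $i$, set $\alpha_i = \Gamma[x_i]$, so that $\Gamma[y_i] = \dl{\alpha_i}$; because $\Gamma$ is atomic and sharing-free, $\nm{x_i}{\alpha_i}$ and $\nm{y_i}{\dl{\alpha_i}}$ are precisely the elements of $\Gamma$ carrying the names $x_i$ and $y_i$, and $x_i \neq y_i$, so the axiom rule application
\[
    \prfbyaxiom{\rl[\{\nm{x_i}{\alpha_i},\nm{y_i}{\dl{\alpha_i}}\}]{ax}}{\th \Gamma}
\]
is well-formed; call it $R_i$. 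A direct computation from \cref{defn:bl-wk-id,defn:bl-axiom-graphs} — using once more that the weakened context $\Gamma \setminus \{\nm{x_i}{\alpha_i}, \nm{y_i}{\dl{\alpha_i}}\}$ is atomic and hence (by \cref{propo:sequent-branches-decomposition}(i)) has $\data{names}(\Gamma) \setminus \{x_i, y_i\}$ as its unique branch — gives $\blaxg{R_i} = \langle \data{names}(\Gamma), \{(x_i y_i, \data{names}(\Gamma))\}\rangle$. Letting $R$ be $R_1, \dots, R_k$ combined by $k - 1$ applications of the superposition rule (legitimate since all the $R_i$ have conclusion $\th \Gamma$), $R$ is cut-free and, since superposition is interpreted by bl-graph union, $\blaxg{R} = \bigsqcup_{i=1}^k \blaxg{R_i} = \langle \data{names}(\Gamma), \{(x_i y_i, \data{names}(\Gamma)) : 1 \leq i \leq k\}\rangle = G$, as required.

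The main obstacle is not really in this lemma at all: it has been isolated as \cref{lemma:semantic-cut-admissibility}, whose proof occupies \cref{sec:semantic-cut-admissibility-proof}. Within the argument above, the one point that deserves care is the opening move — observing that $\blaxg{P} \odot_A \blaxg{Q}$ is literally the bl-axiom graph of the cut derivation $P'$, which is what licenses applying \cref{propo:bl-axiom-graph-names-branches,lemma:edges-join-dual-atoms} off the shelf rather than re-deriving statements about complete alternating paths; after that, the atomicity of $\Gamma$ trivialises every branch label and the reconstruction of $R$ is forced.
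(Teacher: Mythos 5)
Your proposal is correct and follows essentially the same route as the paper: both reduce the problem to the observations that the composite graph has a single branch label $\data{names}(\Gamma)$, non-empty edge set (by \cref{lemma:semantic-cut-admissibility}), and dual-atom edges (by \cref{lemma:edges-join-dual-atoms}), and then rebuild the derivation as atomic axioms glued by superpositions. The only cosmetic difference is that you enumerate the edges and apply $k-1$ superpositions explicitly, whereas the paper phrases the same construction as an induction on $|E_G|$ peeling off one edge at a time.
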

\begin{proof}
    For brevity, let~\(G = \blaxg{P} \odot_A \blaxg{Q}\). Observe that the vertex set of~\(G\)
    is finite by \cref{propo:bl-axiom-graph-names-branches}, hence \(G\) has finitely many edges.
    Moreover, \(\data{Br}(G) \subseteq \data{Br}(\Gamma) = \{\data{names}(\Gamma)\}\)
    (\cref{propo:sequent-branches-decomposition}), i.e.\ all edges have the same branch
    label~\(\data{names}(\Gamma)\). Finally, by \cref{lemma:edges-join-dual-atoms} and
    the assumption that \(\Gamma\) is atomic, there are for all edges~\(xy \in E_G\)
    named formulas~\(\nm{x}{\alpha}, \nm{y}{\dl{\alpha}} \in \Gamma\) where \(\alpha =
    \Gamma[x]\) and \(\dl{\alpha} = \Gamma[y]\).

    Now let \(|E_G|\) denote the number of edges in~\(G\): by \cref{lemma:semantic-cut-admissibility}
    we know that \(|E_G| > 0\). We construct the derivation~\(R\) by induction on~\(n\):
    \begin{itemize}
        \item if \(|E_G| = 1\), then there is a unique edge~\(xy \in E_G\): let \(\alpha
        = \Gamma[x]\), and let \(R\) be the derivation
        \[
            \prfbyaxiom{\rl[\{\nm{x}{\alpha}, \nm{y}{\dl{\alpha}}\}]{ax}}{\th \Gamma}
        \]

        \item if \(|E_G| > 1\), then there is at least one edge~\(xy \in E_G\) and we can
        decompose \(G\) as~\(G' \sqcup G''\), where
        \[
            G' = \langle V_G, \{ (xy, \data{names}(\Gamma)) \} \rangle
            \text{ and }
            G'' = \langle V_G, {\adin_G} \setminus \{ (xy, \data{names}(\Gamma)) \} \rangle.
        \]
        We have obviously \(|E_{G'}|, |E_{G''}| < |E_G|\), hence there are by induction
        hypothesis derivations \(R', R''\) with conclusion~\(\th \Gamma\) and such that
        \(\blaxg{R'} = G'\), \(\blaxg{R''} = G''\): let then \(R\) be the derivation
        \[
            \prftree[r]{\rl{\sqcup}}
                {\prfsummary[\(R'\)]{\th \Gamma}}
                {\prfsummary[\(R''\)]{\th \Gamma}}
                {\th \Gamma}
        \]
    \end{itemize}
    Readers may check easily -- using the facts recalled above --  that \(\blaxg{R} = G\),
    as required.
\end{proof}

We come finally to the general cut-elimination proof, but first we must handle an important
technical detail. Because the isolation procedure might expand axioms, it does not preserve
in general the height of the derivation, hence we cannot proceed by induction on that
measure. We define instead an alternative measure called \emph{virtual height}, which
provides an upper bound to the height that may be attained by expanding the axioms, and
therefore does not increase under isolation:

\begin{definition}
    \label{defn:virtual-height}
    For any derivation~\(P \in \data{GS4}^\mathcal{N}\) with conclusion~\(\th \Gamma\),
    \begin{itemize}
        \item if \(P\) ends with an axiom rule application, let~\(\data{vh}(P) = 1 +
        \data{deg}(\Gamma)\) (\cref{defn:sequent-complexity});
        \item otherwise \(P\) has the form
        \[
            \prftree[r]{\(r\)}
                {\prfsummary[\(P_1\)]{\th \Gamma_1}}
                {\prfassumption{\cdots}}
                {\prfsummary[\(P_n\)]{\th \Gamma_n}}
                {\th \Gamma}
        \]
        where \(r \in \{\rl{cut}, \rl{\sqcup}, \rl{\lor}, \rl{\land}\}\): let then
        \(\data{vh}(P) = 1 + \max_{i = 1}^n \data{vh}(P_i)\).
    \end{itemize}
\end{definition}

\begin{lemma}
    \label{lemma:virtual-height-decrease}
    Let \(P \in \data{GS4}^\mathcal{N}\) be any derivation with conclusion~\(\th \Gamma, A\)
    where \(A\) is non-atomic; then \(\data{vh}(\data{isl}(P, A)) \leq \data{vh}(P)\).
\end{lemma}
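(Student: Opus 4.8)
The plan is to reduce the statement, which is about $\data{isl}$, to a \emph{strict}-decrease statement for the three inversion transforms of \cref{lemma:disjunction-transform,lemma:left-conjunction-transform,lemma:right-conjunction-transform}. By \cref{coro:disjunction-isolation,coro:conjunction-isolation}, $\data{isl}(P, A)$ is obtained by applying a single $\lor$-rule (resp.\ $\land$-rule) on top of $\data{inv}(P, A)$ (resp.\ on top of $\data{inv_l}(P, A)$ and $\data{inv_r}(P, A)$), so by the clause of \cref{defn:virtual-height} for non-axiom rules it suffices to show that $\data{vh}(\data{inv}(P, A \lor B)) < \data{vh}(P)$ whenever $P$ has conclusion $\th \Gamma, A \lor B$, and that both $\data{vh}(\data{inv_l}(P, A \land B))$ and $\data{vh}(\data{inv_r}(P, A \land B))$ are $< \data{vh}(P)$ whenever $P$ has conclusion $\th \Gamma, A \land B$. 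The strictness matters, since isolation re-adds one rule on top. I would establish these three inequalities by structural induction on $P$, following the recursive clauses defining the inversion transforms in \cref{sec:inversion-proofs}; the induction is on the \emph{structure} of $P$, not on $\data{vh}(P)$ or on its height, which is legitimate because every recursive call of an inversion transform is made on a proper subderivation of $P$.

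Four kinds of case arise in the inductive step. If $P$ ends with the logical rule introducing the formula being inverted, the transform returns an immediate subderivation of $P$, whose virtual height is strictly smaller. If $P$ ends with a $\lor$-, $\land$-, cut-, or $\sqcup$-rule that does not introduce the inverted formula — so that the inverted formula occurs in the context of every premiss — the transform re-applies that rule to the recursively transformed premisses; the induction hypothesis yields a strict drop on each premiss, and since $\data{vh}$ at such a rule is $1 + \max$ over the premisses, strictness propagates. The substantive cases are those in which $P$ is an axiom: either the inverted compound formula is one of the two formulas related by the axiom rule, in which case the transform performs one expansion step, introducing a single $\land$- or $\lor$-rule above two smaller axioms, or it sits in the weakened part of the context, in which case it is merely replaced there by its immediate subformulas. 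In both cases I would compute $\data{vh}$ directly from \cref{defn:virtual-height}, using only that $\data{deg}$ is additive over sequents (so $\data{deg}(\Gamma, C) = \data{deg}(\Gamma) + \data{deg}(C)$ when $C$ is name-disjoint from $\Gamma$), that $\data{deg}(C \lor D) = \data{deg}(C \land D) = \data{deg}(C) + \data{deg}(D) + 1$, and that $\data{deg}$ is invariant under $\equiv$ and under dualization. The drop then comes out as $1 + \min(\data{deg}(A), \data{deg}(B))$ in the disjunction-expansion case, as $1 + \data{deg}(B)$ or $1 + \data{deg}(A)$ in the conjunction-expansion cases, and as $1 + \data{deg}(B)$, $1 + \data{deg}(A)$, or just $1$ in the weakened-context cases — always at least $1$, hence strict.

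I do not expect any single hard step; the work is essentially bookkeeping. The delicate points are to make the case split coincide exactly with the recursive clauses of $\data{inv}$, $\data{inv_l}$ and $\data{inv_r}$ — in particular the symmetric subcase where the inverted formula is the dual formula displayed in the axiom rule rather than its deterministic formula, and the observation that the inverted \emph{compound} formula can never be the cut-formula of $P$'s last rule, since it occurs in the conclusion $\th \Gamma, A$ and hence is not erased there — and to verify that each axiom expansion is exactly absorbed by the $\data{deg}(\Gamma)$ summand in the virtual height of an axiom leaf. That absorption is precisely what the virtual-height measure is designed for, and the arithmetic of the previous paragraph is where one checks it really holds, i.e.\ that an expansion never increases $\data{vh}$, which is what leaves room for the single extra rule that $\data{isl}$ contributes on top.
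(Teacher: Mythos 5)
Your proposal is correct and follows essentially the same route as the paper's proof: a structural induction on \(P\) whose only substantive content is the arithmetic at the axiom leaves, where the \(\data{deg}(\Gamma)\) summand in the virtual height of an axiom absorbs the cost of expanding it. Your explicit reformulation as a strict inequality for the three inversion transforms (leaving room for the single rule that isolation adds on top) is exactly the bookkeeping the paper performs implicitly when it computes the virtual height of the expanded axioms including their final logical rule.
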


\begin{proof}[Proof of \cref{thm:normalisation}]
    If \(P\) is cut-free, let~\(Q = P\). Otherwise, proceed by induction on the virtual
    height of~\(P\):
    \begin{itemize}
        \item if \(P\) has the form
        \[
            \prftree[r]{\(r\)}
                {\prfsummary[\(P_1\)]{\th \Gamma_1}}
                {\prfassumption{\cdots}}
                {\prfsummary[\(P_n\)]{\th \Gamma_n}}
                {\th \Gamma}
        \]
        where \(r \in \{\rl{\sqcup}, \rl{\lor}, \rl{\land}\}\), then apply the induction
        hypothesis to \(P_1, \ldots, P_n\) to get cut-free derivations \(Q_1, \ldots, Q_n\)
        with the same conclusion as~\(P_1, \ldots, P_n\) respectively, and let \(Q\) be the
        cut-free derivation
        \[
            \prftree[r]{\(r\)}
                {\prfsummary[\(Q_1\)]{\th \Gamma_1}}
                {\prfassumption{\cdots}}
                {\prfsummary[\(Q_n\)]{\th \Gamma_n}}
                {\th \Gamma}
        \]
        We have \(\blaxg{Q_i} = \blaxg{P_i}\) for all \(1 \leq i \leq n\) and therefore
        \(\blaxg{Q} = \bigsqcup_{i = 1}^n \blaxg{Q_i} = \bigsqcup_{i = 1}^n \blaxg{P_i} = \blaxg{P}\),
        as required;

        \item if \(P\) has the form
        \[
            \prftree[r]{\rl{cut}}
                {\prfsummary{\th \Gamma, B, A \vphantom{\dl{A}}}}
                {\prfsummary{\th \Gamma, B, \dl{A}}}
                {\th \Gamma, B}
        \]
        where \(B\) is non-atomic, there is a derivation \(P' = \data{isl}(P, B)\) of the form
        \[
            \prftree[r]{\(r\)}
                {\prfsummary[\(P'_1\)]{\th \Gamma_1}}
                {\prfassumption{\cdots}}
                {\prfsummary[\(P'_n\)]{\th \Gamma_n}}
                {\th \Gamma, B}
        \]
        where the last rule application \(r \in \{\rl{\lor}, \rl{\land}\}\) introduces
        the formula~\(B\), and such that \(\blaxg{P'} = \blaxg{P}\) (\cref{thm:bl-axiom-graphs-invariance})
        and~\(\data{vh}(P') \leq \data{vh}(P)\) (\cref{lemma:virtual-height-decrease}): we can
        then apply the induction hypothesis to the \(P'_i\) and proceed as in the previous
        case to get a cut-free derivation~\(Q\) such that \(\blaxg{Q} = \blaxg{P'} = \blaxg{P}\);

        \item finally, if \(P\) has the form
        \[
            \prftree[r]{\rl{cut}}
                {\prfsummary[\(P_1\)]{\th \Gamma, A \vphantom{\dl{A}}}}
                {\prfsummary[\(P_2\)]{\th \Gamma, \dl{A}}}
                {\th \Gamma}
        \]
        where all elements of~\(\Gamma\) are atomic formulas, we apply first the induction
        hypothesis to \(P_1, P_2\) to get cut-free derivations \(Q_1, Q_2\) with the same
        conclusions and such that \(\blaxg{Q_i} = \blaxg{P_i}\) for \(i \in \{1,2\}\).
        We can now apply \cref{lemma:normalisation-by-evaluation} to \(Q_1, Q_2\) to get
        a cut-free derivation~\(Q\) such that \(\blaxg{Q} = \blaxg{Q_1} \odot_A \blaxg{Q_2}
        = \blaxg{P_1} \odot_A \blaxg{P_2} = \blaxg{P}\). \qedhere
    \end{itemize}
\end{proof}

\section{Totality and canonical forms}\label{sec:totality}

The availability of a cut-elimination procedure that preserves branch-labeled axiom graphs
unlocks a powerful technique for proving properties of the interpretation: we can reason
about cut-free derivations and the result generalizes immediately to all derivations.
An important example is the following property, which together with the one described
in \cref{lemma:edges-join-dual-atoms} characterises the class of bl-graphs induced by
derivations in~\(\data{GS4}^\mathcal{N}\).

\begin{lemma}
    \label{lemma:bl-axiom-graphs-cutfree-branches}
    For all \emph{cut-free} derivations~\(P \in \data{GS4}^\mathcal{N}\) with conclusion~\(\th \Gamma\),
    \[
        \data{Br}(\blaxg{P}) = \data{Br}(\Gamma).
    \]
\end{lemma}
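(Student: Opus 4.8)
The plan is to prove the equality directly by structural induction on the cut-free derivation~$P$, with conclusion~$\th\Gamma$. (One inclusion, $\data{Br}(\blaxg{P}) \subseteq \data{Br}(\Gamma)$, is already \cref{propo:bl-axiom-graph-names-branches}, so strictly speaking only the reverse inclusion needs work, but it is just as convenient to carry the equality through.) The cases where $P$ ends with a $\rl{\sqcup}$, $\rl{\lor}$, or $\rl{\land}$ rule are immediate: in each of them $\blaxg{P}$ is the bl-graph union of the interpretations of the immediate subderivations, so $\data{Br}(\blaxg{P})$ is the union of their branch sets; one then concludes by the induction hypothesis followed by \cref{propo:sequent-branches-decomposition}, using item (ii) in the disjunction case, item (iii) in the conjunction case, and nothing at all in the superposition case.

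The only case with real content is the axiom rule. There $P$ introduces a pair $A, \dl{B}$ with $A \equiv B$ over a weakening context $\Delta := \Gamma \setminus \{A, \dl{B}\}$, so $\blaxg{P} = \blwk{\Delta}(\blid{\{A, \dl{B}\}})$ by \cref{defn:bl-axiom-graphs}. First I would record the elementary identity $\data{Br}(\blwk{\Delta}(G)) = \{\, X \cup Y \mid X \in \data{Br}(G),\ Y \in \data{Br}(\Delta)\,\}$, read straight off \cref{defn:bl-wk-id} (using that $\data{Br}(\Delta)$ is never empty). Granting this, the axiom case reduces, via \cref{lemma:branches-of-union}, to the following sublemma: for all disjoint sharing-free named formulas $A, \dl{B}$ with $A \equiv B$ one has $\data{Br}(\blid{\{A, \dl{B}\}}) = \data{Br}(A, \dl{B})$. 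Indeed, assuming the sublemma, $\data{Br}(\blaxg{P}) = \{\, X \cup Y \mid X \in \data{Br}(A, \dl{B}),\ Y \in \data{Br}(\Delta)\,\} = \data{Br}(\Delta, A, \dl{B}) = \data{Br}(\Gamma)$.

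I would prove the sublemma by induction on the height of~$A$. The base case $A = \nm{x}{\alpha}$, $\dl{B} = \nm{y}{\dl{\alpha}}$ is trivial: both sides equal $\{\{x, y\}\}$. In the inductive step we may assume $A$ is a disjunction (the case of a conjunction is identical by the symmetry of $\{A, \dl{B}\}$), hence we are in the situation of the defining equation $\blid{\{A_1 \lor A_2, \dl{B_1} \land \dl{B_2}\}} = \blwk{A_2}(\blid{\{A_1, \dl{B_1}\}}) \sqcup \blwk{A_1}(\blid{\{A_2, \dl{B_2}\}})$ with $A_i \equiv B_i$. Using the $\data{Br}$-of-weakening identity, the induction hypothesis applied to $\blid{\{A_i, \dl{B_i}\}}$, and \cref{lemma:branches-of-union}, the branch set of the first summand works out to $\{\, X_1 \cup X_2 \cup Z_1 \mid X_i \in \data{Br}(A_i),\ Z_1 \in \data{Br}(\dl{B_1})\,\}$ and that of the second summand to the analogous set with $\dl{B_2}$ in place of $\dl{B_1}$. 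Taking the union of the two and using $\data{Br}(\dl{B_1} \land \dl{B_2}) = \data{Br}(\dl{B_1}) \cup \data{Br}(\dl{B_2})$, $\data{Br}(A_1 \lor A_2) = \{\, X_1 \cup X_2 \mid X_i \in \data{Br}(A_i)\,\}$, and \cref{lemma:branches-of-union} once more, one obtains exactly $\data{Br}(A_1 \lor A_2, \dl{B_1} \land \dl{B_2}) = \data{Br}(A, \dl{B})$, as wanted.

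The main obstacle is entirely bookkeeping rather than conceptual: one must match the De Morgan components of $A$ and of $\dl{B}$ with the $\blwk{\cdot}$-wrappers in the recursive clause for $\blid{\cdot}$ correctly — it is easy to pair up the wrong subformulas here, especially when $A$ is a conjunction and the pattern must be read with $A$ and $\dl B$ exchanged — and to keep track of which names in each summand come from the identity core and which from the surrounding weakening. Beyond \cref{lemma:branches-of-union,lemma:sequent-branches-alt,propo:sequent-branches-decomposition} no further ingredient is required.
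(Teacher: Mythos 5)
Your proposal is correct and follows essentially the same route as the paper: a structural induction whose only substantive case is the axiom, handled via the branch-set identity for \(\blwk{\Delta}(-)\) (the paper's \cref{lemma:bl-wk-branches}) and the sublemma \(\data{Br}(\blid{\{A,\dl{B}\}}) = \data{Br}(A,\dl{B})\) proved by induction on the height of~\(A\) (the paper's \cref{lemma:bl-id-branches}, packaged as \cref{coro:bl-axiom-branches}). Your computation in the disjunction/conjunction step of the sublemma matches the paper's, and the remaining rule cases are dispatched exactly as you describe via \cref{propo:sequent-branches-decomposition}.
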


\begin{corollary}
    \label{coro:bl-axiom-graphs-branches}
    For all derivations~\(P \in \data{GS4}^\mathcal{N}\) with conclusion~\(\th \Gamma\),
    \[
        \data{Br}(\blaxg{P}) = \data{Br}(\Gamma).
    \]
\end{corollary}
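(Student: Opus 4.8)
The plan is to reduce the general case to the cut-free case already established in \cref{lemma:bl-axiom-graphs-cutfree-branches}, using the cut-elimination theorem as the bridge. Concretely: given any derivation $P \in \data{GS4}^\mathcal{N}$ with conclusion $\th \Gamma$, \cref{thm:normalisation} yields a \emph{cut-free} derivation $Q \in \data{GS4}^\mathcal{N}$ with the same conclusion $\th \Gamma$ and such that $\blaxg{P} = \blaxg{Q}$. Equal bl-graphs have equal branch sets — immediately from $\data{Br}(G) = \pi_{\data{r}}({\adin_G})$ — so $\data{Br}(\blaxg{P}) = \data{Br}(\blaxg{Q})$. Since $Q$ is cut-free, \cref{lemma:bl-axiom-graphs-cutfree-branches} gives $\data{Br}(\blaxg{Q}) = \data{Br}(\Gamma)$, and chaining the equalities finishes the proof.

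The two points to verify are both trivial. First, the equality furnished by \cref{thm:normalisation} is genuine equality of bl-graphs (edge--branch relations, not merely vertex sets), exactly as stated there, so passing to branch sets is legitimate. Second, note that the inclusion $\data{Br}(\blaxg{P}) \subseteq \data{Br}(\Gamma)$ was already available from \cref{propo:bl-axiom-graph-names-branches} with no appeal to normalisation; the actual content of the corollary is the reverse inclusion — that \emph{every} atomic branch of the conclusion is witnessed by at least one edge of $\blaxg{P}$ carrying that branch label.

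It is worth recording why this indirect route is the right one. A direct structural induction on $P$ would break down precisely at the cut case, where one would need $\data{Br}(\blaxg{Q} \odot_A \blaxg{R}) = \data{Br}(\Gamma)$, i.e.\ that for each branch $X \in \data{Br}(\Gamma)$ there is a complete alternating $X$-labeled path between $\blaxg{Q}$ and $\blaxg{R}$ through $\data{names}(A)$; after permuting cuts to atomic contexts this is essentially \cref{lemma:semantic-cut-admissibility}, the hardest lemma of the paper. Because \cref{thm:normalisation} has already absorbed that difficulty, the corollary is a one-line consequence, and I expect no real obstacle — the ``hard part'' has been front-loaded into the normalisation theorem.
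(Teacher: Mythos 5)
Your proof is correct and is exactly the paper's own argument: invoke \cref{thm:normalisation} to obtain a cut-free \(Q\) with \(\blaxg{Q} = \blaxg{P}\), then apply \cref{lemma:bl-axiom-graphs-cutfree-branches}. The additional remarks about why a direct induction would stall at the cut case are accurate but not needed.
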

\begin{proof}
    \Cref{thm:normalisation} guarantees the existence of a cut-free derivation~\(Q\) such
    that~\(\blaxg{Q} = \blaxg{P}\), to which we can apply \cref{lemma:bl-axiom-graphs-cutfree-branches}.
\end{proof}

\begin{definition}[Totality]
    \label{defn:totality}
    Call a bl-graph~\(G\) \emph{total} w.r.t.\ a sharing-free named sequent~\(\th \Gamma\)
    if and only if
    \begin{enumerate}[(i)]
        \item \(V_G = \data{names}(\Gamma)\) (\(G\) is a bl-graph on the names of~\(\Gamma\));
        \item \(\data{Br}(G) = \data{Br}(\Gamma)\) (the branches of~\(G\) are those of~\(\Gamma\));
        \item for all~\(xy \in E_G\), \(\Gamma[x] = \dl{\Gamma[y]}\) (the edges of~\(G\) link dual atoms).
    \end{enumerate}
\end{definition}

\begin{corollary}
    \label{coro:desequentialization}
    For all derivations~\(P \in \data{GS4}^\mathcal{N}\) with conclusion~\(\th \Gamma\),
    \(\blaxg{P}\) is total w.r.t.~\(\th \Gamma\).
\end{corollary}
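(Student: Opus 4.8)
The plan is simply to unfold \cref{defn:totality} and verify each of its three clauses for $G = \blaxg{P}$, relying entirely on results already established. There is no genuine obstacle left at this point: this corollary is a bookkeeping step whose substance has been discharged by the earlier lemmas, and I would write the proof as essentially a single sentence citing three prior results.

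Concretely, clause~(i), $V_{\blaxg{P}} = \data{names}(\Gamma)$, is exactly the first half of \cref{propo:bl-axiom-graph-names-branches} (proved by a routine structural induction on $P$: weakening adjoins the names of the weakened context, cuts delete the interface, the remaining rules take unions). Clause~(iii), that every edge $xy \in E_{\blaxg{P}}$ satisfies $\Gamma[x] = \dl{\Gamma[y]}$, is precisely \cref{lemma:edges-join-dual-atoms}. Clause~(ii), the branch equality $\data{Br}(\blaxg{P}) = \data{Br}(\Gamma)$, is \cref{coro:bl-axiom-graphs-branches}. Assembling these three yields the statement immediately.

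It is worth recording where the real work sits, in case one wanted a self-contained argument. The inclusion $\data{Br}(\blaxg{P}) \subseteq \data{Br}(\Gamma)$ is again easy by induction on $P$ (using \cref{propo:sequent-branches-decomposition} at the logical rules and \cref{lemma:branches-of-union} at weakenings), and in fact it is already recorded in \cref{propo:bl-axiom-graph-names-branches}. The reverse inclusion --- that no branch of the conclusion goes missing --- is the \emph{delicate} point: it is not obviously stable under the composition that interprets cuts, so the argument has to route through \cref{lemma:bl-axiom-graphs-cutfree-branches} (branch equality for cut-free derivations, by structural induction) together with the cut-elimination theorem \cref{thm:normalisation}, which is exactly what \cref{coro:bl-axiom-graphs-branches} packages. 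Thus the hard part of totality is the surjectivity onto $\data{Br}(\Gamma)$, and since that has already been done, the corollary itself requires nothing further.
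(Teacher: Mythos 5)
Your proof is correct and matches the paper's own argument exactly: the corollary is discharged by citing \cref{propo:bl-axiom-graph-names-branches} for the vertex set, \cref{coro:bl-axiom-graphs-branches} for the branch equality, and \cref{lemma:edges-join-dual-atoms} for the duality of edge endpoints. Your additional remarks on where the real work lies (the reverse branch inclusion via cut-elimination) accurately reflect how the paper structures the supporting results.
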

\begin{proof}
    Immediate consequence of \cref{propo:bl-axiom-graph-names-branches,%
    coro:bl-axiom-graphs-branches,lemma:edges-join-dual-atoms}.
\end{proof}

\begin{theorem}
    \label{thm:sequentialization}
    Let \(G\) be any bl-graph that is total w.r.t.\ a sharing-free named sequent~\(\th \Gamma\):
    there is a derivation~\(P \in \data{GS4}^\mathcal{N}\) with conclusion~\(\th \Gamma\)
    and such that \(\blaxg{P} = G\).
\end{theorem}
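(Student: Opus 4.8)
The plan is to argue by induction on the total number of logical connectives occurring in $\Gamma$, reconstructing a suitable $P$ one connective at a time, in a way that parallels the cut-elimination argument of \cref{thm:normalisation}: at each step I peel off a connective from the conclusion, transport totality of $G$ across the corresponding decomposition of \cref{propo:sequent-branches-decomposition}, apply the induction hypothesis, and re-append the matching logical rule, checking each time that the bl-axiom-graph is unchanged.

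\emph{Base case.} Suppose every formula of $\Gamma$ is atomic. By \cref{propo:sequent-branches-decomposition}(i) we have $\data{Br}(\Gamma) = \{\data{names}(\Gamma)\}$, so totality forces $\data{Br}(G) = \{\data{names}(\Gamma)\}$: every edge of $G$ carries the single label $\data{names}(\Gamma)$, and since this branch set is nonempty, $G$ has at least one edge. Condition (iii) of totality together with fixpoint-freeness of the De Morgan involution says that each edge $xy$ of $G$ links a pair of atomic named formulas $\nm{x}{\alpha}, \nm{y}{\dl\alpha} \in \Gamma$. These are exactly the facts exploited in the proof of \cref{lemma:normalisation-by-evaluation}, and the identical construction applies: by a secondary induction on $|E_G|$, take a deterministic axiom labelled $\{\nm{x}{\alpha}, \nm{y}{\dl\alpha}\}$ when $G$ has a unique edge $xy$, and a $\rl{\sqcup}$-inference combining the derivations obtained for $\langle V_G, \{(xy, \data{names}(\Gamma))\}\rangle$ and for the bl-graph with that pair removed when $|E_G| > 1$. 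Unwinding \cref{defn:bl-wk-id,defn:bl-axiom-graphs} shows the resulting cut-free $P$ satisfies $\blaxg{P} = G$.

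\emph{Inductive step.} Write $\Gamma = \Delta, C$ with $C$ non-atomic. If $C = A \lor B$, then $\data{names}(\Delta, A \lor B) = \data{names}(\Delta, A, B)$, the atom attached to each name is unchanged, and $\data{Br}(\Delta, A \lor B) = \data{Br}(\Delta, A, B)$ by \cref{propo:sequent-branches-decomposition}(ii); hence $G$ is total w.r.t.\ $\th \Delta, A, B$. The induction hypothesis gives a derivation $P'$ with that conclusion and $\blaxg{P'} = G$; appending a $\rl{\lor}$-inference yields $P$, and since $\lor$ is interpreted as the bl-graph union over its unique premiss, $\blaxg{P} = \blaxg{P'} = G$. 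If $C = A \land B$, set $G_A = G \rst_{\Delta, A}$ and $G_B = G \rst_{\Delta, B}$. By totality $\data{Br}(G) = \data{Br}(\Gamma) = \data{Br}(\Delta, A) \cup \data{Br}(\Delta, B)$, a disjoint union (\cref{propo:sequent-branches-decomposition}(iii)--(iv)); every branch in $\data{Br}(\Delta, A)$ is contained in $\data{names}(\Delta, A)$, whereas every branch in $\data{Br}(\Delta, B)$ meets $\data{names}(B)$ — because no branch of a formula is empty — and so is \emph{not} contained in $\data{names}(\Delta, A)$, and symmetrically. Consequently the branch-sensitive restriction partitions ${\adin_G}$ exactly: $G = G_A \sqcup G_B$, with $\data{Br}(G_A) = \data{Br}(\Delta, A)$ and $\data{Br}(G_B) = \data{Br}(\Delta, B)$. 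Since $E_{G_A} \subseteq E_G$ still links dual atoms, now read off $\th \Delta, A$ with the same atom assignments, and $V_{G_A} = \data{names}(\Delta, A)$, the graph $G_A$ is total w.r.t.\ $\th \Delta, A$, and likewise $G_B$ w.r.t.\ $\th \Delta, B$. The induction hypothesis provides $P_A, P_B$ with those conclusions and $\blaxg{P_A} = G_A$, $\blaxg{P_B} = G_B$; a $\rl{\land}$-inference combines them into $P$ with conclusion $\th \Delta, A \land B = \th \Gamma$ and $\blaxg{P} = \blaxg{P_A} \sqcup \blaxg{P_B} = G_A \sqcup G_B = G$.

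\emph{Main obstacle.} The load-bearing point is the conjunction case, specifically the claim that $G = G \rst_{\Delta, A} \sqcup G \rst_{\Delta, B}$ on the nose and that each summand is again total. This is precisely the property the branch labels were introduced to secure: because $\data{Br}(\Delta, A)$ and $\data{Br}(\Delta, B)$ are disjoint and live over disjoint name sets, no edge of a total bl-graph can straddle the two conjuncts, so the label-aware restriction assigns each edge unambiguously to one side — the failure of this for the unlabelled construction being exactly the obstruction analysed in \cref{sec:axiom-graphs-failure}. The remaining verifications (that the decompositions preserve the vertex set and the dual-atom condition, and the routine unwinding of $\blaxg{\cdot}$ through the re-attached logical rule) are straightforward.
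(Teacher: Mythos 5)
Your proof is correct and follows essentially the same route as the paper's: induction on the number of connectives in the conclusion, the base case delegated to the construction of \cref{lemma:normalisation-by-evaluation}, and the inductive step handled by the disjunction/conjunction decompositions of totality. The only difference is that you prove the two decomposition facts inline, whereas the paper isolates them as \cref{lemma:semantic-disjunction-inversion} and \cref{lemma:semantic-conjunction-inversion}; your inlined arguments match the paper's proofs of those lemmas.
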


\begin{lemma}
    \label{lemma:semantic-disjunction-inversion}
    Any bl-graph \(G\) total w.r.t.~\(\th \Gamma, A \lor B\) is also total w.r.t.~\(\th \Gamma, A, B\).
\end{lemma}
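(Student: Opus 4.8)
The plan is to check the three clauses of \cref{defn:totality} directly, exploiting the fact that the named sequents $\th \Gamma, A \lor B$ and $\th \Gamma, A, B$ carry exactly the same atom occurrences, merely grouped differently. First I would note that $\th \Gamma, A, B$ is a legitimate sharing-free named sequent: since $\th \Gamma, A \lor B$ is sharing-free, the immediate subformulas $A$ and $B$ of $A \lor B$ are themselves sharing-free and disjoint, and neither shares a name with $\Gamma$. Moreover $\data{names}(A \lor B) = \data{names}(A) \cup \data{names}(B)$, so $\data{names}(\Gamma, A \lor B) = \data{names}(\Gamma, A, B)$; clause (i) then follows immediately from the hypothesis $V_G = \data{names}(\Gamma, A \lor B)$.

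For clause (iii), I would observe that the atom-indexing map of \cref{defn:atom-indexing} depends only on the underlying collection of named atomic subformulas, which is unaffected by splitting $A \lor B$ into $A$ and $B$. Hence $(\Gamma, A \lor B)[x] = (\Gamma, A, B)[x]$ for every $x \in \data{names}(\Gamma, A, B)$, and the condition "$\Gamma[x] = \dl{\Gamma[y]}$ for all $xy \in E_G$" transfers verbatim from totality of $G$ w.r.t.\ $\th \Gamma, A \lor B$ to totality w.r.t.\ $\th \Gamma, A, B$.

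The only clause requiring an external input is (ii), and it is precisely \cref{propo:sequent-branches-decomposition}(ii), which states $\data{Br}(\Gamma, A \lor B) = \data{Br}(\Gamma, A, B)$. Combined with $\data{Br}(G) = \data{Br}(\Gamma, A \lor B)$ from the hypothesis, this gives $\data{Br}(G) = \data{Br}(\Gamma, A, B)$, completing the verification.

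I do not expect any genuine obstacle here: the lemma is essentially a repackaging of the disjunction clause of \cref{propo:sequent-branches-decomposition} together with the trivial observations that names and atom indices are insensitive to the $\lor$-grouping. Its purpose is purely structural — it will let the sequentialization argument of \cref{thm:sequentialization} peel a disjunction off the conclusion while staying inside the class of total bl-graphs, so that an inductive construction of a derivation for $\th \Gamma, A, B$ can be reused for $\th \Gamma, A \lor B$.
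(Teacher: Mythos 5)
Your proof is correct and follows exactly the same route as the paper's: it verifies the three clauses of \cref{defn:totality} by observing that the name set and the atom-indexing map are unchanged when $A \lor B$ is split into $A, B$, and invokes \cref{propo:sequent-branches-decomposition}(ii) for the equality of branch sets. No issues.
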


\begin{lemma}
    \label{lemma:semantic-conjunction-inversion}
    Any bl-graph \(G\) total w.r.t.~\(\th \Gamma, A \land B\) is decomposable as
    \[
        G = {G \rst_{\Gamma, A}} \sqcup {G \rst_{\Gamma, B}},
    \]
    with \(G \rst_{\Gamma, A}\) and~\(G \rst_{\Gamma, B}\) total w.r.t.~\(\th \Gamma, A\)
    and~\(\th \Gamma, B\), respectively.
\end{lemma}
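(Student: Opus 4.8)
The plan is to reduce the whole statement to a single combinatorial identity about sequent branches, after which the claimed decomposition and the totality of the two pieces become routine unfoldings of the definitions of restriction and union for bl-graphs. Write $N_A = \data{names}(\Gamma, A)$ and $N_B = \data{names}(\Gamma, B)$. Since the comma convention forces $A$, $B$ and $\Gamma$ to be pairwise name-disjoint, I would first record the set-theoretic facts $N_A \cup N_B = \data{names}(\Gamma, A \land B)$ and $N_A \cap N_B = \data{names}(\Gamma)$. The identity I want is
\[
    \data{Br}(\Gamma, A) = \{\, Y \in \data{Br}(\Gamma, A \land B) \mid Y \subseteq N_A \,\},
\]
together with its mirror image for $B$.

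For this identity, the inclusion $\subseteq$ is immediate: every member of $\data{Br}(\Gamma, A)$ is a subset of $N_A$ by \cref{defn:formula-branches}, and $\data{Br}(\Gamma, A) \subseteq \data{Br}(\Gamma, A \land B)$ by \cref{propo:sequent-branches-decomposition}(iii). For $\supseteq$, I would take $Y \in \data{Br}(\Gamma, A \land B)$ with $Y \subseteq N_A$ and invoke \cref{propo:sequent-branches-decomposition}(iii) to see that $Y$ is a branch of $\Gamma, A$ or of $\Gamma, B$; in the first case there is nothing to do, and I would rule out the second using \cref{lemma:sequent-branches-alt}, which splits $Y$ as $Y_\Gamma \cup Y_B$ with $Y_\Gamma \in \data{Br}(\Gamma)$ and $Y_B \in \data{Br}(B)$. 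Then $Y_B \subseteq \data{names}(B)$, which is disjoint from $N_A$, so $Y_B = \emptyset$; but a one-line induction on $B$ shows every element of $\data{Br}(B)$ is non-empty, a contradiction. This case distinction is the only genuinely non-mechanical step, and the subtlety worth care is the gap between \emph{being a subset of $N_A$} and \emph{being a branch of $\Gamma, A$} — exactly what the non-emptiness observation closes.

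With the identity in hand I would verify the three totality clauses for $G \rst_{\Gamma, A}$ as follows. Clause (i): $V_{G \rst_{\Gamma, A}} = V_G \cap N_A = \data{names}(\Gamma, A \land B) \cap N_A = N_A$, using $V_G = \data{names}(\Gamma, A \land B)$ from totality of $G$. Clause (ii): by the definition of restriction $\data{Br}(G \rst_{\Gamma, A}) = \{\, Y \in \data{Br}(G) \mid Y \subseteq N_A \,\}$, and since $\data{Br}(G) = \data{Br}(\Gamma, A \land B)$ this equals $\data{Br}(\Gamma, A)$ by the identity. Clause (iii): restriction only deletes edges, so every edge of $G \rst_{\Gamma, A}$ is an edge of $G$ and hence links dual atoms in $\Gamma, A \land B$; since its endpoints lie in $N_A$ and the atomic subformula carrying a given name is literally the same occurrence whether read off $\Gamma, A$ or off $\Gamma, A \land B$, those endpoints still link dual atoms relative to $\Gamma, A$. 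The case of $G \rst_{\Gamma, B}$ is symmetric. It remains to see $G = G \rst_{\Gamma, A} \sqcup G \rst_{\Gamma, B}$: on vertices this is the set-theoretic fact above together with $V_G = N_A \cup N_B$; on the edge–branch relations, inclusion into that of $G$ is trivial since both restrictions only shrink, and conversely if $e \adin_G Y$ then $Y \in \data{Br}(\Gamma, A \land B) = \data{Br}(\Gamma, A) \cup \data{Br}(\Gamma, B)$ by \cref{propo:sequent-branches-decomposition}(iii), so $Y \subseteq N_A$ or $Y \subseteq N_B$, placing $(e, Y)$ in the relation of $G \rst_{\Gamma, A}$ or of $G \rst_{\Gamma, B}$ respectively. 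I do not expect any obstacle beyond the branch identity; everything downstream is bookkeeping with the definitions.
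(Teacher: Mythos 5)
Your proof is correct and follows essentially the same route as the paper's: the decomposition and all three totality clauses rest on \cref{propo:sequent-branches-decomposition}(iii) together with the observation that a branch of \(\Gamma, B\) cannot be contained in \(\data{names}(\Gamma, A)\) because branches of \(B\) are non-empty and name-disjoint from \(\Gamma, A\) — which is exactly the content of the paper's \cref{lemma:conj-restriction-branches}. The only difference is presentational: you package that argument as a single identity \(\data{Br}(\Gamma, A) = \{\, Y \in \data{Br}(\Gamma, A \land B) \mid Y \subseteq \data{names}(\Gamma, A)\,\}\), which yields both inclusions of clause (ii) at once, where the paper proves the two inclusions separately.
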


\begin{proof}[Proof of \cref{thm:sequentialization}]
    By induction on the complexity degree of~\(\th \Gamma\) (\cref{defn:sequent-complexity},
    \cref{sec:complexity}). Observe first that \(\Gamma\) cannot be empty, because then \(G\)
    would be the empty graph and we would have \(\data{Br}(G) = \emptyset \neq \{\emptyset\}
    = \data{Br}(\Gamma)\), contradicting the totality of~\(G\) w.r.t.~\(\Gamma\); then:
    \begin{itemize}
        \item if \(\th \Gamma\) only contains atomic formulas, we can perform the construction
        described in the proof of \cref{lemma:normalisation-by-evaluation};

        \item if \(\th \Gamma = \th \Delta, A \lor B\), then \(G\) is total w.r.t.~%
        \(\th \Delta, A, B\) by \cref{lemma:semantic-disjunction-inversion}: we apply
        the induction hypothesis to get a derivation \(Q\) with conclusion~\(\th \Delta, A, B\)
        and such that~\(\blaxg{Q} = G\); we conclude by applying a disjunction rule to~\(Q\):
        \[
            \prftree[r,l]{\rl{\lor}}{\(P =\)}
                {\prfsummary[\(Q\)]{\th \Delta, A, B}}
                {\th \Delta, A \lor B}
        \]

        \item if \(\th \Gamma = \th \Delta, A \land B\), then \(G = {G \rst_{\Delta, A}}
        \sqcup {G \rst_{\Delta, B}}\) by \cref{lemma:semantic-conjunction-inversion},
        with \(G \rst_{\Delta, A}\) and \(G \rst_{\Delta, B}\) total w.r.t.~\(\th \Delta, A\)
        and~\(\th \Delta, B\) respectively: we apply the induction hypothesis twice to get
        derivations \(Q, R\) with conclusion~\(\th \Delta, A\) and~\(\th \Delta, B\)
        respectively, and such that \(\blaxg{Q} = G \rst_{\Delta, A}\), \(\blaxg{R} =
        G \rst_{\Delta, B}\); we conclude by applying a conjunction rule to~\(Q, R\):
        \[
            \prftree[r,l]{\rl{\land}}{\(P =\)}
                {\prfsummary[\(Q\)]{\th \Delta, A}}
                {\prfsummary[\(R\)]{\th \Delta, B}}
                {\th \Delta, A \land B}
            \qedhere
        \]
    \end{itemize}
\end{proof}

\begin{corollary}
    \label{coro:composition-preserves-totality}
    Let bl-graphs \(G, H\) be total w.r.t.\ the sequents~\(\th \Gamma, A\) and~\(\th \Gamma,
    \dl{A}\), respectively: then their composite~\(G \odot_A H\) on~\(A\) is total w.r.t.\ the
    sequent~\(\th \Gamma\).
\end{corollary}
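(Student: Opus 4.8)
The plan is to reduce the statement to the sequentialization theorem (\cref{thm:sequentialization}) together with \cref{coro:desequentialization}, thereby avoiding any direct combinatorial analysis of alternating labeled paths. First I would observe that, since $G$ is total w.r.t.\ $\th \Gamma, A$, \cref{thm:sequentialization} yields a derivation $P \in \data{GS4}^\mathcal{N}$ with conclusion $\th \Gamma, A$ and $\blaxg{P} = G$; likewise, since $H$ is total w.r.t.\ $\th \Gamma, \dl{A}$, there is a derivation $Q \in \data{GS4}^\mathcal{N}$ with conclusion $\th \Gamma, \dl{A}$ and $\blaxg{Q} = H$.

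Next I would assemble these two derivations into a single one by an application of the cut rule,
\[
    \prftree[r]{\rl{cut}}
        {\prfsummary[\(P\)]{\th \Gamma, A \vphantom{\dl{A}}}}
        {\prfsummary[\(Q\)]{\th \Gamma, \dl{A}}}
        {\th \Gamma}
\]
which is a well-formed named derivation (the cut rule is context-sharing, so the hypothesis that $\Gamma$ is common to both premises is exactly what the rule requires). Call this derivation $R$. By the clause of \cref{defn:bl-axiom-graphs} governing cuts, $\blaxg{R} = \blaxg{P} \odot_A \blaxg{Q} = G \odot_A H$. Finally I would invoke \cref{coro:desequentialization}: the bl-graph of any derivation is total w.r.t.\ its conclusion, hence $\blaxg{R}$ is total w.r.t.\ $\th \Gamma$; since $\blaxg{R} = G \odot_A H$, this is precisely the claim.

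At this level of abstraction there is essentially no obstacle left: all the work has been pushed into \cref{thm:sequentialization} and, through \cref{coro:desequentialization}, into \cref{thm:normalisation} and \cref{lemma:edges-join-dual-atoms}. The one point requiring a moment's care is checking that $P$ and $Q$ may legitimately be cut against each other — both have context $\Gamma$ and dual principal formulas $A$, $\dl{A}$, and since negation preserves names each premise is individually sharing-free, so the displayed figure is a valid derivation. One could alternatively attempt a direct verification that composition preserves conditions (i)–(iii) of \cref{defn:totality}: the vertex-set condition and the dual-atom condition follow readily from \cref{defn:bl-composition} and \cref{lemma:complete-alternating-path-in-branch}, but the branch condition $\data{Br}(G \odot_A H) = \data{Br}(\Gamma)$ would then demand an argument essentially as involved as \cref{thm:normalisation} (one must show every branch of $\Gamma$ actually carries an edge in the composite). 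The \emph{semantic} route sketched above is therefore both shorter and more transparent about where the real content lies.
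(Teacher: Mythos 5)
Your argument is correct and is evidently the one the paper intends: the statement is placed as a corollary immediately after \cref{thm:sequentialization}, and the route sequentialize--cut--desequentialize (via \cref{defn:bl-axiom-graphs} for the cut clause and \cref{coro:desequentialization} for totality of the result) is exactly how it follows, with no circularity since neither \cref{thm:sequentialization} nor \cref{coro:desequentialization} depends on this corollary. Your side remark about the well-formedness of the cut (shared context $\Gamma$, dual cut-formulas sharing names only with each other) is the only point needing care, and you handle it correctly.
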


\subsection{The proof system BLG}\label{sec:blg}

Upon inspection of the proof above, one sees clearly that this is a kind of sequentialization
theorem. The natural question then is whether total bl-graphs can provide a canonical
representation of cut-free derivations up to arbitrary permutations of logical rules.

\begin{samepage}
\begin{definition}
    \label{defn:blg}
    Let \(\data{BLG}\) denote the set of all pairs \(\langle G, \Gamma \rangle\) such that
    \begin{enumerate}[(i)]
        \item \(G\) is a finite bl-graph;
        \item \(\Gamma\) is a finite sharing-free set of named formulas;
        \item \(G\) is total w.r.t.~\(\th \Gamma\).
    \end{enumerate}
\end{definition}
\end{samepage}

We know by \cref{coro:desequentialization,thm:sequentialization} that there exists \(\langle G,
\Gamma \rangle \in \data{BLG}\) if and only if the sequent \(\th \Gamma\) is a classical
tautology. Is \(\data{BLG}\) a \emph{proof system} for classical propositional logic?
The question is subtle: as recalled in the introduction, the mere presence of a correctness
criterion is not sufficient to provide a reasonable notion of proof. We define a notion of
size of a bl-graph/sequent pair and show that totality is checkable in polynomial
time: \(\data{BLG}\) is therefore a proof system in the sense of Cook and Reckhow \cite{CR79}.

\begin{definition}
    \label{defn:blg-size}
    For any pair \(\mathbf{G} = \langle G, \Gamma \rangle\) where \(G\) is a finite
    bl-graph and~\(\Gamma\) a finite sharing-free set of named formulas, let
    \[
        \data{size}(\mathbf{G}) = \data{size}(\th \Gamma) + |V_G| + \sum_{e \adin_G X} |X|,
    \]
    where \(\data{size}(\th \Gamma)\) is the size of the sequent (\cref{defn:sequent-complexity},
    \cref{sec:complexity}), \(|V_G|\) is the number of vertices in the bl-graph~\(G\),
    and the last term is the total number of vertices in the branch labels of~\(G\).
\end{definition}

\begin{proposition}
    \label{propo:polynomial-time-totality}
    For any pair \(\mathbf{G} = \langle G, \Gamma \rangle\) where \(G\) is a finite
    bl-graph and~\(\Gamma\) a finite sharing-free set of named formulas, membership
    in~\(\data{BLG}\) is decidable in polynomial time in the size of~\(\mathbf{G}\).
\end{proposition}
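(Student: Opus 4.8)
The plan is to exhibit an explicit recursive decision procedure for the totality predicate of \cref{defn:totality} and to bound its running time. Since $\Gamma$ is already assumed finite and sharing-free, the only clause of \cref{defn:blg} not guaranteed by the hypotheses is that $G$ be total w.r.t.\ $\th \Gamma$, so it suffices to decide totality. Clauses (i) and (iii) of \cref{defn:totality} are checked once, directly: computing $\data{names}(\Gamma)$ and comparing it with $V_G$ is linear; iterating over the (polynomially many, by \cref{defn:blg-size}) pairs in ${\adin_G}$ and verifying for each edge $xy$ that $\Gamma[x] = \dl{\Gamma[y]}$ is polynomial, using sharing-freedom to make $\Gamma[\cdot]$ well defined and computable in linear time. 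Both clauses are moreover stable under the only operations used afterwards — passing to a sub-decomposition of $\th \Gamma$ and restricting $G$ accordingly — so they need not be rechecked. Equivalently (by \cref{thm:sequentialization} and \cref{coro:desequentialization}), the procedure amounts to reconstructing the skeleton of a $\data{GS4}$ derivation of $\th \Gamma$ one branch at a time and checking that its branches and axiom links match those recorded in $G$.

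The remaining clause, $\data{Br}(G) = \data{Br}(\Gamma)$, cannot be tested by materialising $\data{Br}(\Gamma)$, which may be exponentially large: \textbf{this is the main obstacle}. Instead I would decide it by recursion on the structure of $\th \Gamma$, mirroring \cref{propo:sequent-branches-decomposition} and \cref{lemma:semantic-conjunction-inversion}. Given a pair $\langle G', \th \Gamma'\rangle$ arising in the recursion, first reject if ${\adin_{G'}} = \emptyset$ (a non-empty sequent always has a non-empty branch set, so this is sound), then: (a) if $\Gamma'$ is atomic, accept iff every branch label occurring in ${\adin_{G'}}$ equals $\data{names}(\Gamma')$, since $\data{Br}(\Gamma') = \{\data{names}(\Gamma')\}$ in that case; (b) if $\Gamma' = \Delta, A \lor B$, recurse on $\langle G', \th \Delta, A, B\rangle$, which has the same names and the same branches; (c) if $\Gamma' = \Delta, A \land B$, put $G'_A = G' \rst_{\data{names}(\Delta, A)}$ and $G'_B = G' \rst_{\data{names}(\Delta, B)}$, reject unless ${\adin_{G'_A}}$ and ${\adin_{G'_B}}$ are disjoint with union ${\adin_{G'}}$, and otherwise recurse on $\langle G'_A, \th \Delta, A\rangle$ and $\langle G'_B, \th \Delta, B\rangle$, accepting iff both do. Soundness of step (c) is \cref{lemma:semantic-conjunction-inversion} in one direction and a short direct verification in the other (reassembling the vertex set, the branch set via $\pi_{\data{r}}$, and the edge condition from those of the two restrictions); the extra disjointness test is sound because, when $G'$ is total, every branch label lies in exactly one of $\data{Br}(\Delta, A)$, $\data{Br}(\Delta, B)$ by \cref{propo:sequent-branches-decomposition}(iv) and hence cannot be contained in $\data{names}(\Delta)$.

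For the running-time bound I would analyse the recursion tree. Along any root-to-leaf path the number of logical connectives of the current sequent strictly decreases at every step ($\lor$-steps remove one connective; $\land$-steps remove a connective and an entire conjunct), so the depth is at most $\data{size}(\th \Gamma)$. For the width, note that at every $\land$-node that is not rejected on the spot the relation ${\adin_{G'}}$ is split \emph{disjointly} between the two children, and every node that performs more than constant work has ${\adin} \neq \emptyset$ (by the first line of the procedure); hence, summing $|{\adin}|$ over the leaves of the explored tree, one gets at most $|{\adin_G}|$, which bounds the number of leaves, and therefore also the number of $\land$-branchings, by $O(|{\adin_G}|)$. The explored tree thus has $O\bigl(|{\adin_G}| \cdot \data{size}(\th \Gamma)\bigr)$ nodes. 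Every sequent and bl-graph occurring at a node is a sub-decomposition of $\th \Gamma$, resp.\ a restriction of $G$, so their sizes are bounded by $\data{size}(\mathbf{G})$, and the work done at each node (computing name sets, comparing vertex sets, forming restrictions, checking a disjoint union) is polynomial in $\data{size}(\mathbf{G})$. Multiplying, the whole procedure runs in time polynomial in $\data{size}(\mathbf{G})$, which is the claim.
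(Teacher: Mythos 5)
Your procedure is correct and proves the claim. It rests on the same key observation as the paper's own argument---that although \(\data{Br}(\Gamma)\) may be exponentially large, the portion of the atomic decomposition of \(\th \Gamma\) that actually needs to be explored is bounded by the branch data recorded in \(G\), so any mismatch surfaces after polynomially many steps---but the bookkeeping is genuinely different. The paper enumerates the atomic branches of \(\Gamma\) depth-first with a worklist of delayed conjuncts, matches each generated atomic branch against the global set \(\data{Br}(G)\) and deletes the matched element, so that the number of branch explorations is bounded by \(|\data{Br}(G)|\); a failed match, or a non-empty residue at termination, causes rejection. You instead push the whole relation \({\adin_G}\) down the decomposition tree, splitting it at each conjunction via the restriction operator exactly as in \cref{lemma:semantic-conjunction-inversion}, and prune any subtree whose share of the relation is empty; the disjointness of the split (guaranteed by \cref{propo:sequent-branches-decomposition}, point~(iv), whenever the input is total) then bounds the number of leaves of the explored tree by \(|{\adin_G}|\). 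Your version has the advantage of reusing the semantic inversion lemmas directly, so the soundness of each recursive step is already established in the paper, and your explicit disjoint-union test at conjunction nodes is exactly what rules out spurious branch labels that would fit in neither restriction; the paper's version avoids computing restrictions and works with a single linear pass per generated branch. Both yield a bound polynomial in \(\data{size}(\mathbf{G})\).
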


\begin{figure}[t]
    \centering
    \makebox{\prftree
        {\blgbranch{\nm{x}{\alpha}, \nm{z}{\dl{\alpha}}}{x/z}}
        {\blgbranch{\nm{x}{\alpha}, \nm{w}{\dl{\alpha}}}{x/w}}
        {\blgbranch{\nm{y}{\alpha}, \nm{z}{\dl{\alpha}}}{y/z}}
        {\blgbranch{\nm{y}{\alpha}, \nm{w}{\dl{\alpha}}}{y/w}}
        {\th \nm{x}{\alpha} \land \nm{y}{\alpha}, \nm{z}{\dl{\alpha}} \land \nm{w}{\dl{\alpha}}}}
    \\[1em]
    \makebox{\prftree
        {\blgbranch{\nm{x}{\alpha}, \nm{z}{\beta}, \nm{v}{\dl{\alpha}}, \nm{w}{\gamma}}{x/v}}
        {\blgbranch{\nm{x}{\alpha}, \nm{u}{\dl{\gamma}}, \nm{v}{\dl{\alpha}}, \nm{w}{\gamma}}{x/v, u/w}}
        {\blgbranch{\nm{y}{\dl{\beta}}, \nm{z}{\beta}, \nm{v}{\dl{\alpha}}, \nm{w}{\gamma}}{y/z}}
        {\blgbranch{\nm{y}{\dl{\beta}}, \nm{u}{\dl{\gamma}}, \nm{v}{\dl{\alpha}}, \nm{w}{\gamma}}{u/w}}
        {\th \nm{x}{\alpha} \land \nm{y}{\dl{\beta}}, \nm{z}{\beta} \land \nm{u}{\dl{\gamma}}, \nm{v}{\dl{\alpha}} \lor \nm{w}{\gamma}}}
    \caption{A graphical representation of \(\data{BLG}\) proofs. Each proof is drawn as
        a generalised inference rule, with branch labels above the line and the conclusion
        below. Above each branch we draw the associated edges as black lines.}
    \label{fig:blg-graphical-representation}
\end{figure}

\subsection{Properties of the system BLG}

The proof system \(\data{BLG}\) enjoys very good properties. To begin with, all logical
rules of~\(\data{GS4}^\mathcal{N}\) are admissible and invertible:
\begin{gather*}
    \prftree[r]{\rl{{\downarrow}{\lor}}}
        {\prfassumption{\langle G, (\Gamma, A, B) \rangle \in \data{BLG}}}
        {\langle G, (\Gamma, A \lor B) \rangle \in \data{BLG}}
    \qquad
    \prftree[r]{\rl{{\uparrow}{\lor}}}
        {\prfassumption{\langle G, (\Gamma, A \lor B) \rangle \in \data{BLG}}}
        {\langle G, (\Gamma, A, B) \rangle \in \data{BLG}}
    \\[1em]
    \prftree[r]{\rl{{\downarrow}{\land}}}
        {\prfassumption{\langle G, (\Gamma, A) \rangle \in \data{BLG}}}
        {\prfassumption{\langle H, (\Gamma, B) \rangle \in \data{BLG}}}
        {\langle G \sqcup H, (\Gamma, A \land B) \rangle \in \data{BLG}}
    \\[1em]
    \prftree[r]{\rl{{\uparrow}{\land_l}}}
        {\prfassumption{\langle G, (\Gamma, A \land B) \rangle \in \data{BLG}}}
        {\langle G \rst_{\Gamma, A}, (\Gamma, A) \rangle \in \data{BLG}}
    \qquad
    \prftree[r]{\rl{{\uparrow}{\land_r}}}
        {\prfassumption{\langle G, (\Gamma, A \land B) \rangle \in \data{BLG}}}
        {\langle G \rst_{\Gamma, B}, (\Gamma, B) \rangle \in \data{BLG}}
\end{gather*}
The cut-rule is also admissible by \cref{coro:composition-preserves-totality}:
\[
    \prftree[r]{\rl{cut}}
        {\prfassumption{\langle G, (\Gamma, A) \rangle \in \data{BLG}}}
        {\prfassumption{\langle H, (\Gamma, \dl{A}) \rangle \in \data{BLG}}}
        {\langle G \odot_A H, \Gamma \rangle \in \data{BLG}}
\]
Finally, axiom and superposition rules are obviously admissible through their interpretation:
\begin{gather*}
    \prftree[r]{\rl{ax}}
        {\prfassumption{A \equiv B}}
        {\langle \blwk{\Gamma}(\blid{\{A, \dl{B}\}}), (\Gamma, A, \dl{B}) \rangle \in \data{BLG}}
    \qquad
    \prftree[r]{\rl{\sqcup}}
        {\prfassumption{\langle G, \Gamma \rangle \in \data{BLG}}}
        {\prfassumption{\langle H, \Gamma \rangle \in \data{BLG}}}
        {\langle G \sqcup H, \Gamma \rangle \in \data{BLG}}
\end{gather*}
Isolation is as expected an identity on~\(\data{BLG}\) proofs. The inversion procedure
amounts to a simple replacement of the conclusion for disjunctions; in the case of conjunctions
some time must be spent computing the graph restriction, but it will be often significantly
less than the time spent to perform inversion on a sequent calculus derivation, especially
when all axioms in the derivation are atomic. These facts -- together with the immediate
notation and the availability of equational reasoning -- make~\(\data{BLG}\) a very
efficient tool for reasoning about proof content and transformations.

The price to be paid lies in the size of proof objects, which is often exponential in the
complexity of the conclusion even when much smaller derivations would be available in the
context-splitting formulation of sequent calculus or even in~\(\data{GS4}\) with non-atomic
axioms. This is a well-known problem that~\(\data{BLG}\) shares with~\(\data{GS4}\) when
restricted to atomic axioms. It is an open question whether some proof-compression method
could be devised to reduce the size~\(\data{BLG}\) proofs without loosing the good properties
of the system.

\section{On the absence of a cut-reduction procedure}\label{sec:cut-reduction-failure}

By cut-reduction procedure we mean a set of \emph{syntactical} rewriting steps capable
of reducing the complexity of cut-rule applications in a derivation until they become atomic,
while at the same time preserving correctness and the conclusion of the derivation. One such
step is implemented in the proof of \cref{thm:normalisation}, where we apply the isolation
procedure to permute cuts towards the top of the derivation, thus reducing the complexity
of their context.

At present, however, we don't know of any rewriting step capable of reducing the complexity
of cut-formulas in~\(\data{GS4}^\mathcal{N}\) derivations while preserving their associated
bl-graph. There is a known pair of natural cut-reduction steps, considered by Pulcini
in~\cite{Pul22}:
\begin{gather*}
    \makebox[\textwidth]{\small\(
        \vcenter{\prftree[r]{\rl{cut}}
            {\prftree[r]{\rl{\lor}}
                {\prfsummary[\(P\)]{\th \Gamma, A, B \vphantom{\dl{AB}}}}
                {\th \Gamma, A \lor B}}
            {\prftree[r]{\rl{\land}}
                {\prfsummary[\(Q\)]{\th \Gamma, \dl{A}}}
                {\prfsummary[\(R\)]{\th \Gamma, \dl{B}}}
                {\th \Gamma, \dl{A} \land \dl{B}}}
            {\th \Gamma}}
        \quad\rwr\quad
        \vcenter{\prftree[r]{\rl{cut}}
            {\prftree[r]{\rl{cut}}
                {\prfsummary[\(P\)]{\th \Gamma, A, B \vphantom{\dl{A}}}}
                {\prfsummary[\(\data{wk}(Q, B)\)]{\th \Gamma, \dl{A}, B}}
                {\th \Gamma, B}}
            {\prfsummary[\(R\)]{\th \Gamma, \dl{B}}}
            {\th \Gamma}}
    \)}
    \\[1em]
    \makebox[\textwidth]{\small\(
        \vcenter{\prftree[r]{\rl{cut}}
            {\prftree[r]{\rl{\lor}}
                {\prfsummary[\(P\)]{\th \Gamma, A, B \vphantom{\dl{AB}}}}
                {\th \Gamma, A \lor B}}
            {\prftree[r]{\rl{\land}}
                {\prfsummary[\(Q\)]{\th \Gamma, \dl{A}}}
                {\prfsummary[\(R\)]{\th \Gamma, \dl{B}}}
                {\th \Gamma, \dl{A} \land \dl{B}}}
            {\th \Gamma}}
        \quad\rwr\quad
        \vcenter{\prftree[r]{\rl{cut}}
            {\prftree[r]{\rl{cut}}
                {\prfsummary[\(P\)]{\th \Gamma, A, B \vphantom{\dl{B}}}}
                {\prfsummary[\(\data{wk}(R, A)\)]{\th \Gamma, A, \dl{B}}}
                {\th \Gamma, A}}
            {\prfsummary[\(Q\)]{\th \Gamma, \dl{A}}}
            {\th \Gamma}}
    \)}
\end{gather*}
These are straightforward adaptations of the usual key steps for context-splitting systems
to the context-sharing setting of the system~\(\data{GS4}^\mathcal{N}\). The resulting
cut reduction procedure is non-deterministic at the syntactical level, but both rewriting
steps preserve the axiom graph obtained through the unrefined construction (\cref{defn:axiom-graphs}).

Unfortunately, the two steps are incompatible with the refined interpretation~(\cref{defn:bl-axiom-graphs}).
To see why, one must remember that the branch-sensitive notion of composition is designed
to omit all those paths that connect the two sides of conjunctions occurring \emph{outside
the interface}. The two rewriting steps above, in reducing one cut to \emph{two} cuts of
lower complexity, select one subformula of the disjunction to bring temporarily outside the
interface. This might result in the loss of some paths when the selected formula is a conjunction:
\[
    \makebox[\textwidth]{\small\(
        \vcenter{\prftree[r]{\rl{cut}}
            {\prftree[r]{\rl{\lor}}
                {\prfsummary[\(P\)]{\th \Gamma, A, B \land C \vphantom{\dl{ABC}}}}
                {\th \Gamma, A \lor (B \land C)}}
            {\prftree[r]{\rl{\land}}
                {\prfsummary[\(Q\)]{\th \Gamma, \dl{A}}}
                {\prfsummary[\(R\)]{\th \Gamma, \dl{B} \lor \dl{C}}}
                {\th \Gamma, \dl{A} \land (\dl{B} \lor \dl{C})}}
            {\th \Gamma}}
        \quad\rwr\quad
        \vcenter{\prftree[r]{\rl{cut}}
            {\prftree[r]{\rl{cut}}
                {\prfsummary[\(P\)]{\th \Gamma, A, B \land C \vphantom{\dl{A}}}}
                {\prfsummary[\(\data{wk}(Q, B \land C)\)]{\th \Gamma, \dl{A}, B \land C}}
                {\th \Gamma, B \land C}}
            {\prfsummary[\(R\)]{\th \Gamma, \dl{B} \lor \dl{C}}}
            {\th \Gamma}}
    \)}
\]
Here the left side of the rewriting rule is associated to the bl-graph
\[
    \blaxg{P} \odot_{(A \lor (B \land C))} (\blaxg{Q} \sqcup \blaxg{R}),
\]
while the right side is interpreted as
\[
    (\blaxg{P} \odot_A \blaxg{Q}) \odot_{B \land C} \blaxg{R};
\]
the conjunction \(B \land C\) is contained in the interface in the former expression, where
the interpretation is computed by a single composition step, but lies outside the interface
in the first half of the latter expression, where we compute first the intermediate step
\(\blaxg{P} \odot_A \blaxg{Q}\). To show that this is an actual problem, not just a hypothetical
possibility, we provide a complete (but slightly involved) counter-example in \cref{fig:cut-reduction-failure}.

It remains unclear whether there is some syntactical counterpart to the semantical cut-elimination
procedure described in \cref{sec:normalisation}. It is possible that the problem could be
solved simply by superposing the two reducts, thus making the rewriting step deterministic:
\[
    \makebox[\textwidth]{\small\(
        \prftree[r]{\rl{\sqcup}}
            {\prftree[r]{\rl{cut}}
                {\prftree[r]{\rl{cut}}
                    {\prfsummary[\(P\)]{\th \Gamma, A, B \vphantom{\dl{A}}}}
                    {\prfsummary[\(\data{wk}(Q, B)\)]{\th \Gamma, \dl{A}, B}}
                    {\th \Gamma, B}}
                {\prfsummary[\(R\)]{\th \Gamma, \dl{B}}}
                {\th \Gamma}}
            {\prftree[r]{\rl{cut}}
                {\prftree[r]{\rl{cut}}
                    {\prfsummary[\(P\)]{\th \Gamma, A, B \vphantom{\dl{B}}}}
                    {\prfsummary[\(\data{wk}(R, A)\)]{\th \Gamma, A, \dl{B}}}
                    {\th \Gamma, A}}
                {\prfsummary[\(Q\)]{\th \Gamma, \dl{A}}}
                {\th \Gamma}}
            {\th \Gamma}
    \)}
\]
We have not been able to find a counter-example so far, but we have yet to attempt a proof.
We would have to show somehow that whenever some path is erased on one side, then it
must be preserved on the other side, and \emph{vice versa}.

A different possibility would be to further refine the interpretation so as to make it
invariant under the two traditional cut-reduction steps. The key observation in this regard
is that all counter-examples known to us -- including indeed the one in \cref{fig:cut-reduction-failure}
-- seem to rely critically on the possibility of constructing alternating paths using edges
from both branches of a superposition rule. This happens because superposition is not
interpreted as a non-deterministic sum of separate proofs, but as some kind of parallel
composition that “blends” two proofs together, thus obtaining a new and generally distinct one.
We conjecture that by treating superpositions as proper non-deterministic sums we should be
able to recover invariance under the traditional logical reduction steps. The picture however
is complicated by the fact that superpositions might be introduced when reducing weakening-%
weakening cuts: it is not clear then how the new composition operator should look like.


\printbibliography

\appendix

\section{Graphs}\label{sec:graphs}

\begin{definition}[Simple graphs]
    \label{defn:graph}
    A \emph{graph} (also known as \emph{simple graph}) is defined by a pair
    \(G = \langle V_G, E_G \rangle\) where \(V_G\) is a set of \emph{vertices} and
    \(E_G\) is a set of unordered pairs of vertices, i.e.\ \emph{two-element subsets}
    of~\(V_G\), called the \emph{edges of \(G\)}. For any pair of distinct vertices
    \(u \neq v \in V_G\), let \(uv\) denote the set \(\{u, v\}\) and say that \(u\)
    and \(v\) are \emph{adjacent} in \(G\) iff \(uv \in E_G\).
\end{definition}

\begin{definition}[Subgraphs]
    \label{defn:subgraphs}
    Given graphs \(H, G\), say that \emph{\(H\) is a subgraph of \(G\)} and write
    \(H \sqsubseteq G\) iff \(V_H \subseteq V_G\) and \(E_H \subseteq E_G\). Write
    \(H \sqsubset G\) iff \(H \sqsubseteq G\) and \(H \neq G\).

    Given a graph \(G\), any set \(S\) of vertices induces a subgraph \(H \sqsubseteq G\)
    by letting \(V_H = V_H \cap S\) and restricting the edge set:
    \[
        E_H = \{ e \in E_G \mid e \subseteq S \}.
    \]
    We say that \(H\) is an \emph{induced subgraph} of \(G\) and denote it by \(G \rst_S\).
\end{definition}



Graphs can be combined by an operation related to set-theoretic union, which constructs the least
upper bound of a family of graphs w.r.t.\ the subgraph order~\({\sqsubseteq}\):

\begin{definition}[Graph union]
    \label{defn:union-graph}
    Let \(I\) be a set, \((G_i)_{i \in I}\) a family of graphs indexed by~\(I\). We define
    the \emph{union} or \emph{superposition} of the family as the graph
    \[
        \bigsqcup_{i \in I} G_i = \langle \bigcup_{i \in I} V_{G_i}, \bigcup_{i \in I} E_{G_i} \rangle.
    \]
    As a special case, given \(n \geq 0\) and graphs \(G_1, \ldots, G_n\), we define the
    finite union
    \[
        G_1 \sqcup \ldots \sqcup G_n = \bigsqcup_{i = 1}^n G_i.
    \]
\end{definition}

\begin{proposition}
    \label{propo:graph-union-lub}
    Let \(I\) be a set, \((G_i)_{i \in I}\) a family of graphs indexed by~\(I\); the union
    graph \(\bigsqcup_{i \in I} G_i\) is the least upper bound of the family w.r.t.\ the
    subgraph order \({\sqsubseteq}\).
\end{proposition}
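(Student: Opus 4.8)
The plan is to unfold the definition of least upper bound with respect to the subgraph order $\sqsubseteq$ (Definition \ref{defn:subgraphs}) and check the two required properties separately: that $\bigsqcup_{i \in I} G_i$ is an upper bound of the family $(G_i)_{i \in I}$, and that it lies below every other upper bound. Since $\sqsubseteq$ is plainly a partial order (reflexivity, transitivity and antisymmetry all reduce to the corresponding facts about $\subseteq$ on the vertex and edge sets), the least upper bound is unique once it exists, so it suffices to exhibit $\bigsqcup_{i \in I} G_i$ as one.

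First I would record that $\bigsqcup_{i \in I} G_i$ is a bona fide graph in the sense of Definition \ref{defn:graph}: each element $e$ of the proposed edge set $\bigcup_{i \in I} E_{G_i}$ lies in some $E_{G_i}$, hence is a two-element subset of $V_{G_i}$, hence a two-element subset of $\bigcup_{j \in I} V_{G_j}$, which is the proposed vertex set. Next, for the upper-bound property, I fix an index $j \in I$; elementary properties of set-theoretic union give $V_{G_j} \subseteq \bigcup_{i \in I} V_{G_i}$ and $E_{G_j} \subseteq \bigcup_{i \in I} E_{G_i}$, which is exactly $G_j \sqsubseteq \bigsqcup_{i \in I} G_i$. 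Finally, for minimality, I take any graph $H$ with $G_j \sqsubseteq H$ for all $j \in I$; then $V_{G_j} \subseteq V_H$ and $E_{G_j} \subseteq E_H$ for every $j$, so $\bigcup_{i \in I} V_{G_i} \subseteq V_H$ and $\bigcup_{i \in I} E_{G_i} \subseteq E_H$, i.e.\ $\bigsqcup_{i \in I} G_i \sqsubseteq H$.

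I do not anticipate any genuine obstacle: the statement is just the observation that set union is the least upper bound in each of the two powerset lattices involved (the powerset of the union of the vertex sets, and the powerset of the union of the edge sets), transported componentwise through the definition of $\sqsubseteq$. The only step that merits a moment of attention is the well-definedness check above — that the edges of the union are two-element subsets of the union of the vertex sets — and even that is immediate.
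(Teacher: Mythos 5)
Your argument is correct and is the expected one: the paper states \cref{propo:graph-union-lub} without proof, treating it as immediate, and your componentwise reduction to the fact that set union is the least upper bound in the powerset lattices of vertices and edges (plus the well-definedness check that union edges are two-element subsets of the union vertex set) is exactly the omitted reasoning.
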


\begin{proposition}
    \label{propo:restriction-distributes-over-union}
    Let \(I\) be any set, \((G_i)_{i \in I}\) a family of graphs indexed by~\(I\), \(S\) an
    arbitrary set of vertices: then
    \[
        \left(\bigsqcup_{i \in I} G_i\right)\rst_S = \bigsqcup_{i \in I} (G_i\rst_S)
    \]
\end{proposition}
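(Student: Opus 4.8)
The plan is to verify the identity componentwise, checking separately that the two sides have the same vertex set and the same edge set; both checks amount to elementary distributivity of intersection over arbitrary unions, applied to the defining formulas in \cref{defn:subgraphs,defn:union-graph}.

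First I would compute the vertex sets. By \cref{defn:union-graph} the union $\bigsqcup_{i \in I} G_i$ has vertex set $\bigcup_{i \in I} V_{G_i}$, so by \cref{defn:subgraphs} the restriction $\left(\bigsqcup_{i \in I} G_i\right)\rst_S$ has vertex set $\left(\bigcup_{i \in I} V_{G_i}\right) \cap S$. On the other hand each $G_i \rst_S$ has vertex set $V_{G_i} \cap S$, so $\bigsqcup_{i \in I} (G_i \rst_S)$ has vertex set $\bigcup_{i \in I} (V_{G_i} \cap S)$. These coincide because intersection distributes over arbitrary unions.

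Next I would compute the edge sets. The edges of $\left(\bigsqcup_{i \in I} G_i\right)\rst_S$ are exactly those unordered pairs $e \in \bigcup_{i \in I} E_{G_i}$ with $e \subseteq S$; the edges of $\bigsqcup_{i \in I} (G_i \rst_S)$ are exactly those $e$ belonging to $\{ e' \in E_{G_i} \mid e' \subseteq S \}$ for some $i \in I$. An unordered pair $e$ satisfies the first condition iff there is some $i$ with $e \in E_{G_i}$ and $e \subseteq S$, which is precisely the second condition; hence the two edge sets are equal. Combining this with the vertex-set identity gives the claim. (One may also note that both sides are the unique least upper bound, with respect to $\sqsubseteq$, of the family $(G_i \rst_S)_{i \in I}$ among subgraphs of $\left(\bigsqcup_i G_i\right)\rst_S$, by \cref{propo:graph-union-lub}, but the direct computation is shorter.)

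There is no genuine obstacle here: the statement is pure set-theoretic bookkeeping. The only point requiring the slightest care is that in \cref{defn:subgraphs} the constraint $e \subseteq S$ is imposed on the edges of the \emph{union} graph, not on the edges of the individual $G_i$; but an edge of the union is literally an edge of some $G_i$, and the predicate $e \subseteq S$ does not depend on which graph $e$ is viewed in, so the two quantifier orderings ("there is $i$ such that $e \in E_{G_i}$, and $e \subseteq S$" versus "there is $i$ such that $e \in E_{G_i}$ and $e \subseteq S$") agree.
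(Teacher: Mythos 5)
Your proof is correct: the paper states \cref{propo:restriction-distributes-over-union} without proof, treating it as immediate, and your componentwise verification (distributivity of intersection over unions for the vertex sets, and the observation that the predicate $e \subseteq S$ is independent of which $G_i$ the edge is drawn from for the edge sets) is exactly the routine argument the paper implicitly relies on.
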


\section{Complexity measures on formulas, sequents and derivations}\label{sec:complexity}

We define in the following paragraphs some useful measures of the complexity of formulas,
sequents and derivations, and discuss the relationships between them.

\begin{definition}[Complexity of formulas]
    \label{defn:formula-complexity}
    We define by structural induction four different measures on named formulas~\(A\):
    the \emph{height}~\(\data{h}(A)\), the \emph{atom count}~\(\data{\#at}(A)\), the
    \emph{degree}~\(\data{deg}(A)\) and the \emph{size}~\(\data{size}(A)\):
    \begin{itemize}
        \item if \(A\) is atomic, let
        \begin{gather*}
            \data{h}(A) = \data{deg}(A) = 0,\\
            \data{\#at}(A) = \data{size}(A) = 1;
        \end{gather*}

        \item if \(A = B \lor C\) or \(A = B \land C\), let
        \begin{align*}
            \data{h}(A)    &= 1 + \max\{\data{h}(B), \data{h}(C)\},\\
            \data{\#at}(A)  &= \data{\#at}(B) + \data{\#at}(C),\\
            \data{deg}(A)  &= 1 + \data{deg}(B) + \data{deg}(C),\\
            \data{size}(A) &= 1 + \data{size}(B) + \data{size}(C).
        \end{align*}
    \end{itemize}
\end{definition}

The height tracks the length of the longest branch in the syntax tree of the formula,
the atom count is self-explanatory, the degree is the number of logical operators occurring
in the formula, and the size is the number of symbols (atoms and operators).

\begin{proposition}
    \label{propo:formula-complexity-relationships}
    For all named formulas~\(A\):
    \begin{enumerate}[(i)]
        \item \(\data{\#at}(A) = 1 + \data{deg}(A)\);
        \item \(\data{size}(A) = \data{\#at}(A) + \data{deg}(A) = 1 + 2 \cdot \data{deg}(A)\);
        \item \(\data{h}(A) \leq \data{deg}(A) < \data{\#at}(A) \leq \data{size}(A)\).
    \end{enumerate}
\end{proposition}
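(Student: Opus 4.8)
The plan is to prove all three parts by a single structural induction on the named formula~$A$, following \cref{defn:formula-complexity}. The base case is $A$ atomic, where by definition $\data{h}(A) = \data{deg}(A) = 0$ and $\data{\#at}(A) = \data{size}(A) = 1$; every claimed (in)equality then reduces to arithmetic on $0$ and $1$. The inductive case splits as $A = B \lor C$ or $A = B \land C$, but since all four measures treat $\lor$ and $\land$ identically, the two subcases are verbatim the same, so I would handle them uniformly.

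First I would establish~(i). In the inductive case, using the definitions and the induction hypothesis on $B$ and $C$,
\[
    \data{\#at}(A) = \data{\#at}(B) + \data{\#at}(C) = (1 + \data{deg}(B)) + (1 + \data{deg}(C)) = 1 + (1 + \data{deg}(B) + \data{deg}(C)) = 1 + \data{deg}(A).
\]
Next, for the first equality of~(ii), $\data{size}(A) = 1 + \data{size}(B) + \data{size}(C)$, and plugging in the induction hypothesis $\data{size}(B) = \data{\#at}(B) + \data{deg}(B)$ and likewise for $C$ and regrouping gives $(\data{\#at}(B) + \data{\#at}(C)) + (1 + \data{deg}(B) + \data{deg}(C)) = \data{\#at}(A) + \data{deg}(A)$. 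The second equality of~(ii) needs no further induction: it follows from~(i) by substituting $\data{\#at}(A) = 1 + \data{deg}(A)$ into $\data{\#at}(A) + \data{deg}(A)$.

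Finally I would derive~(iii) from what precedes, together with the trivial observation (itself a one-line induction, which I would note in passing) that $\data{deg}(D) \geq 0$ for every named formula $D$. The middle inequality $\data{deg}(A) < \data{\#at}(A)$ is immediate from~(i); the right inequality $\data{\#at}(A) \leq \data{size}(A)$ is immediate from~(ii) since $\data{deg}(A) \geq 0$; and the left inequality $\data{h}(A) \leq \data{deg}(A)$ is the only one requiring induction: the base case is $0 \leq 0$, and in the inductive case
\[
    \data{h}(A) = 1 + \max\{\data{h}(B), \data{h}(C)\} \leq 1 + \max\{\data{deg}(B), \data{deg}(C)\} \leq 1 + \data{deg}(B) + \data{deg}(C) = \data{deg}(A),
\]
where the last inequality uses $\data{deg}(B), \data{deg}(C) \geq 0$. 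There is no genuine obstacle here; the only point deserving a moment's care is the passage from $\max$ to sum of degrees, which is exactly where nonnegativity of the degree is used, so I would state that auxiliary fact explicitly before invoking it.
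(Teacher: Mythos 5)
Your proposal is correct and follows essentially the same route as the paper: a structural induction establishing (i), (ii) and the inequality \(\data{h}(A) \leq \data{deg}(A)\), with the remaining inequalities of (iii) read off from (i) and (ii). The only (immaterial) difference is in the height bound, where you pass from \(\max\) to sum over the degrees after applying the induction hypothesis, while the paper passes from \(\max\) to sum over the heights first.
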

\begin{proof}
    The inequalities \(\deg(A) < \data{\#at}(A)\) and~\(\data{\#at}(A) \leq \data{size}(A)\)
    are direct consequences of facts~(i) and~(ii). We prove facts~(i) and~(ii) and the first
    inequality simultaneously by structural induction on~\(A\):
    \begin{itemize}
        \item if \(A\) is atomic, then
        \begin{gather*}
            0 = \data{h}(A) \leq \data{deg}(A) = 0,\\
            \data{\#at}(A) = 1 = 1 + 0 = 1 + \data{deg}(A) \text{, and}\\
            \data{size}(A) = 1 = 1 + 0 = \data{\#at}(A) + \data{deg}(A);
        \end{gather*}

        \item if \(A = B \lor C\) or \(A = B \land C\), then
        \begin{align*}
            \data{\#at}(A) &= \data{\#at}(B) + \data{\#at}(C)                                       & \text{(def.~\ref{defn:formula-complexity})}\\
                          &= \data{deg}(B) + 1 + \data{deg}(C) + 1                                  & \text{(ind.\ hyp.)}\\
                          &= 1 + \data{deg}(B) + \data{deg}(C) + 1                                  \\
                          &= \data{deg}(A) + 1                                                      & \text{(def.~\ref{defn:formula-complexity})}\\[1em]
            \data{size}(A) &= 1 + \data{size}(B) + \data{size}(C)                                   & \text{(def.~\ref{defn:formula-complexity})}\\
                           &= 1 + \data{\#at}(B) + \data{deg}(B) + \data{\#at}(C) + \data{deg}(C)   & \text{(ind.\ hyp.)}\\
                           &= \data{\#at}(B) + \data{\#at}(C) + 1 + \data{deg}(B) + \data{deg}(C)   \\
                           &= \data{\#at}(A) + \data{deg}(A)                                        & \text{(def.~\ref{defn:formula-complexity})}\\[1em]
            \data{h}(A) &= 1 + \max\{\data{h}(B), \data{h}(C)\}                                     & \text{(def.~\ref{defn:formula-complexity})}\\
                        &\leq 1 + \data{h}(B) + \data{h}(C)                                         \\
                        &\leq 1 + \data{deg}(B) + \data{deg}(C)                                     & \text{(ind.\ hyp.)}\\
                        &= \data{deg}(A)                                                            & \text{(def.~\ref{defn:formula-complexity})}
        \end{align*}
    \end{itemize}
\end{proof}

\begin{definition}[Complexity of sequents]
    \label{defn:sequent-complexity}
    We define for sequents the same measures as for formulas, by taking sums over all
    formulas contained in the sequent: for all named sequents~\(\th \Gamma\) let
    \begin{align*}
        \data{h}(\th \Gamma)   &= \sum_{A \in \Gamma} \data{h}(A),   &\data{\#at}(\th \Gamma) &= \sum_{A \in \Gamma} \data{\#at}(A),\\[.5em]
        \data{deg}(\th \Gamma) &= \sum_{A \in \Gamma} \data{deg}(A), &\data{size}(\th \Gamma) &= \sum_{A \in \Gamma} \data{size}(A).
    \end{align*}
    Since \(\Gamma\) is required to be finite, all measures are well-defined.
\end{definition}

\begin{proposition}
    \label{propo:sequent-complexity-relationships}
    For all named sequents~\(\th \Gamma\):
    \begin{enumerate}[(i)]
        \item \(\data{\#at}(\th \Gamma) = |\Gamma| + \data{deg}(\th \Gamma)\);
        \item \(\data{size}(\th \Gamma) = \data{\#at}(\th \Gamma) + \data{deg}(\th \Gamma) = |\Gamma| + 2 \cdot \data{deg}(A)\);
        \item \(\data{h}(\th \Gamma) \leq \data{deg}(\th \Gamma) < \data{\#at}(\th \Gamma) \leq \data{size}(\th \Gamma)\).
    \end{enumerate}
\end{proposition}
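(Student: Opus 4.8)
The plan is to lift \cref{propo:formula-complexity-relationships} from formulas to sequents. Recall from \cref{defn:sequent-complexity} that each of the four measures on a named sequent $\th \Gamma$ is, by definition, the sum of the corresponding measure over the (finitely many) formulas in $\Gamma$. Hence every identity or inequality between formula measures yields, upon summing over $A \in \Gamma$, the corresponding fact for sequent measures. No induction is needed: all the inductive work has already been done at the level of formulas.

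Concretely, for (i) I would compute
\[
    \data{\#at}(\th \Gamma) = \sum_{A \in \Gamma} \data{\#at}(A) = \sum_{A \in \Gamma} \bigl(1 + \data{deg}(A)\bigr) = |\Gamma| + \sum_{A \in \Gamma} \data{deg}(A) = |\Gamma| + \data{deg}(\th \Gamma),
\]
using \cref{propo:formula-complexity-relationships}(i) for the middle step. For the first equality of (ii) I would sum \cref{propo:formula-complexity-relationships}(ii) over $\Gamma$,
\[
    \data{size}(\th \Gamma) = \sum_{A \in \Gamma} \data{size}(A) = \sum_{A \in \Gamma} \bigl(\data{\#at}(A) + \data{deg}(A)\bigr) = \data{\#at}(\th \Gamma) + \data{deg}(\th \Gamma),
\]
and then substitute (i) to obtain $\data{size}(\th \Gamma) = |\Gamma| + 2 \cdot \data{deg}(\th \Gamma)$ (the right-hand side as displayed in the statement, where $\data{deg}(A)$ is evidently a typo for $\data{deg}(\th \Gamma)$). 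For (iii), summing the inequality $\data{h}(A) \le \data{deg}(A)$ of \cref{propo:formula-complexity-relationships}(iii) over $\Gamma$ gives $\data{h}(\th \Gamma) \le \data{deg}(\th \Gamma)$; the inequality $\data{\#at}(\th \Gamma) \le \data{size}(\th \Gamma)$ is immediate from (ii) since $\data{deg}(\th \Gamma) \ge 0$; and $\data{deg}(\th \Gamma) < \data{\#at}(\th \Gamma)$ follows from (i) as soon as $|\Gamma| \ge 1$.

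The only point requiring a word of care is the strict inequality in (iii): when $\Gamma = \emptyset$ all four quantities vanish, so $\data{deg}(\th \Gamma) < \data{\#at}(\th \Gamma)$ fails. This is the sole obstacle, and it is harmless — every sequent to which the proposition is applied later in the paper is provable, hence nonempty (cf.\ the proof of \cref{thm:sequentialization}); one may alternatively read the statement as tacitly restricted to nonempty $\Gamma$. Apart from this degenerate case, the proposition is a routine consequence of its formula-level counterpart obtained by finite summation over $\Gamma$.
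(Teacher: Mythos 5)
Your proof is correct and takes essentially the same route as the paper's: both reduce each claim to the formula-level \cref{propo:formula-complexity-relationships} by finite summation over \(\Gamma\), using the definition of the sequent measures as sums. Your remark that the strict inequality \(\data{deg}(\th \Gamma) < \data{\#at}(\th \Gamma)\) fails for \(\Gamma = \emptyset\) identifies a genuine (if harmless) edge case that the paper's appeal to strict monotonicity of addition also glosses over.
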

\begin{proof}
    For fact~(i), observe that
    \begin{align*}
        \data{\#at}(\th \Gamma) &= \sum_{A \in \Gamma} \data{\#at}(A)           & \text{(def.~\ref{defn:sequent-complexity})}\\[.5em]
                               &= \sum_{A \in \Gamma} 1 + \data{deg}(A)         & \text{(lemma~\ref{propo:formula-complexity-relationships} (i))}\\[.5em]
                               &= \left(\sum_{A \in \Gamma} 1\right) + \left(\sum_{A \in \Gamma} \data{deg}(A)\right) & \text{(associativity)}\\[.5em]
                               &= \#\Gamma + \data{deg}(\th \Gamma)             & \text{(def.~\ref{defn:sequent-complexity}).}
    \end{align*}
    Fact~(ii) follows from \cref{defn:sequent-complexity} and the associativity of sums;
    fact~(iii) follows from \cref{propo:formula-complexity-relationships}, point~(iii) and
    the fact that addition is strictly monotonic over the natural numbers.
\end{proof}

\begin{definition}[Complexity of derivations]
    \label{defn:derivation-complexity}
    We define by structural induction two complexity measures on named derivation
    trees~\(P \in \data{GS4}^\mathcal{N}\): the \emph{height}~\(\data{h}(P)\) and the
    \emph{size}~\(\data{size}(P)\):
    \begin{itemize}
        \item if \(P\) has the form
        \[
            \prfbyaxiom{\rl[\{A,\dl{B}\}]{ax}}{\th \Gamma, A, \dl{B}}
        \]
        where \(A \equiv B\), then let
        \[
            \data{h}(P) = 0,\ \data{size}(P) = 1;
        \]

        \item otherwise \(P\) has the form
        \[
            \prftree[r]{\(r\)}
                {\prfsummary[\(Q_1\)]{\th \Gamma_1}}
                {\prfassumption{\cdots}}
                {\prfsummary[\(Q_n\)]{\th \Gamma_n}}
                {\th \Gamma}
        \]
        where \(r \in \{\rl{cut}, \rl{\sqcup}, \rl{\lor}, \rl{\land}\}\)
        and \(n > 0\): then let
        \begin{align*}
            \data{h}(P)    &= 1 + \max_{1 \leq i \leq n} \data{h}(Q_i),\\[.5em]
            \data{size}(P) &= 1 + \sum_{1 \leq i \leq n} \data{size}(Q_i).
        \end{align*}
    \end{itemize}
\end{definition}

\section{Proofs of the inversion lemmas}\label{sec:inversion-proofs}

We collect in the present section all proofs of the inversion lemmas from \cref{sec:inversion-lemmas}
plus the proofs of admissibility for contraction and weakening.

\begin{proof}[Proof of \cref{lemma:disjunction-transform}]
    By structural induction on~\(P\):
    \begin{itemize}
        \item if \(P\) has the form
        \[
            \prfbyaxiom{\rl[\{\dl{C} \land \dl{D}, A \lor B\}]{ax}}
                {\th \Delta, \dl{C} \land \dl{D}, A \lor B}
        \]
        where \(A \equiv C\), \(B \equiv D\) and~\(\Gamma = \Delta \cup \{\dl{C} \land \dl{D}\}\),
        then let~\(\data{inv}(P, A \lor B)\) be
        \[
            \prftree[r]{\rl{\land}}
                {\prfbyaxiom{\rl[\{\dl{C}, A\}]{ax}}{\th \Delta, \dl{C}, A, B}}
                {\prfbyaxiom{\rl[\{\dl{D}, B\}]{ax}}{\th \Delta, \dl{D}, A, B}}
                {\th \Delta, \dl{C} \land \dl{D}, A, B}
        \]

        \item if \(P\) has the form
        \[
            \prfbyaxiom{\rl[\{C, \dl{D}\}]{ax}}
                {\th \Delta, C, \dl{D}, A \lor B}
        \]
        where \(C \equiv D\) and~\(\Gamma = \Delta \cup \{C, \dl{D}\}\), then let~\(\data{inv}(
        P, A \lor B)\) be
        \[
            \prfbyaxiom{\rl[\{C, \dl{D}\}]{ax}}
                {\th \Delta, C, \dl{D}, A, B}
        \]

        \item if \(P\) has the form
        \[
            \prftree[r]{\rl{\lor}}
                {\prfsummary[\(Q\)]{\th \Gamma, A, B}}
                {\th \Gamma, A \lor B}
        \]
        then let~\(\data{inv}(P, A \lor B) = Q\);

        \item if \(P\) has the form
        \[
            \prftree[r]{\(r\)}
                {\prfsummary[\(Q_1\)]{\th \Gamma_1, A \lor B}}
                {\prfassumption{\cdots}}
                {\prfsummary[\(Q_n\)]{\th \Gamma_n, A \lor B}}
                {\th \Gamma, A \lor B}
        \]
        where \(r \in \{\rl{cut}, \rl{\sqcup}, \rl{\lor}, \rl{\land}\}\)
        and \(n > 0\), then apply~\(\data{inv}({-}, A \lor B)\) recursively to each~\(Q_i\)
        and let~\(\data{inv}(P, A \lor B)\) be
        \[
            \prftree[r]{\(r\)}
                {\prfsummary[\(\data{inv}(Q_1, A \lor B)\)]{\th \Gamma_1, A, B}}
                {\prfassumption{\cdots}}
                {\prfsummary[\(\data{inv}(Q_n, A \lor B)\)]{\th \Gamma_n, A, B}}
                {\th \Gamma, A, B}\qedhere
        \]
    \end{itemize}
\end{proof}

\begin{proof}[Proof of \cref{lemma:left-conjunction-transform}]
    By structural induction on \(P\):
    \begin{itemize}
        \item if \(P\) has the form
        \[
            \prfbyaxiom{\rl[\{\dl{C} \lor \dl{D}, A \land B\}]{ax}}
                {\th \Delta, \dl{C} \lor \dl{D}, A \land B}
        \]
        where \(A \equiv C\), \(B \equiv D\) and~\(\Gamma = \Delta \cup \{\dl{C} \lor \dl{D}\}\),
        then let~\(\data{inv_l}(P, A \land B)\) be
        \[
            \prftree[r]{\rl{\lor}}
                {\prfbyaxiom{\rl[\{\dl{C}, A\}]{ax}}{\th \Delta, \dl{C}, \dl{D}, A}}
                {\th \Delta, \dl{C} \lor \dl{D}, A}
        \]

        \item if \(P\) has the form
        \[
            \prfbyaxiom{\rl[\{C, \dl{D}\}]{ax}}
                {\th \Delta, C, \dl{D}, A \land B}
        \]
        where \(C \equiv D\) and~\(\Gamma = \Delta \cup \{C, \dl{D}\}\), then let~\(\data{inv_l}(
        P, A \land B)\) be
        \[
            \prfbyaxiom{\rl[\{C, \dl{D}\}]{ax}}
                {\th \Delta, C, \dl{D}, A}
        \]

        \item if \(P\) has the form
        \[
            \prftree[r]{\rl{\land}}
                {\prfsummary[\(Q\)]{\th \Gamma, A}}
                {\prfsummary[\(R\)]{\th \Gamma, B}}
                {\th \Gamma, A \land B}
        \]
        then let~\(\data{inv_l}(P, A \land B) = Q\);

        \item if \(P\) has the form
        \[
            \prftree[r]{\(r\)}
                {\prfsummary[\(Q_1\)]{\th \Gamma_1, A \land B}}
                {\prfassumption{\cdots}}
                {\prfsummary[\(Q_n\)]{\th \Gamma_n, A \land B}}
                {\th \Gamma, A \land B}
        \]
        where \(r \in \{\rl{cut}, \rl{\sqcup}, \rl{\lor}, \rl{\land}\}\)
        and \(n > 0\), then apply~\(\data{inv_l}({-}, A \land B)\) recursively to each~\(Q_i\)
        and let~\(\data{inv_l}(P, A \land B)\) be
        \[
            \prftree[r]{\(r\)}
                {\prfsummary[\(\data{inv_l}(Q_1, A \land B)\)]{\th \Gamma_1, A}}
                {\prfassumption{\cdots}}
                {\prfsummary[\(\data{inv_l}(Q_n, A \land B)\)]{\th \Gamma_n, A}}
                {\th \Gamma, A}\qedhere
        \]
    \end{itemize}
\end{proof}

\begin{proof}[Proof of \cref{lemma:right-conjunction-transform}]
    By structural induction on \(P\):
    \begin{itemize}
        \item if \(P\) has the form
        \[
            \prfbyaxiom{\rl[\{\dl{C} \lor \dl{D}, A \land B\}]{ax}}
                {\th \Delta, \dl{C} \lor \dl{D}, A \land B}
        \]
        where \(A \equiv C\), \(B \equiv D\) and~\(\Gamma = \Delta \cup \{\dl{C} \lor \dl{D}\}\),
        then let~\(\data{inv_r}(P, A \land B)\) be
        \[
            \prftree[r]{\rl{\lor}}
                {\prfbyaxiom{\rl[\{\dl{D}, B\}]{ax}}{\th \Delta, \dl{C}, \dl{D}, B}}
                {\th \Delta, \dl{C} \lor \dl{D}, B}
        \]

        \item if \(P\) has the form
        \[
            \prfbyaxiom{\rl[\{C, \dl{D}\}]{ax}}
                {\th \Delta, C, \dl{D}, A \land B}
        \]
        where \(C \equiv D\) and~\(\Gamma = \Delta \cup \{C, \dl{D}\}\), then let~\(\data{inv_r}(
        P, A \land B)\) be
        \[
            \prfbyaxiom{\rl[\{C, \dl{D}\}]{ax}}
                {\th \Delta, C, \dl{D}, B}
        \]

        \item if \(P\) has the form
        \[
            \prftree[r]{\rl{\land}}
                {\prfsummary[\(Q\)]{\th \Gamma, A}}
                {\prfsummary[\(R\)]{\th \Gamma, B}}
                {\th \Gamma, A \land B}
        \]
        then let~\(\data{inv_r}(P, A \land B) = R\);

        \item if \(P\) has the form
        \[
            \prftree[r]{\(r\)}
                {\prfsummary[\(Q_1\)]{\th \Gamma_1, A \land B}}
                {\prfassumption{\cdots}}
                {\prfsummary[\(Q_n\)]{\th \Gamma_n, A \land B}}
                {\th \Gamma, A \land B}
        \]
        where \(r \in \{\rl{cut}, \rl{\sqcup}, \rl{\lor}, \rl{\land}\}\)
        and \(n > 0\), then apply~\(\data{inv_r}({-}, A \land B)\) recursively to each~\(Q_i\)
        and let~\(\data{inv_r}(P, A \land B)\) be
        \[
            \prftree[r]{\(r\)}
                {\prfsummary[\(\data{inv_r}(Q_1, A \land B)\)]{\th \Gamma_1, B}}
                {\prfassumption{\cdots}}
                {\prfsummary[\(\data{inv_r}(Q_n, A \land B)\)]{\th \Gamma_n, B}}
                {\th \Gamma, B}\qedhere
        \]
    \end{itemize}
\end{proof}

\begin{proof}[Proof of \cref{lemma:wk-transform}]
    Let us fix once and for all a complete enumeration \(x_1, x_2, \ldots\) \emph{without
    repetitions} of the countable set~\(\mathcal{N}\) of names. The \(\data{wk}\)-transformation
    will be deterministic up to our choice of enumeration. We are also gonna need the theory of
    renamings from \cref{sec:renamings}. We proceed by induction on the height of~\(P\):
    \begin{itemize}
        \item if \(P\) has the form
        \[
            \prfbyaxiom{\rl[\{C, \dl{D}\}]{ax}}
                {\th \Gamma', C, \dl{D}}
        \]
        where \(C \equiv D\) and~\(\Gamma = \Gamma' \cup \{C, \dl{D}\}\), then let~\(\data{wk}(
        P, \Delta)\) be
        \[
            \prfbyaxiom{\rl[\{C, \dl{D}\}]{ax}}
                {\th \Gamma', C, \dl{D}, \Delta}
        \]

        \item if \(P\) has the form
        \[
            \prftree[r]{\(r\)}
                {\prfsummary[\(Q_1\)]{\th \Gamma_1}}
                {\prfassumption{\cdots}}
                {\prfsummary[\(Q_n\)]{\th \Gamma_n}}
                {\th \Gamma}
        \]
        where \(r \in \{\rl{\sqcup}, \rl{\lor}, \rl{\land}\}\) and \(n > 0\),
        then one may check easily (by inspection of the three rules listed before) that
        \(\data{names}(\Gamma_i) \subseteq \data{names}(\Gamma)\) for all~\(1 \leq i
        \leq n\): apply~\(\data{wk}({-}, \Delta)\) recursively to each~\(Q_i\) and
        let~\(\data{wk}(P, \Delta)\) be
        \[
            \prftree[r]{\(r\)}
                {\prfsummary[\(\data{wk}(Q_1, \Delta)\)]{\th \Gamma_1, \Delta}}
                {\prfassumption{\cdots}}
                {\prfsummary[\(\data{wk}(Q_n, \Delta)\)]{\th \Gamma_n, \Delta}}
                {\th \Gamma, \Delta}
        \]

        \item if \(P\) has the form
        \[
            \prftree[r]{\rl{cut}}
                {\prfsummary[\(Q\)]{\th \Gamma, A \vphantom{\dl{A}}}}
                {\prfsummary[\(R\)]{\th \Gamma, \dl{A}}}
                {\th \Gamma}
        \]
        we must take into account the possibility that \(A, \dl{A}\) and~\(\Delta\) share some
        names, as the hypothesis only guarantees that \(\Delta\) share no names with~\(\Gamma\).
        Let then \(k\) be the smallest natural number such that, for all~\(1 \leq i \geq k\),
        \(x_i \notin \data{names}(P) \cup \data{names}(\Delta)\): such a~\(k\) must exist
        because only finitely many names may occur either in~\(P\) or~\(\Delta\), and the
        enumeration has no repetitions. We construct a renaming~\(\phi: \data{names}(P) \to \mathcal{N}\)
        of~\(P\) by letting
        \[
            \phi(x_i) = \begin{cases}
                x_{i+k}     & \text{if } x_i \in \data{names}(\Delta),\\
                x_i         & \text{otherwise,}
            \end{cases}
        \]
        for all \(i \in \N\) such that \(x_i \in \data{names}(P)\). Because no name
        in~\(\Gamma\) occurs also in~\(\Delta\), \(\phi\) is guaranteed to act as the
        identity on all formulas in~\(\Gamma\), i.e.~\(\phi P\) has the form
        \[
            \prftree[r]{\rl{cut}}
                {\prfsummary[\(\phi Q\)]{\th \Gamma, \phi A \vphantom{\sdl{(\phi A)}}}}
                {\prfsummary[\(\phi R\)]{\th \Gamma, \sdl{(\phi A)}}}
                {\th \Gamma}
        \]
        where \(\phi A, \sdl{(\phi A)}\) share no location with~\(\Delta\) by construction
        of~\(\phi\), and \(\phi Q, \phi R\) have by \cref{propo:derivation-renaming-correctness}
        the same height as \(Q, R\) respectively: let then \(\data{wk}(P, \Delta)\) be
        \[
            \prftree[r]{\rl{cut}}
                {\prfsummary[\(\data{wk}(\phi Q, \Delta)\)]{\th \Gamma, \phi A, \Delta \vphantom{\sdl{(\phi A)}}}}
                {\prfsummary[\(\data{wk}(\phi R, \Delta)\)]{\th \Gamma, \sdl{(\phi A)}, \Delta}}
                {\th \Gamma, \Delta}
            \qedhere
        \]
    \end{itemize}
\end{proof}

\section{Renamings}\label{sec:renamings}

\begin{definition}
    \label{defn:renaming}
    A \emph{renaming} of a named formula~\(A\) (resp.\ set~\(\Gamma\) of named formulas) is an
    \emph{injective map} from \(\data{names}(A)\) (resp.~\(\data{names}(\Gamma)\))
    to~\(\mathcal{N}\). For any named formula~\(A\) and renaming~\(\phi\) of~\(A\), let
    \[
        \phi A = \begin{cases}
            \nm{\phi x}{\alpha}         & \text{if } A = \nm{x}{\alpha} \text{ for some } \alpha \in \mathcal{A}, x \in \mathcal{N}, \\
            \phi B \lor \phi C          & \text{if } A = B \lor C, \\
            \phi B \land \phi C         & \text{if } A = B \land C.
        \end{cases}
    \]
    The application of a renaming~\(\phi\) to a set~\(\Gamma\) is defined by taking its image
    under~\(\phi\), observing that, for each \(A \in \Gamma\), the restriction of~\(\phi\)
    to~\(\data{names}(A)\) is necessarily a renaming of~\(A\).
\end{definition}

\begin{proposition}
    \label{propo:formula-renaming-correctness}
    For any named formula~\(A\) and renaming~\(\phi\) of~\(A\), \(\phi A\) is also a named
    formula with \(A \equiv \phi A\), \(\data{names}(\phi A) = \phi(\data{names}(A))\),
    and sharing-free iff so is~\(A\).
\end{proposition}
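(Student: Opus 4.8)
The plan is to prove all four assertions about $\phi A$ simultaneously by structural induction on the named formula $A$; this is the natural route because each clause concerning a compound formula $\phi(B \mathbin{\cdot} C)$ reduces to the corresponding clauses for $\phi B$ and $\phi C$, and those in turn are governed by the restrictions of $\phi$ to $\data{names}(B)$ and $\data{names}(C)$.

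In the base case $A = \nm{x}{\alpha}$ everything is immediate from the definitions: $\phi A = \nm{\phi x}{\alpha}$ is a named formula, $|\phi A| = \alpha = |A|$, $\data{names}(\phi A) = \{\phi x\} = \phi(\data{names}(A))$, and atomic formulas are sharing-free on either side, so that equivalence is vacuous. In the inductive step, say $A = B \lor C$ (the conjunction case being identical), the first thing to note is that $\phi\rst_{\data{names}(B)}$ and $\phi\rst_{\data{names}(C)}$ are still injective and hence genuine renamings of $B$ and $C$, so the induction hypothesis applies to each. Then $\phi A = \phi B \lor \phi C \in \mathcal{F^N}$ by the first inductive clause; $|\phi A| = |\phi B| \lor |\phi C| = |B| \lor |C| = |A|$ by the second; and $\data{names}(\phi A) = \data{names}(\phi B) \cup \data{names}(\phi C) = \phi(\data{names}(B)) \cup \phi(\data{names}(C)) = \phi(\data{names}(B) \cup \data{names}(C)) = \phi(\data{names}(A))$ by the third, using that taking images commutes with unions. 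For the sharing-freedom clause I would recall that $A$ is sharing-free iff $B$ and $C$ are each sharing-free and share no names; the induction hypothesis transports the first two conditions between $A$ and $\phi A$, so what remains is to show that $B, C$ share no names iff $\phi B, \phi C$ share no names.

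The one point that genuinely uses the hypothesis that $\phi$ is injective — and hence the only part of the argument that is not completely mechanical — is precisely this last equivalence. Using the name-set clause already established, $\data{names}(\phi B) \cap \data{names}(\phi C) = \phi(\data{names}(B)) \cap \phi(\data{names}(C))$, and injectivity of $\phi$ on $\data{names}(A) = \data{names}(B) \cup \data{names}(C)$ gives $\phi(\data{names}(B)) \cap \phi(\data{names}(C)) = \phi\bigl(\data{names}(B) \cap \data{names}(C)\bigr)$; since $\phi(S) = \emptyset$ iff $S = \emptyset$, the two disjointness conditions are equivalent, which completes the induction. All the remaining clauses would in fact go through for an arbitrary, possibly non-injective, map on names; injectivity is needed here alone.
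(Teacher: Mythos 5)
Your proof is correct and follows essentially the same route as the paper: a structural induction in which the name-set and equivalence clauses are routine and the only substantive point is that injectivity of \(\phi\) makes disjointness of \(\data{names}(B)\) and \(\data{names}(C)\) equivalent to disjointness of their images, which is exactly where the paper also invokes injectivity. Your explicit remark that the restrictions of \(\phi\) to the subformulas are again renamings, and your closing observation isolating where injectivity is needed, are accurate refinements of the same argument.
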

\begin{proof}
    The equivalence result and that about name sets are more or less immediate, by structural
    induction on~\(A\). As regards sharing-freedom, we reason again by structural induction
    on~\(A\):
    \begin{itemize}
        \item if \(A\) is atomic then so is~\(\phi A\) and they are both sharing-free
        by definition;

        \item if \(A = B \diamond C\) (where \({\diamond} \in \{{\lor},{\land}\}\)), then
        \(\phi A = \phi B \diamond \phi C\). By induction hypothesis \(B, C\) are
        sharing-free if and only if so are \(\phi B, \phi C\), respectively. Moreover,
        because \(\phi\) is injective, \(\data{names}(\phi B) = \phi(\data{names}(B))\)
        and the same holds for~\(C\) \emph{mutatis mutandis}, \(B, C\) are disjoint
        if and only if so are \(\phi B, \phi C\). Then \(A\) is sharing-free, by definition,
        iff \(B, C\) are disjoint and both sharing-free, iff \(\phi B, \phi C\) are disjoint
        and both sharing-free, iff again by definition \(\phi A\) is sharing-free. \qedhere
    \end{itemize}
\end{proof}

\begin{lemma}
    \label{lemma:renaming-commutes-with-negation}
    Let \(A\) be any named formula:
    \begin{enumerate}[(i)]
        \item \(\phi\) is a renaming of~\(A\) if and only if it is a renaming of~\(\dl{A}\);
        \item for any renaming~\(\phi\) of~\(A\), \(\phi(\dl{A}) = \sdl{(\phi A)}\).
    \end{enumerate}
\end{lemma}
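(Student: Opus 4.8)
The plan is to dispatch part~(i) by appeal to the fact, already observed after \cref{defn:atom-indexing}, that negation preserves names, i.e.\ $\data{names}(A) = \data{names}(\dl{A})$ for every named formula~$A$. By \cref{defn:renaming} a renaming of a named formula is nothing but an injective map whose domain is the name set of that formula; since $A$ and $\dl{A}$ have the same name set, the two notions coincide verbatim. If full self-containment is desired, the identity $\data{names}(A) = \data{names}(\dl{A})$ is itself a one-line structural induction, using $\sdl{(\nm{x}{\alpha})} = \nm{x}{\dl{\alpha}}$ in the base case and the De Morgan clauses for $\dl{B \lor C}$ and $\dl{B \land C}$ in the inductive steps.

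For part~(ii) I would proceed by structural induction on~$A$, keeping part~(i) at hand so that every renaming application occurring below is legitimate. In the base case $A = \nm{x}{\alpha}$ we have $\dl{A} = \nm{x}{\dl{\alpha}}$, hence $\phi(\dl{A}) = \nm{\phi x}{\dl{\alpha}}$, whereas $\phi A = \nm{\phi x}{\alpha}$ gives $\sdl{(\phi A)} = \nm{\phi x}{\dl{\alpha}}$; the two agree. For $A = B \lor C$ we have $\dl{A} = \dl{B} \land \dl{C}$, and the restrictions of~$\phi$ to $\data{names}(B)$ and $\data{names}(C)$ are renamings of $B$ and~$C$, hence by part~(i) also of $\dl{B}$ and~$\dl{C}$; thus $\phi(\dl{A}) = \phi(\dl{B}) \land \phi(\dl{C})$, which by the induction hypothesis equals $\sdl{(\phi B)} \land \sdl{(\phi C)} = \sdl{(\phi B \lor \phi C)} = \sdl{(\phi A)}$. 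The case $A = B \land C$ is entirely symmetric, with the roles of $\lor$ and~$\land$ exchanged.

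I do not expect any genuine obstacle here: the statement is a routine compatibility lemma, and indeed part~(ii) is exactly the kind of bookkeeping that makes the renaming argument in the proof of \cref{lemma:wk-transform} go through. The one place that warrants a moment's attention is checking, in the inductive step, that the appropriate restrictions of~$\phi$ are indeed renamings of the subformulas and of their negations — but this follows at once from injectivity of~$\phi$ together with part~(i), so each occurrence of $\phi({-})$ in the computation is well posed.
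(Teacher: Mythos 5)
Your proof is correct; the paper simply states \enquote{Left to the reader} for this lemma, and your argument -- part~(i) from the observation that negation preserves name sets, part~(ii) by a routine structural induction using the De Morgan clauses -- is precisely the intended bookkeeping.
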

\begin{proof}
    Left to the reader.
\end{proof}

\begin{lemma}
    \label{lemma:sequent-renaming-coherence}
    Let \(\Gamma\) be a set of named formulas, \(\phi\) a renaming of~\(\Gamma\):
    \begin{enumerate}[(i)]
        \item for all \(A, B \in \Gamma\), \(A = B\) if and only if \(\phi A = \phi B\);
        \item for all \(A, B \in \Gamma\), \(A, B\) share names if and only if so do~\(\phi A, \phi B\);
        \item for all \(\Delta, \Delta' \subseteq \Gamma\), \(\Delta, \Delta'\) share
        names if and only if so do \(\phi\Delta, \phi\Delta'\).
    \end{enumerate}
\end{lemma}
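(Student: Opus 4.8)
The plan is to separate item (i), which concerns the shape of formulas, from items (ii)--(iii), which are really statements about how an injective map interacts with set intersection, and to lean throughout on \cref{propo:formula-renaming-correctness} for the basic facts $A \equiv \phi A$ and $\data{names}(\phi A) = \phi(\data{names}(A))$.

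For (i), the forward implication is trivial, since $\phi({-})$ is a well-defined operation on named formulas. For the converse, first observe that $\phi A = \phi B$ gives $|A| = |\phi A| = |\phi B| = |B|$, so $A \equiv B$ and the two formulas share a common syntax tree, differing only in the names on their atomic leaves. I would then prove, by structural induction on $A$, the slightly more general claim that for named formulas $A, B$ and any injective map $\psi$ whose domain contains $\data{names}(A) \cup \data{names}(B)$, $\psi A = \psi B$ entails $A = B$ (the stated item being the case $\psi = \phi$, with $\data{names}(A), \data{names}(B) \subseteq \data{names}(\Gamma)$). In the atomic case $A = \nm{x}{\alpha}$ and $B = \nm{y}{\alpha}$, and $\nm{\psi x}{\alpha} = \psi A = \psi B = \nm{\psi y}{\alpha}$ forces $\psi x = \psi y$, hence $x = y$ by injectivity; in the binary cases $A = A_1 \diamond A_2$, $B = B_1 \diamond B_2$ with the same connective (since $A \equiv B$), comparing immediate subformulas of $\psi A = \psi B$ gives $\psi A_i = \psi B_i$, and the induction hypothesis yields $A_i = B_i$, whence $A = B$.

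For (ii)--(iii), the key elementary fact is that an injective map $\phi$ satisfies $\phi(S) \cap \phi(T) = \phi(S \cap T)$ for all subsets $S, T$ of its domain, and that $\phi(U)$ is nonempty exactly when $U$ is. Using $\data{names}(\phi A) = \phi(\data{names}(A))$ --- and, for (iii), the immediate extension $\data{names}(\phi\Delta) = \phi(\data{names}(\Delta))$ obtained by distributing $\phi$ over the union $\data{names}(\Delta) = \bigcup_{A \in \Delta}\data{names}(A)$ --- one gets $\data{names}(\phi A) \cap \data{names}(\phi B) = \phi\bigl(\data{names}(A) \cap \data{names}(B)\bigr)$, and likewise for $\Delta, \Delta'$; since all name sets involved are subsets of $\data{names}(\Gamma)$, the domain of $\phi$, the intersection on the left is nonempty precisely when $\data{names}(A) \cap \data{names}(B)$ (resp.\ $\data{names}(\Delta) \cap \data{names}(\Delta')$) is, which is exactly the claim. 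None of this is delicate; the only point worth stating explicitly is that the two ``reflecting'' directions --- recovering $A = B$, respectively the non-disjointness of the originals, from the renamed objects --- both use the injectivity of $\phi$ in an essential way and fail for arbitrary maps, so I expect no real obstacle beyond the bookkeeping of strengthening the induction hypothesis in (i) to arbitrary named formulas and arbitrary injective maps.
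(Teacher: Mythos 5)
Your proposal is correct and follows essentially the same route as the paper: part (i) by structural induction on \(A\) with injectivity of \(\phi\) used at the atomic leaves, and parts (ii)--(iii) by combining injectivity with the identity \(\data{names}(\phi\Delta) = \phi(\data{names}(\Delta))\). Your extra care in first deriving \(A \equiv B\) before comparing subformulas only makes explicit a step the paper's terse chain of equivalences glosses over.
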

\begin{proof}
    Point~(ii) is a special case of point~(iii), which is in turn an immediate consequence
    of the injectivity of~\(\phi\) plus the obvious facts that~\(\data{names}(\phi\Delta) =
    \phi(\data{names}(\Delta))\) and~\(\data{names}(\phi\Delta') = \phi(\data{names}(\Delta'))\).
    For point~(i) we proceed by structural induction on~\(A\):
    \begin{itemize}
        \item \(A\) is atomic: then \(A = B\) if and only if \(A = \nm{x}{\alpha} = B\)
        (resp.~\(A = \nm{x}{\dl{\alpha}} = B\)) for some~\(\alpha \in \mathcal{A}\)
        and~\(x \in \mathcal{N}\), if and only if \(\phi A = \nm{\phi x}{\alpha} = \phi B\)
        (resp.~\(\phi A = \nm{\phi x}{\dl{\alpha}} = \phi B\));

        \item \(A = C \diamond D\) where~\({\diamond} \in \{{\lor},{\land}\}\): then \(A = B\)
        if and only if \(B = C \diamond D\), if and only if \(\phi A = \phi C \lor \phi D =
        \phi B\). \qedhere
    \end{itemize}
\end{proof}

\begin{corollary}
    \label{coro:sequent-renaming-correctness}
    For all named sequents~\(\th \Gamma\) and renamings~\(\phi\) of~\(\th \Gamma\),
    \(\phi(\th \Gamma)\) is also a named sequent with \((\th \Gamma) \equiv \phi(\th \Gamma)\),
    and sharing-free iff so is \(\th \Gamma\).
\end{corollary}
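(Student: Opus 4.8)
The plan is to assemble the statement directly from \cref{propo:formula-renaming-correctness} and \cref{lemma:sequent-renaming-coherence}, which already carry out all the substantive work at the level of individual formulas and of name-sharing between subsets. Recall from \cref{defn:renaming} that $\phi(\th \Gamma)$ denotes the sequent $\th \phi\Gamma$, where $\phi\Gamma$ is the image of $\Gamma$ under $\phi$, and that for each $A \in \Gamma$ the restriction of $\phi$ to $\data{names}(A)$ is a renaming of $A$. So first I would check that $\phi\Gamma$ is a finite set of named formulas: it is the image of the finite set $\Gamma$, hence finite, and each of its elements has the form $\phi A$ for some $A \in \Gamma$, which is a named formula by \cref{propo:formula-renaming-correctness}. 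Thus $\th \phi\Gamma$ is a named sequent in the sense of \cref{defn:named-sequents}.

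Next, for $(\th \Gamma) \equiv \phi(\th \Gamma)$, I would exhibit the evident candidate bijection $\psi \colon \Gamma \to \phi\Gamma$ given by $\psi(A) = \phi A$. It is surjective by the definition of the image, and injective by \cref{lemma:sequent-renaming-coherence}(i), which states that $A = B$ iff $\phi A = \phi B$ for $A, B \in \Gamma$; moreover $A \equiv \phi A$ for every $A \in \Gamma$ by \cref{propo:formula-renaming-correctness}. Hence $\psi$ is a bijection witnessing $(\th \Gamma) \equiv \phi(\th \Gamma)$ per \cref{defn:named-sequents}.

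Finally, for the sharing-freedom equivalence I would unfold \cref{defn:sharing-freedom}: $\th \Gamma$ is sharing-free iff every $A \in \Gamma$ is sharing-free and all $A, B \in \Gamma$ with $A \neq B$ share no names. The first condition transfers by \cref{propo:formula-renaming-correctness}, since $A$ is sharing-free iff $\phi A$ is. For the second, \cref{lemma:sequent-renaming-coherence}(i) gives $A \neq B \iff \phi A \neq \phi B$, and \cref{lemma:sequent-renaming-coherence}(ii) (equivalently the $\Delta = \{A\}$, $\Delta' = \{B\}$ instance of (iii)) gives that $A, B$ share names iff $\phi A, \phi B$ do; since every element of $\phi\Gamma$ is $\phi A$ for a uniquely determined $A \in \Gamma$, the ``pairwise no sharing'' condition transfers in both directions, and we conclude that $\th \Gamma$ is sharing-free iff $\phi(\th \Gamma)$ is.

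I do not expect a genuine obstacle here: the only point that requires a modicum of care is keeping the ``distinct pairs'' quantification in the definition of sharing-freedom correctly aligned on the two sides of the renaming, which is precisely the role of \cref{lemma:sequent-renaming-coherence}(i). Everything else is a routine transfer along $\phi$ using the already-established facts about renamings of formulas.
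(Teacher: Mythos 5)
Your proposal is correct and follows essentially the same route as the paper's own (very terse) proof: both derive the named-sequent and sharing-freedom claims from \cref{propo:formula-renaming-correctness} together with \cref{lemma:sequent-renaming-coherence}~(ii), and both obtain the equivalence \((\th \Gamma) \equiv \phi(\th \Gamma)\) from the bijection induced by \cref{lemma:sequent-renaming-coherence}~(i) combined with \(A \equiv \phi A\). You simply spell out the details (finiteness of the image, injectivity/surjectivity of the induced map, and the alignment of the distinct-pairs quantification) that the paper leaves implicit.
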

\begin{proof}
    The fact that \(\phi(\th \Gamma)\) is a named sequent and sharing-free iff so is \(\th
    \Gamma\) follows from \cref{propo:formula-renaming-correctness,lemma:sequent-renaming-coherence},
    point~(ii). For the equivalence, observe that by \cref{lemma:sequent-renaming-coherence}, point~(i),
    \(\phi\) induces a bijection between \(\Gamma\) and~\(\phi\Gamma\) such that (again by
    \cref{propo:formula-renaming-correctness}) \(A \equiv \phi A\) for all~\(A \in \Gamma\).
\end{proof}




Finally, we can rename whole derivation trees while preserving their structure and
conclusion (obviously up to renaming):

\begin{definition}
    \label{defn:derivation-renaming}
    A \emph{renaming} of a named derivation tree~\(P \in \data{GS4}^\mathcal{N}\) is an injective
    map \(\phi: \data{names}(P) \to \mathcal{N}\).
\end{definition}

The application of~\(\phi\) to~\(P\), noted~\(\phi P\), is the tree obtained by applying~\(\phi\)
recursively to the named sequents labeling each node of~\(P\) (plus the selected formulas in
axiom rules).

\begin{proposition}
    \label{propo:derivation-renaming-correctness}
    Let~\(P \in \data{GS4}^\mathcal{N}\) be any named derivation tree with conclusion~%
    \(\th \Gamma\), \(\phi\) a renaming of~\(P\): then \(\phi P \in \data{GS4}^\mathcal{N}\)
    is also a derivation tree with conclusion~\(\phi(\th \Gamma)\) and height~\(\data{h}(\phi P)
    = \data{h}(P)\), cut-free (resp. superposition-free) iff so is~\(P\).
\end{proposition}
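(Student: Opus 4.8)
The plan is to argue by structural induction on the derivation $P$, verifying at each node that the relabeling induced by $\phi$ sends a valid rule application to a valid rule application with the renamed conclusion, and harvesting the assertions about height and cut-/superposition-freedom along the way. The only structural fact needed throughout is that a restriction of an injective map stays injective: hence if $\phi$ is a renaming of $P$ and $Q$ is an immediate subderivation, then the restriction $\phi_Q = \phi \rst_{\data{names}(Q)}$ is a renaming of $Q$, and this is what allows us to invoke the induction hypothesis on $Q$ and then glue the renamed subderivations back together with the (renamed) bottommost rule.

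First I would dispatch the base case, where $P$ is an axiom application
\[
    \prfbyaxiom{\rl[\{A,\dl{B}\}]{ax}}{\th \Delta, A, \dl{B}}
\]
with $A \equiv B$ (so that $\th \Gamma$ is the sequent $\th \Delta, A, \dl{B}$). Applying \cref{propo:formula-renaming-correctness} to the restrictions of $\phi$ to $\data{names}(A)$ and $\data{names}(B)$ gives $\phi A \equiv A \equiv B \equiv \phi B$; \cref{lemma:renaming-commutes-with-negation} gives $\phi(\dl{B}) = \sdl{(\phi B)}$; and \cref{coro:sequent-renaming-correctness} guarantees that $\phi(\th \Gamma) = {\th \phi\Delta, \phi A, \sdl{(\phi B)}}$ is again a sharing-free named sequent. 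Therefore
\[
    \prfbyaxiom{\rl[\{\phi A, \sdl{(\phi B)}\}]{ax}}{\th \phi\Delta, \phi A, \sdl{(\phi B)}}
\]
is a legitimate axiom application, so $\phi P \in \data{GS4}^\mathcal{N}$ with conclusion $\phi(\th \Gamma)$; it has height $0 = \data{h}(P)$ and is vacuously cut-free and superposition-free, exactly like $P$.

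Next I would handle the inductive step, where $P$ ends with a rule $r \in \{\rl{cut}, \rl{\sqcup}, \rl{\lor}, \rl{\land}\}$ applied to premises $Q_1, \ldots, Q_n$ with conclusions $\th \Gamma_1, \ldots, \th \Gamma_n$. The induction hypothesis applied to each $Q_i$ with $\phi_i = \phi \rst_{\data{names}(Q_i)}$ yields derivations $\phi_i Q_i \in \data{GS4}^\mathcal{N}$ with conclusion $\phi_i(\th \Gamma_i)$, of the same height as $Q_i$, and cut-free (resp.\ superposition-free) iff $Q_i$ is. It then remains only to check that $r$ applied to $\phi_1 Q_1, \ldots, \phi_n Q_n$ is legal and has conclusion $\phi(\th \Gamma)$: for $\rl{\lor}$ and $\rl{\land}$ this follows from $\phi(A \lor B) = \phi A \lor \phi B$, $\phi(A \land B) = \phi A \land \phi B$ plus the sharing-freedom of $\phi(\th \Gamma)$ (\cref{coro:sequent-renaming-correctness}); for $\rl{\sqcup}$ one uses that the two premises share the conclusion $\th \Gamma$ and that $\phi$ agrees on $\data{names}(\Gamma)$ no matter which subderivation it is read off from, so both renamed premises have conclusion $\phi(\th \Gamma)$; for $\rl{cut}$ one additionally invokes $\phi(\dl{A}) = \sdl{(\phi A)}$. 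In each case $\data{h}(\phi P) = 1 + \max_i \data{h}(\phi_i Q_i) = 1 + \max_i \data{h}(Q_i) = \data{h}(P)$ by \cref{defn:derivation-complexity}, and $\phi P$ contains a cut (resp.\ a superposition) exactly when $P$ does, either as its last rule or, by induction hypothesis, inside one of the $\phi_i Q_i$ — which settles the last clause.

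I do not expect a serious obstacle here; the proof is essentially bookkeeping. The one spot that needs genuine care is the axiom case, where one must confirm both that the relabeled principal pair is again of the admissible form $\{A', \dl{B'}\}$ with $A' \equiv B'$ and — more importantly — that the relabeled conclusion remains sharing-free. Both are supplied by the renaming-correctness results collected in \cref{sec:renamings}, which is precisely why those are stated before the present proposition.
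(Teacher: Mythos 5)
Your proposal is correct and follows essentially the same route as the paper, which proves this by a structural induction on \(P\) using \cref{propo:formula-renaming-correctness} and \cref{coro:sequent-renaming-correctness} to check that the shape constraints of each rule are preserved under \(\phi\). You simply spell out the bookkeeping (including the use of \cref{lemma:renaming-commutes-with-negation} for the cut and axiom cases) that the paper leaves implicit.
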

\begin{proof}
    By a simple structural induction on~\(P\), using \cref{propo:formula-renaming-correctness}
    and \cref{coro:sequent-renaming-correctness} to prove that constraints on the shape
    of rules are satisfied by~\(\phi P\) at every step.
\end{proof}

\section{Proofs of lemmas regarding simple axiom graphs}\label{sec:axiom-graph-proofs}

\begin{proof}[Proof of \cref{propo:axiom-graphs-conclusion}]
    By structural induction on~\(P\):
    \begin{itemize}[--]
        \item if \(P\) has the form
        \[
            \prfbyaxiom{\rl[\{A, \dl{B}\}]{ax}}
                {\th \Delta, A, \dl{B}}
        \]
        where \(\Gamma = \Delta \cup \{A, \dl{B}\}\) and \(A \equiv B\) then by
        \cref{defn:wk-id-graphs,defn:axiom-graphs} \(V_{\axg{P}} = \data{names}(
        \Delta) \cup \data{names}(\{A, \dl{B}\}) = \data{names}(\Gamma)\);

        \item if \(P\) has the form
        \[
            \prftree[r]{\rl{cut}}
                {\prfsummary[\(Q\)]{\th \Gamma, A \vphantom{\dl{A}}}}
                {\prfsummary[\(R\)]{\th \Gamma, \dl{A}}}
                {\th \Gamma}
        \]
        then by induction hypothesis and \cref{defn:name-graph-composition}
        \[
            V_{\axg{P}} = (\data{names}(\Gamma, A) \cup \data{names}(\Gamma, \dl{A})) \setminus \data{names}(A);
        \]
        the statement follows from the fact that \(\data{names}(A) = \data{names}(\dl{A})\);
    \end{itemize}
    all other cases follow immediately from the induction hypothesis.
\end{proof}

\begin{proof}[Proof of \cref{lemma:axiom-graph-disjunction-transform}]
    By structural induction on~\(P\). For the sake of brevity we let~\(P' = \data{inv}(P,
    A \lor B)\) and omit at each step the shape of~\(P, P'\): the interested reader may
    inspect the corresponding steps in the proof of \cref{lemma:disjunction-transform}
    (\cref{sec:inversion-proofs}). Because \(\data{names}(\Gamma, A, B) = \data{names}(
    \Gamma, A \lor B)\), we have in any case \(V_{\axg{P'}} = V_{\axg{P}}\) and it is
    enough to show that \(E_{\axg{P'}} = E_{\axg{P}}\):
    \begin{itemize}[--]
        \item if \(P\) ends with an axiom rule application of the kind \(\data{ax}_{\{\dl{C}
        \land \dl{D}, A \lor B\}}\) with conclusion \(\th \Delta, \dl{C} \land \dl{D}, A \lor B\)
        where \(\Gamma = \Delta \cup \{\dl{C} \land \dl{D}\}\), then by \cref{defn:axiom-graphs}
        \[
            \axg{P'} = (\data{Wk}_{\Delta, B} \sqcup \data{Id}_{\{\dl{C}, A\}})
                    \sqcup (\data{Wk}_{\Delta, A} \sqcup \data{Id}_{\{\dl{D}, B\}})
        \]
        and by \cref{defn:wk-id-graphs} we have
        \begin{align*}
            \axg{P'} &= \data{Wk}_{\Delta} \sqcup \data{Id}_{\{\dl{C}, A\}}
                        \sqcup \data{Id}_{\{\dl{D}, B\}}\\
                     &= \data{Wk}_{\Delta} \sqcup \data{Id}_{\{\dl{C} \land \dl{D}, A \lor B\}}
                        = \axg{P};
        \end{align*}

        \item if \(P\) ends with an axiom rule application of the kind \(\data{ax}_{\{C,\dl{D}\}}\)
        with conclusion \(\th \Delta, A \lor B, C, \dl{D}\) where \(\Gamma = \Delta \cup
        \{C, \dl{D}\}\), then by \cref{defn:axiom-graphs,defn:wk-id-graphs}
        \[
            \axg{P'} = \data{Wk}_{\Delta, A, B} \sqcup \data{Id}_{\{C, \dl{D}\}}
                = \data{Wk}_{\Delta, A \lor B} \sqcup \data{Id}_{\{C, \dl{D}\}} = \axg{P};
        \]

        \item if \(P\) ends with a \rl{\lor}-rule application introducing \(A \lor B\)
        with premiss subtree~\(Q\), then \(P' = Q\) and \(\axg{P'} = \axg{Q} = \axg{P}\);

        \item if \(P\) ends with a \rl{cut}-rule application on formulas \(C, \dl{C}\),
        with premiss subderivations \(Q, R\), we have by induction hypothesis \(\axg{\data{inv}(
        Q, A \lor B)} = \axg{Q}\) and~\(\axg{\data{inv}(R, A \lor B)} = \axg{R}\), hence
        \[
            \axg{P'} = \axg{\data{inv}(Q, A \lor B)} \odot_C \axg{\data{inv}(R, A \lor B)}
                = \axg{Q} \odot_C \axg{R} = \axg{P};
        \]

        \item if \(P\) ends with a \rl{sum}-, \rl{\lor}- or \rl{\land}-rule
        application that does not introduce \(A \lor B\), with premiss subtrees~\(Q_1, \ldots, Q_n\),
        we have \(\axg{\data{inv}(Q_i, A \lor B)} = \axg{Q_i}\) by induction hypothesis
        for all~\(1 \leq i \leq n\): then
        \begin{multline*}
            \axg{P'} = \axg{\data{inv}(Q_1, A \lor B)} \sqcup \ldots \sqcup \axg{\data{inv}(Q_n, A \lor B)}\\
                = \axg{Q_1} \sqcup \ldots \sqcup \axg{Q_n} = \axg{P}.
        \end{multline*}\qedhere
    \end{itemize}
\end{proof}

\begin{proof}[Proof of \cref{lemma:axiom-graph-conjunction-transforms}]
    We only prove the first part of the statement, i.e.\ that
    \[
        \axg{\data{inv_l}(P, A \land B)} \sqsubseteq \axg{P} \rst_{\Gamma, A};
    \]
    the argument for the second part is analogous. Let \(P' = \data{inv_l}(P, A \land B)\),
    As in the previous proof, we do not recall the shape of \(P, P'\); the reader is invited
    to check our statements against the proofs of \cref{lemma:left-conjunction-transform,%
    lemma:right-conjunction-transform} (\cref{sec:inversion-proofs}). Observe first that by
    \cref{propo:axiom-graphs-conclusion} we have
    \[
        V_{\axg{P'}} = \data{names}(\Gamma, A) = V_{\axg{P} \rst_{\Gamma, A}}.
    \]
    We then need to prove only that \(E_{\axg{P'}} \subseteq E_{\axg{P}}\), as this
    will be enough to guarantee that \(E_{\axg{P'}} \subseteq E_{\axg{P} \rst_{\Gamma, A}}\).
    We proceed by structural induction on~\(P\):
    \begin{itemize}[--]
        \item if \(P\) ends with an axiom rule application of the kind \(\data{ax}_{\{\dl{C}
        \lor \dl{D}, A \land B\}}\) with conclusion \(\th \Delta, \dl{C} \lor \dl{D}, A \land B\)
        where \(\Gamma = \Delta \cup \{\dl{C} \lor \dl{D}\}\), then by \cref{defn:axiom-graphs,%
        defn:wk-id-graphs}
        \begin{multline*}
            \axg{P'} = \data{Wk}_{\Delta, \dl{D}} \sqcup \data{Id}_{\{\dl{C}, A\}}\\
                \sqsubseteq \data{Wk}_\Delta \sqcup \data{Id}_{\{\dl{C}, A\}} \sqcup \data{Id}_{\{\dl{D}, B\}}
                    = \data{Wk}_\Delta \sqcup \data{Id}_{\{\dl{C} \lor \dl{D}, A \land B\}} = \axg{P};
        \end{multline*}

        \item if \(P\) ends with an axiom rule application of the kind \(\data{ax}_{\{C,\dl{D}\}}\)
        with conclusion \(\th \Delta, A \land B, C, \dl{D}\) where \(\Gamma = \Delta \cup
        \{C, \dl{D}\}\), then by \cref{defn:axiom-graphs,defn:wk-id-graphs}
        \[
            \axg{P'} = \data{Wk}_{\Delta, A} \sqcup \data{Id}_{\{C, \dl{D}\}}
                \sqsubseteq \data{Wk}_{\Delta, A \land B} \sqcup \data{Id}_{\{C, \dl{D}\}} = \axg{P};
        \]

        \item if \(P\) ends with a \rl{\land}-rule application introducing \(A \land B\)
        whose premiss subtrees~\(Q, R\) have conclusions respectively \(\th \Gamma, A\)
        and~\(\th \Gamma, B\), then \(P' = Q\) and the conclusion is immediate;

        \item if \(P\) ends with a \rl{cut}-rule application on formulas \(C, \dl{C}\),
        with premiss subderivations \(Q, R\), we have by induction hypothesis \(\axg{
        \data{inv_l}(Q, A \land B)} \sqsubseteq \axg{Q}\) and~\(\axg{\data{inv_l}(R, A \land B)}
        \sqsubseteq \axg{R}\), with
        \[
            \axg{P'} = \axg{\data{inv_l}(Q, A \land B)} \odot_C \axg{\data{inv_l}(R, A \land B)}.
        \]
        By \cref{defn:name-graph-composition}, \(xy \in E_{\axg{P'}}\) if and only if there is
        an alternating path (\cref{defn:simple-alternating-path}) connecting~\(x\) with~\(y\)
        between \(\axg{\data{inv_l}(Q, A \land B)}\) and \(\axg{\data{inv_l}(R, A \land B)}\)
        on inteface~\(\data{names}(C)\). It is easy to check that the same sequence of vertices
        is an alternating path between \(\axg{Q}\) and~\(\axg{R}\) through interface~\(\data{names}(C)\),
        and therefore~\(xy \in E_{\axg{Q} \odot_C \axg{R}} = E_{\axg{P}}\);

        \item if \(P\) ends with a \rl{sum}-, \rl{\lor}- or \rl{\land}-rule
        application that does not introduce \(A \land B\), we conclude by induction hypothesis
        and \cref{propo:restriction-distributes-over-union}. \qedhere
    \end{itemize}
\end{proof}

\begin{proof}[Proof of \cref{propo:axiom-graphs-isl-cut-free-invariance}]
    Immediate by \cref{lemma:axiom-graph-disjunction-transform} when \(A\) is a
    disjunction; for conjunctions we need to show that
    \[
        \axg{P} = \axg{\data{inv_l}(P, A)} \sqcup \axg{\data{inv_r}(P, A)},
    \]
    whenever \(P\) is cut-free. In particular, it suffices to show that
    \[
        E_{\axg{P}} = E_{\axg{\data{inv_l}(P, A)}} \cup E_{\axg{\data{inv_r}(P, A)}}.
    \]
    By structural induction on~\(P\), assuming \(A = A_1 \land A_2\):
    \begin{itemize}
        \item if \(P\) ends with an axiom rule application of the kind \(\data{ax}_{\{\dl{B}_1
        \lor \dl{B}_2, A\}}\) with conclusion \(\th \Delta, \dl{B}_1 \lor \dl{B}_2, A\)
        where \(\Gamma = \Delta \cup \{\dl{B}_1 \lor \dl{B}_2\}\), then
        \begin{gather*}
            \axg{P} = \data{Wk}_\Delta \sqcup \data{Id}_{\{\dl{B}_1 \lor \dl{B}_2, A\}}
                = \data{Wk}_\Delta \sqcup \data{Id}_{\{\dl{B}_1, A_1\}} \sqcup \data{Id}_{\{\dl{B}_2, A_2\}},\\
            \axg{\data{inv_l}(P, A)} = \data{Wk}_{\Delta, \dl{B}_2} \sqcup \data{Id}_{\{\dl{B}_1, A_1\}},\\
            \axg{\data{inv_r}(P, A)} = \data{Wk}_{\Delta, \dl{B}_1} \sqcup \data{Id}_{\{\dl{B}_2, A_2\}}.
        \end{gather*}
        Because the weakening graphs have no edges, we can conclude that
        \[
            E_{\axg{P}} = E_{\data{Id}_{\{\dl{B}_1, A_1\}}} \sqcup E_{\data{Id}_{\{\dl{B}_2, A_2\}}}
                = E_{\axg{\data{inv_l}(P, A)}} \sqcup E_{\axg{\data{inv_r}(P, A)}}.
        \]

        \item if \(P\) ends with an axiom rule application of the kind \(\data{ax}_{\{B,\dl{C}\}}\)
        with conclusion \(\th \Delta, A, B, \dl{C}\) where \(\Gamma = \Delta \cup \{B, \dl{C}\}\),
        then we have
        \begin{gather*}
            \axg{P} = \data{Wk}_\Delta \sqcup \data{Id}_{\{B, \dl{C}\}},\\
            \axg{\data{inv_l}(P, A)} = \data{Wk}_{\Delta, A_1} \sqcup \data{Id}_{\{B, \dl{C}\}},\\
            \axg{\data{inv_r}(P, A)} = \data{Wk}_{\Delta, A_2} \sqcup \data{Id}_{\{B, \dl{C}\}}.
        \end{gather*}
        Therefore, because the weakening graphs have no edges,
        \[
            E_{\axg{P}} = E_{\data{Id}_{\{B, \dl{C}\}}} = E_{\axg{\data{inv_l}(P, A)}} = E_{\axg{\data{inv_r}(P, A)}},
        \]
        from which the conclusion follows.

        \item if \(P\) ends with a \rl{\land}-rule application introducing \(A\),
        the conclusion is trivial as \(P = \data{isl}(P, A)\);

        \item if \(P\) ends with a \rl{\sqcup}-, \rl{\lor}- or \rl{\land}-rule
        application that does not introduce \(A\), with premiss subtrees~\(Q_1, \ldots, Q_n\),
        we have by induction hypothesis
        \[
            E_{\axg{Q_i}} = E_{\axg{\data{inv_l}(Q_i, A)}} \cup E_{\axg{\data{inv_r}(Q_i, A)}}.
        \]
        for all \(1 \leq i \leq n\). The result then follows from the fact that
        \[
            E_{\axg{P}} = \bigcup_{i = 1}^n E_{\axg{Q_i}}
                = \left(\bigcup_{i = 1}^n E_{\axg{\data{inv_l}(Q_i, A)}}\right)
                    \cup \left(\bigcup_{i = 1}^n E_{\axg{\data{inv_r}(Q_i, A)}}\right). \qedhere
        \]
    \end{itemize}
\end{proof}

\section{Proofs of lemmas regarding branch-labeled axiom graphs}\label{sec:bl-axiom-graph-proofs}

\subsection{Properties of branch sets}

\begin{lemma}
    \label{lemma:branches-in-names}
    For all named formulas~\(A\), if \(X \in \data{Br}(A)\) then \(X \subseteq \data{names}(A)\).
\end{lemma}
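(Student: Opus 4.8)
The plan is a routine structural induction on the named formula~$A$, driven directly by the inductive clauses of \cref{defn:formula-branches} together with the obvious inductive definition of~$\data{names}$. I would state the induction hypothesis as: for every proper subformula~$A'$ of~$A$ and every $X' \in \data{Br}(A')$ we have $X' \subseteq \data{names}(A')$.

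First I would handle the base case $A = \nm{x}{\alpha}$: here $\data{Br}(A) = \{\{x\}\}$, so the only candidate is $X = \{x\}$, which equals $\data{names}(A)$, hence trivially $X \subseteq \data{names}(A)$. Next, the disjunction case $A = B \lor C$: by definition $\data{Br}(B \lor C) = \{\, Y \cup Z \mid Y \in \data{Br}(B),\ Z \in \data{Br}(C)\,\}$, so any $X \in \data{Br}(A)$ splits as $X = Y \cup Z$ with $Y \in \data{Br}(B)$, $Z \in \data{Br}(C)$; the induction hypothesis gives $Y \subseteq \data{names}(B)$ and $Z \subseteq \data{names}(C)$, whence $X = Y \cup Z \subseteq \data{names}(B) \cup \data{names}(C) = \data{names}(B \lor C)$. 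Finally, the conjunction case $A = B \land C$: since $\data{Br}(B \land C) = \data{Br}(B) \cup \data{Br}(C)$, any $X \in \data{Br}(A)$ lies in $\data{Br}(B)$ or in $\data{Br}(C)$; in either subcase the induction hypothesis yields $X \subseteq \data{names}(B)$ or $X \subseteq \data{names}(C)$, and since both are contained in $\data{names}(B \land C)$ we are done.

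There is no real obstacle here: the argument is purely a matching of the recursive clauses of $\data{Br}$ against those of $\data{names}$, with unions handled by monotonicity of $\subseteq$ under $\cup$. If anything, the only point to be careful about is simply invoking the correct clause of \cref{defn:formula-branches} in each case (union in the disjunction clause, plain union of branch sets in the conjunction clause); no sharing-freedom hypothesis is needed for this particular statement, although $A$ is of course assumed sharing-free as per our global convention.
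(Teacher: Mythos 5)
Your proof is correct and is precisely the routine structural induction the paper has in mind — its own proof just says ``Immediate by induction on~\(A\)'', and your three cases fill in exactly the intended details by matching the clauses of \cref{defn:formula-branches} against those of \(\data{names}\).
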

\begin{proof}
    Immediate by induction on~\(A\).
\end{proof}

\begin{proof}[Proof of \cref{lemma:sequent-branches-alt}]
    Observe first that \(\data{names}(\Gamma) = \bigcup_{A \in \Gamma} \data{names}(A)\),
    and assume \(X \in \data{Br}(\Gamma)\). By definition \(X \subseteq \data{names}(\Gamma)\),
    therefore \(X = \bigcup_{A \in \Gamma} (X \cap \data{names}(A))\); we know also that
    \((X \cap \data{names}(A)) \in \data{Br}(A)\) for all~\(A \in \Gamma\), hence we can
    choose~\(X_A = X \cap \data{names}(A)\).

    Conversely, assume \(X = \bigcup_{A \in \Gamma} X_A\) with \(X_A \in \data{Br}(A)\)
    for all~\(A \in \Gamma\). By \cref{lemma:branches-in-names}, \(X \subseteq \data{names}(A)\);
    moreover, because \(\Gamma\) is sharing-free, its elements have pairwise disjoint name
    sets, hence the \(X_A\) in particular are pairwise disjoint and \(X \cap \data{names}(A)
    = X_A \in \data{Br}(A)\), as desired.
\end{proof}

\begin{proof}[Proof of \cref{lemma:branches-of-union}]
    Assume \(X \in \data{Br}(\Gamma \cup \Delta)\): by \cref{lemma:sequent-branches-alt} there
    is an indexed family \((X_A)_{A \in \Gamma \cup \Delta}\) of name sets such that \(X =
    \bigcup_{A \in \Gamma \cup \Delta} X_A\) and \(X_A \in \data{Br}(A)\) for all~\(A \in \Gamma
    \cup \Delta\). Let then \(Y = \bigcup_{A \in \Gamma} X_A\) and~\(Z = \bigcup_{A \in \Delta}
    X_A\): again by \cref{lemma:sequent-branches-alt}, \(Y \in \data{Br}(\Gamma)\) and~\(Z
    \in \data{Br}(\Delta)\), and obviously \(X = Y \cup Z\). The argument for the converse is
    analogous, reversing the direction of all implications.
\end{proof}

\begin{proof}[Proof of \cref{propo:sequent-branches-decomposition}]
    Fact~(i) is left to the reader. For fact~(ii), observe that \(\data{Br}(\{A, B\}) =
    \data{Br}(\{A\} \cup \{B\}) = \data{Br}(A \lor B)\) as an immediate consequence of
    \cref{lemma:branches-of-union}: then we have
    \begin{align*}
        \data{Br}(\Gamma, A \lor B) &= \{ X \cup Y \mid X \in \data{Br}(\Gamma), Y \in \data{Br}(A \lor B) \}\\
            &= \{ X \cup Y \mid X \in \data{Br}(\Gamma), Y \in \data{Br}(\{A, B\}) \}\\
            &= \data{Br}(\Gamma, A, B).
    \end{align*}
    For fact~(iii), recall that \(\data{Br}(A \land B) = \data{Br}(A) \cup \data{Br}(B)\);
    then, again by \cref{lemma:branches-of-union},
    \begin{align*}
        \data{Br}(\Gamma, A \land B) &= \{ X \cup Y \mid X \in \data{Br}(\Gamma), Y \in \data{Br}(A \land B) \}\\
            &= \{ X \cup Y \mid X \in \data{Br}(\Gamma), Y \in \data{Br}(A) \cup \data{Br}(B) \}\\
            &= \{ X \cup Y \mid X \in \data{Br}(\Gamma), Y \in \data{Br}(A) \}\\
            & \qquad\cup \{ X \cup Y \mid X \in \data{Br}(\Gamma), Y \in \data{Br}(B) \}\\
            &= \data{Br}(\Gamma, A) \cup \data{Br}(\Gamma, B).
    \end{align*}
    For fact~(iv), observe that (by an easy inductive argument) all branches of~\(A\)
    (resp. of~\(B\)) must be non-empty, and remember that \(\Gamma, A, B\) have disjoint
    name sets by hypothesis. Let then \(X \in \data{Br}(\Gamma, A) \cap \data{Br}(\Gamma, B)\):
    by \cref{lemma:branches-of-union} there is a \emph{non-empty} \(Y \in \data{Br}(A)\)
    such that \(Y \subseteq X\). By construction we have also \(X \subseteq \data{names}(
    \Gamma, B)\), but then there should be some name \(z \in Y \subseteq \data{names}(A)\)
    such that \(z \in \data{names}(\Gamma, B)\), which is impossible.
\end{proof}

\subsection{Properties of branch-labeled axiom graphs}

\begin{lemma}
    \label{lemma:bl-wk-branches}
    For all bl-graphs~\(G\),
    \[
        \data{Br}(\blwk{\Gamma}(G)) = \{ X \cup Y \mid X \in \data{Br}(G), Y \in \data{Br}(\Gamma) \}.
    \]
\end{lemma}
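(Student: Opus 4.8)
The plan is to prove the equality by simply unfolding the definition of $\data{Br}(\cdot)$ for bl-graphs — namely $\data{Br}(H) = \pi_{\data{r}}({\adin_H})$ — at the bl-graph $H = \blwk{\Gamma}(G)$, and then reading off a routine double inclusion. First I would recall from \cref{defn:bl-wk-id} that the edge–branch relation of $\blwk{\Gamma}(G)$ is by construction $\{ (e, X \cup Y) \mid e \adin_G X,\ Y \in \data{Br}(\Gamma) \}$; along the way one notes that $e \adin_G X$ forces $e \subseteq X \subseteq X \cup Y$ by condition~(\ref{eqn:edge-branch-inclusion}) for $G$, so this relation again satisfies~(\ref{eqn:edge-branch-inclusion}) and $\blwk{\Gamma}(G)$ is a legitimate bl-graph, making $\data{Br}(\blwk{\Gamma}(G))$ well-defined. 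Applying $\pi_{\data{r}}$ then yields directly: $Z \in \data{Br}(\blwk{\Gamma}(G))$ if and only if there exist an unordered pair $e$ and sets $X, Y$ with $e \adin_G X$, $Y \in \data{Br}(\Gamma)$, and $Z = X \cup Y$.

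For the inclusion from left to right, given such $Z = X \cup Y$, the existence of an $e$ with $e \adin_G X$ is, by definition of $\data{Br}(G) = \pi_{\data{r}}({\adin_G})$, exactly the statement $X \in \data{Br}(G)$; hence $Z \in \{ X \cup Y \mid X \in \data{Br}(G),\ Y \in \data{Br}(\Gamma) \}$. For the reverse inclusion, take $X \in \data{Br}(G)$ and $Y \in \data{Br}(\Gamma)$; unfolding $X \in \data{Br}(G)$ gives some $e$ with $e \adin_G X$, whence $(e, X \cup Y)$ belongs to the edge–branch relation of $\blwk{\Gamma}(G)$, so $X \cup Y \in \data{Br}(\blwk{\Gamma}(G))$. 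This closes both inclusions.

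I do not expect any genuine obstacle here: the statement is bookkeeping, essentially the observation that $\pi_{\data{r}}$ distributes over the defining description of $\blwk{\Gamma}(G)$'s relation. The only points requiring a modicum of care are handling the existential quantifier over the witnessing edge $e$ consistently in both directions, and — if one wishes to be thorough — the degenerate case $\Gamma = \emptyset$, where $\data{Br}(\emptyset) = \{\emptyset\}$ and the formula still holds since $X \cup \emptyset = X$.
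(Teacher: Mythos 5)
Your proof is correct and matches the paper's treatment: the paper dismisses this lemma as ``Immediate, see \cref{defn:bl-wk-id}'', and your double inclusion is exactly the unfolding of that definition together with \(\data{Br}(H) = \pi_{\data{r}}({\adin_H})\). The side remarks (condition~(\(\star\)) for \(\blwk{\Gamma}(G)\) and the degenerate case \(\data{Br}(\emptyset) = \{\emptyset\}\)) are accurate but not needed.
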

\begin{proof}
    Immediate, see \cref{defn:bl-wk-id}.
\end{proof}

\begin{lemma}
    \label{lemma:bl-id-branches}
    For any pair~\(A, \dl{B}\) of disjoint and sharing-free named formulas such that
    \(A \equiv B\), \(\data{Br}(\blid{\{A, \dl{B}\}}) = \data{Br}(A, \dl{B})\).
\end{lemma}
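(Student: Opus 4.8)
The plan is to prove the identity by structural induction on the named formula $A$ (equivalently, on its height, on which the definition of $\blid{-}$ recurses). I will use throughout the trivial observation that, for bl-graphs $G, H$, one has $\data{Br}(G \sqcup H) = \pi_{\data{r}}({\adin_G} \cup {\adin_H}) = \data{Br}(G) \cup \data{Br}(H)$, which is immediate from \cref{defn:bl-graphs} and the definition of the bl-graph union.

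In the base case $A = \nm{x}{\alpha}$ and $\dl{B} = \nm{y}{\dl{\alpha}}$ for some $\alpha$ and distinct $x, y$. On one side, $\data{Br}(\blid{\{\nm{x}{\alpha}, \nm{y}{\dl{\alpha}}\}}) = \{\{x,y\}\}$ directly from \cref{defn:bl-wk-id}. On the other side, since $A$ and $\dl{B}$ share no names, \cref{lemma:branches-of-union} gives $\data{Br}(\nm{x}{\alpha}, \nm{y}{\dl{\alpha}}) = \{X \cup Y \mid X \in \data{Br}(\nm{x}{\alpha}),\, Y \in \data{Br}(\nm{y}{\dl{\alpha}})\}$, which equals $\{\{x\} \cup \{y\}\} = \{\{x,y\}\}$ by \cref{defn:formula-branches}.

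For the inductive step, $A$ is non-atomic, so exactly one of the two formulas in the set $\{A, \dl{B}\}$ is a disjunction; reusing the letters of the recursive clause of \cref{defn:bl-wk-id} (which applies to the set $\{A, \dl{B}\}$ irrespective of orientation), write it as $A_1 \lor A_2$ and the other slot as $\dl{B}_1 \land \dl{B}_2$, with $A_i \equiv B_i$. The standing disjointness and sharing-freedom hypotheses on $A, \dl{B}$ pass to $A_1, A_2, \dl{B}_1, \dl{B}_2$ pairwise, so each $\blid{\{A_i, \dl{B}_i\}}$ is well defined and the induction hypothesis applies (the subformulas have strictly smaller height). Expanding with \cref{defn:bl-wk-id} and the union observation,
\[
  \data{Br}(\blid{\{A_1 \lor A_2, \dl{B}_1 \land \dl{B}_2\}}) = \data{Br}\bigl(\blwk{A_2}(\blid{\{A_1, \dl{B}_1\}})\bigr) \cup \data{Br}\bigl(\blwk{A_1}(\blid{\{A_2, \dl{B}_2\}})\bigr).
\]
Applying \cref{lemma:bl-wk-branches}, then the induction hypothesis $\data{Br}(\blid{\{A_i, \dl{B}_i\}}) = \data{Br}(A_i, \dl{B}_i)$, then \cref{lemma:branches-of-union} (all name sets involved are pairwise disjoint), the first term rewrites to $\data{Br}(A_1, \dl{B}_1, A_2) = \data{Br}(A_1, A_2, \dl{B}_1)$ and the second to $\data{Br}(A_1, A_2, \dl{B}_2)$. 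Finally \cref{propo:sequent-branches-decomposition}(ii) gives $\data{Br}(A_1 \lor A_2, \dl{B}_1 \land \dl{B}_2) = \data{Br}(A_1, A_2, \dl{B}_1 \land \dl{B}_2)$, and part (iii) gives $\data{Br}(A_1, A_2, \dl{B}_1 \land \dl{B}_2) = \data{Br}(A_1, A_2, \dl{B}_1) \cup \data{Br}(A_1, A_2, \dl{B}_2)$, which is exactly the union computed above; since $\{A_1 \lor A_2, \dl{B}_1 \land \dl{B}_2\} = \{A, \dl{B}\}$, the step is complete.

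I expect no genuine obstacle here: the statement is essentially the assertion that $\data{Br}(-)$ commutes with the $\blwk{\Gamma}(-)$ and $\sqcup$ combinators used to build $\blid{-}$. The only point requiring care is bookkeeping — tracking which of $A$, $\dl{B}$ plays the role of the disjunction in the recursive clause of \cref{defn:bl-wk-id} (both orientations being handled uniformly because that clause operates on a set), and checking at each invocation of \cref{lemma:branches-of-union} and \cref{propo:sequent-branches-decomposition} that the relevant formulas share no names, all of which follows at once from the hypothesis that $A$ and $\dl{B}$ are disjoint and sharing-free.
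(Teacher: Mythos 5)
Your proof is correct and follows essentially the same route as the paper's: induction on the height of \(A\), unfolding \(\blid{\{A,\dl{B}\}}\) via \cref{defn:bl-wk-id}, pushing \(\data{Br}\) through \(\sqcup\) and \(\blwk{-}\) with \cref{lemma:bl-wk-branches} and \cref{lemma:branches-of-union}, and reassembling with \cref{propo:sequent-branches-decomposition}. The only (harmless) difference is cosmetic: you justify the base case via \cref{lemma:branches-of-union} where the paper cites \cref{propo:sequent-branches-decomposition}(i), and you make explicit the orientation-independence of the recursive clause that the paper leaves implicit.
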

\begin{proof}
    By induction on the height of~\(A\):
    \begin{itemize}
        \item if \(A\) is atomic, then \(\data{Br}(A, \dl{B}) = \{\data{names}(A, \dl{B})\}
        = \data{Br}(\blid{\{A, \dl{B}\}})\) by \cref{propo:sequent-branches-decomposition,%
        defn:bl-wk-id};

        \item if \(A = A_1 \lor A_2\) and~\(\dl{B} = \dl{B}_1 \land \dl{B}_2\), then
        by \cref{defn:bl-wk-id}
        \begin{align*}
            \data{Br}(\blid{\{A, \dl{B}\}})
                &= \data{Br}(\blwk{A_2}(\blid{\{A_1, \dl{B}_1\}}) \sqcup \blwk{A_1}(\blid{\{A_2, \dl{B}_2\}})) \\
                &= \data{Br}(\blwk{A_2}(\blid{\{A_1, \dl{B}_1\}})) \cup \data{Br}(\blwk{A_1}(\blid{\{A_2, \dl{B}_2\}})).
        \end{align*}
        By induction hypothesis, \(\data{Br}(\blid{\{A_1, \dl{B}_1\}}) = \data{Br}(A_1, \dl{B}_1)\)
        and \(\data{Br}(\blid{\{A_2, \dl{B}_2\}}) = \data{Br}(A_2, \dl{B}_2)\);
        by \cref{lemma:bl-wk-branches,lemma:branches-of-union}
        \begin{gather*}
            \begin{multlined}
                \data{Br}(\blwk{A_2}(\blid{\{A_1, \dl{B}_1\}}))\\
                    = \{ X \cup Y \mid X \in \data{Br}(A_1, \dl{B}_1), Y \in \data{Br}(A_2) \}
                    = \data{Br}(A_1, A_2, \dl{B}_1),
            \end{multlined}\\
            \begin{multlined}
                \data{Br}(\blwk{A_1}(\blid{\{A_2, \dl{B}_2\}}))\\
                    = \{ X \cup Y \mid X \in \data{Br}(A_2, \dl{B}_2), Y \in \data{Br}(A_1) \}
                    = \data{Br}(A_1, A_2, \dl{B}_2).
            \end{multlined}
        \end{gather*}
        Then, again by \cref{propo:sequent-branches-decomposition},
        \begin{align*}
            \data{Br}(\blid{\{A, \dl{B}\}})
                &= \data{Br}(A_1, A_2, \dl{B}_1) \cup \data{Br}(A_1, A_2, \dl{B}_2)\\
                &= \data{Br}(A_1, A_2, \dl{B}_1 \land \dl{B}_2)\\
                &= \data{Br}(A_1 \lor A_2, \dl{B}_1 \land \dl{B}_2)
                    = \data{Br}(A, \dl{B}). \qedhere
        \end{align*}
    \end{itemize}
\end{proof}

\begin{corollary}
    \label{coro:bl-axiom-branches}
    Let \(\Gamma\) be any sharing-free set of named formulas, \(A, \dl{B}\) a pair of
    disjoint and sharing free named formulas that share no names with \(\Gamma\) and
    such that \(A \equiv B\): then
    \[
        \data{Br}(\blwk{\Gamma}(\blid{\{A, \dl{B}\}})) = \data{Br}(\Gamma, A, \dl{B}).
    \]
\end{corollary}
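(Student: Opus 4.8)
The plan is to obtain the identity by chaining together the three preceding lemmas on branch sets, so the work is almost entirely bookkeeping of side conditions. First I would note that the hypotheses make $\{A, \dl{B}\}$ a sharing-free set of named formulas (each of $A$ and $\dl{B}$ is sharing-free, and they are disjoint), and that this set shares no names with $\Gamma$, since both $A$ and $\dl{B}$ are assumed to do so. These are exactly the conditions required to apply \cref{lemma:bl-id-branches} to the pair $A, \dl{B}$ and \cref{lemma:branches-of-union} to the sets $\Gamma$ and $\{A, \dl{B}\}$.

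The computation then goes as follows. By \cref{lemma:bl-wk-branches},
\[
    \data{Br}(\blwk{\Gamma}(\blid{\{A, \dl{B}\}})) = \{ X \cup Y \mid X \in \data{Br}(\blid{\{A, \dl{B}\}}),\ Y \in \data{Br}(\Gamma) \}.
\]
Applying \cref{lemma:bl-id-branches} we may replace $\data{Br}(\blid{\{A, \dl{B}\}})$ by $\data{Br}(A, \dl{B})$, i.e.\ $\data{Br}(\{A, \dl{B}\})$, so the right-hand side becomes $\{ X \cup Y \mid X \in \data{Br}(\{A, \dl{B}\}),\ Y \in \data{Br}(\Gamma) \}$. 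Finally, \cref{lemma:branches-of-union} applied to $\Gamma$ and $\{A, \dl{B}\}$ (which share no names) identifies this set with $\data{Br}(\Gamma \cup \{A, \dl{B}\}) = \data{Br}(\Gamma, A, \dl{B})$, using the notational abuse established after \cref{coro:branches-of-difference}. This closes the chain of equalities.

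There is no real obstacle here: the argument is a direct substitution of equals for equals, and the only point needing attention is the verification that the three lemmas are applicable, namely the sharing-freeness of $\{A, \dl{B}\}$ and its name-disjointness from $\Gamma$, both of which are immediate from the statement's hypotheses.
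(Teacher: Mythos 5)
Your proof is correct and follows exactly the route the paper intends: the corollary is stated without an explicit proof precisely because it is the immediate chaining of \cref{lemma:bl-wk-branches}, \cref{lemma:bl-id-branches} and \cref{lemma:branches-of-union} that you carry out, with the side conditions (sharing-freedom of \(\{A,\dl{B}\}\) and name-disjointness from \(\Gamma\)) checked as you do.
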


\begin{proof}[Proof of \cref{propo:bl-axiom-graph-names-branches}]
    The proof that \(V_{\blaxg{P}} = \data{names}(\Gamma)\) is analogous to that of
    \cref{propo:axiom-graphs-conclusion} and left to the reader. We prove by structural
    induction on~\(P\) that \(\data{Br}(\blaxg{P}) \subseteq \data{Br}(\Gamma)\):
    \begin{itemize}
        \item if \(P\) is an axiom rule application, the conclusion follows immediately
        from \cref{coro:bl-axiom-branches};

        \item if \(P\) has the form
        \[
            \prftree[r]{\rl{cut}}
                {\prfsummary[\(Q\)]{\th \Gamma, A \vphantom{\dl{A}}}}
                {\prfsummary[\(R\)]{\th \Gamma, \dl{A}}}
                {\th \Gamma}
        \]
        then by \cref{defn:bl-composition}, if \(X \in \data{Br}(\blaxg{Q} \odot_A \blaxg{R})\)
        there is either \(Y \in \data{Br}(\blaxg{Q})\) or~\(Y \in \data{Br}(\blaxg{R})\)
        such that \(X = Y \setminus \data{names}(A)\). By induction hypothesis we have either
        \(Y \in \data{Br}(\Gamma, A)\) or~\(Y \in \data{Br}(\Gamma, \dl{A})\), hence
        \(X \in \data{Br}(\Gamma)\) by \cref{coro:branches-of-difference};
    \end{itemize}
    all other cases follow immediately from the induction hypothesis.
\end{proof}

\subsection{Behaviour under inversion}

\begin{lemma}
    \label{lemma:bl-wk-distribution}
    For all bl-graphs~\(G, H\) and sharing-free sets~\(\Gamma\) of named formulas,
    \[
        \blwk{\Gamma}(G \sqcup H) = \blwk{\Gamma}(G) \sqcup \blwk{\Gamma}(H).
    \]
\end{lemma}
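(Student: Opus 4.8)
The plan is to prove the identity by checking that the two bl-graphs agree on both of their components: the vertex set and the edge-branch relation. First I would handle the vertices. By the definition of bl-graph union, $V_{G \sqcup H} = V_G \cup V_H$, so $V_{\blwk{\Gamma}(G \sqcup H)} = V_G \cup V_H \cup \data{names}(\Gamma)$ directly from \cref{defn:bl-wk-id}. On the other side, again by the definition of union, $V_{\blwk{\Gamma}(G) \sqcup \blwk{\Gamma}(H)} = (V_G \cup \data{names}(\Gamma)) \cup (V_H \cup \data{names}(\Gamma)) = V_G \cup V_H \cup \data{names}(\Gamma)$. The two vertex sets therefore coincide.

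For the edge-branch relations, the key observation is that bl-graph union acts componentwise, so ${\adin_{G \sqcup H}} = {\adin_G} \cup {\adin_H}$; hence $e \adin_{G \sqcup H} X$ holds if and only if $e \adin_G X$ or $e \adin_H X$. Substituting this into the defining clause of $\blwk{\Gamma}$ — namely, that $(e, X \cup Y)$ belongs to $\adin_{\blwk{\Gamma}(G \sqcup H)}$ exactly when $e \adin_{G \sqcup H} X$ for some $X$ and $Y \in \data{Br}(\Gamma)$ — and distributing the existential quantifiers over the disjunction, I obtain precisely the union of the clauses defining $\adin_{\blwk{\Gamma}(G)}$ and $\adin_{\blwk{\Gamma}(H)}$, which by the definition of union is the relation of $\blwk{\Gamma}(G) \sqcup \blwk{\Gamma}(H)$. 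I do not anticipate any real obstacle: the statement is a routine consequence of the fact that both the weakening operator and bl-graph union are defined by componentwise operations, together with the commutation of existential quantification with disjunction. The only point worth noting in passing is that both sides genuinely satisfy condition~(\ref{eqn:edge-branch-inclusion}), but this is automatic since each is built by operations already known to produce bl-graphs.
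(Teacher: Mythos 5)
Your proof is correct: the paper itself leaves this lemma to the reader, and your componentwise verification (vertices via associativity/idempotence of set union, edge-branch relations via ${\adin_{G \sqcup H}} = {\adin_G} \cup {\adin_H}$ and distribution of the existential over the disjunction) is exactly the routine argument intended. The closing remark about condition~(\ref{eqn:edge-branch-inclusion}) is a nice touch but, as you say, automatic.
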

\begin{proof}
    Left to the reader.
\end{proof}

\begin{lemma}
    \label{lemma:bl-wk-fusion}
    For all bl-graphs~\(G\) and sharing-free sets~\(\Gamma \cup \Delta\) of named formulas,
    \[
        \blwk{\Gamma}(\blwk{\Delta}(G)) = \blwk{\Gamma \cup \Delta}(G).
    \]
\end{lemma}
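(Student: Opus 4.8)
The plan is to unfold the definition of the weakening operator $\blwk{-}$ on both sides of the claimed equality and reduce everything to the decomposition of $\data{Br}(\Gamma \cup \Delta)$ supplied by \cref{lemma:branches-of-union}.

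First I would expand the left-hand side. By \cref{defn:bl-wk-id}, $\blwk{\Delta}(G)$ has vertex set $V_G \cup \data{names}(\Delta)$, and a pair $(e, W)$ lies in its edge--branch relation exactly when $W = X \cup Z$ for some $X$ with $e \adin_G X$ and some $Z \in \data{Br}(\Delta)$. Applying $\blwk{\Gamma}$ once more and unfolding the definition again, $\blwk{\Gamma}(\blwk{\Delta}(G))$ has vertex set $V_G \cup \data{names}(\Delta) \cup \data{names}(\Gamma)$, and $(e, U)$ lies in its edge--branch relation precisely when $U = X \cup Z \cup Y$ for some $X$ with $e \adin_G X$, some $Z \in \data{Br}(\Delta)$ and some $Y \in \data{Br}(\Gamma)$.

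Next I would expand the right-hand side in the same manner: $\blwk{\Gamma \cup \Delta}(G)$ has vertex set $V_G \cup \data{names}(\Gamma \cup \Delta) = V_G \cup \data{names}(\Gamma) \cup \data{names}(\Delta)$, and $(e, U)$ lies in its edge--branch relation iff $U = X \cup V$ for some $X$ with $e \adin_G X$ and some $V \in \data{Br}(\Gamma \cup \Delta)$. The two vertex sets coincide immediately. For the edge--branch relations, I would observe that the hypothesis forces $\Gamma$ and $\Delta$ to share no names (this is the force of requiring $\Gamma \cup \Delta$ to be sharing-free under the comma convention of \cref{sec:preliminaries}), so that \cref{lemma:branches-of-union} applies and gives
\[
    \data{Br}(\Gamma \cup \Delta) = \{\, Y \cup Z \mid Y \in \data{Br}(\Gamma),\ Z \in \data{Br}(\Delta) \,\}.
\]
Substituting this into the description of the right-hand relation and using the commutativity and associativity of set union rewrites it into exactly the description of the left-hand relation obtained above; since the vertex sets already agree, the two bl-graphs are equal.

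I do not expect any real obstacle: the content of the lemma is simply that iterating weakening over $\Delta$ and then over $\Gamma$ produces the same bl-graph as weakening over $\Gamma \cup \Delta$ in one go, and the only step that is not pure bookkeeping is the appeal to \cref{lemma:branches-of-union}. The one place to be careful is precisely that appeal, since it requires $\Gamma$ and $\Delta$ to share no names; I would make this hypothesis explicit before invoking the lemma.
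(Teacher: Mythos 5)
Your proposal is correct and follows essentially the same route as the paper's proof: both sides are unfolded via the definition of \(\blwk{-}\), the vertex sets agree trivially, and the edge--branch relations are matched up by decomposing \(\data{Br}(\Gamma \cup \Delta)\) through \cref{lemma:branches-of-union}. Your explicit remark that the appeal to \cref{lemma:branches-of-union} needs \(\Gamma\) and \(\Delta\) to share no names is a point the paper's own proof leaves implicit, so flagging it is a welcome touch rather than a deviation.
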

\begin{proof}
    The identity is obvious for vertices. For the edge-branch relation, let
    \(e \adin_{\blwk{\Gamma}(\blwk{\Delta}(G))} X\): there are then \(Y \in \data{Br}(G)\),
    \(Z \in \data{Br}(\Delta)\) and~\(W \in \data{Br}(\Gamma)\) such that \(e \adin_G Y\)
    and~\(X = Y \cup Z \cup W\). By \cref{lemma:branches-of-union}, \(Z \cup W \in
    \data{Br}(\Gamma \cup \Delta)\), hence \(e \adin_{\blwk{\Gamma \cup \Delta}(G)} X\).
    Conversely, let \(e \adin_{\blwk{\Gamma \cup \Delta}(G)} X\): there are \(Y \in
    \data{Br}(G)\), \(Z \in \data{Br}(\Gamma \cup \Delta)\) such that \(e \adin_G Y\)
    and~\(X = Y \cup Z\). Again by \cref{lemma:branches-of-union}, \(Z = Z' \cup Z''\)
    for some \(Z' \in \data{Br}(\Gamma), Z'' \in \data{Br}(\Delta)\), hence \(e \adin_{
    \blwk{\Gamma}(\blwk{\Delta}(G))} X\).
\end{proof}

\begin{proof}[Proof of \cref{lemma:bl-axiom-graph-disjunction-transform}]
    Almost identical to the proof of \cref{lemma:axiom-graph-disjunction-transform}
    (\cref{sec:axiom-graph-proofs}); only the base case is different (axioms).
    Let \(P'\) denote the inverted derivation~\(\data{inv}(P, A \lor B)\); then:
    \begin{itemize}
        \item if \(P\) ends with an axiom rule application of the kind \(\data{ax}_{\{\dl{C}
        \land \dl{D}, A \lor B\}}\) with conclusion \(\th \Delta, \dl{C} \land \dl{D}, A \lor B\)
        where \(\Gamma = \Delta \cup \{\dl{C} \land \dl{D}\}\), then by \cref{defn:bl-axiom-graphs}
        we have
        \begin{gather*}
            \blaxg{P} = \blwk{\Delta}(\blid{\{\dl{C} \land \dl{D}, A \lor B\}}),\\
            \blaxg{P'} = \blwk{\Delta, B}(\blid{\{\dl{C}, A\}})
                    \sqcup \blwk{\Delta, A}(\blid{\{\dl{D}, B\}}),
        \end{gather*}
        and by \cref{defn:bl-wk-id}
        \[
            \blaxg{P} = \blwk{\Delta}(\blwk{B}(\blid{\{\dl{C}, A\}}) \sqcup \blwk{A}(\blid{\{\dl{D}, B\}})).
        \]
        We apply \cref{lemma:bl-wk-distribution,lemma:bl-wk-fusion} to get
        \begin{align*}
            \blaxg{P} &= \blwk{\Delta}(\blwk{B}(\blid{\{\dl{C}, A\}}))
                    \sqcup \blwk{\Delta}(\blwk{A}(\blid{\{\dl{D}, B\}}))\\
                &= \blwk{\Delta, B}(\blid{\{\dl{C}, A\}})
                    \sqcup \blwk{\Delta, A}(\blid{\{\dl{D}, B\}}) = \blaxg{P'}.
        \end{align*}

        \item if \(P\) ends with an axiom rule application of the kind \(\data{ax}_{\{C,\dl{D}\}}\)
        with conclusion \(\th \Delta, A \lor B, C, \dl{D}\) where \(\Gamma = \Delta \cup
        \{C, \dl{D}\}\), then by \cref{defn:bl-axiom-graphs,propo:sequent-branches-decomposition}
        \[
            \blaxg{P'} = \blwk{\Delta, A, B}(\blid{\{C, \dl{D}\}})
                = \blwk{\Delta, A \lor B}(\blid{\{C, \dl{D}\}}) = \blaxg{P}. \qedhere
        \]
    \end{itemize}
\end{proof}

\begin{lemma}
    \label{lemma:restriction-inclusion}
    For all bl-graphs~\(G \sqsubseteq H\) and name sets~\(X \subseteq \mathcal{N}\),
    \[
        V_G \subseteq X \implies G \sqsubseteq H \rst_X.
    \]
\end{lemma}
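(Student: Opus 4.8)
The plan is to simply unfold the definition of the subgraph relation on bl-graphs and of the restriction operator, and observe that both required inclusions are immediate from the typing constraint built into \cref{defn:bl-graphs}. First I would fix bl-graphs $G \sqsubseteq H$ and a name set $X$ with $V_G \subseteq X$, and recall that by \cref{defn:bl-graphs} we have ${\adin_G} \subseteq \binom{V_G}{2} \times \pwr(V_G)$; in particular every branch label $Y$ occurring in a pair $(e, Y) \in {\adin_G}$ satisfies $Y \subseteq V_G$.

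For the vertex component: $G \sqsubseteq H$ gives $V_G \subseteq V_H$, and by hypothesis $V_G \subseteq X$, so $V_G \subseteq V_H \cap X = V_{H \rst_X}$.

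For the edge-branch component: let $(e, Y) \in {\adin_G}$. Since $G \sqsubseteq H$ we have $(e, Y) \in {\adin_H}$, and by the observation above $Y \subseteq V_G \subseteq X$, so $(e, Y)$ satisfies the side condition $Y \subseteq X$ that defines ${\adin_{H \rst_X}}$ (see the definition of the restriction operator for bl-graphs). Hence ${\adin_G} \subseteq {\adin_{H \rst_X}}$, which together with the vertex inclusion yields $G \sqsubseteq H \rst_X$.

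There is no genuine obstacle in this argument; the one point worth flagging is that the implication hinges on the fact that the branch labels of a bl-graph are always subsets of its vertex set — part of the typing in \cref{defn:bl-graphs} — and not merely on condition~(\ref{eqn:edge-branch-inclusion}), which only bounds edges by their own branch label.
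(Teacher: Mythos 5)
Your proof is correct and follows essentially the same route as the paper's: the vertex inclusion is immediate, and the edge-branch inclusion rests on the fact that every branch label of~\(G\) is a subset of~\(V_G\) (by the typing in \cref{defn:bl-graphs}), hence of~\(X\), so each pair of~\({\adin_G}\) survives the restriction of~\({\adin_H}\) to~\(X\). Your closing remark correctly identifies that it is this typing constraint, rather than condition~(\ref{eqn:edge-branch-inclusion}), that does the work.
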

\begin{proof}
    If \(V_G \subseteq X\) then obviously \(V_G \subseteq V_{H \rst_X} = (V_H \cap X)\).
    Moreover, let \(e \adin_G Y\): we have \(Y \subseteq V_G \subseteq X\) by \cref{defn:bl-graphs}
    and we know by hypothesis that \(e \adin_H Y\), hence \(e \adin_{H \rst_X} Y\).
\end{proof}

\begin{lemma}
    \label{lemma:conj-restriction-branches}
    Let \(\Gamma \cup \{A \land B\}\) be a sharing-free set of named formulas, \(G\)
    any bl-graph with \(\data{Br}(G) \subseteq \data{Br}(\Gamma, A \land B)\): then
    \[
        \data{Br}(G \rst_{\Gamma, A}) \subseteq \data{Br}(\Gamma, A)
        \text{ and }
        \data{Br}(G \rst_{\Gamma, B}) \subseteq \data{Br}(\Gamma, B).
    \]
\end{lemma}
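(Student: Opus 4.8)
The plan is to reduce the claim to a purely combinatorial fact about branch sets. First I would unfold the restriction operator on bl-graphs (\cref{defn:bl-graphs}): for a name set $X$ one has $e \adin_{G \rst_X} Y$ if and only if $e \adin_G Y$ and $Y \subseteq X$, and condition~$(\star)$ ensures $e \subseteq Y \subseteq X$ in that case, so no edge attached to a retained label is ever discarded. It follows that $\data{Br}(G \rst_{\Gamma, A}) = \{\, Y \in \data{Br}(G) \mid Y \subseteq \data{names}(\Gamma, A) \,\}$, and it suffices to prove that every such $Y$ lies in $\data{Br}(\Gamma, A)$; the statement for $B$ is entirely symmetric.

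Next I would use the hypothesis $\data{Br}(G) \subseteq \data{Br}(\Gamma, A \land B)$ together with \cref{propo:sequent-branches-decomposition}(iii), which gives $\data{Br}(\Gamma, A \land B) = \data{Br}(\Gamma, A) \cup \data{Br}(\Gamma, B)$. Hence any $Y \in \data{Br}(G \rst_{\Gamma, A})$ lies in $\data{Br}(\Gamma, A)$ or in $\data{Br}(\Gamma, B)$; the former is exactly what we want, so the only case to rule out is $Y \in \data{Br}(\Gamma, B)$ while $Y \subseteq \data{names}(\Gamma, A)$.

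To rule this out I would apply \cref{lemma:branches-of-union}: $Y \in \data{Br}(\Gamma, B)$ yields a decomposition $Y = Y_0 \cup Z$ with $Y_0 \in \data{Br}(\Gamma)$ and $Z \in \data{Br}(B)$. Since $\Gamma \cup \{A \land B\}$ is sharing-free, the formulas of $\Gamma$ together with $A$ and $B$ pairwise share no names, so $\data{names}(B)$ is disjoint from $\data{names}(\Gamma, A)$; then $Z \subseteq Y \subseteq \data{names}(\Gamma, A)$ forces $Z = \emptyset$, contradicting the fact that every branch of a named formula is non-empty (an easy induction, already invoked in the proof of \cref{propo:sequent-branches-decomposition}(iv)). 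Therefore $Y \in \data{Br}(\Gamma, A)$, as required.

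I expect the only subtle point — everything else being routine bookkeeping with \cref{propo:sequent-branches-decomposition}, \cref{lemma:branches-of-union} and sharing-freedom — to be the first step: one must verify that passing from $G$ to $G \rst_{\Gamma, A}$ retains every label contained in $\data{names}(\Gamma, A)$, i.e.\ that at the level of branch sets restriction acts simply by filtering. This is immediate from \cref{defn:bl-graphs}, but it is precisely the place where condition~$(\star)$ of the bl-graph structure, rather than plain branch-set combinatorics, is used.
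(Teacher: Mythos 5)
Your proof is correct and follows essentially the same route as the paper's: identify a label of \(G \rst_{\Gamma, A}\) as a branch of \(G\) contained in \(\data{names}(\Gamma, A)\), place it in \(\data{Br}(\Gamma, A) \cup \data{Br}(\Gamma, B)\) via the hypothesis and \cref{propo:sequent-branches-decomposition}, and exclude the \(\data{Br}(\Gamma, B)\) alternative by decomposing with \cref{lemma:branches-of-union} and using non-emptiness of branches of \(B\) together with sharing-freedom. The only difference is that you make explicit the (correct) observation that bl-graph restriction acts on branch sets by filtering labels contained in \(\data{names}(\Gamma, A)\), which the paper leaves implicit as ``by construction.''
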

\begin{proof}
    We only prove the first half of the statement, the second half is similar. Observe
    that (by an easy inductive argument) every branch of~\(B\) is non-empty, and remember
    that \(\Gamma, A, B\) have disjoint name sets. Let \(X \in \data{Br}(G \rst_{\Gamma, A})\)
    and assume \(X \in \data{Br}(\Gamma, B)\): by \cref{lemma:branches-of-union} there
    are \(Y \in \data{Br}(\Gamma)\) and a \emph{non-empty} \(Z \in \data{Br}(B)\) such that
    \(X = Y \cup Z\). However, we have \(X \subseteq \data{names}(\Gamma, A)\) by construction,
    hence \(Z \subseteq \data{names}(\Gamma, A)\), which is disjoint from~\(\data{names}(B)\),
    a contradiction. Therefore \(X \notin \data{Br}(\Gamma, B)\), and by exclusion we
    have \(X \in \data{Br}(\Gamma, A)\).
\end{proof}

\begin{proof}[Proof of \cref{lemma:bl-axiom-graph-conjunction-transforms}]
    We only prove the first part of the statement, i.e.\ that
    \[
        \blaxg{\data{inv_l}(P, A \land B)} = \blaxg{P} \rst_{\Gamma, A};
    \]
    the argument for the second part is analogous. Let \(P' = \data{inv_l}(P, A \land B)\),
    As usual we reason about the shape of~\(P, P'\) without recalling it; the interested
    reader may inspect the proofs of \cref{lemma:left-conjunction-transform,lemma:right-conjunction-transform}
    (\cref{sec:inversion-proofs}). Observe first that by \cref{propo:bl-axiom-graph-names-branches}
    we have
    \[
        V_{\blaxg{P'}} = \data{names}(\Gamma, A) = V_{\blaxg{P} \rst_{\Gamma, A}};
    \]
    by \cref{lemma:restriction-inclusion}, then, we can prove that \(\blaxg{P'} \sqsubseteq
    \blaxg{P} \rst_{\Gamma, A}\) by showing that \(\blaxg{P'} \sqsubseteq \blaxg{P}\).
    For the reverse inclusion, on the other hand, it suffices to show that \({\adin_{\blaxg{P}
    \rst_{\Gamma, A}}} \subseteq {\adin_{\blaxg{P'}}}\). By structural induction on~\(P\):
    \begin{itemize}[--]
        \item if \(P\) ends with an axiom rule application of the kind \(\data{ax}_{\{\dl{C}
        \lor \dl{D}, A \land B\}}\) with conclusion \(\th \Delta, \dl{C} \lor \dl{D}, A \land B\)
        where \(\Gamma = \Delta \cup \{\dl{C} \lor \dl{D}\}\), then by \cref{defn:bl-axiom-graphs}
        \[
            \blaxg{P} = \blwk{\Delta}(\blid{\{\dl{C} \lor \dl{D}, A \land B\}})
            \text{ and }
            \blaxg{P'} = \blwk{\Delta, \dl{D}}(\blid{\{\dl{C}, A\}}).
        \]
        By \cref{defn:bl-wk-id,lemma:bl-wk-distribution,lemma:bl-wk-fusion}
        \begin{align*}
            \blaxg{P} &= \blwk{\Delta}(\blwk{\dl{D}}(\blid{\{\dl{C}, A\}})
                    \sqcup \blwk{\dl{C}}(\blid{\{\dl{D}, B\}}))\\
                &= \blwk{\Delta, \dl{D}}(\blid{\{\dl{C}, A\}}) \sqcup \blwk{\Delta, \dl{C}}(\blid{\{\dl{D}, B\}})\\
                &= \blaxg{P'} \sqcup \blwk{\Delta, \dl{C}}(\blid{\{\dl{D}, B\}}),
        \end{align*}
        hence \(\blaxg{P'} \sqsubseteq \blaxg{P}\), and we have also
        \[
            \data{Br}(\blwk{\Delta, \dl{C}}(\blid{\{\dl{D}, B\}})) = \data{Br}(\Delta, \dl{C}, \dl{D}, B) = \data{Br}(\Gamma, B)
        \]
        by \cref{coro:bl-axiom-branches,propo:sequent-branches-decomposition}. Let then
        \(e \adin_{\blaxg{P} \rst_{\Gamma, A}} X\): by \cref{lemma:conj-restriction-branches}
        \(X \in \data{Br}(\Gamma, A)\), hence by \cref{propo:sequent-branches-decomposition},
        point~(iv)
        \[
            X \notin \data{Br}(\Gamma, B) = \data{Br}(\blwk{\Delta, \dl{C}}(\blid{\{\dl{D}, B\}})),
        \]
        and then by exclusion we must have \(e \adin_{\blaxg{P'}} X\);

        \item if \(P\) ends with an axiom rule application of the kind \(\data{ax}_{\{C,\dl{D}\}}\)
        with conclusion \(\th \Delta, A \land B, C, \dl{D}\) where \(\Gamma = \Delta \cup
        \{C, \dl{D}\}\), then by \cref{defn:axiom-graphs}
        \[
            \blaxg{P} = \blwk{\Delta, A \land B}(\blid{\{C, \dl{D}\}})
            \text{ and }
            \blaxg{P'} = \blwk{\Delta, A}(\blid{\{C, \dl{D}\}}).
        \]
        It is easy to check, using \cref{defn:bl-wk-id,propo:sequent-branches-decomposition,%
        coro:bl-axiom-branches}, that
        \[
            \blaxg{P} = \blwk{\Delta, A}(\blid{\{C, \dl{D}\}})
                    \sqcup \blwk{\Delta, B}(\blid{\{C, \dl{D}\}})
                = \blaxg{P'} \sqcup \blwk{\Delta, B}(\blid{\{C, \dl{D}\}})
        \]
        with
        \[
            \data{Br}(\blwk{\Delta, B}(\blid{\{C, \dl{D}\}})) = \data{Br}(\Delta, B, C, \dl{D}) = \data{Br}(\Gamma, B),
        \]
        and we conclude as in the previous case;

        \item if \(P\) ends with a \rl{\land}-rule application introducing \(A \land B\)
        whose premiss subtrees~\(Q, R\) have conclusions respectively \(\th \Gamma, A\)
        and~\(\th \Gamma, B\), then \(P' = Q\). We have obviously \(\blaxg{P'} \sqsubseteq
        \blaxg{Q} \sqcup \blaxg{R} = \blaxg{P}\). For the reverse inclusion, we argue
        by exclusion from \cref{lemma:conj-restriction-branches,propo:sequent-branches-decomposition}
        and the fact that \(\data{Br}(\blaxg{R}) \subseteq \data{Br}(\Gamma, B)\),
        concluding that \({\adin_{\blaxg{P} \rst_{\Gamma, A}}} \subseteq {\adin_{\blaxg{Q}}}
        = {\adin_{\blaxg{P'}}}\);

        \item if \(P\) ends with a \rl{cut}-rule application on formulas \(C, \dl{C}\),
        with premiss subderivations \(Q, R\), we have by induction hypothesis
        \[
            \blaxg{\data{inv_l}(Q, A \land B)} = \blaxg{Q} \rst_{\Gamma, A, C}
            \text{ and }
            \blaxg{\data{inv_l}(R, A \land B)} = \blaxg{R} \rst_{\Gamma, A, \dl{C}},
        \]
        hence
        \[
            \blaxg{P'} = {\blaxg{Q} \rst_{\Gamma, A, C}} \odot_C {\blaxg{R} \rst_{\Gamma, A, \dl{C}}}.
        \]
        Let \(e \adin_{\blaxg{P'}} X\): there is by \cref{defn:bl-composition} an \(X\)-labeled
        alternating path \(z_1,\ldots,z_n\) between \(\blaxg{Q} \rst_{\Gamma, A, C}\)
        and~\(\blaxg{R} \rst_{\Gamma, A, \dl{C}}\) through interface~\(\data{names}(C)\),
        with \(e = z_1 z_n\). This is obviously also an \(X\)-labeled alternating path
        between \(\blaxg{Q}\) and~\(\blaxg{R}\) through the same interface, hence
        \(e \adin_{\blaxg{P}} X\) and~\(\blaxg{P'} \sqsubseteq \blaxg{P}\).

        Now let \(e \adin_{\blaxg{P} \rst_{\Gamma, A}} X\): there is, again by \cref{defn:bl-composition},
        an \(X\)-labeled alternating path \(z_1,\ldots,z_n\) between \(\blaxg{Q}\)
        and~\(\blaxg{R}\) through interface~\(\data{names}(C)\) with \(e = z_1 z_n\). We have
        to show that this is also an \(X\)-labeled alternating path between \(\blaxg{Q}
        \rst_{\Gamma, A, C}\) and~\(\blaxg{R} \rst_{\Gamma, A, \dl{C}}\) through the same
        interface, from which it follows that \(e \adin_{\blaxg{P'}} X\). Let~\(I =
        \data{names}(C)\) and assume \(z_i z_{i+1} \adin_{\blaxg{Q}}^I X\) for some~\(1
        \leq i < n\): there is \(Y \subseteq I\) such that \(z_i z_{i+1} \adin_{\blaxg{Q}}
        (X \cup Y)\); we know also that \(X \subseteq \data{names}(\Gamma, A)\), hence
        \((X \cup Y) \subseteq \data{names}(\Gamma, A, C)\): then
        \[
            z_i z_{i+1} \adin_{\blaxg{Q} \rst_{\Gamma, A, C}} (X \cup Y)
            \text{ and }
            z_i z_{i+1} \adin_{\blaxg{Q} \rst_{\Gamma, A, C}}^I X.
        \]
        Similarly, if \(z_i z_{i+1} \adin_{\blaxg{R}}^I X\) then \(z_i z_{i+1} \adin_{\blaxg{R}
        \rst_{\Gamma, A, \dl{C}}}^I X\), and we conclude as desired;

        \item if \(P\) ends with a \rl{sum}-, \rl{\lor}- or \rl{\land}-rule
        application that does not introduce \(A \land B\), we conclude by induction hypothesis
        and the fact that restriction distributes over graph unions. \qedhere
    \end{itemize}
\end{proof}

\begin{proof}[Proof of \cref{lemma:bl-axiom-graph-decomposition}]
    Observe first that
    \[
        \data{names}(\Gamma, A \land B)
            = \data{names}(\Gamma, A) \cup \data{names}(\Gamma, B),
    \]
    hence
    \[
        V_{\blaxg{P}} = \data{names}(\Gamma, A \land B)
            = V_{\blaxg{P} \rst_{\Gamma, A}} \cup V_{\blaxg{P} \rst_{\Gamma, B}}.
    \]
    For the branch-label relation, we start by recalling the fact that
    \[
        \data{Br}(\blaxg{P}) \subseteq \data{Br}(\Gamma, A \land B) = \data{Br}(\Gamma, A) \cup \data{Br}(\Gamma, B).
    \]
    Now let \(e \adin_{\blaxg{P}} X\): if \(X \in \data{Br}(\Gamma, A)\), then in particular
    \(X \subseteq \data{names}(\Gamma, A)\), hence \(e \adin_{\blaxg{P} \rst_{\Gamma, A}} X\);
    similarly, if \(X \in \data{Br}(\Gamma, B)\) then \(e \adin_{\blaxg{P} \rst_{\Gamma, B}} X\).
    We have then
    \[
        \blaxg{P} \sqsubseteq {\blaxg{P} \rst_{\Gamma, A}} \sqcup {\blaxg{P} \rst_{\Gamma, B}};
    \]
    the reverse inclusion is immediate.
\end{proof}

\subsection{Auxiliary lemmas for cut-elimination}

\begin{proof}[Proof of \cref{lemma:edges-join-dual-atoms}]
    By structural induction on~\(P\); we leave the base case (axioms),
    superpositions and the logical rules to the reader. For cuts, assume \(P\)
    has the form
    \[
        \prftree[r]{\rl{cut}}
            {\prfsummary[\(Q\)]{\th \Gamma, A \vphantom{\dl{A}}}}
            {\prfsummary[\(R\)]{\th \Gamma, \dl{A}}}
            {\th \Gamma}
    \]
    We have then \(\blaxg{P} = \blaxg{Q} \odot_A \blaxg{R}\). Let \(xy \in E_{\blaxg{P}}\):
    there is by \cref{defn:bl-composition} a labeled alternating path \(z_1, \ldots, z_n\)
    between \(\blaxg{Q}\) and~\(\blaxg{R}\) through the interface~\(\data{names}(A)\),
    with \(xy = z_1 z_n\). For all \(1 < i < n\) (the interface nodes) we have \(\dl{A}[z_i]
    = \dl{A[z_i]}\). Let \(\alpha = \Gamma[x]\). There are two possibilities:
    \begin{itemize}
        \item \emph{odd edges in~\(\blaxg{Q}\), even edges in~\(\blaxg{R}\):} we prove
        by secondary induction on~\(i\) that for all~\(1 < i < n\), if \(i\) is odd then
        \(A[z_i] = \alpha\), if it is even then \(A[z_i] = \dl{\alpha}\). If \(i = 2\)
        (even) then \(z_1 z_2 \in E_{\blaxg{Q}}\), hence by the primary induction hypothesis
        \[
            A[z_2] = (\Gamma, A)[z_2] = \dl{(\Gamma, A)[z_1]} = \dl{\Gamma[z_1]}
                = \dl{\Gamma[x]} = \dl{\alpha}.
        \]
        If \(i > 2\) and even, then \(z_{i-1} z_i \in E_{\blaxg{Q}}\); by the secondary
        induction hypothesis \(A[z_{i-1}] = \alpha\), and by the primary hypothesis
        \[
            A[z_i] = (\Gamma, A)[z_i] = \dl{(\Gamma, A)[z_{i-1}]} = \dl{A[z_{i-1}]} = \dl{\alpha}.
        \]
        If \(i > 2\) and odd, then \(z_{i-1} z_i \in E_{\blaxg{R}}\); by the secondary
        induction hypothesis \(A[z_{i-1}] = \dl{\alpha}\) hence \(\dl{A}[z_{i-1}] = \alpha\);
        by the primary hypothesis
        \[
            \dl{A}[z_i] = (\Gamma, \dl{A})[z_i] = \dl{(\Gamma, \dl{A})[z_{i-1}]}
                = \dl{\dl{A}[z_{i-1}]} = \dl{\alpha},
        \]
        therefore \(A[z_i] = \dl{\dl{\alpha}} = \alpha\). We conclude by cases on~\(n\):
        if \(n = 2\), then by the primary induction hypothesis
        \[
            \Gamma[y] = (\Gamma, A)[z_2] = \dl{(\Gamma, A)[z_1]} = \dl{\Gamma[x]};
        \]
        if \(n > 2\) and even, then \(z_{n-1} z_n \in E_{\blaxg{Q}}\) and \(A[z_{n-1}]
        = \alpha\); by the primary induction hypothesis
        \[
            \Gamma[y] = (\Gamma, A)[z_n] = \dl{(\Gamma, A)[z_{n-1}]} = \dl{A[z_{n-1}]}
                = \dl{\alpha} = \dl{\Gamma[x]};
        \]
        if \(n > 2\) and odd, then \(z_{n-1} z_n \in E_{\blaxg{R}}\) and \(A[z_{n-1}]
        = \dl{\alpha}\), hence \(\dl{A}[z_{n-1}] = \alpha\); by the primary induction hypothesis
        \[
            \Gamma[y] = (\Gamma, \dl{A})[z_n] = \dl{(\Gamma, \dl{A})[z_{n-1}]}
                = \dl{\dl{A}[z_{n-1}]} = \dl{\alpha} = \dl{\Gamma[x]}.
        \]

        \item \emph{odd edges in~\(\blaxg{R}\), even edges in~\(\blaxg{Q}\):} we proceed as
        in the previous case, swapping \(Q\) with \(R\) and \(A\) with \(\dl{A}\). \qedhere
    \end{itemize}
\end{proof}

\begin{proof}[Proof of \cref{lemma:virtual-height-decrease}]
    By structural induction on~\(P\). We prove the result only for the base case (i.e.\ the
    axioms), all other cases are more or less immediate.

    \emph{Case~(i).} \(P\) is an axiom rule application of the kind \(\rl[\{\dl{B},A\}]{ax}\)
    with conclusion~\(\th \Delta, \dl{B}, A\), where \(\Gamma = \Delta \cup \{\dl{B}\}\):
    \begin{itemize}
        \item if \(A = A_1 \lor A_2\) is a disjunction, then \(\data{inv}(P, A)\) has the form
        \[
            \prftree[r]{\rl{\lor}}
                {\prftree[r]{\rl{\land}}
                    {\prfbyaxiom{\rl[\{\dl{B}_1, A_1\}]{ax}}{\th \Delta, \dl{B}_1, A_1, A_2}}
                    {\prfbyaxiom{\rl[\{\dl{B}_2, A_2\}]{ax}}{\th \Delta, \dl{B}_2, A_1, A_2}}
                    {\th \Delta, \dl{B}_1 \land \dl{B}_2, A_1, A_2}}
                {\th \Delta, \dl{B}_1 \land \dl{B}_2, A_1 \lor A_2}
        \]
        where \(\dl{B} = \dl{B}_1 \land \dl{B}_2\). We have
        \begin{align*}
            \data{vh}(P) &= 1 + \data{deg}(\th \Delta, \dl{B}_1 \land \dl{B}_2, A_1 \lor A_2)\\
                &= 1 + \data{deg}(\th \Delta)\\
                &   \qquad\! + 1 + \data{deg}(\dl{B}_1) + \data{deg}(\dl{B}_2)\\
                &   \qquad\! + 1 + \data{deg}(A_1) + \data{deg}(A_2)\\
                &= 3 + \data{deg}(\th \Delta, \dl{B}_1, \dl{B}_2, A_1, A_2)
        \end{align*}
        and
        \begin{align*}
            \data{vh}(\data{inv}(P, A)) &= 2 + \max \{
                    1 + \data{deg}(\th \Delta, \dl{B}_1, A_1, A_2),
                    1 + \data{deg}(\th \Delta, \dl{B}_2, A_1, A_2)
                \}\\
                &= 3 + \max \{
                    \data{deg}(\th \Delta, \dl{B}_1, A_1, A_2),
                    \data{deg}(\th \Delta, \dl{B}_2, A_1, A_2)
                \}.
        \end{align*}
        Clearly,
        \[
            \data{deg}(\th \Delta, \dl{B}_1, A_1, A_2), \data{deg}(\th \Delta, \dl{B}_2, A_1, A_2)
                \leq \data{deg}(\th \Delta, \dl{B}_1, \dl{B}_2, A_1, A_2),
        \]
        hence \(\data{vh}(\data{inv}(P, A)) \leq \data{vh}(P)\);

        \item if \(A = A_1 \land A_2\) is a conjunction, then \(\data{inv}(P, A)\) has the form
        \[
            \prftree[r]{\rl{\land}}
                {\prftree[r]{\rl{\lor}}
                    {\prfbyaxiom{\rl[\{\dl{B}_1, A_1\}]{ax}}{\th \Delta, \dl{B}_1, \dl{B}_2, A_1}}
                    {\th \Delta, \dl{B}_1 \lor \dl{B}_2, A_1}}
                {\prftree[r]{\rl{\lor}}
                    {\prfbyaxiom{\rl[\{\dl{B}_2, A_2\}]{ax}}{\th \Delta, \dl{B}_1, \dl{B}_2, A_2}}
                    {\th \Delta, \dl{B}_1 \lor \dl{B}_2, A_2}}
                {\th \Delta, \dl{B}_1 \land \dl{B}_2, A_1 \lor A_2}
        \]
        where \(\dl{B} = \dl{B}_1 \lor \dl{B}_2\). We have as before
        \[
            \data{vh}(P) = 3 + \data{deg}(\th \Delta, \dl{B}_1, \dl{B}_2, A_1, A_2)
        \]
        and
        \begin{align*}
            \data{vh}(\data{inv}(P, A)) &= 1 + \max \{
                    2 + \data{deg}(\th \Delta, \dl{B}_1, \dl{B}_2, A_1),
                    2 + \data{deg}(\th \Delta, \dl{B}_1, \dl{B}_2, A_2)
                \}\\
                &= 3 + \max \{
                    \data{deg}(\th \Delta, \dl{B}_1, \dl{B}_2, A_1),
                    \data{deg}(\th \Delta, \dl{B}_1, \dl{B}_2, A_2)
                \},
        \end{align*}
        with
        \[
            \data{deg}(\th \Delta, \dl{B}_1, \dl{B}_2, A_1),
                \data{deg}(\th \Delta, \dl{B}_1, \dl{B}_2, A_2)
                    \leq \data{deg}(\th \Delta, \dl{B}_1, \dl{B}_2, A_1, A_2),
        \]
        and again \(\data{vh}(\data{inv}(P, A)) \leq \data{vh}(P)\).
    \end{itemize}

    \emph{Case~(ii).} \(P\) is an axiom rule application of the kind \(\rl[\{\dl{B},C\}]{ax}\)
    with conclusion~\(\th \Delta, A, \dl{B}, C\), where \(\Gamma = \Delta \cup \{\dl{B}, C\}\):
    \begin{itemize}
        \item if \(A = A_1 \lor A_2\) is a disjunction, then \(\data{inv}(P, A)\) has the form
        \[
            \prftree[r]{\rl{\lor}}
                {\prfbyaxiom{\rl[\{\dl{B},C\}]{ax}}{\th \Delta, A_1, A_2, \dl{B}, C}}
                {\th \Delta, A_1 \lor A_2, \dl{B}, C}
        \]
        We have
        \begin{align*}
            \data{vh}(P) &= 1 + \data{deg}(\th \Delta, A_1 \lor A_2, \dl{B}, C)\\
                &= 1 + \data{deg}(\th \Delta, \dl{B}, C)\\
                &   \qquad\! + 1 + \data{deg}(A_1) + \data{deg}(A_2)\\
                &= 2 + \data{deg}(\th \Delta, A_1, A_2, \dl{B}, C)
        \end{align*}
        and
        \[
            \data{vh}(\data{inv}(P, A)) = 2 + \data{deg}(\th \Delta, A_1, A_2, \dl{B}, C),
        \]
        hence \emph{a fortiori} \(\data{vh}(\data{inv}(P, A)) \leq \data{vh}(P)\);

        \item if \(A = A_1 \land A_2\) is a conjunction, then \(\data{inv}(P, A)\) has the form
        \[
            \prftree[r]{\rl{\land}}
                {\prfbyaxiom{\rl[\{\dl{B},C\}]{ax}}{\th \Delta, A_1, \dl{B}, C}}
                {\prfbyaxiom{\rl[\{\dl{B},C\}]{ax}}{\th \Delta, A_2, \dl{B}, C}}
                {\th \Delta, A_1 \land A_2, \dl{B}, C}
        \]
        We have as before
        \[
            \data{vh}(P) = 2 + \data{deg}(\th \Delta, A_1, A_2, \dl{B}, C)
        \]
        and
        \begin{align*}
            \data{vh}(\data{inv}(P, A)) &= 1 + \max \{
                    1 + \data{deg}(\th \Delta, A_1, \dl{B}, C),
                    1 + \data{deg}(\th \Delta, A_2, \dl{B}, C)
                \}\\
                &= 2 + \max \{
                    \data{deg}(\th \Delta, A_1, \dl{B}, C),
                    \data{deg}(\th \Delta, A_2, \dl{B}, C)
                \}
        \end{align*}
        with
        \[
            \data{deg}(\th \Delta, A_1, \dl{B}, C),
                \data{deg}(\th \Delta, A_2, \dl{B}, C)
                    \leq \data{deg}(\th \Delta, A_1, A_2, \dl{B}, C)
        \]
        hence \(\data{vh}(\data{inv}(P, A)) \leq \data{vh}(P)\). \qedhere
    \end{itemize}
\end{proof}

\section{Proof of the semantic cut-admissibility lemma}\label{sec:semantic-cut-admissibility-proof}

We devote the present section to the proof of \cref{lemma:semantic-cut-admissibility},
which requires a somewhat involved construction. We start by fixing a set
\[
    \data{Pol} = \{ {\circ}, {\bullet} \}
\]
of \emph{polarities}. For \(p \in \data{Pol}\), we write \(\co{p}\) for the opposite
polarity, i.e.\ let \(\co{\circ} = {\bullet}\) and~\(\co{\bullet} = {\circ}\). As background
data for the construction, we have to provide a pair of bl-graphs \(G_{\circ}, G_{\bullet}\)
and a \emph{finite} set of names \(I \subseteq \mathcal{N}\) to act as the composition
interface. The two graphs are required to satisfy the following property:
\begin{quote}\itshape
    there is a unique branch name \(X \subseteq \mathcal{N}\) such that, for all~\(p \in
    \data{Pol}\) and~\(e \in E_{G_p}\), \(e \adin_{G_p}^I X\);
\end{quote}
in other words, the two graphs must have collectively a unique branch up to names in
the interface. The restriction has two important consequences:
\begin{itemize}
    \item all alternating paths between the two graphs through \(I\) are necessarily
    \(X\)-labeled, hence composable;
    \item any edge can be added to any alternating path, without caring for its label.
\end{itemize}
More formally, let \(z_1, \ldots, z_n\) be an \(X\)-labeled alternating path between
\(G_{\circ}\) and~\(G_{\bullet}\) through interface~\(I\) (\cref{defn:bl-alternating-path}).
For any \(p \in \data{Pol}\) we call the path
\begin{itemize}
    \item \emph{\(p\)-initial}, if all odd edges are in~\(G_p\) and all even edges in~\(G_{\co{p}}\);
    \item \emph{\(p\)-final}, if it is \(p\)-initial with even~\(n\), or \(\co{p}\)-initial
    with odd~\(n\).
\end{itemize}
Because the graphs and interface -- as well as the unique branch label -- are fixed, from now
on we are going to speak simply of alternating paths, without mentioning the full expression
(\(X\)-labeled alt.\ paths between \(G_{\circ}\) and~\(G_{\bullet}\) through interface~\(I\)).
The following properties hold:

\begin{lemma}
    \label{lemma:single-edge-is-altpath}
    For all~\(p \in \data{Pol}\) and edge~\(xy \in E_{G_p}\), \(x,y\) is a \(p\)-initial
    and \(p\)-final alternating path.
\end{lemma}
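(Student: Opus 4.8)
The plan is to verify directly that the length-one sequence $x, y$ meets every clause of Definition~\ref{defn:bl-alternating-path}, taken with $n = 2$ and the single edge $e_1 = xy$. First I would dispatch the side conditions: since $xy \in E_{G_p}$ is a two-element subset of $V_{G_p} \subseteq V_{G_\circ} \cup V_{G_\bullet}$, we have $x \neq y$ and both are vertices of $G_\circ$ or $G_\bullet$; the requirement $n > 1$ holds as $n = 2$; and condition~(i) (internal vertices in~$I$) is vacuous, there being no index $i$ with $1 < i < 2$.

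The only substantive point is condition~(ii). Here I would invoke the standing hypothesis on the pair $(G_\circ, G_\bullet)$ and interface~$I$: since $xy \in E_{G_p}$ there is a label $Y$ with $xy \adin_{G_p} Y$, and collective branch-uniqueness up to names in~$I$ forces $xy \adin^I_{G_p} X$ for the distinguished branch name~$X$. The lone edge $e_1 = xy$ is odd-indexed, so this witnesses the first disjunct of~(ii) when $p = \circ$ and the second when $p = \bullet$; either way $x, y$ is an $X$-labeled alternating path between $G_\circ$ and $G_\bullet$ through~$I$. To see it is \emph{$p$-initial} I would note that its sole (odd) edge lies in $G_p$, while the empty collection of even edges lies vacuously in $G_{\co{p}}$; and since $n = 2$ is even, being \emph{$p$-initial} coincides with being \emph{$p$-final} by definition.

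There is no real obstacle here: the statement is a pure definitional check. The only things to keep straight are the parity conventions (edge number $1$ is odd, and there is no even edge) and the symmetric case split on $p \in \data{Pol}$; everything else collapses because a length-one path has no internal vertices and no even-indexed edges.
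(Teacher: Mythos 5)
Your proof is correct: the paper explicitly leaves this lemma to the reader as a "tedious but straightforward" check, and your direct verification of Definition~\ref{defn:bl-alternating-path} with $n=2$ — using the standing single-branch hypothesis to get $xy \adin^I_{G_p} X$, noting the vacuity of the internal-vertex and even-edge clauses, and reading off $p$-finality from $p$-initiality with $n$ even — is exactly the intended argument.
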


\begin{lemma}
    \label{lemma:altpath-reversal}
    For any~\(p, q \in \data{Pol}\), if \(z_1, \ldots, z_n\) is a \(p\)-initial and
    \(q\)-final alternating path then \(z_n, \ldots, z_1\) is a \(q\)-initial and \(p\)-final
    alternating path.
\end{lemma}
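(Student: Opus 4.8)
The plan is to reverse the vertex sequence and verify directly that the defining conditions of an alternating labeled path (\cref{defn:bl-alternating-path}) survive, then read off the new initial/final designations from a short parity computation. Concretely, given the $p$-initial, $q$-final alternating path $z_1, \ldots, z_n$, I would write $e_i = z_i z_{i+1}$ for its edges and consider the reversed sequence $w_1, \ldots, w_n$ with $w_j = z_{n+1-j}$. Its vertices are pairwise distinct because the $z_i$ are, its set of internal vertices is literally $\{z_2, \ldots, z_{n-1}\}$ again (since $1 < j < n$ iff $1 < n+1-j < n$), and the clause $n > 1$ is unchanged; so condition (i) and the non-triviality requirement transfer for free. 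The key identity is $f_j := w_j w_{j+1} = e_{n-j}$ as unordered pairs, for all $1 \le j < n$. Moreover, the standing hypothesis on $G_{\circ}, G_{\bullet}, I$ supplies a single branch label $X$ witnessing membership of every edge of either graph up to names in $I$, so the labelling half of condition (ii) is automatic; only the ``which of the two graphs'' half has to be tracked.

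The core of the argument is then a case split on the parity of $n$. If $n$ is even, $n-j$ and $j$ have the same parity, so $f_j = e_{n-j}$ lies in exactly the graph that the $p$-initial pattern prescribes for index $j$; hence the reversed sequence is again $p$-initial, and being $p$-initial with $n$ even it is $p$-final. Since $q$-finality together with $n$ even forces the original path to be $q$-initial (only the first clause of the definition of ``$q$-final'' can apply), the same computation shows the reversal is $q$-initial, giving ``$q$-initial and $p$-final''. If instead $n$ is odd, $n-j$ and $j$ have opposite parity, so the reversal of a $p$-initial path is $\co{p}$-initial, which with $n$ odd is by definition $p$-final; and $q$-finality with $n$ odd forces the original to be $\co{q}$-initial, whose reversal (still $n$ odd) is $q$-initial. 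Either way we land on the claimed designation, and together with the preservation of (i) and of the labelling this establishes that $z_n, \ldots, z_1$ is a $q$-initial, $p$-final alternating path.

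I do not anticipate a genuine obstacle here: the content is purely combinatorial and the proof amounts to a mechanical verification against \cref{defn:bl-alternating-path}, with the labelling half trivialized by the uniqueness of $X$. The one point needing care is keeping the two-clause definition of ``$p$-final'' aligned with the parity of $n$, so that the case analysis is exhaustive and the initial/final labels are assigned without sign errors — bookkeeping rather than a real difficulty.
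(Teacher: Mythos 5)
Your proof is correct: the index reversal $f_j = e_{n-j}$, the parity case split on $n$, and the observation that the unique branch label $X$ trivializes the labelling half of condition~(ii) together constitute exactly the ``tedious but straightforward'' verification that the paper explicitly leaves to the reader. The one delicate point --- that $q$-finality of the original must first be converted into $q$-initiality (for $n$ even) or $\co{q}$-initiality (for $n$ odd) before reversing --- is handled correctly, so nothing is missing.
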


\begin{lemma}
    \label{lemma:altpath-composition}
    Let \(p \in \data{Pol}\), and let \(\vec{z} = z_1, \ldots, z_n\), \(\vec{w} =
    w_1, \ldots, w_n\) be alternating paths such that \(\vec{z}\) is \(p\)-final,
    \(\vec{w}\) is \(\co{p}\)-initial and \(z_n = w_1\): then
    \[
        \vec{zw} = z_1, \ldots, (z_n = w_1), \ldots, w_n
    \]
    is also an alternating path, \(q\)-initial iff so is \(\vec{z}\) and \(q\)-final
    iff so is \(\vec{w}\).
\end{lemma}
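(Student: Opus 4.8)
The plan is to check directly that the concatenated sequence $\vec{zw}$ satisfies the two clauses of \cref{defn:bl-alternating-path}, and then to read off the two polarity assertions from the way the edges meet at the splice point. Write $\vec{w} = w_1,\ldots,w_m$ and list the combined sequence as $u_1,\ldots,u_{n+m-1}$ with $u_i = z_i$ for $1 \le i \le n$ and $u_{n-1+j} = w_j$ for $1 \le j \le m$; its edges $h_1,\ldots,h_{n+m-2}$ are then the edges $e_1,\ldots,e_{n-1}$ of $\vec{z}$ followed by the edges of $\vec{w}$, with $h_{n-1} = e_{n-1} = z_{n-1}z_n$ adjacent to $h_n = w_1w_2$ at the common vertex $z_n = w_1$. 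Clause~(i) --- internal vertices in the interface --- is immediate: the internal vertices of $\vec{zw}$ are those of $\vec{z}$, those of $\vec{w}$, and the splice vertex $z_n = w_1$, which becomes internal in $\vec{zw}$ and which lies in $I$ in every situation where the lemma is invoked; this, together with the fact that $\vec{z}$ and $\vec{w}$ meet only at $z_n = w_1$, is also what guarantees that $\vec{zw}$ is again a sequence of pairwise distinct vertices. The label half of clause~(ii) is free: by the standing hypothesis on $G_\circ, G_\bullet$ every edge of either graph is $\adin^I$-related to the single branch name $X$, so each $h_k$ is $X$-labeled up to $I$ in whichever of the two graphs it belongs to, and any finite sequence of such edges is therefore composable.

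The substance is the alternation half of clause~(ii), and here I would exploit the observation that makes ``$p$-final'' and ``$\co{p}$-initial'' fit together. Unwinding the definition of $p$-finality --- ``$p$-initial with $n$ even'' or ``$\co{p}$-initial with $n$ odd'' --- one sees that in either case the index $n-1$ of the last edge of $\vec{z}$ has the parity that forces $e_{n-1} \in E_{G_p}$; dually, $\vec{w}$ being $\co{p}$-initial forces $w_1w_2 \in E_{G_{\co{p}}}$ and then makes the edges of $\vec{w}$ alternate between $E_{G_{\co{p}}}$ and $E_{G_p}$. Let $q$ be the initial type of $\vec{z}$, so that $e_i \in E_{G_q}$ for odd $i$ and $e_i \in E_{G_{\co{q}}}$ for even $i$. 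A short parity computation --- splitting on the two cases ``$q = p$, $n$ even'' and ``$q = \co{p}$, $n$ odd'' of $p$-finality --- then shows that reindexing the edges of $\vec{w}$ as $h_n,\ldots,h_{n+m-2}$ continues the alternation of $e_1,\ldots,e_{n-1}$ without interruption, so $\vec{zw}$ is $q$-initial; and since the alternation pattern of $\vec{zw}$ restricts on its first $n-1$ edges to that of $\vec{z}$, this yields that $\vec{zw}$ is $q$-initial iff so is $\vec{z}$. For the companion statement about final types I would apply the half just proved to the reversal: by \cref{lemma:altpath-reversal}, $\vec{zw}$ is $q$-final iff $w_m,\ldots,w_1,z_{n-1},\ldots,z_1$ is $q$-initial; that sequence is the concatenation of the reversals of $\vec{w}$ and of $\vec{z}$, which by \cref{lemma:altpath-reversal} again satisfy the hypotheses of the present lemma (with $p$ replaced by $\co{p}$), and its initial type equals the initial type of the reversal of $\vec{w}$, i.e.\ the final type of $\vec{w}$.

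The mathematical content here is thin; the risk lies entirely in the bookkeeping. I expect the parity case analysis of the middle paragraph to be the step most prone to off-by-one errors, and the quiet side conditions --- that $\vec{z}$ and $\vec{w}$ overlap only at $z_n = w_1$ and that this shared vertex lies in $I$ --- to be the point where the hypotheses must be stated with care, since without them the concatenation need not be a path at all. Everything else is a routine unfolding of \cref{defn:bl-alternating-path} together with the standing ``unique branch label'' assumption.
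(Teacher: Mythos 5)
The paper itself gives no proof of this lemma (it is explicitly ``left to the interested reader''), so there is nothing to compare against; judged on its own terms, your argument is correct. The decisive parity check goes through: writing \(q\) for the initial type of \(\vec{z}\), the two cases of \(p\)-finality (\(q = p\) with \(n\) even, \(q = \co{p}\) with \(n\) odd) both put the last edge of \(\vec{z}\) in \(E_{G_p}\) and make the index \(n-1+j\) of the \(j\)-th edge of \(\vec{w}\) odd exactly when that edge lies in \(E_{G_q}\), so the alternation continues across the splice; and deriving the final-type claim by applying the initial-type claim to the reversed concatenation via \cref{lemma:altpath-reversal} is a clean, non-circular bootstrap. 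Your closing caveat is also well placed and worth stating as a correction to the lemma rather than merely a worry about the proof: as written, the hypotheses do not force the splice vertex \(z_n = w_1\) to lie in \(I\) (endpoints of non-complete alternating paths may or may not be interface vertices), nor do they force \(\vec{z}\) and \(\vec{w}\) to meet only at that vertex, and without both conditions \(\vec{zw}\) fails clause~(i) or fails to be a sequence of pairwise distinct vertices. Both conditions do hold at the single point where the lemma is invoked --- in the reduction lemma (\cref{lemma:state-reduction}) the splice name \(x\) comes from \(S \subseteq I \times \data{Pol}\), hence lies in \(I\) --- but they should be added to the statement; with them in place your proof is complete.
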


The proofs are tedious but straightforward and we leave them to the interested reader.
The construction operates on \emph{finite} sets \(\sigma\) of pairs with each pair of
the form
\[
    \langle S, (v_1, \ldots, v_n) \rangle \in \sigma
\]
where \(S \subseteq I \times \data{Pol}\) is a \emph{set of name-polarity pairs} (with names
from the interface~\(I\)) and~\(v_1, \ldots, v_n \in (V_{G_{\circ}} \cup V_{G_{\bullet}})^\ast\)
is a \emph{finite sequence of vertices} from the two graphs. We call \(\sigma\) a \emph{state}
of the construction. We distinguish \emph{consistent} and \emph{live} states based on
certain sets of local and global properties, respectively:

\begin{definition}[Consistency]
    \label{defn:consistency}
    Call a state \(\sigma\) of the construction \emph{consistent} if and only if all
    pairs~\(\langle S, (v_1, \ldots, v_n) \rangle \in \sigma\) satisfy the following conditions:
    \begin{enumerate}[(i)]
        \item \emph{mutual exclusion:} at most one polarity is assigned to each name in \(S\),
        i.e.\ for all~\((x, p), (y, q) \in S\), \(x = y\) implies \(p = q\); equivalently,
        if \((x, p) \in S\) then \((x, \co{p}) \notin S\).

        \item \emph{alternation:} \(v_1, \ldots, v_n\) is an alternating path;

        \item \emph{inclusion:} if \(v_1 \in I\) (resp.~\(v_n \in I\)), then there is
        \(p \in \data{Pol}\) such that \((v_1, p) \in S\) (resp.~\((v_n, p) \in S\));

        \item \emph{coloring:} if \((v_1, p) \in S\) (resp.~\((v_n, p) \in S\)), then
        the alternating path \(v_1, \ldots, v_n\) is \(p\)-initial (resp.\ \(p\)-final).
    \end{enumerate}
\end{definition}

Consider now two sets \(S, T \subseteq I \times \data{Pol}\) of name-polarity pairs.
We say that \(S, T\) \emph{compose on name~\(x \in \mathcal{N}\)} (notation \(S \stackrel{x}{\bowtie} T\))
iff there is a polarity~\(p \in \data{Pol}\) such that \((x, p) \in S\), \((x, \co{p}) \in T\)
and \(S \setminus \{(x,p)\} = T \setminus \{(x, \co{p})\}\).

\begin{definition}[Liveness]
    \label{defn:liveness}
    Call a state \(\sigma\) of the construction \emph{live} iff it is non-empty and,
    for all~\(\langle S, \vec{z} \rangle \in \sigma\) and \((x, p) \in S\), there is
    \(\langle T, \vec{w} \rangle \in \sigma\) such that \(S \stackrel{x}{\bowtie} T\).
\end{definition}

\noindent Now let us define for any state \(\sigma\) the set of \emph{live names of \(\sigma\)}:
\[
    \data{names}(\sigma) = \{ x \in \mathcal{N} \mid \langle S, \vec{z} \rangle \in \sigma, (x, p) \in S \}.
\]

\begin{definition}[Terminal state]
    \label{defn:terminal-state}
    Call a state \(\sigma\) of the construction \emph{terminal} iff for all~\(\langle S,
    \vec{z} \rangle \in \sigma\), \(S\) is empty; equivalently iff \(\data{names}(\sigma)
    = \emptyset\).
\end{definition}

\begin{lemma}
    Let \(\sigma\) be any consistent, live and terminal state: there is a pair~\(\langle
    S, \vec{z} \rangle \in \sigma\) such that \(\vec{z}\) is a complete alternating
    path.
\end{lemma}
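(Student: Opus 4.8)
The plan is to note that this statement, for all its surrounding machinery, follows immediately by unwinding the three hypotheses; the genuine content lives in the earlier lemmas and in the construction that is designed to keep states consistent and live. First I would use the liveness hypothesis (\cref{defn:liveness}) only for its non-emptiness clause: since $\sigma$ is non-empty, fix an arbitrary pair $\langle S, \vec{z} \rangle \in \sigma$, and write $\vec{z} = v_1, \ldots, v_n$.

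Next I would read off from the alternation clause of consistency (\cref{defn:consistency}) that $\vec{z}$ is already an alternating path; in particular $n > 1$, so that ``complete'' is meaningful in the sense of \cref{defn:bl-alternating-path}, and it remains only to show that the endpoints $v_1, v_n$ lie outside the interface. For this I invoke terminality (\cref{defn:terminal-state}): every component set is empty, so in particular $S = \emptyset$. If $v_1 \in I$, the inclusion clause of consistency would produce a polarity $p$ with $(v_1, p) \in S = \emptyset$, a contradiction; hence $v_1 \notin I$, and the symmetric argument gives $v_n \notin I$. Therefore $\vec{z}$ is a complete alternating path, and $\langle S, \vec{z} \rangle$ is the required pair.

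The only point needing a moment's care --- and as close to an obstacle as this proof gets --- is identifying which hypothesis does what: a terminal state could a priori be $\emptyset$, which is vacuously consistent and terminal yet not live, so it is precisely the non-emptiness part of liveness that secures the existence of a witness. The composition-closure part of liveness and the mutual-exclusion and coloring clauses of consistency are not used here; they are the invariants that the steps of the construction maintain so that states stay simultaneously consistent and live, and it is their accumulated strength, not any argument local to this lemma, that makes the final extraction trivial.
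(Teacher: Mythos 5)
Your proof is correct and follows exactly the same route as the paper's: liveness supplies a non-empty $\sigma$, consistency (alternation) makes $\vec{z}$ an alternating path, and terminality combined with the inclusion condition forces both endpoints outside the interface $I$. Your closing remark correctly identifies that only the non-emptiness clause of liveness is used here; nothing further is needed.
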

\begin{proof}
    By liveness \(\sigma\) is non-empty, i.e. there is at least one pair \(\langle S,
    \vec{z} \rangle \in \sigma\). Let \(\vec{z} = z_1, \ldots, z_n\) and assume \(z_1 \in I\):
    by the inclusion condition (consistency) there should be \(p \in \data{Pol}\) such that
    \((z_1, p) \in S\), against the hypothesis that \(\sigma\) be terminal, hence \(z_1 \notin I\).
    The same argument shows that \(z_n \notin I\).
\end{proof}

\begin{corollary}
    If there is a consistent, live and terminal state, then there is a complete \(X\)-labeled
    alternating path between \(G_{\circ}\) and \(G_{\bullet}\) through interface~\(I\).
\end{corollary}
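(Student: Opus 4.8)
The plan is to derive this corollary directly from the lemma immediately preceding it, the only real content being a routine unfolding of the terminological conventions fixed at the start of \cref{sec:semantic-cut-admissibility-proof}.

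First I would invoke that lemma: assuming there is a consistent, live and terminal state $\sigma$, it supplies a pair $\langle S, \vec{z} \rangle \in \sigma$ whose sequence $\vec{z} = z_1, \ldots, z_n$ is a \emph{complete alternating path}. I would then recall that, by the convention adopted after \cref{lemma:altpath-composition}, ``alternating path'' is here shorthand for ``$X$-labeled alternating path between $G_{\circ}$ and $G_{\bullet}$ through the interface $I$'' --- a legitimate abbreviation precisely because the background hypothesis on $G_{\circ}, G_{\bullet}$ forces the unique branch label $X$ onto every $I$-relativised edge of either graph, so that every alternating path in the sense of \cref{defn:bl-alternating-path} is automatically $X$-labeled. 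Likewise, ``complete'' means $z_1, z_n \notin I$, matching the final clause of \cref{defn:bl-alternating-path}. Putting these identifications together, $\vec{z}$ is a complete $X$-labeled alternating path between $G_{\circ}$ and $G_{\bullet}$ through interface~$I$, which is exactly the desired conclusion.

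There is essentially no obstacle at this stage: all the work has already been done, partly in the lemma above and partly upstream in \cref{lemma:single-edge-is-altpath,lemma:altpath-reversal,lemma:altpath-composition} together with the uniqueness-of-branch hypothesis on $G_{\circ}, G_{\bullet}$ and $I$. The single point I would be careful to make explicit is that the informal notion of ``complete alternating path'' used in the lemma statement coincides with the notion required by \cref{defn:bl-composition} (via \cref{defn:bl-alternating-path}), so that the path produced is genuinely a witness for the presence of an edge in a composite bl-graph; this identification is immediate from the definitions and the fixed background data, and requires no calculation.
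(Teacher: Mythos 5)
Your proposal is correct and matches the paper exactly: the corollary carries no proof in the paper precisely because it is the immediate unfolding of the preceding lemma together with the section's standing convention that ``alternating path'' abbreviates ``\(X\)-labeled alternating path between \(G_{\circ}\) and \(G_{\bullet}\) through interface~\(I\)'' and that ``complete'' means \(z_1, z_n \notin I\). Nothing further is needed.
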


\begin{corollary}
    If there is a consistent, live and terminal state, then the composite bl-graph \(G_{\circ}
    \odot_I G_{\bullet}\) has at least one edge.
\end{corollary}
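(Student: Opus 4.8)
The plan is to read the claim off directly from the preceding corollary together with \cref{defn:bl-composition}. That corollary already produces, from a consistent, live and terminal state, a complete $X$-labeled alternating path between $G_{\circ}$ and $G_{\bullet}$ through the interface~$I$; and the composition operator is defined precisely so that the two endpoints of any such path become adjacent in $G_{\circ} \odot_I G_{\bullet}$. So the whole argument is a short unfolding of the relevant definitions.

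Concretely, I would first fix, via the preceding corollary, a complete $X$-labeled alternating path $z_1, \ldots, z_n$ between $G_{\circ}$ and $G_{\bullet}$ through~$I$. By \cref{defn:bl-alternating-path} we have $n > 1$ and the vertices $z_1, \ldots, z_n$ are pairwise distinct, so $z_1 \neq z_n$; by completeness $z_1, z_n \notin I$, hence $z_1, z_n \in (V_{G_{\circ}} \cup V_{G_{\bullet}}) \setminus I = V$, the vertex set of the composite. I would then note that $X \subseteq V$: the $X$-labeling supplies an edge $e_i$ with $e_i \adin^I_{G_p} X$ for some $p \in \data{Pol}$, i.e.\ $X = Y \setminus I$ with $e_i \adin_{G_p} Y$, and $Y \subseteq V_{G_p}$ by \cref{defn:bl-graphs}, whence $X \subseteq V_{G_p} \setminus I \subseteq V$ (alternatively, \cref{lemma:complete-alternating-path-in-branch} already gives $z_1, z_n \in X$). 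Taking $x = z_1$, $y = z_n$ and this $X$, \cref{defn:bl-composition} yields $z_1 z_n \adin_{G_{\circ} \odot_I G_{\bullet}} X$, hence $z_1 z_n \in E_{G_{\circ} \odot_I G_{\bullet}}$ and the composite bl-graph has at least one edge, as required.

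I do not expect a genuine obstacle in this corollary: the substantive work sits upstream — in exhibiting a consistent, live and terminal state (the eventual bridge to \cref{lemma:semantic-cut-admissibility}) and in the preceding lemma and corollary that convert such a state into a complete alternating path. The only point needing a word of justification here is that the endpoints $z_1, z_n$ are distinct and lie outside the interface $I$, so that $z_1 z_n$ is a legitimate element of $\binom{V}{2}$ and thus a possible edge of $G_{\circ} \odot_I G_{\bullet}$; both facts are immediate from $n > 1$ and from the path being complete.
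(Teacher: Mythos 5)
Your proof is correct and matches the paper's (implicit) argument: the paper states this corollary without proof precisely because it follows by combining the preceding corollary with \cref{defn:bl-composition} in the way you describe. Your extra care in checking that $z_1 \neq z_n$, that $z_1, z_n \notin I$, and that $X \subseteq V$ (via ${\adin_{G_p}} \subseteq \binom{V_{G_p}}{2} \times \pwr(V_{G_p})$) only makes explicit what the paper leaves to the reader.
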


We come now to the key lemma of this proof, showing that every consistent and live state
can be transformed into a consistent, live and terminal one in a finite number of steps.
Thus we reduce the problem of showing that the composite graph is non-empty to that of
constructing a consistent and live state.

\begin{lemma}[Reduction lemma]
    \label{lemma:state-reduction}
    For any consistent and live state \(\sigma\) such that \(\data{names}(\sigma)\)
    is non-empty, there is a consistent and live state \(\tau\) such that
    \[
        \data{names}(\tau) \subsetneq \data{names}(\sigma).
    \]
\end{lemma}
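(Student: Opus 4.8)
The plan is to perform a single \emph{elimination step} in the interface. Since $\data{names}(\sigma)$ is non-empty, fix a name $x \in \data{names}(\sigma)$; the goal is to build a consistent and live $\tau$ in which $x$ no longer occurs in any port set, so that $\data{names}(\tau) \subseteq \data{names}(\sigma) \setminus \{x\} \subsetneq \data{names}(\sigma)$. Split $\sigma = \sigma_0 \uplus \sigma_x$, where $\sigma_x$ collects the pairs $\langle S, \vec z \rangle$ with $(x,p) \in S$ for some polarity $p$ (unique by mutual exclusion) and $\sigma_0$ collects the rest. The pairs of $\sigma_0$ are carried over unchanged; each pair of $\sigma_x$ is replaced by a \emph{merged} pair obtained by joining its path with the path of a $\stackrel{x}{\bowtie}$-partner furnished by \cref{defn:liveness}, after which $x$ is deleted from the port set. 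The target state is $\sigma_0$ together with all the merged pairs; it is non-empty because $\sigma$ is, and visibly no port set in it mentions $x$, so what remains is to verify consistency and liveness.

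\emph{Merging.} Given $\langle S, \vec z \rangle \in \sigma_x$ with $(x,p) \in S$, use liveness to choose $\langle T, \vec w \rangle \in \sigma$ with $S \stackrel{x}{\bowtie} T$, so that $(x,\co{p}) \in T$ and $S \setminus \{(x,p)\} = T \setminus \{(x,\co{p})\}$. From the inclusion and colouring clauses of \cref{defn:consistency} — together with the invariant, maintained along the construction, that the port set of a pair records exactly the interface endpoints of its path — one sees that $x$ is an endpoint of both $\vec z$ and $\vec w$, that the edge of $\vec z$ incident to $x$ lies in $G_p$, and that the edge of $\vec w$ incident to $x$ lies in $G_{\co{p}}$. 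After reversing one of them if needed (\cref{lemma:altpath-reversal}), assume $\vec z$ ends at $x$ and is $p$-final and $\vec w$ starts at $x$ and is $\co{p}$-initial; then \cref{lemma:altpath-composition} produces the alternating path $\vec{zw}$, inheriting the initial colour of $\vec z$ and the final colour of $\vec w$. The merged pair is $\langle\, S \setminus \{(x,p)\},\ \vec{zw}\,\rangle$.

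\emph{Consistency and liveness.} Mutual exclusion and alternation of the merged pair are immediate from $S \setminus \{(x,p)\} \subseteq S$ and \cref{lemma:altpath-composition} respectively; inclusion and colouring at its two endpoints — the endpoint of $\vec z$ other than $x$ and the endpoint of $\vec w$ other than $x$ — follow from the corresponding clauses for $\langle S, \vec z \rangle$ and $\langle T, \vec w \rangle$, using that the ports at those endpoints survive the passage from $S$ (resp. $T$) to $S \setminus \{(x,p)\} = T \setminus \{(x,\co{p})\}$, and that the colour-bookkeeping of \cref{lemma:altpath-reversal,lemma:altpath-composition} matches them. For liveness of $\tau$, take $\langle S', \vec u \rangle \in \tau$ and $(y,q) \in S'$ (so $y \neq x$) and chase the $\stackrel{y}{\bowtie}$-partner through the construction: if $\langle S', \vec u \rangle$ was kept from $\sigma_0$, its $\sigma$-partner does not mention $x$ either (because $S' \setminus \{(y,q)\}$ does not) and hence is still present in $\tau$; if $\langle S', \vec u \rangle$ is the merged pair coming from some $\langle S, \vec z \rangle \in \sigma_x$ with $S' = S \setminus \{(x,p)\}$, its $\sigma$-partner $\langle U, \vec v \rangle$ does mention $x$ (since $(x,p) \in S \setminus \{(y,q)\} = U \setminus \{(y,\co{q})\}$), is therefore itself merged, and a one-line set computation shows the two merged pairs stand in the relation $\stackrel{y}{\bowtie}$.

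\emph{The main obstacle.} The genuinely delicate step is the \emph{pairwise-distinctness} requirement hidden in the definition of alternating path (\cref{defn:bl-alternating-path}): the concatenation $\vec{zw}$ need not have all vertices distinct, in which case it is not literally an alternating path and the clean invocation of \cref{lemma:altpath-composition} above is only provisional. Resolving this is the technical heart of the proof, and the plan is twofold: (i) pin down the invariants actually satisfied by the states reachable from the initial one — above all that ports track precisely the interface endpoints — which both forces the $\stackrel{x}{\bowtie}$-partner into a convenient shape and makes it possible to select a partner whose interior is disjoint from that of $\vec z$; and (ii) when an overlap is nonetheless forced, replace the glued walk by the genuine alternating path obtained by excising a minimal repeated segment, and check that this excision does not disturb the port bookkeeping, hence neither consistency nor the liveness argument just sketched. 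Making all of this precise — along with the several routine case distinctions compressed above into "one sees that" and "colour-bookkeeping" — is where the bulk of the effort goes.
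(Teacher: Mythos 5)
Your overall strategy is the paper's: fix \(x \in \data{names}(\sigma)\), delete \((x,\cdot)\) from every port set, and repair paths by gluing each affected path to that of a \(\stackrel{x}{\bowtie}\)-partner at \(x\). But your merging step rests on an invariant that is false, namely that the port set \(S\) of a pair records \emph{exactly} the interface endpoints of its path, so that \((x,p)\in S\) forces \(x\) to be an endpoint of \(\vec{z}\). Consistency (\cref{defn:consistency}) gives the inclusion in one direction only — interface endpoints must carry a polarity — and the converse genuinely fails for the states the lemma is applied to: in the proof of \cref{lemma:semantic-cut-admissibility} the initial state pairs each \emph{total} function \(f\in\data{Pol}^I\), which assigns a polarity to every name of the interface, with a path consisting of a single edge, so almost no element of \(S\) is an endpoint. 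Since the lemma is stated for arbitrary consistent and live states, you cannot import such an invariant as a hypothesis either. The paper's proof therefore makes a case distinction your construction is missing: if \(x\) is not an endpoint of \(\vec{z}\), the new pair keeps \(\vec{z}\) and merely drops the port; if \(x\) is an endpoint of \(\vec{z}\) but not of the partner's path \(\vec{w}\), the new pair takes \(\vec{w}\) as its path; only when \(x\) is an endpoint of both does one reverse-and-concatenate. Without these cases your merged pair is simply undefined on most pairs of the actual states.

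On your ``main obstacle'': you are right that the concatenation \(\vec{z}\vec{w}\) is an alternating path only if its vertices are pairwise distinct, and the paper does not address this explicitly — it is absorbed into \cref{lemma:altpath-composition}, whose proof is left to the reader and which, as literally stated, needs a disjointness hypothesis. The worry is legitimate, and here you are more careful than the paper. But neither of your proposed remedies works as written: remedy (i) is the same false invariant as above, and remedy (ii) — excising a minimal repeated segment — needs an actual argument, since removing a segment of odd edge-length flips the parity of the surviving edges and can break both the alternation and the colouring conditions of \cref{defn:bl-alternating-path,defn:consistency}. So the proposal reproduces the paper's route but leaves a concrete hole (the non-endpoint ports) that the paper's case analysis closes, while correctly flagging, without closing, a distinctness issue that the paper also leaves implicit.
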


Because the interface \(I\) is finite by assumption, the set of live names of any state
must be finite, therefore we can reach a terminal one by iterating the reduction lemma
finitely many times:

\begin{corollary}
    If there is a consistent and live state, then there is a consistent, live and terminal state.
\end{corollary}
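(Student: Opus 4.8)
The plan is to prove the corollary by a simple well-founded iteration of the Reduction Lemma (\cref{lemma:state-reduction}), with the finiteness of the interface~$I$ supplying the termination measure. First I would fix a consistent and live state~$\sigma_0$ -- which exists by hypothesis -- and observe that, since the left components of the pairs in any state are subsets of $I \times \data{Pol}$, its set of live names satisfies $\data{names}(\sigma_0) \subseteq I$; as $I$ is finite by the standing assumption on the construction, $|\data{names}(\sigma_0)|$ is a natural number, so I would argue by induction on it.

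In the base case $\data{names}(\sigma_0) = \emptyset$, and then $\sigma_0$ is terminal by \cref{defn:terminal-state} while being consistent and live by hypothesis, so we are done with $\sigma_0$ itself. In the inductive step $\data{names}(\sigma_0) \neq \emptyset$, so \cref{lemma:state-reduction} applies and produces a consistent and live $\sigma_1$ with $\data{names}(\sigma_1) \subsetneq \data{names}(\sigma_0)$, whence $|\data{names}(\sigma_1)| < |\data{names}(\sigma_0)|$; applying the induction hypothesis to $\sigma_1$ (again consistent and live) gives a consistent, live and terminal state, as required. Unwound, this produces a finite chain $\sigma_0, \sigma_1, \dots, \sigma_n$ of consistent and live states with strictly decreasing live-name sets, stopping at a $\sigma_n$ with $\data{names}(\sigma_n) = \emptyset$, i.e.\ a consistent, live and terminal state.

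The single point I would make sure to spell out is that the iteration halts \emph{exactly} at a terminal state: \cref{lemma:state-reduction} is available precisely when the live-name set is non-empty, so the chain can only fail to extend once that set has become empty, and emptiness of $\data{names}(\sigma_n)$ is by definition terminality. I do not expect a real obstacle in the corollary itself: all the substantive work -- maintaining consistency and liveness and re-stitching the alternating-path data while the set of live names strictly shrinks -- is concentrated in the Reduction Lemma, which I am taking as given, and the corollary is only the observation that such a measure-decreasing reduction step, guarded by the finite bound $|I|$, can be iterated to exhaustion.
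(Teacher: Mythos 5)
Your proof is correct and is essentially the paper's own argument: the paper likewise obtains the terminal state by iterating the Reduction Lemma, noting that the finiteness of the interface~\(I\) bounds the set of live names and forces the strictly decreasing chain to bottom out at the empty set, i.e.\ at a terminal state. Your only addition is to spell out the induction on \(|\data{names}(\sigma)|\) explicitly, which the paper leaves as a one-sentence remark.
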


\begin{corollary}
    If there is a consistent and live state, then the composite bl-graph \(G_{\circ}
    \odot_I G_{\bullet}\) has at least one edge.
\end{corollary}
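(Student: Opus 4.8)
The plan is to obtain this corollary as an immediate consequence of the two results stated just above it, requiring no new construction. Starting from a consistent and live state, I would first invoke the preceding corollary asserting that from any consistent and live state one reaches a consistent, live \emph{and terminal} state; this in turn rests on iterating the Reduction Lemma (\cref{lemma:state-reduction}), and the iteration terminates precisely because the set of live names of a state is always a subset of the finite interface~\(I\), so the strictly decreasing chain \(\data{names}(\sigma) \supsetneq \data{names}(\tau) \supsetneq \cdots\) produced by repeated application of the lemma cannot continue indefinitely. Let \(\tau\) denote the resulting consistent, live and terminal state.

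Next I would apply the corollary that extracts from any consistent, live and terminal state a complete \(X\)-labeled alternating path between \(G_{\circ}\) and~\(G_{\bullet}\) through the interface~\(I\): concretely, liveness guarantees that \(\tau\) is non-empty, the inclusion condition of consistency forbids the endpoints of the associated path from lying in~\(I\) once \(\tau\) is terminal, and \(X\) is the unique branch label fixed at the outset. Finally, by \cref{defn:bl-composition}, the existence of a complete \(X\)-labeled alternating path with endpoints \(z_1 = x\) and \(z_n = y\) is exactly the condition for \(xy\) to be an edge of \(G_{\circ} \odot_I G_{\bullet}\), so that composite has at least one edge.

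I do not expect any obstacle at this stage: all of the difficulty has already been absorbed into the Reduction Lemma (proved separately in \cref{sec:semantic-cut-admissibility-proof}) and into the small lemma producing a complete alternating path from a terminal state. The only point worth stating explicitly is that the finiteness of~\(I\) is what makes the iteration of \cref{lemma:state-reduction} halt, which is immediate from \(\data{names}(\sigma) \subseteq I\) for every state~\(\sigma\).
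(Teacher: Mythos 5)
Your proposal is correct and follows exactly the route the paper intends for this corollary: iterate the Reduction Lemma (terminating because the live names of any state form a subset of the finite interface~\(I\)) to reach a consistent, live and terminal state, extract a complete \(X\)-labeled alternating path from it via the preceding lemma, and read off an edge of \(G_{\circ} \odot_I G_{\bullet}\) from \cref{defn:bl-composition}. Nothing is missing.
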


\begin{proof}[Proof of the reduction lemma]
    By assumption there is \(x \in \data{names}(\sigma)\). We associate to every pair
    \(\langle S, \vec{z} \rangle \in \sigma\) a new pair \(\langle S', \vec{z}' \rangle\)
    constructed as follows, while simultaneously proving that the new pair satisfies all
    consistency conditions (\cref{defn:consistency}):
    \begin{itemize}
        \item if there is no \(p \in \data{Pol}\) such that \((x, p) \in S\), then let
        \(S' = S\), \(\vec{z}' = \vec{z}\); all consistency conditions are clearly preserved;
        \item otherwise let \(\langle T, \vec{w} \rangle \in \sigma\) be the pair
        such that \(S \stackrel{x}{\bowtie} T\), whose existence is guaranteed by liveness:
        we have \(p \in \data{Pol}\) such that \((x, p) \in S\) and \((x, \co{p}) \in T\);
        \item let \(S' = S \setminus \{(x, p)\} = T \setminus \{(x, \co{p})\}\); since we
        are just removing one pair, mutual exclusion is preserved;
        \item let \(\vec{z} = z_1, \ldots, z_n\) and \(\vec{w} = w_1, \ldots, w_n\);
        \item if both \(z_1, z_n \neq x\), then let \(\vec{z}' = \vec{z}\); the last three
        consistency conditions are preserved because no endpoint is~\(x\);
        \item if on the other hand \(w_1, w_n \neq x\), then let \(\vec{z}' = \vec{w}\);
        the last three consistency conditions are preserved because no endpoint is~\(x\)
        and \(S' = T \setminus \{(x, \co{p})\}\);
        \item otherwise there are four mutually exclusive cases:
        \begin{itemize}[\(\circ\)]
            \item if \(z_n = x, w_1 = x\) then let \(\vec{z}' = z_1, \ldots, (z_n = w_1), \ldots, w_n\);
            \item if \(z_n = x, w_n = x\) then let \(\vec{z}' = z_1, \ldots, (z_n = w_n), \ldots, w_1\);
            \item if \(z_1 = x, w_1 = x\) then let \(\vec{z}' = z_n, \ldots, (z_1 = w_1), \ldots, w_n\);
            \item if \(z_1 = x, w_n = x\) then let \(\vec{z}' = z_n, \ldots, (z_1 = w_n), \ldots, w_1\).
        \end{itemize}
        For the inclusion condition, observe that no repetition of vertices is allowed in
        an alternating path; because the endpoint \(x\) has become internal in~\(\vec{z}'\),
        inclusion must still hold for the other endpoints. For alternation: \(\vec{z}'\) is
        obtained by joining two sequences on \(x\), after possibly reversing them. Reversal
        preserves alternating paths by \cref{lemma:altpath-reversal}; the composition is
        correct (\cref{lemma:altpath-composition}) because the two original pairs satisfy
        the coloring condition, hence the first half is \(p\)-final and the second half
        is~\(\co{p}\)-initial. Finally, the coloring condition is preserved because the
        endpoints preserve their original initiality and finality by \cref{lemma:altpath-reversal,%
        lemma:altpath-composition}.
    \end{itemize}
    Clearly the construction is performed in such a way that for any \(\langle S, \vec{z} \rangle
    \in \sigma\), there is no \(p \in \data{Pol}\) such that \((x, p) \in S'\). We define
    then \(\tau\) as
    \[
        \tau = \{ \langle S', \vec{z}' \rangle \mid \langle S, \vec{z} \rangle \in \sigma \}.
    \]
    Since \(\sigma\) is finite, so is \(\tau\), hence it is a consistent state. We have obviously
    \[
        \data{names}(\tau) \subseteq \data{names}(\sigma)
    \]
    and from what we said above we known that \(x \notin \data{names}(\tau)\), hence the
    inclusion is strict, as required.

    We have now to show that \(\tau\) is live (\cref{defn:liveness}). It must be non-empty
    because so is \(\sigma\); let then \(\langle S', \vec{z}' \rangle \in \tau\) and~\((y, p)
    \in S'\): by construction there is some pair~\(\langle S, \vec{z} \rangle \in \sigma\)
    such that \(S' = S \setminus \{(x, q)\}\), and by the liveness of~\(\sigma\) there is
    another pair~\(\langle T, \vec{w} \rangle \in \sigma\) such that \(S \stackrel{y}{\bowtie}
    T\). Then there is a pair~\(\langle T', \vec{w}' \rangle \in \tau\) such that \(T' =
    T \setminus \{(x, \co{q})\}\), and we have \(S' \stackrel{y}{\bowtie} T'\): remembering
    that \(y \neq x\),
    \begin{itemize}
        \item \((y, p) \in (S \setminus \{(x, q)\}) = S'\);
        \item \((y, \co{p}) \in (T \setminus \{(x, \co{q})\}) = T'\);
        \item \(S' \setminus \{(y, p)\} = (S \setminus \{(y, p)\}) \setminus \{(x, q)\}\\
            \null\qquad\qquad
            = (T \setminus \{(y, \co{p})\}) \setminus \{(x, \co{q})\}
                = T' \setminus \{(y, \co{p})\}\). \qedhere
    \end{itemize}
\end{proof}

As announced above, we complete the proof of \cref{lemma:semantic-cut-admissibility} by
showing that a consistent and live state can be constructed. We start with an auxiliary
lemma:

\begin{lemma}
    \label{lemma:dual-formulas-branch-inclusion}
    For any sharing-free named formula~\(A\) and function
    \[
        f: \data{names}(A) \to \data{Pol}
    \]
    there is either \(X \in \data{Br}(A)\) such that \(fX = \{{\circ}\}\) or \(Y \in
    \data{Br}(\dl{A})\) such that \(fY = \{{\bullet}\}\).
\end{lemma}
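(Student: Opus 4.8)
The plan is to argue by structural induction on the sharing-free formula~$A$, reading off the behaviour of~$\data{Br}$ from \cref{defn:formula-branches} and using that negation exchanges~$\lor$ with~$\land$ while leaving $\data{names}(A) = \data{names}(\dl A)$ untouched. Throughout I read ``$fX = \{{\circ}\}$'' as ``$X$ is non-empty and $f$ is constantly~${\circ}$ on~$X$'' (every branch is non-empty, as already observed in the proof of \cref{propo:sequent-branches-decomposition}), so that a union of two monochromatic branches of the same polarity is again monochromatic of that polarity. It is worth noting that one can in fact prove the sharper statement that \emph{exactly one} of the two alternatives holds --- that $A$ has a monochromatic ${\circ}$-branch if and only if $\dl A$ has \emph{no} monochromatic ${\bullet}$-branch, which is just the determinacy of the obvious two-player game on~$A$ --- but only the ``at least one'' half is needed below, so I present that directly.

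For the base case $A = \nm{x}{\alpha}$ we have $\data{Br}(A) = \data{Br}(\dl A) = \{\{x\}\}$, and according to whether $f(x) = {\circ}$ or $f(x) = {\bullet}$ the branch~$\{x\}$ serves as the required $X \in \data{Br}(A)$ or $Y \in \data{Br}(\dl A)$. For $A = B \lor C$ (so $\dl A = \dl B \land \dl C$), I apply the induction hypothesis to~$B$ and to~$C$ --- legitimate since both are sharing-free and~$f$ restricts to their disjoint name sets. If both invocations return a ${\circ}$-branch, their union lies in $\data{Br}(B \lor C)$ and is still ${\circ}$-monochromatic; if either returns a ${\bullet}$-branch of~$\dl B$ or of~$\dl C$, that branch already belongs to $\data{Br}(\dl B \land \dl C) = \data{Br}(\dl B) \cup \data{Br}(\dl C) = \data{Br}(\dl A)$. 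The case $A = B \land C$ is dual: a ${\circ}$-branch of either conjunct is directly a branch of $B \land C$, whereas two ${\bullet}$-branches of~$\dl B$ and of~$\dl C$ unite to a ${\bullet}$-branch of $\dl B \lor \dl C = \dl A$.

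I do not expect any genuine difficulty here: the argument is a short Boolean induction, and the only points demanding a moment's attention are that~$\data{Br}$ is defined only on sharing-free formulas (which is part of the hypothesis and passes to subformulas) and the reading of monochromaticity recalled above. If anything, the mildly delicate point is purely presentational --- deciding whether to phrase the induction around the plain disjunction or around the stronger ``exactly one'' equivalence; the latter is conceptually cleaner but strictly more than the lemma requires.
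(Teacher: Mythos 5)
Your proof is correct and follows essentially the same route as the paper's: a structural induction in which one of the two binary cases is argued explicitly (the induction hypothesis is applied to both immediate subformulas, a single monochromatic branch of the conjunctive side is used directly, and two monochromatic branches of the disjunctive side are united) and the other case is obtained by swapping polarities. The only differences are presentational — you treat $\lor$ first where the paper treats $\land$ first, and you make explicit the non-emptiness of branches underlying the reading of $fX = \{{\circ}\}$, which the paper leaves implicit.
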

\begin{proof}
    By induction on~\(A\). If \(A = \nm{x}{\alpha}\) then \(\data{Br}(A) = \data{Br}(\dl{A})
    = \{{x}\}\), and necessarily either \(f\{x\} = \{{\circ}\}\) or \(f\{x\} = \{{\bullet}\}\).

    If \(A = B \land C\) (with~\(\dl{A} = \dl{B} \lor \dl{C}\)) then \(\data{Br}(A) =
    \data{Br}(B) \cup \data{Br}(C)\)). By induction hypothesis there is either \(X' \in
    \data{Br}(B)\) such that \(fX' = \{{\circ}\}\) or \(Y' \in \data{Br}(\dl{B})\)
    such that \(fY' = \{{\bullet}\}\). In the first case, let \(X = X'\) and we're done.
    Otherwise, there is again by induction hypothesis either \(X'' \in \data{Br}(C)\)
    s.t.~\(fX'' = \{{\circ}\}\), or~\(Y'' \in \data{Br}(\dl{C})\) s.t.~\(fY'' = \{{\bullet}\}\).
    If there is such a~\(X''\), let \(X = X''\); otherwise, let \(Y = Y' \cup Y'' \in
    \data{Br}(\dl{B} \lor \dl{C})\): we have \(fY = fY' \cup fY'' = \{{\bullet}\}\).

    If \(A = B \lor C\), apply the same argument, swapping polarities and \(A\) with~\(\dl{A}\).
\end{proof}

\begin{proof}[Proof of \cref{lemma:semantic-cut-admissibility}]
    We fix \(G_{\circ} = \blaxg{P}\), \(G_{\bullet} = \blaxg{Q}\) and \(I = \data{names}(A)\).
    \(I\) is obviously finite, and because the context of the conclusions of \(P, Q\) is
    atomic by hypothesis, the condition about branches is satisfied too (the unique branch
    name~\(X\) is the unique one in \(\data{Br}(\Gamma)\), i.e.~\(\data{names}(\Gamma)\)).

    We construct a state \(\sigma\) as follows:
    \[
        \sigma = \{ \langle f, (x,y) \rangle \mid f \in \data{Pol}^I, p \in \data{Pol}, xy \adin_{G_p} X \cup Y, fY = \{p\} \}.
    \]
    Let us unpack the construction first, then we shall prove that the state is consistent
    and live. \(\data{Pol}^I\) is the set of all functions from~\(I = \data{names}(A)\)
    to~\(\data{Pol}\) seen as sets of pairs, i.e.~\(\data{Pol}^I \subseteq I \times \data{Pol}\);
    we pair each \(f \in \data{Pol}^I\) with any finite sequence~\(x,y\) of vertices
    such that, for at least one polarity~\(p \in \data{Pol}\), \(xy\) is an edge in~\(G_p\),
    and moreover one of its branch labels satisfies the following conditions:
    \begin{itemize}
        \item it is equal to \(X \cup Y\) for some \(Y\) (where \(X\) is the unique branch
        name of~\(\Gamma\) described above);
        \item it is such that \(fY = \{p\}\).
    \end{itemize}
    \(\sigma\) is clearly finite, hence a state. For consistency (\cref{defn:consistency}),
    let \(\langle f, (x,y) \rangle \in \sigma\):
    \begin{itemize}
        \item \emph{mutual exclusion:} this condition can be read as asking that \(f\)
        be the graph of a function, hence it is obviously satisfied;

        \item \emph{alternation:} \(xy\) is an edge in either \(G_{\circ}\) or~\(G_{\bullet}\)
        by construction, hence an alternating path by \cref{lemma:single-edge-is-altpath};

        \item \emph{inclusion:} because \(f\) is total, if \(x \in I\) then \((x, fx) \in f\),
        and similarly for \(y\);

        \item \emph{coloring:} let \((x, p) \in f\); by construction there is \(q \in
        \data{Pol}\), \(Y \in \mathcal{N}\) such that \(xy \adin_{G_q} X \cup Y\) with
        \(fY = \{q\}\). By \cref{lemma:branches-of-union}, every branch name of~\(G_q\)
        is the union of the unique branch name~\(X\) of~\(\Gamma\) with some branch name
        of either~\(A\) (if \(q = {\circ}\)) or~\(\dl{A}\) (if \(q = {\bullet}\)).
        If \(q = {\circ}\), then, \(Y \in \data{Br}(A)\), otherwise \(Y \in \data{Br}(\dl{A})\).
        By \cref{defn:bl-graphs} \(x \in X \cup Y\), and because \(x \in I\) we must
        have \(x \in Y\). We know that \(fY = \{q\}\), hence in particular \(p = f(x) = q\).
        The path \(x,y\) is obviously \(q\)-initial, hence \(p\)-initial. Analogous reasoning
        shows that if \((y, p') \in f\), then the path is \(p'\)-final.
    \end{itemize}
    For liveness (\cref{defn:liveness}) we argue first that \(\sigma\) is non-empty:
    by \cref{lemma:dual-formulas-branch-inclusion} above, there is for each \(f \in
    \data{Pol}^I\) at least one branch \(Y \in \data{Br}(A)\) with \(fY = \{{\circ}\}\)
    or \(Y \in \data{Br}(\dl{A})\) with \(fY = \{{\bullet}\}\). Let \(fY = \{p\}\);
    by \cref{lemma:bl-axiom-graphs-cutfree-branches}, \((X \cup Y) \in \data{Br}(G_p)\),
    i.e.\ there is \(e \adin_{G_p} X \cup Y\). We have thus proven not just that \(\sigma\)
    is non-empty, but also that for each \(f \in \data{Pol}^I\) there is at least one
    pair \(\langle f, \vec{z} \rangle \in \sigma\).

    Now let \(\langle f, \vec{z} \rangle \in \sigma\), \((x, p) \in f\); by inverting the polarity
    assignment of \(x\) in~\(f\) we obtain \(f' = (f \setminus \{(x,p)\}) \cup \{(x, \co{p})\}
    \in \data{Pol}^I\). By the reasoning above there is \(\langle f', \vec{w} \rangle \in \sigma\),
    and it is immediate by construction that \(f \stackrel{x}{\bowtie} f'\).
\end{proof}

\section{Totality lemmas and correctness algorithm for BLG}\label{sec:totality-proofs}

\begin{proof}[Proof of \cref{lemma:bl-axiom-graphs-cutfree-branches}]
    By structural induction on~\(P\). If \(P\) ends with an axiom rule application,
    the conclusion is equivalent to \cref{coro:bl-axiom-branches}; if \(P\) ends with
    a superposition rule application, it follows immediately from the induction hypothesis;
    if \(P\) ends with a logical rule, we conclude from the induction hypothesis and
    \cref{propo:sequent-branches-decomposition}, points~(ii) and~(iii). The details
    are left to the reader.
\end{proof}

\begin{proof}[Proof of \cref{lemma:semantic-disjunction-inversion}]
    We have \(\data{names}(\Gamma, A, B) = \data{names}(\Gamma, A \lor B)\); \(\data{Br}(
    \Gamma, A, B) = \data{Br}(\Gamma, A \lor B)\) by \cref{propo:sequent-branches-decomposition};
    and finally \((\Gamma, A, B)[x] = (\Gamma, A \lor B)[x]\) for all~\(x \in \data{names}(
    \Gamma, A, B)\).
\end{proof}

\begin{proof}[Proof of \cref{lemma:semantic-conjunction-inversion}]
    We have two prove two facts: \emph{(i)} that \(G\) is effectively equal to the union
    of the two restrictions, and \emph{(ii)} that the two restrictions are total w.r.t.\ the
    restricted conclusions. For fact~(i), we have obviously
    \[
        \data{names}(\Gamma, A \land B) = \data{names}(\Gamma, A) \cup \data{names}(\Gamma, B).
    \]
    Then by construction \(V_{G \rst_{\Gamma, A}} = \data{names}(\Gamma, A)\) and
    \(V_{G \rst_{\Gamma, B}} = \data{names}(\Gamma, B)\), hence \(V_G = V_{G \rst_{\Gamma, A}}
    \cup V_{G \rst_{\Gamma, B}}\). Now let \(e \adin_G X\). By totality \(X \in \data{Br}(\Gamma,
    A \land B)\); then by \cref{propo:sequent-branches-decomposition} either \(X \in
    \data{Br}(\Gamma, A)\) or \(X \in \data{Br}(\Gamma, B)\): in the first case we have
    \(X \subseteq \data{names}(\Gamma, A)\) hence \(e \adin_{G \rst_{\Gamma, A}} X\);
    similarly in the second case we have \(e \adin_{G \rst_{\Gamma, B}} X\). The reverse
    inclusion is obvious.

    For fact~(ii), we have already argued that the two restriction have the appropriate vertex
    sets (\cref{defn:totality}, point~(i)); that all edges link dual atoms (\cref{defn:totality},
    point~(iii)) follows from the fact that the edges come from~\(G\), which satisfies the same
    condition, and for all~\(x \in \data{names}(\Gamma, A)\) (resp.~\(\data{names}(\Gamma, B)\))
    we have \((\Gamma, A \land B)[x] = (\Gamma, A)[x]\) (resp.~\((\Gamma, B)[x]\)). Finally
    we have to show that \(\data{Br}(G \rst_{\Gamma, A}) = \data{Br}(\Gamma, A)\) and
    \(\data{Br}(G \rst_{\Gamma, B}) = \data{Br}(\Gamma, B)\) (\cref{defn:totality}, point~(ii)).
    The forward inclusions are proved already in \cref{lemma:conj-restriction-branches}.
    For the reverse, let \(X \in \data{Br}(\Gamma, A)\) (resp.~\(\data{Br}(\Gamma, B)\));
    by totality there is \(e \adin_G X\), and because \(X \subseteq \data{names}(\Gamma, A)\)
    (resp.~\(\data{names}(\Gamma, B)\)) we have \(e \adin_{G \rst_{\Gamma, A}} X\) (resp.~%
    \(e \adin_{G \rst_{\Gamma, B}} X\)).
\end{proof}

\subsection{Correctness algorithm for BLG}

\begin{proof}[Proof of \cref{propo:polynomial-time-totality}]
    We work under the reasonable assumption that checking name equality requires constant time.
    Let us start by recalling the definition of the size of~\(\mathbf{G}\) (\cref{defn:blg-size}):
    \[
        \data{size}(\mathbf{G}) = \data{size}(\th \Gamma) + |V_G| + \sum_{e \adin_G X} |X|.
    \]
    The size of~\(\th \Gamma\) (\cref{defn:sequent-complexity}, \cref{sec:complexity}) is by
    \cref{propo:sequent-complexity-relationships} the sum of the number of atom occurrences
    in~\(\Gamma\) (notation \(\data{\#at}(\th \Gamma)\)) with the number of logical symbols,
    also called the \emph{degree of~\(\Gamma\)} (notation \(\data{deg}(\Gamma)\)). Because
    \(\Gamma\) is sharing-free by hypothesis, the number of atom occurrences coincides with
    the number of names, i.e.~\(\data{\#at}(\th \Gamma) = |\data{names}(\Gamma)|\).

    Observe also that the sum at the end of the expression is taken not over the set
    of branches of~\(G\), but over all edge-branch pairs, i.e.\ a branch may be counted
    more than once if it has multiple edges. Because \(e \adin_G X\) implies \(e \subseteq X\),
    i.e.\ no branch in a bl-graph is empty, the sum provides an upper bound to the
    number of edge-branch pairs as well as to that of edges and branches, i.e.\ we have
    \[
        |E_G|, |\data{Br}(G)| \leq |{\adin_G}| \leq \data{size}(\mathbf{G}).
    \]
    We have to check three conditions separately (\cref{defn:totality}):
    \begin{enumerate}[(i)]
        \item \(V_G = \data{names}(\Gamma)\): checking name set equality is worst-case
        polynomial in their cardinalities, and constructing the set \(\data{names}(\Gamma)\)
        from \(\Gamma\) requires polynomial time in the size of \(\th \Gamma\);

        \item \(\data{Br}(G) = \data{Br}(\Gamma)\): this is the most complex problem.
        Generating just one element in~\(\data{Br}(\Gamma)\) amounts to persistently expanding
        one branch of a~\(\data{GS4}^\mathcal{N}\) derivation of~\(\Gamma\) until an atomic
        sequent is reached; the length of such a branch is known to be bounded by the
        complexity degree of~\(\th \Gamma\) \cite{Pul22}, hence the cost of generating
        one branch name is polynomially bounded in the size of~\(\th \Gamma\). However,
        the total number of branches is in the worst case exponential in the complexity
        degree of~\(\th \Gamma\): thus we cannot take the naive approach of constructing
        the whole set~\(\data{Br}(\Gamma)\), then testing for equality. We describe an
        informal algorithm: the idea is to generate the branch names of~\(\Gamma\)
        incrementally, match them with some element from the set \(\data{Br}(G)\) and
        erase that element. For every matching attempt, either the algorithm fails or
        the number of branches still to be matched decreases: in this way the number
        of generated branch names is always bounded by the cardinality of~\(\data{Br}(G)\).
        Let \(S\) be the branch set to test (initially \(S = \data{Br}(G)\)), \(\Delta\)
        the sequent to test against (initially \(\Delta = \Gamma\)); the algorithm has
        three phases:
        \begin{itemize}
            \item \emph{problem reduction phase:} if the sequent \(\Delta\) to test
            against contains a disjunction, i.e.\ is of the form \(\Delta', A \lor B\),
            then replace it with \(\Delta', A, B\). If it contains no disjunction but
            at least a conjunction, i.e.\ is of the form \(\Delta', A \land B\), then
            replace it with \(\Delta', A\) and append \(\Delta', B\) to a list of
            sequents to test against later. By \cref{propo:sequent-branches-decomposition},
            the set of branches to test against has not changed;

            \item \emph{matching phase:} if the sequent \(\Delta\) to test against is atomic,
            then it has a unique branch \(X = \data{names}(\Delta)\). We search for that branch
            in~\(S\): if not found, then \(X \in \data{Br}(\Gamma)\) but \(X \notin \data{Br}(G)\),
            and we stop; if found, we erase \(X\) from \(S\) and move on to the backtracking
            phase;

            \item \emph{backtracking phase:} if both \(S\) and the list of delayed sequents are
            empty, then we're done. If \(S\) is empty but the list is not, then some branches
            are missing from \(\data{Br}(G)\) and we stop. If \(S\) has some elements but
            the list is empty, then there are excess branches in \(\data{Br}(G)\) and we
            stop. If both are non-empty, we pick the first element of the list as the new
            \(\Delta\), erase it from the list and move back to the reduction phase.
        \end{itemize}
        It is clear that the number of steps in the reduction phase is bounded by the complexity
        degree of the active sequent~\(\Delta\), which always decreases: when it reaches zero,
        we move to the matching phase. The matching phase either fails or decreases the
        size of the set~\(B\); once the set~\(B\) becomes empty, the algorithm stops:
        thus the size of~\(B\) bounds the number of future reduction phases. Observe now that
        a reduction step requires polynomial time in the size of the active sequent;
        a matching step requires polynomial time in the cardinality of the generated branch
        and the sum of the cardinalities of all elements of~\(B\). Backtracking steps require
        constant time. Every measure mentioned above is itself bounded by \(\data{size}(\mathbf{G})\).
        We have then a polynomial bound on the execution time of the algorithm with parameter
        \((\data{size}(\mathbf{G}))^3\).

        \item for all~\(xy \in E_G\), \(\Gamma[x] = \dl{\Gamma[y]}\): we observed at the
        beginning of the proof that \(\data{size}(\mathbf{G})\) bounds \(|E_G|\); finding
        the atom associated to a given name in~\(\Gamma\) is polynomial in the size
        of~\(\th \Gamma\). \qedhere
    \end{enumerate}
\end{proof}

\clearpage

{
\null
\vfill
\begin{figure}[!h]
    \begin{subfigure}{\textwidth}
        \noindent\makebox[\textwidth]{
            \prftree[r]{\rl{cut}}
                {\prftree[r]{\rl{\land}}
                    {\prftree[r]{\rl{\lor}}
                        {\prfbyaxiom{\rl[\{\nm{x}{\dl{\alpha}}, \nm{u}{\alpha}\}]{ax}}
                            {\th \nm{x}{\dl{\tikzmarknode{x-1}{\alpha}}}, \nm{y}{\alpha}, \nm{u}{\tikzmarknode{u-1}{\alpha}}}}
                        {\th \nm{x}{\dl{\tikzmarknode{x-2}{\alpha}}} \lor \nm{y}{\alpha}, \nm{u}{\tikzmarknode{u-2}{\alpha}}}}
                    {\prftree[r]{\rl{\lor}}
                        {\prfbyaxiom{\rl[\{\nm{z}{\dl{\alpha}}, \nm{u}{\alpha}\}]{ax}}
                            {\th \nm{z}{\dl{\alpha}}, \nm{w}{\alpha}, \nm{u}{\alpha}}}
                        {\th \nm{z}{\dl{\alpha}} \lor \nm{w}{\alpha}, \nm{u}{\alpha}}}
                    {\th (\nm{x}{\dl{\tikzmarknode{x-3}{\alpha}}} \lor \nm{y}{\alpha}) \land (\nm{z}{\dl{\alpha}} \lor \nm{w}{\alpha}), \nm{u}{\tikzmarknode{u-3}{\alpha}}}}
                {\prftree[r]{\rl{\land}}
                    {\prftree[r]{\rl{\lor}}
                        {\prfbyaxiom{\rl[\{\nm{y}{\alpha}, \nm{u}{\dl{\alpha}}\}]{ax}}
                            {\th \nm{x}{\dl{\alpha}}, \nm{y}{\alpha}, \nm{u}{\dl{\alpha}}}}
                        {\th \nm{x}{\dl{\alpha}} \lor \nm{y}{\alpha}, \nm{u}{\dl{\alpha}}}}
                    {\prftree[r]{\rl{\lor}}
                        {\prfbyaxiom{\rl[\{\nm{w}{\alpha}, \nm{u}{\dl{\alpha}}\}]{ax}}
                            {\th \nm{z}{\dl{\alpha}}, \nm{w}{\tikzmarknode{w-1}{\alpha}}, \nm{u}{\dl{\tikzmarknode{u-4}{\alpha}}}}}
                        {\th \nm{z}{\dl{\alpha}} \lor \nm{w}{\tikzmarknode{w-2}{\alpha}}, \nm{u}{\dl{\tikzmarknode{u-5}{\alpha}}}}}
                    {\th (\nm{x}{\dl{\alpha}} \lor \nm{y}{\alpha}) \land (\nm{z}{\dl{\alpha}} \lor \nm{w}{\tikzmarknode{w-3}{\alpha}}), \nm{u}{\dl{\tikzmarknode{u-6}{\alpha}}}}}
                {\th (\nm{x}{\dl{\tikzmarknode{x-4}{\alpha}}} \lor \nm{y}{\alpha}) \land (\nm{z}{\dl{\alpha}} \lor \nm{w}{\tikzmarknode{w-4}{\alpha}})}
        }%
        \begin{tikzpicture}[remember picture, overlay, thick, blue]
            \draw ([yshift=2pt]x-1.north) .. controls ++(0,1.3em) and ++(0,1.3em) .. ([yshift=2pt]u-1.north);
            \draw ([yshift=2pt]w-1.north) .. controls ++(0,1.3em) and ++(0,1.3em) .. ([yshift=2pt]u-4.north);
            \draw ([yshift=2pt]x-4.north) .. controls ++(0,15pt) and ++(0,-15pt) .. ([yshift=-1pt]x-3.south);
            \draw ([yshift=2pt]x-3.north) .. controls ++(0,5pt) and ++(0,-5pt) .. ([yshift=-1pt]x-2.south);
            \draw ([yshift=2pt]x-2.north) .. controls ++(0,2pt) and ++(0,-2pt) .. ([yshift=-1pt]x-1.south);
            \draw ([yshift=1pt]u-2.north) .. controls ++(0,4pt) and ++(0,-4pt) .. ([yshift=-1pt]u-1.south);
            \draw ([yshift=1pt]u-3.north) .. controls ++(0,10pt) and ++(0,-7pt) .. ([yshift=-1pt]u-2.south);
            \draw ([yshift=1pt]w-4.north) .. controls ++(0,15pt) and ++(0,-15pt) .. ([yshift=-1pt]w-3.south);
            \draw ([yshift=1pt]w-3.north) .. controls ++(0,8pt) and ++(0,-4pt) .. ([yshift=-1pt]w-2.south);
            \draw ([yshift=1pt]w-2.north) .. controls ++(0,3pt) and ++(0,-3pt) .. ([yshift=-1pt]w-1.south);
            \draw ([yshift=2pt]u-5.north) .. controls ++(0,3pt) and ++(0,-3pt) .. ([yshift=-1pt]u-4.south);
            \draw ([yshift=2pt]u-6.north) .. controls ++(0,5pt) and ++(0,-5pt) .. ([yshift=-1pt]u-5.south);
            \path[draw, densely dashed, semithick, rounded corners=1em]
                ([yshift=-1pt]u-3.south) -- ([yshift=-2.3em]u-3.south) -- ([yshift=-2.3em]u-6.south) -- ([yshift=-1pt]u-6.south);
        \end{tikzpicture}%
        \\[.3em]
        \caption{The initial derivation, where the conjunction in the conclusion is not
            introduced by the last rule. Assume names \(x, y, z, w, u \in \mathcal{N}\)
            are pairwise distinct.}
        \label{fig:axiom-graph-inequality-example1:pre}
    \end{subfigure}
    \par\vspace{4em}%
    \begin{subfigure}{\textwidth}
        \noindent\makebox[\textwidth]{
            \prftree[r]{\rl{\land}}
                {\prftree[r]{\rl{cut}}
                    {\prftree[r]{\rl{\lor}}
                        {\prfbyaxiom{\rl[\{\nm{x}{\dl{\alpha}}, \nm{u}{\alpha}\}]{ax}}
                            {\th \nm{x}{\dl{\tikzmarknode{x-1}{\alpha}}}, \nm{y}{\alpha}, \nm{u}{\tikzmarknode{u-1}{\alpha}}}}
                        {\th \nm{x}{\dl{\tikzmarknode{x-2}{\alpha}}} \lor \nm{y}{\alpha}, \nm{u}{\tikzmarknode{u-2}{\alpha}}}}
                    {\prftree[r]{\rl{\lor}}
                        {\prfbyaxiom{\rl[\{\nm{y}{\alpha}, \nm{u}{\dl{\alpha}}\}]{ax}}
                            {\th \nm{x}{\dl{\alpha}}, \nm{y}{\alpha}, \nm{u}{\dl{\alpha}}}}
                        {\th \nm{x}{\dl{\alpha}} \lor \nm{y}{\alpha}, \nm{u}{\dl{\tikzmarknode{u-3}{\alpha}}}}}
                    {\th \nm{x}{\dl{\tikzmarknode{x-3}{\alpha}}} \lor \nm{y}{\alpha}}}
                {\prftree[r]{\rl{cut}}
                    {\prftree[r]{\rl{\lor}}
                        {\prfbyaxiom{\rl[\{\nm{z}{\dl{\alpha}}, \nm{u}{\alpha}\}]{ax}}
                            {\th \nm{z}{\dl{\alpha}}, \nm{w}{\alpha}, \nm{u}{\alpha}}}
                        {\th \nm{z}{\dl{\alpha}} \lor \nm{w}{\alpha}, \nm{u}{\tikzmarknode{u-6}{\alpha}}}}
                    {\prftree[r]{\rl{\lor}}
                        {\prfbyaxiom{\rl[\{\nm{w}{\alpha}, \nm{u}{\dl{\alpha}}\}]{ax}}
                            {\th \nm{z}{\dl{\alpha}}, \nm{w}{\tikzmarknode{w-1}{\alpha}}, \nm{u}{\dl{\tikzmarknode{u-4}{\alpha}}}}}
                        {\th \nm{z}{\dl{\alpha}} \lor \nm{w}{\tikzmarknode{w-2}{\alpha}}, \nm{u}{\dl{\tikzmarknode{u-5}{\alpha}}}}}
                    {\th \nm{z}{\dl{\alpha}} \lor \nm{w}{\tikzmarknode{w-3}{\alpha}}}}
                {\th (\nm{x}{\dl{\tikzmarknode{x-4}{\alpha}}} \lor \nm{y}{\alpha}) \land (\nm{z}{\dl{\alpha}} \lor \nm{w}{\tikzmarknode{w-4}{\alpha}})}
        }%
        \begin{tikzpicture}[remember picture, overlay, thick, blue]
            \draw ([yshift=2pt]x-1.north) .. controls ++(0,1.3em) and ++(0,1.3em) .. ([yshift=2pt]u-1.north);
            \draw ([yshift=2pt]w-1.north) .. controls ++(0,1.3em) and ++(0,1.3em) .. ([yshift=2pt]u-4.north);
            \draw ([yshift=2pt]x-4.north) .. controls ++(0,15pt) and ++(0,-15pt) .. ([yshift=-1pt]x-3.south);
            \draw ([yshift=2pt]x-3.north) .. controls ++(0,8pt) and ++(0,-8pt) .. ([yshift=-1pt]x-2.south);
            \draw ([yshift=2pt]x-2.north) .. controls ++(0,2pt) and ++(0,-2pt) .. ([yshift=-1pt]x-1.south);
            \draw ([yshift=1pt]u-2.north) .. controls ++(0,4pt) and ++(0,-4pt) .. ([yshift=-1pt]u-1.south);
            \draw ([yshift=1pt]w-4.north) .. controls ++(0,15pt) and ++(0,-15pt) .. ([yshift=-1pt]w-3.south);
            \draw ([yshift=1pt]w-3.north) .. controls ++(0,10pt) and ++(0,-7pt) .. ([yshift=-1pt]w-2.south);
            \draw ([yshift=1pt]w-2.north) .. controls ++(0,3pt) and ++(0,-3pt) .. ([yshift=-1pt]w-1.south);
            \draw ([yshift=2pt]u-5.north) .. controls ++(0,3pt) and ++(0,-3pt) .. ([yshift=-1pt]u-4.south);
            \path[draw, densely dashed, semithick, rounded corners=1em]
                ([yshift=-1pt]u-2.south) -- ([yshift=-3em]u-2.south) -- ([shift={(-5em,-3em)}]u-3.south)
                -- ([shift={(-5em,-1em)}]u-3.south) -- ([yshift=-1em]u-3.south) -- ([yshift=-1pt]u-3.south);
            \path[draw, densely dashed, semithick, rounded corners=1em]
                ([yshift=-1pt]u-5.south) -- ([yshift=-3em]u-5.south) -- ([yshift=-3em]u-6.south) -- ([yshift=-1pt]u-6.south);
        \end{tikzpicture}%
        \\
        \caption{The transformed derivation, after isolating the conjuction. The two halves of the original alternating path
            are now disconnected.}
        \label{fig:axiom-graph-inequality-example1:post}
    \end{subfigure}
    \par\vspace{4em}%
    \begin{subfigure}{\textwidth}
        \centering
        \begin{tikzpicture}
            \graph[math nodes]{
                x --[bend left] y --[bend right] z --[bend left] w;
                x --[bend right] w;
            };
        \end{tikzpicture}%
        \qquad\qquad\qquad\qquad%
        \begin{tikzpicture}
            \graph[math nodes]{
                x --[bend left] y -!- z --[bend left] w;
                x --[bend right, opacity=0] w;
            };
        \end{tikzpicture}\\[1em]
        \caption{The axiom graph of the original derivation~(left) and that of the transformed
            one~(right).}
        \label{fig:axiom-graph-inequality-example1:graphs}
    \end{subfigure}
    \par\vspace{4em}%
    \caption{A derivation with cuts whose axiom graph decreases when isolating the conjunction
        in its conclusion.}
    \label{fig:axiom-graph-inequality-example1}
\end{figure}
\vfill
\clearpage
}

{
\addtolength{\columnwidth}{\pdfpagewidth}
\addtolength{\textwidth}{\pdfpagewidth}
\pdfpagewidth=2\pdfpagewidth
\null
\vfill
\begin{figure}[!h]
    \begin{subfigure}{\textwidth}
        \noindent\makebox[\textwidth]{
            \prftree[r]{\rl{cut}}
            {\prftree[r]{\rl{\land}}
                {\prftree[r]{\rl{\land}}
                    {\prfbyaxiom{\rl[\{\nm{x}{\dl{\alpha}}, \nm{t}{\alpha}\}]{ax}}
                        {\hvp\th \nm{x}{\dl{\alpha}}, \nm{y}{\dl{\alpha}}, \nm{z}{\dl{\beta}}, \nm{t}{\alpha}, \nm{v}{\alpha}}}
                    {\prfbyaxiom{\rl[\{\nm{x}{\dl{\alpha}}, \nm{t}{\alpha}\}]{ax}}
                        {\hvp\th \nm{x}{\dl{\alpha}}, \nm{y}{\dl{\alpha}}, \nm{z}{\dl{\beta}}, \nm{t}{\alpha}, \nm{w}{\beta}}}
                    {\hvp\th \nm{x}{\dl{\alpha}}, \nm{y}{\dl{\alpha}}, \nm{z}{\dl{\beta}}, \nm{t}{\alpha}, \nm{v}{\alpha} \land \nm{w}{\beta}}}
                {\prftree[r]{\rl{\land}}
                    {\prfbyaxiom{\rl[\{\nm{y}{\dl{\alpha}}, \nm{v}{\alpha}\}]{ax}}
                        {\hvp\th \nm{x}{\dl{\alpha}}, \nm{y}{\dl{\tikzmarknode{y-l-1}{\alpha}}}, \nm{z}{\dl{\beta}}, \nm{u}{\beta}, \nm{v}{\tikzmarknode{v-l-1}{\alpha}}}}
                    {\prfbyaxiom{\rl[\{\nm{z}{\dl{\beta}}, \nm{w}{\beta}\}]{ax}}
                        {\hvp\th \nm{x}{\dl{\alpha}}, \nm{y}{\dl{\alpha}}, \nm{z}{\dl{\beta}}, \nm{u}{\beta}, \nm{w}{\beta}}}
                    {\hvp\th \nm{x}{\dl{\alpha}}, \nm{y}{\dl{\tikzmarknode{y-l-2}{\alpha}}}, \nm{z}{\dl{\beta}}, \nm{u}{\beta}, \nm{v}{\tikzmarknode{v-l-2}{\alpha}} \land \nm{w}{\beta}}}
                {\th \nm{x}{\dl{\alpha}}, \nm{y}{\dl{\tikzmarknode{y-l-3}{\alpha}}}, \nm{z}{\dl{\beta}}, \nm{t}{\alpha} \land \nm{u}{\beta}, \nm{v}{\tikzmarknode{v-l-3}{\alpha}} \land \nm{w}{\beta}}}
            {\prftree[r]{\rl{\land}}
                {\prftree[r]{\rl{\lor}}
                    {\prfbyaxiom{\rl[\{\nm{t}{\alpha}, \nm{v}{\dl{\alpha}}\}]{ax}}
                        {\hvp\th \nm{x}{\dl{\alpha}}, \nm{y}{\dl{\alpha}}, \nm{z}{\dl{\beta}}, \nm{t}{\tikzmarknode{t-r-1}{\alpha}}, \nm{v}{\dl{\tikzmarknode{v-r-1}{\alpha}}}, \nm{w}{\dl{\beta}}}}
                    {\hvp\th \nm{x}{\dl{\alpha}}, \nm{y}{\dl{\alpha}}, \nm{z}{\dl{\beta}}, \nm{t}{\tikzmarknode{t-r-2}{\alpha}}, \nm{v}{\dl{\tikzmarknode{v-r-2}{\alpha}}} \lor \nm{w}{\dl{\beta}}}}
                {\prftree[r]{\rl{\lor}}
                    {\prfbyaxiom{\rl[\{\nm{u}{\beta}, \nm{w}{\dl{\beta}}\}]{ax}}
                        {\hvp\th \nm{x}{\dl{\alpha}}, \nm{y}{\dl{\alpha}}, \nm{z}{\dl{\beta}}, \nm{u}{\beta}, \nm{v}{\dl{\alpha}}, \nm{w}{\dl{\beta}}}}
                    {\hvp\th \nm{x}{\dl{\alpha}}, \nm{y}{\dl{\alpha}}, \nm{z}{\dl{\beta}}, \nm{u}{\beta}, \nm{v}{\dl{\alpha}} \lor \nm{w}{\dl{\beta}}}}
                {\hvp\th \nm{x}{\dl{\alpha}}, \nm{y}{\dl{\alpha}}, \nm{z}{\dl{\beta}}, \nm{t}{\tikzmarknode{t-r-3}{\alpha}} \land \nm{u}{\beta}, \nm{v}{\dl{\tikzmarknode{v-r-3}{\alpha}}} \lor \nm{w}{\dl{\beta}}}}
            {\hvp\th \nm{x}{\dl{\alpha}}, \nm{y}{\dl{\tikzmarknode{y-4}{\alpha}}}, \nm{z}{\dl{\beta}}, \nm{t}{\tikzmarknode{t-4}{\alpha}} \land \nm{u}{\beta}}
        }%
        \begin{tikzpicture}[remember picture, overlay, thick, blue]
            \draw ([yshift=2pt]y-l-1.north) .. controls ++(0,1.3em) and ++(0,1.3em) .. ([yshift=2pt]v-l-1.north);
            \draw ([yshift=2pt]t-r-1.north) .. controls ++(0,1.3em) and ++(0,1.3em) .. ([yshift=2pt]v-r-1.north);
            \draw ([yshift=2pt]y-4.north) .. controls ++(0,30pt) and ++(0,-20pt) .. ([yshift=-1pt]y-l-3.south);
            \draw ([yshift=2pt]y-l-3.north) .. controls ++(0,25pt) and ++(0,-10pt) .. ([yshift=-1pt]y-l-2.south);
            \draw ([yshift=2pt]y-l-2.north) .. controls ++(0,10pt) and ++(0,-10pt) .. ([yshift=-1pt]y-l-1.south);
            \draw ([yshift=1pt]v-l-3.north) .. controls ++(0,15pt) and ++(0,-20pt) .. ([yshift=-1pt]v-l-2.south);
            \draw ([yshift=1pt]v-l-2.north) .. controls ++(0,10pt) and ++(0,-10pt) .. ([yshift=-1pt]v-l-1.south);
            \draw ([yshift=1pt]t-4.north) .. controls ++(0,30pt) and ++(0,-20pt) .. ([yshift=-1pt]t-r-3.south);
            \draw ([yshift=1pt]t-r-3.north) .. controls ++(0,10pt) and ++(0,-10pt) .. ([yshift=-1pt]t-r-2.south);
            \draw ([yshift=1pt]t-r-2.north) .. controls ++(0,7pt) and ++(0,-7pt) .. ([yshift=-1pt]t-r-1.south);
            \draw ([yshift=2pt]v-r-3.north) .. controls ++(0,15pt) and ++(0,-7pt) .. ([yshift=-1pt]v-r-2.south);
            \draw ([yshift=2pt]v-r-2.north) .. controls ++(0,7pt) and ++(0,-7pt) .. ([yshift=-1pt]v-r-1.south);
            \path[draw, densely dashed, semithick, rounded corners=1em]
                ([yshift=-1pt]v-l-3.south) -- ([yshift=-3.3em]v-l-3.south) -- ([yshift=-3.3em]v-r-3.south) -- ([yshift=-1pt]v-r-3.south);
        \end{tikzpicture}%
        \\[.5em]
        \caption{The initial derivation, where the conjunction in the conclusion is not
            introduced by the last rule. Assume names \(x, y, z, t, u, v, w \in \mathcal{N}\)
            are pairwise distinct.}
        \label{fig:axiom-graph-inequality-example2:pre}
    \end{subfigure}
    \par\vspace{4em}%
    \begin{subfigure}{\textwidth}
        \noindent\makebox[\textwidth]{
            \prftree[r]{\rl{\land}}
                {\prftree[r]{\rl{cut}}
                    {\prftree[r]{\rl{\land}}
                        {\prfbyaxiom{\rl[\{\nm{x}{\dl{\alpha}}, \nm{t}{\alpha}\}]{ax}}
                            {\hvp\th \nm{x}{\dl{\alpha}}, \nm{y}{\dl{\alpha}}, \nm{z}{\dl{\beta}}, \nm{t}{\alpha}, \nm{v}{\alpha}}}
                        {\prfbyaxiom{\rl[\{\nm{x}{\dl{\alpha}}, \nm{t}{\alpha}\}]{ax}}
                            {\hvp\th \nm{x}{\dl{\alpha}}, \nm{y}{\dl{\alpha}}, \nm{z}{\dl{\beta}}, \nm{t}{\alpha}, \nm{w}{\beta}}}
                        {\hvp\th \nm{x}{\dl{\alpha}}, \nm{y}{\dl{\alpha}}, \nm{z}{\dl{\beta}}, \nm{t}{\alpha}, \nm{v}{\tikzmarknode{v-l}{\alpha}} \land \nm{w}{\beta}}}
                    {\prftree[r]{\rl{\lor}}
                        {\prfbyaxiom{\rl[\{\nm{t}{\alpha}, \nm{v}{\dl{\alpha}}\}]{ax}}
                            {\hvp\th \nm{x}{\dl{\alpha}}, \nm{y}{\dl{\alpha}}, \nm{z}{\dl{\beta}}, \nm{t}{\tikzmarknode{t-r-1}{\alpha}}, \nm{v}{\dl{\tikzmarknode{v-r-1}{\alpha}}}, \nm{w}{\dl{\beta}}}}
                        {\hvp\th \nm{x}{\dl{\alpha}}, \nm{y}{\dl{\alpha}}, \nm{z}{\dl{\beta}}, \nm{t}{\tikzmarknode{t-r-2}{\alpha}}, \nm{v}{\dl{\tikzmarknode{v-r-2}{\alpha}}} \lor \nm{w}{\dl{\beta}}}}
                    {\hvp\th \nm{x}{\dl{\alpha}}, \nm{y}{\dl{\alpha}}, \nm{z}{\dl{\beta}}, \nm{t}{\tikzmarknode{t-r-3}{\alpha}}}}
                {\prftree[r]{\rl{cut}}
                    {\prftree[r]{\rl{\land}}
                        {\prfbyaxiom{\rl[\{\nm{y}{\dl{\alpha}}, \nm{v}{\alpha}\}]{ax}}
                            {\hvp\th \nm{x}{\dl{\alpha}}, \nm{y}{\dl{\tikzmarknode{y-l-1}{\alpha}}}, \nm{z}{\dl{\beta}}, \nm{u}{\beta}, \nm{v}{\tikzmarknode{v-l-1}{\alpha}}}}
                        {\prfbyaxiom{\rl[\{\nm{z}{\dl{\beta}}, \nm{w}{\beta}\}]{ax}}
                            {\hvp\th \nm{x}{\dl{\alpha}}, \nm{y}{\dl{\alpha}}, \nm{z}{\dl{\beta}}, \nm{u}{\beta}, \nm{w}{\beta}}}
                        {\hvp\th \nm{x}{\dl{\alpha}}, \nm{y}{\dl{\tikzmarknode{y-l-2}{\alpha}}}, \nm{z}{\dl{\beta}}, \nm{u}{\beta}, \nm{v}{\tikzmarknode{v-l-2}{\alpha}} \land \nm{w}{\beta}}}
                    {\prftree[r]{\rl{\lor}}
                        {\prfbyaxiom{\rl[\{\nm{u}{\beta}, \nm{w}{\dl{\beta}}\}]{ax}}
                            {\hvp\th \nm{x}{\dl{\alpha}}, \nm{y}{\dl{\alpha}}, \nm{z}{\dl{\beta}}, \nm{u}{\beta}, \nm{v}{\dl{\alpha}}, \nm{w}{\dl{\beta}}}}
                        {\hvp\th \nm{x}{\dl{\alpha}}, \nm{y}{\dl{\alpha}}, \nm{z}{\dl{\beta}}, \nm{u}{\beta}, \nm{v}{\dl{\tikzmarknode{v-r}{\alpha}}} \lor \nm{w}{\dl{\beta}}}}
                    {\hvp\th \nm{x}{\dl{\alpha}}, \nm{y}{\dl{\tikzmarknode{y-l-3}{\alpha}}}, \nm{z}{\dl{\beta}}, \nm{u}{\beta}}}
                {\hvp\th \nm{x}{\dl{\alpha}}, \nm{y}{\dl{\tikzmarknode{y-4}{\alpha}}}, \nm{z}{\dl{\beta}}, \nm{t}{\tikzmarknode{t-4}{\alpha}} \land \nm{u}{\beta}}
        }%
        \begin{tikzpicture}[remember picture, overlay, thick, blue]
            \draw ([yshift=2pt]y-l-1.north) .. controls ++(0,1.3em) and ++(0,1.3em) .. ([yshift=2pt]v-l-1.north);
            \draw ([yshift=2pt]t-r-1.north) .. controls ++(0,1.3em) and ++(0,1.3em) .. ([yshift=2pt]v-r-1.north);
            \draw ([yshift=2pt]y-4.north) .. controls ++(0,30pt) and ++(0,-20pt) .. ([yshift=-1pt]y-l-3.south);
            \draw ([yshift=2pt]y-l-3.north) .. controls ++(0,25pt) and ++(0,-10pt) .. ([yshift=-1pt]y-l-2.south);
            \draw ([yshift=2pt]y-l-2.north) .. controls ++(0,10pt) and ++(0,-10pt) .. ([yshift=-1pt]y-l-1.south);
            \draw ([yshift=1pt]v-l-2.north) .. controls ++(0,10pt) and ++(0,-10pt) .. ([yshift=-1pt]v-l-1.south);
            \draw ([yshift=1pt]t-4.north) .. controls ++(0,30pt) and ++(0,-20pt) .. ([yshift=-1pt]t-r-3.south);
            \draw ([yshift=1pt]t-r-3.north) .. controls ++(0,10pt) and ++(0,-10pt) .. ([yshift=-1pt]t-r-2.south);
            \draw ([yshift=1pt]t-r-2.north) .. controls ++(0,7pt) and ++(0,-7pt) .. ([yshift=-1pt]t-r-1.south);
            \draw ([yshift=2pt]v-r-2.north) .. controls ++(0,7pt) and ++(0,-7pt) .. ([yshift=-1pt]v-r-1.south);
            \path[draw, densely dashed, semithick, rounded corners=1em]
                 ([yshift=-1pt]v-l-2.south) -- ([yshift=-4.7em]v-l-2.south) -- ([yshift=-4.7em]v-r.south) -- ([yshift=-1pt]v-r.south);
            \path[draw, densely dashed, semithick, rounded corners=1em]
                 ([yshift=-1pt]v-r-2.south) -- ([yshift=-4.7em]v-r-2.south) -- ([yshift=-4.7em]v-l.south) -- ([yshift=-1pt]v-l.south);
        \end{tikzpicture}%
        \\
        \caption{The transformed derivation, after isolating the conjuction. The two halves of the original alternating path
            are now disconnected.}
        \label{fig:axiom-graph-inequality-example2:post}
    \end{subfigure}
    \par\vspace{4em}%
    \begin{subfigure}{\textwidth}
        \centering
        \begin{tikzpicture}
            \graph[math nodes]{
                x -!- y -!- z -!- t -!- u;
                x --[bend left] t;
                z --[bend left] u;
                y --[bend right] t;
            };
        \end{tikzpicture}%
        \qquad\qquad\qquad\qquad%
        \begin{tikzpicture}
            \graph[math nodes]{
                x -!- y -!- z -!- t -!- u;
                x --[bend left] t;
                z --[bend left] u;
                y --[bend right, opacity=0] t;
            };
        \end{tikzpicture}\\[1em]
        \caption{The axiom graph of the original derivation~(left) and that of the transformed
            one~(right).}
        \label{fig:axiom-graph-inequality-example2:graphs}
    \end{subfigure}
    \par\vspace{4em}%
    \caption{A derivation with cuts whose axiom graph decreases when isolating the conjunction
        in its conclusion. The lost path does not visibly cross a conjunction.}
    \label{fig:axiom-graph-inequality-example2}
\end{figure}
\vfill
\clearpage
}

{
\addtolength{\columnwidth}{\pdfpagewidth}
\addtolength{\textwidth}{\pdfpagewidth}
\pdfpagewidth=2\pdfpagewidth
\null
\vfill
\begin{figure}[!h]
    \begin{subfigure}{\textwidth}
        \noindent\makebox[\textwidth]{
            \prftree[r]{\rl{cut}}
                {\prftree[r]{\rl{\lor}}
                    {\prftree[r]{\rl{\land}}
                        {\prftree[r]{\rl{\land}}
                            {\prfbyaxiom{\rl[\{\nm{z}{\alpha},\nm{t}{\dl{\alpha}}\}]{ax}}
                                {\hvp\th \nm{x}{\beta}, \nm{y}{\dl{\beta}}, \nm{z}{\tikzmarknode{z-l-1}{\alpha}}, \nm{t}{\dl{\tikzmarknode{t-l-1}{\alpha}}}, \nm{v}{\dl{\alpha}}, \nm{w}{\alpha}}}
                            {\prfbyaxiom{\rl[\{\nm{x}{\beta},\nm{y}{\dl{\beta}}\}]{ax}}
                                {\hvp\th \nm{x}{\beta}, \nm{y}{\dl{\beta}}, \nm{z}{\alpha}, \nm{u}{\alpha}, \nm{v}{\dl{\alpha}}, \nm{w}{\alpha}}}
                            {\hvp\th \nm{x}{\beta}, \nm{y}{\dl{\beta}}, \nm{z}{\tikzmarknode{z-l-2}{\alpha}}, \nm{t}{\dl{\tikzmarknode{t-l-2}{\alpha}}} \land \nm{u}{\alpha}, \nm{v}{\dl{\alpha}}, \nm{w}{\alpha}}}
                        {\prftree[r]{\rl{\land}}
                            {\prfbyaxiom{\rl[\{\nm{x}{\beta},\nm{y}{\dl{\beta}}\}]{ax}}
                                {\hvp\th \nm{x}{\beta}, \nm{y}{\dl{\beta}}, \nm{s}{\dl{\alpha}}, \nm{t}{\dl{\alpha}}, \nm{v}{\dl{\alpha}}, \nm{w}{\alpha}}}
                            {\prfbyaxiom{\rl[\{\nm{s}{\dl{\alpha}},\nm{u}{\alpha}\}]{ax}}
                                {\hvp\th \nm{x}{\beta}, \nm{y}{\dl{\beta}}, \nm{s}{\dl{\tikzmarknode{s-l-1}{\alpha}}}, \nm{u}{\tikzmarknode{u-l-1}{\alpha}}, \nm{v}{\dl{\alpha}}, \nm{w}{\alpha}}}
                            {\hvp\th \nm{x}{\beta}, \nm{y}{\dl{\beta}}, \nm{s}{\dl{\tikzmarknode{s-l-2}{\alpha}}}, \nm{t}{\dl{\alpha}} \land \nm{u}{\tikzmarknode{u-l-2}{\alpha}}, \nm{v}{\dl{\alpha}}, \nm{w}{\alpha}}}
                        {\hvp\th \nm{x}{\beta}, \nm{y}{\dl{\beta}}, \nm{z}{\tikzmarknode{z-l-3}{\alpha}} \land \nm{s}{\dl{\tikzmarknode{s-l-3}{\alpha}}}, \nm{t}{\dl{\tikzmarknode{t-l-3}{\alpha}}} \land \nm{u}{\tikzmarknode{u-l-3}{\alpha}}, \nm{v}{\dl{\alpha}}, \nm{w}{\alpha}}}
                    {\hvp\th \nm{x}{\beta}, \nm{y}{\dl{\beta}}, (\nm{z}{\tikzmarknode{z-l-4}{\alpha}} \land \nm{s}{\dl{\tikzmarknode{s-l-4}{\alpha}}}) \lor (\nm{t}{\dl{\tikzmarknode{t-l-4}{\alpha}}} \land \nm{u}{\tikzmarknode{u-l-4}{\alpha}}), \nm{v}{\dl{\alpha}}, \nm{w}{\alpha}}}
                {\prftree[r]{\rl{\land}}
                    {\prftree[r]{\rl{\lor}}
                        {\prfbyaxiom{\rl[\{\nm{z}{\dl{\alpha}},\nm{s}{\alpha}\}]{ax}}
                            {\hvp\th \nm{x}{\beta}, \nm{y}{\dl{\beta}}, \nm{z}{\dl{\tikzmarknode{z-r-1}{\alpha}}}, \nm{s}{\tikzmarknode{s-r-1}{\alpha}}, \nm{v}{\dl{\alpha}}, \nm{w}{\alpha}}}
                        {\hvp\th \nm{x}{\beta}, \nm{y}{\dl{\beta}}, \nm{z}{\dl{\tikzmarknode{z-r-2}{\alpha}}} \lor \nm{s}{\tikzmarknode{s-r-2}{\alpha}}, \nm{v}{\dl{\alpha}}, \nm{w}{\alpha}}}
                    {\prftree[r]{\rl{\lor}}
                        {\prftree[r]{\rl{\sqcup}}
                            {\prfbyaxiom{\rl[\{\nm{t}{\alpha},\nm{v}{\dl{\alpha}}\}]{ax}}
                                {\hvp\th \nm{x}{\beta}, \nm{y}{\dl{\beta}}, \nm{t}{\tikzmarknode{t-r-1}{\alpha}}, \nm{u}{\dl{\alpha}}, \nm{v}{\dl{\tikzmarknode{v-r-1}{\alpha}}}, \nm{w}{\alpha}}}
                            {\prfbyaxiom{\rl[\{\nm{u}{\dl{\alpha}},\nm{w}{\alpha}\}]{ax}}
                                {\hvp\th \nm{x}{\beta}, \nm{y}{\dl{\beta}}, \nm{t}{\alpha}, \nm{u}{\dl{\tikzmarknode{u-r-1}{\alpha}}}, \nm{v}{\dl{\alpha}}, \nm{w}{\tikzmarknode{w-r-1}{\alpha}}}}
                            {\hvp\th \nm{x}{\beta}, \nm{y}{\dl{\beta}}, \nm{t}{\tikzmarknode{t-r-2}{\alpha}}, \nm{u}{\dl{\tikzmarknode{u-r-2}{\alpha}}}, \nm{v}{\dl{\tikzmarknode{v-r-2}{\alpha}}}, \nm{w}{\tikzmarknode{w-r-2}{\alpha}}}}
                        {\hvp\th \nm{x}{\beta}, \nm{y}{\dl{\beta}}, \nm{t}{\tikzmarknode{t-r-3}{\alpha}} \lor \nm{u}{\dl{\tikzmarknode{u-r-3}{\alpha}}}, \nm{v}{\dl{\tikzmarknode{v-r-3}{\alpha}}}, \nm{w}{\tikzmarknode{w-r-3}{\alpha}}}}
                    {\hvp\th \nm{x}{\beta}, \nm{y}{\dl{\beta}}, (\nm{z}{\dl{\tikzmarknode{z-r-3}{\alpha}}} \lor \nm{s}{\tikzmarknode{s-r-3}{\alpha}}) \land (\nm{t}{\tikzmarknode{t-r-4}{\alpha}} \lor \nm{u}{\dl{\tikzmarknode{u-r-4}{\alpha}}}), \nm{v}{\dl{\tikzmarknode{v-r-4}{\alpha}}}, \nm{w}{\tikzmarknode{w-r-4}{\alpha}}}}
                {\hvp\th \nm{x}{\beta}, \nm{y}{\dl{\beta}}, \nm{v}{\dl{\tikzmarknode{v-5}{\alpha}}}, \nm{w}{\tikzmarknode{w-5}{\alpha}}}
        }%
        \begin{tikzpicture}[remember picture, overlay, thick, blue]
            \draw ([yshift=2pt]z-l-1.north) .. controls ++(0,1.3em) and ++(0,1.3em) .. ([yshift=2pt]t-l-1.north);
            \draw ([yshift=2pt]s-l-1.north) .. controls ++(0,1.3em) and ++(0,1.3em) .. ([yshift=2pt]u-l-1.north);
            \draw ([yshift=2pt]z-r-1.north) .. controls ++(0,1.3em) and ++(0,1.3em) .. ([yshift=2pt]s-r-1.north);
            \draw ([yshift=2pt]t-r-1.north) .. controls ++(0,1.3em) and ++(0,1.3em) .. ([yshift=2pt]v-r-1.north);
            \draw ([yshift=2pt]u-r-1.north) .. controls ++(0,1.3em) and ++(0,1.3em) .. ([yshift=2pt]w-r-1.north);
            \draw ([yshift=1pt]z-l-4.north) .. controls ++(0,7pt) and ++(0,-7pt) .. ([yshift=-1pt]z-l-3.south);
            \draw ([yshift=1pt]z-l-3.north) .. controls ++(0,15pt) and ++(0,-15pt) .. ([yshift=-1pt]z-l-2.south);
            \draw ([yshift=1pt]z-l-2.north) .. controls ++(0,10pt) and ++(0,-10pt) .. ([yshift=-1pt]z-l-1.south);
            \draw ([yshift=2pt]s-l-4.north) .. controls ++(0,7pt) and ++(0,-7pt) .. ([yshift=-1pt]s-l-3.south);
            \draw ([yshift=2pt]s-l-3.north) .. controls ++(0,20pt) and ++(0,-10pt) .. ([yshift=-1pt]s-l-2.south);
            \draw ([yshift=2pt]s-l-2.north) .. controls ++(0,15pt) and ++(0,-5pt) .. ([yshift=-1pt]s-l-1.south);
            \draw ([yshift=2pt]t-l-4.north) .. controls ++(0,7pt) and ++(0,-7pt) .. ([yshift=-1pt]t-l-3.south);
            \draw ([yshift=2pt]t-l-3.north) .. controls ++(0,20pt) and ++(0,-10pt) .. ([yshift=-1pt]t-l-2.south);
            \draw ([yshift=2pt]t-l-2.north) .. controls ++(0,15pt) and ++(0,-5pt) .. ([yshift=-1pt]t-l-1.south);
            \draw ([yshift=1pt]u-l-4.north) .. controls ++(0,7pt) and ++(0,-7pt) .. ([yshift=-1pt]u-l-3.south);
            \draw ([yshift=1pt]u-l-3.north) .. controls ++(0,15pt) and ++(0,-15pt) .. ([yshift=-1pt]u-l-2.south);
            \draw ([yshift=1pt]u-l-2.north) .. controls ++(0,10pt) and ++(0,-10pt) .. ([yshift=-1pt]u-l-1.south);
            \draw ([yshift=2pt]z-r-3.north) .. controls ++(0,15pt) and ++(0,-15pt) .. ([yshift=-1pt]z-r-2.south);
            \draw ([yshift=2pt]z-r-2.north) .. controls ++(0,7pt) and ++(0,-7pt) .. ([yshift=-1pt]z-r-1.south);
            \draw ([yshift=1pt]s-r-3.north) .. controls ++(0,20pt) and ++(0,-10pt) .. ([yshift=-1pt]s-r-2.south);
            \draw ([yshift=1pt]s-r-2.north) .. controls ++(0,7pt) and ++(0,-7pt) .. ([yshift=-1pt]s-r-1.south);
            \draw ([yshift=1pt]t-r-4.north) .. controls ++(0,25pt) and ++(0,-10pt) .. ([yshift=-1pt]t-r-3.south);
            \draw ([yshift=1pt]t-r-3.north) .. controls ++(0,7pt) and ++(0,-7pt) .. ([yshift=-1pt]t-r-2.south);
            \draw ([yshift=1pt]t-r-2.north) .. controls ++(0,10pt) and ++(0,-10pt) .. ([yshift=-1pt]t-r-1.south);
            \draw ([yshift=2pt]u-r-4.north) .. controls ++(0,20pt) and ++(0,-15pt) .. ([yshift=-1pt]u-r-3.south);
            \draw ([yshift=2pt]u-r-3.north) .. controls ++(0,7pt) and ++(0,-7pt) .. ([yshift=-1pt]u-r-2.south);
            \draw ([yshift=2pt]u-r-2.north) .. controls ++(0,15pt) and ++(0,-5pt) .. ([yshift=-1pt]u-r-1.south);
            \draw ([yshift=2pt]v-5.north) .. controls ++(0,35pt) and ++(0,-25pt) .. ([yshift=-1pt]v-r-4.south);
            \draw ([yshift=2pt]v-r-4.north) .. controls ++(0,15pt) and ++(0,-20pt) .. ([yshift=-1pt]v-r-3.south);
            \draw ([yshift=2pt]v-r-3.north) .. controls ++(0,7pt) and ++(0,-7pt) .. ([yshift=-1pt]v-r-2.south);
            \draw ([yshift=2pt]v-r-2.north) .. controls ++(0,15pt) and ++(0,-5pt) .. ([yshift=-1pt]v-r-1.south);
            \draw ([yshift=1pt]w-5.north) .. controls ++(0,20pt) and ++(0,-30pt) .. ([yshift=-1pt]w-r-4.south);
            \draw ([yshift=1pt]w-r-4.north) .. controls ++(0,10pt) and ++(0,-25pt) .. ([yshift=-1pt]w-r-3.south);
            \draw ([yshift=1pt]w-r-3.north) .. controls ++(0,7pt) and ++(0,-7pt) .. ([yshift=-1pt]w-r-2.south);
            \draw ([yshift=1pt]w-r-2.north) .. controls ++(0,10pt) and ++(0,-10pt) .. ([yshift=-1pt]w-r-1.south);
            \path[draw, densely dashed, semithick, rounded corners=1em]
                ([yshift=-1pt]z-l-4.south) -- ([yshift=-5em]z-l-4.south) -- ([yshift=-5em]z-r-3.south) -- ([yshift=-1pt]z-r-3.south);
            \path[draw, densely dashed, semithick, rounded corners=1em]
                ([yshift=-1pt]s-l-4.south) -- ([yshift=-6em]s-l-4.south) -- ([yshift=-6em]s-r-3.south) -- ([yshift=-1pt]s-r-3.south);
            \path[draw, densely dashed, semithick, rounded corners=1em]
                ([yshift=-1pt]t-l-4.south) -- ([yshift=-3em]t-l-4.south) -- ([yshift=-3em]t-r-4.south) -- ([yshift=-1pt]t-r-4.south);
            \path[draw, densely dashed, semithick, rounded corners=1em]
                ([yshift=-1pt]u-l-4.south) -- ([yshift=-4em]u-l-4.south) -- ([yshift=-4em]u-r-4.south) -- ([yshift=-1pt]u-r-4.south);
        \end{tikzpicture}%
        \\[3em]
        \caption{The initial derivation, with a unique cut-rule. Assume names \(x, y, z, s, t, u,
            v, w \in \mathcal{N}\) are pairwise distinct. There is an alternating path
            connecting~\(\nm{v}{\dl{\alpha}}\) with~\(\nm{w}{\alpha}\). Because the conclusion
            is atomic, there is only one branch name up to names in the interface, hence all
            edges are compatible.}
        \label{fig:cut-reduction-failure:pre}
    \end{subfigure}
    \par\vspace{4em}%
    \begin{subfigure}{\textwidth}
        \noindent\makebox[\textwidth]{
            \prftree[r]{\rl{cut}}
                {\prftree[r]{\rl{cut}}
                    {\prftree[r]{\rl{\land}}
                        {\prftree[r]{\rl{\land}}
                            {\prfbyaxiom{\rl[\{\nm{z}{\alpha},\nm{t}{\dl{\alpha}}\}]{ax}}
                                {\hvp\th \nm{x}{\beta}, \nm{y}{\dl{\beta}}, \nm{z}{\tikzmarknode{z-l-1}{\alpha}}, \nm{t}{\dl{\tikzmarknode{t-l-1}{\alpha}}}, \nm{v}{\dl{\alpha}}, \nm{w}{\alpha}}}
                            {\prfbyaxiom{\rl[\{\nm{x}{\beta},\nm{y}{\dl{\beta}}\}]{ax}}
                                {\hvp\th \nm{x}{\beta}, \nm{y}{\dl{\beta}}, \nm{z}{\alpha}, \nm{u}{\alpha}, \nm{v}{\dl{\alpha}}, \nm{w}{\alpha}}}
                            {\hvp\th \nm{x}{\beta}, \nm{y}{\dl{\beta}}, \nm{z}{\tikzmarknode{z-l-2}{\alpha}}, \nm{t}{\dl{\tikzmarknode{t-l-2}{\alpha}}} \land \nm{u}{\alpha}, \nm{v}{\dl{\alpha}}, \nm{w}{\alpha}}}
                        {\prftree[r]{\rl{\land}}
                            {\prfbyaxiom{\rl[\{\nm{x}{\beta},\nm{y}{\dl{\beta}}\}]{ax}}
                                {\hvp\th \nm{x}{\beta}, \nm{y}{\dl{\beta}}, \nm{s}{\dl{\alpha}}, \nm{t}{\dl{\alpha}}, \nm{v}{\dl{\alpha}}, \nm{w}{\alpha}}}
                            {\prfbyaxiom{\rl[\{\nm{s}{\dl{\alpha}},\nm{u}{\alpha}\}]{ax}}
                                {\hvp\th \nm{x}{\beta}, \nm{y}{\dl{\beta}}, \nm{s}{\dl{\tikzmarknode{s-l-1}{\alpha}}}, \nm{u}{\tikzmarknode{u-l-1}{\alpha}}, \nm{v}{\dl{\alpha}}, \nm{w}{\alpha}}}
                            {\hvp\th \nm{x}{\beta}, \nm{y}{\dl{\beta}}, \nm{s}{\dl{\tikzmarknode{s-l-2}{\alpha}}}, \nm{t}{\dl{\alpha}} \land \nm{u}{\tikzmarknode{u-l-2}{\alpha}}, \nm{v}{\dl{\alpha}}, \nm{w}{\alpha}}}
                        {\hvp\th \nm{x}{\beta}, \nm{y}{\dl{\beta}}, \nm{z}{\tikzmarknode{z-l-3}{\alpha}} \land \nm{s}{\dl{\tikzmarknode{s-l-3}{\alpha}}}, \nm{t}{\dl{\tikzmarknode{t-l-3}{\alpha}}} \land \nm{u}{\tikzmarknode{u-l-3}{\alpha}}, \nm{v}{\dl{\alpha}}, \nm{w}{\alpha}}}
                    {\prftree[r]{\rl{\lor}}
                        {\prfbyaxiom{\rl[\{\nm{z}{\dl{\alpha}},\nm{s}{\alpha}\}]{ax}}
                            {\hvp\th \nm{x}{\beta}, \nm{y}{\dl{\beta}}, \nm{z}{\dl{\tikzmarknode{z-r-1}{\alpha}}}, \nm{s}{\tikzmarknode{s-r-1}{\alpha}}, \nm{t}{\dl{\alpha}} \land \nm{u}{\alpha}, \nm{v}{\dl{\alpha}}, \nm{w}{\alpha}}}
                        {\hvp\th \nm{x}{\beta}, \nm{y}{\dl{\beta}}, \nm{z}{\dl{\tikzmarknode{z-r-2}{\alpha}}} \lor \nm{s}{\tikzmarknode{s-r-2}{\alpha}}, \nm{t}{\dl{\alpha}} \land \nm{u}{\alpha}, \nm{v}{\dl{\alpha}}, \nm{w}{\alpha}}}
                    {\hvp\th \nm{x}{\beta}, \nm{y}{\dl{\beta}}, \nm{t}{\dl{\tikzmarknode{t-l-4}{\alpha}}} \land \nm{u}{\tikzmarknode{u-l-4}{\alpha}}, \nm{v}{\dl{\alpha}}, \nm{w}{\alpha}}}
                {\prftree[r]{\rl{\lor}}
                    {\prftree[r]{\rl{\sqcup}}
                        {\prfbyaxiom{\rl[\{\nm{t}{\alpha},\nm{v}{\dl{\alpha}}\}]{ax}}
                            {\hvp\th \nm{x}{\beta}, \nm{y}{\dl{\beta}}, \nm{t}{\tikzmarknode{t-r-1}{\alpha}}, \nm{u}{\dl{\alpha}}, \nm{v}{\dl{\tikzmarknode{v-r-1}{\alpha}}}, \nm{w}{\alpha}}}
                        {\prfbyaxiom{\rl[\{\nm{u}{\dl{\alpha}},\nm{w}{\alpha}\}]{ax}}
                            {\hvp\th \nm{x}{\beta}, \nm{y}{\dl{\beta}}, \nm{t}{\alpha}, \nm{u}{\dl{\tikzmarknode{u-r-1}{\alpha}}}, \nm{v}{\dl{\alpha}}, \nm{w}{\tikzmarknode{w-r-1}{\alpha}}}}
                        {\hvp\th \nm{x}{\beta}, \nm{y}{\dl{\beta}}, \nm{t}{\tikzmarknode{t-r-2}{\alpha}}, \nm{u}{\dl{\tikzmarknode{u-r-2}{\alpha}}}, \nm{v}{\dl{\tikzmarknode{v-r-2}{\alpha}}}, \nm{w}{\tikzmarknode{w-r-2}{\alpha}}}}
                    {\hvp\th \nm{x}{\beta}, \nm{y}{\dl{\beta}}, \nm{t}{\tikzmarknode{t-r-3}{\alpha}} \lor \nm{u}{\dl{\tikzmarknode{u-r-3}{\alpha}}}, \nm{v}{\dl{\tikzmarknode{v-r-3}{\alpha}}}, \nm{w}{\tikzmarknode{w-r-3}{\alpha}}}}
                {\hvp\th \nm{x}{\beta}, \nm{y}{\dl{\beta}}, \nm{v}{\dl{\tikzmarknode{v-4}{\alpha}}}, \nm{w}{\tikzmarknode{w-4}{\alpha}}}
        }%
        \begin{tikzpicture}[remember picture, overlay, thick, blue]
            \draw[red] ([yshift=2pt]z-l-1.north) .. controls ++(0,1.3em) and ++(0,1.3em) .. ([yshift=2pt]t-l-1.north);
            \draw[black] ([yshift=2pt]s-l-1.north) .. controls ++(0,1.3em) and ++(0,1.3em) .. ([yshift=2pt]u-l-1.north);
            \draw ([yshift=2pt]z-r-1.north) .. controls ++(0,1.3em) and ++(0,1.3em) .. ([yshift=2pt]s-r-1.north);
            \draw ([yshift=2pt]t-r-1.north) .. controls ++(0,1.3em) and ++(0,1.3em) .. ([yshift=2pt]v-r-1.north);
            \draw ([yshift=2pt]u-r-1.north) .. controls ++(0,1.3em) and ++(0,1.3em) .. ([yshift=2pt]w-r-1.north);
            \draw ([yshift=1pt]z-l-3.north) .. controls ++(0,15pt) and ++(0,-15pt) .. ([yshift=-1pt]z-l-2.south);
            \draw ([yshift=1pt]z-l-2.north) .. controls ++(0,10pt) and ++(0,-10pt) .. ([yshift=-1pt]z-l-1.south);
            \draw ([yshift=2pt]s-l-3.north) .. controls ++(0,20pt) and ++(0,-10pt) .. ([yshift=-1pt]s-l-2.south);
            \draw ([yshift=2pt]s-l-2.north) .. controls ++(0,15pt) and ++(0,-5pt) .. ([yshift=-1pt]s-l-1.south);
            \begin{scope}[red]
                \draw ([yshift=2pt]t-l-4.north) .. controls ++(0,15pt) and ++(0,-15pt) .. ([yshift=-1pt]t-l-3.south);
                \draw ([yshift=2pt]t-l-3.north) .. controls ++(0,20pt) and ++(0,-10pt) .. ([yshift=-1pt]t-l-2.south);
                \draw ([yshift=2pt]t-l-2.north) .. controls ++(0,15pt) and ++(0,-5pt) .. ([yshift=-1pt]t-l-1.south);
            \end{scope}
            \begin{scope}[black]
                \draw ([yshift=1pt]u-l-4.north) .. controls ++(0,25pt) and ++(0,-10pt) .. ([yshift=-1pt]u-l-3.south);
                \draw ([yshift=1pt]u-l-3.north) .. controls ++(0,15pt) and ++(0,-15pt) .. ([yshift=-1pt]u-l-2.south);
                \draw ([yshift=1pt]u-l-2.north) .. controls ++(0,10pt) and ++(0,-10pt) .. ([yshift=-1pt]u-l-1.south);
            \end{scope}
            \draw ([yshift=2pt]z-r-2.north) .. controls ++(0,7pt) and ++(0,-7pt) .. ([yshift=-1pt]z-r-1.south);
            \draw ([yshift=1pt]s-r-2.north) .. controls ++(0,7pt) and ++(0,-7pt) .. ([yshift=-1pt]s-r-1.south);
            \draw ([yshift=1pt]t-r-3.north) .. controls ++(0,7pt) and ++(0,-7pt) .. ([yshift=-1pt]t-r-2.south);
            \draw ([yshift=1pt]t-r-2.north) .. controls ++(0,10pt) and ++(0,-10pt) .. ([yshift=-1pt]t-r-1.south);
            \draw ([yshift=2pt]u-r-3.north) .. controls ++(0,7pt) and ++(0,-7pt) .. ([yshift=-1pt]u-r-2.south);
            \draw ([yshift=2pt]u-r-2.north) .. controls ++(0,15pt) and ++(0,-5pt) .. ([yshift=-1pt]u-r-1.south);
            \draw ([yshift=2pt]v-4.north) .. controls ++(0,35pt) and ++(0,-25pt) .. ([yshift=-1pt]v-r-3.south);
            \draw ([yshift=2pt]v-r-3.north) .. controls ++(0,7pt) and ++(0,-7pt) .. ([yshift=-1pt]v-r-2.south);
            \draw ([yshift=2pt]v-r-2.north) .. controls ++(0,15pt) and ++(0,-5pt) .. ([yshift=-1pt]v-r-1.south);
            \draw ([yshift=1pt]w-4.north) .. controls ++(0,20pt) and ++(0,-30pt) .. ([yshift=-1pt]w-r-3.south);
            \draw ([yshift=1pt]w-r-3.north) .. controls ++(0,7pt) and ++(0,-7pt) .. ([yshift=-1pt]w-r-2.south);
            \draw ([yshift=1pt]w-r-2.north) .. controls ++(0,10pt) and ++(0,-10pt) .. ([yshift=-1pt]w-r-1.south);
            \path[draw, densely dashed, semithick, rounded corners=1em]
                 ([yshift=-1pt]z-l-3.south) -- ([yshift=-5em]z-l-3.south) -- ([yshift=-5em]z-r-2.south) -- ([yshift=-1pt]z-r-2.south);
            \path[draw, densely dashed, semithick, rounded corners=1em]
                 ([yshift=-1pt]s-l-3.south) -- ([yshift=-4em]s-l-3.south) -- ([yshift=-4em]s-r-2.south) -- ([yshift=-1pt]s-r-2.south);
            \path[draw, densely dashed, semithick, rounded corners=1em]
                 ([yshift=-1pt]t-l-4.south) -- ([yshift=-3.5em]t-l-4.south) -- ([yshift=-3.5em]t-r-3.south) -- ([yshift=-1pt]t-r-3.south);
            \path[draw, densely dashed, semithick, rounded corners=1em]
                 ([yshift=-1pt]u-l-4.south) -- ([yshift=-4.5em]u-l-4.south) -- ([yshift=-4.5em]u-r-3.south) -- ([yshift=-1pt]u-r-3.south);
        \end{tikzpicture}%
        \\[1.5em]
        \caption{A logical cut-reduction step has been applied (see \cref{sec:cut-reduction-failure}).
            The conjunction \(\nm{t}{\dl{\alpha}} \land \nm{u}{\alpha}\) is now outside the
            interface of the upper cut: the alternating path that connects \(\nm{t}{\dl{\alpha}}\)
            with~\(\nm{u}{\alpha}\) must be omitted when computing the interpretation, as it
            uses edges from incompatible branches (highlighted in red and black). As a result,
            the edge between \(\nm{v}{\dl{\alpha}}\) and~\(\nm{w}{\alpha}\) is lost.}
        \label{fig:cut-reduction-failure:post}
    \end{subfigure}
    \par\vspace{3em}%
    \begin{subfigure}{\textwidth}
        \centering
        \begin{tikzpicture}
            \graph[math nodes]{
                x -!- y -!- v -!- w;
                x --[bend left, "\(\scriptstyle \{x,y,v,w\}\)"] y;
                v --[bend left, "\(\scriptstyle \{x,y,v,w\}\)"] w;
            };
        \end{tikzpicture}%
        \qquad\qquad\qquad\qquad%
        \begin{tikzpicture}
            \graph[math nodes]{
                x -!- y -!- v -!- w;
                x --[bend left, "\(\scriptstyle \{x,y,v,w\}\)"] y;
            };
        \end{tikzpicture}\\[1em]
        \caption{The branch-labeled axiom graph of the original derivation~(left) and that
            of the reduced one~(right).}
        \label{fig:cut-reduction-failure:graphs}
    \end{subfigure}
    \par\vspace{4em}%
    \caption{A derivation whose branch-labeled axiom graph decreases when a logical cut-reduction
        step is applied.}
    \label{fig:cut-reduction-failure}
\end{figure}
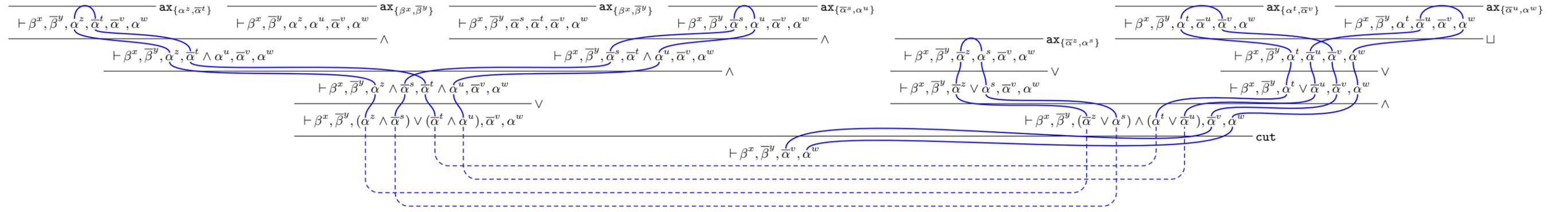
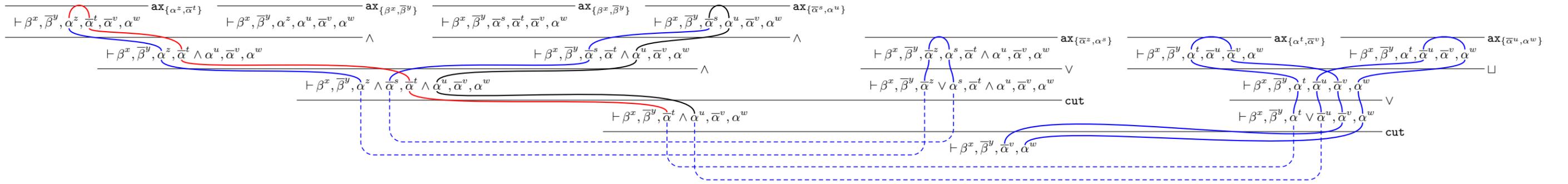
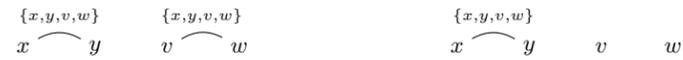
\vfill
\clearpage
}

\end{document}